\title{On the Structure of Replicable Hypothesis Testers}
\author{
Anders Aamand  \\
BARC, University of Copenhagen \\
\texttt{aa@di.ku.dk}
\and
Maryam Aliakbarpour\thanks{Department of Computer Science and
Ken Kennedy Institute}
\\ Rice University 
\\ \texttt{maryama@rice.edu}
\and
Justin Y.\ Chen \\
MIT \\
\texttt{justc@mit.edu}
\and
Shyam Narayanan \\
Citadel Securities \\
\texttt{shyam.s.narayanan@gmail.com}
\and
Sandeep Silwal \\
University of Wisconsin-Madison\\
\texttt{silwal@cs.wisc.edu}
}
\date{}
\begin{document}

\maketitle

\begin{abstract}
A hypothesis testing algorithm is \emph{replicable} if, when run on two different samples from the same distribution, it produces the same output with high probability. This notion, defined by by Impagliazzo, Lei, Pitassi, and Sorell [STOC'22], can increase trust in testing procedures and is deeply related to algorithmic stability, generalization, and privacy. We build general tools to prove lower and upper bounds on the sample complexity of replicable testers, unifying and quantitatively improving upon existing results.

We identify a set of canonical properties, and prove that any replicable testing algorithm can be modified to satisfy these properties without worsening accuracy or sample complexity. A canonical replicable algorithm computes a deterministic function of its input (i.e., a test statistic) and thresholds against a uniformly random value in $[0,1]$. It is invariant to the order in which the samples are received, and, if the testing problem is ``symmetric,'' then the algorithm is also invariant to the labeling of the domain elements, resolving an open question by Liu and Ye [NeurIPS'24]. We prove new lower bounds for uniformity, identity, and closeness testing by reducing to the case where the replicable algorithm satisfies these canonical properties.

We systematize and improve upon a common strategy for replicable algorithm design based on test statistics with known expectation and bounded variance. Our framework allow testers which have been extensively analyzed in the non-replicable setting to be made replicable with minimal overhead. As direct applications of our framework combined with existing analyses of non-replicable testers, we obtain constant-factor optimal bounds for coin testing and closeness testing and get replicability for free for uniformity testing in a large parameter regime. As replicable coin testing can be used as a black-box to turn any tester into a replicable tester, our results directly imply improved replicable sampling bounds for myriad applications beyond the ones specifically studied in this paper.

We also give a state-of-the-art algorithm for replicable Gaussian mean testing. We present a $\rho$-replicable algorithm for testing whether samples come from $\mathcal{N}(0, I)$ or from $\mathcal{N}(\mu, I)$ where $\|\mu\|_2 \ge \alpha$. Our algorithm improves over the previous best sample complexity of Bun, Gaboardi, Hopkins, Impagliazzo, Lei, Pitassi, Sivakumar, and Sorrell [STOC'23] and runs in polynomial time.
\end{abstract}

\newpage
{\small \tableofcontents}
\newpage
\setcounter{page}{1}

\section{Introduction}
Statistical testing is often modeled as follows: we receive samples from an unknown distribution $\mathcal{D}$ and the goal is to decide if $\mathcal{D}$ belongs to one of two pre-determined classes of hypotheses: the null or alternate hypothesis ($H_0$ or $H_1$). Collecting samples abstracts real world data curation such as running a physical experiment, and outputting $H_0$ or $H_1$ represents a data analyst's task of deciding which hypothesis or explanation best fits the observed data. The goal is to use the available data as efficiently as possible to derive meaningful inferences.

The following outcomes are desirable in practice: (a) if we resample data or repeat our experiments, then the same conclusion should likely be reached (reproducibility). This is crucial for scientific validity, as it implies that independent researchers investigating the same phenomenon using different (but similarly generated) data samples should ideally arrive at consistent findings. (b) Our data analysis procedure should not overfit to any particular data point observed in the sample (generalization). This helps ensure that conclusions drawn from data are robust and not mere artifacts of specific random samples. 

However, these two ideals are sometimes at odds with the statistical testing process in practice, which is oftentimes inherently unstable. For example, the data collection could involve inaccurate or noisy measurements, leading to model misspecification. Thus, the two hypothesis classes $H_0$ and $H_1$ may not capture all possible underlying data generation processes, and estimators designed by the analyst for distinguishing between $H_0$ and $H_1$, e.g. the data analysis performed under the assumption $\mathcal{D} \in \{H_0, H_1\}$ may fail in arbitrary, unexpected ways. Furthermore, the data analysis procedure itself could introduce variability unrelated to the data or the scientific question at hand, for example practices related to anti-patterns such as P-hacking or data dredging. Such instability in statistical testing poses fundamental problems: it may undermine the trustworthiness and reliability of statistical results, complicate efforts to verify findings, hinder direct comparisons between scientific studies, and contribute to the broader concerns often termed the ``replicability crisis" in empirical sciences \cite{baker20161}. 

Motivated by this, \cite{impagliazzo2022reproducibility} introduced a notion capturing both ``reproducibility'' and ``stability'', providing a quantitative framework of replicability in the entire algorithm design pipeline for statistical testing. Intuitively, a replicable statistical tester addresses the two issues of reproducibility and generalization by requiring the tester to have stable outputs even under the variability of the underlying data generation process (e.g. collecting a new set of data), as well as the internal randomness of the tester (e.g. re-running the data analysis code).  

\begin{definition}[Replicability \cite{impagliazzo2022reproducibility}]\label{def:replicability}
 A randomized algorithm $\A(X; r): \mathcal{X} \rightarrow \mathcal{Y}$ is $\rho$-replicable if for all distributions $\mathcal{D}$ over $\mathcal{X}$, 
 \[\Prma[X,X', r]{\A(X; r) = \A(X';r)} \ge 1-\rho, \]
where $X, X'$ denote sequences of i.i.d. samples from $\mathcal{D}$ and $r$ denotes the internal randomness used by $\mathcal{A}$.
\end{definition}
We note that the stability condition $\A(X; r) = \A(X';r)$ is not tied to the \emph{accuracy} of the statistical task, and we require it to hold for any choice of $\mathcal{D}$, the data generation process. (Of course as we will see later, the stability condition is usually paired with the usual requirement of statistical accuracy as well, e.g. if data is actually generated from $H_0$ or $H_1$ we should confidently detect it).\footnote{This is similar in spirit to differential privacy where data privacy must always hold (and correctness is a separate issue).}

We briefly remark that in addition to stability and generalization, replicability also captures desirable properties such as data privacy, as ``reproducible algorithms are prevented from memorizing anything that is specific to the training data, similar to differentially private algorithms" \cite{impagliazzo2022reproducibility}. Furthermore, unlike differential privacy in general \cite{gilbert2018property, gaboardiicalp}, replicability can be efficiently tested (in time polynomial in $1/\rho$ and the dimension of the data universe \cite{impagliazzo2022reproducibility}).

In addition to conforming to the practical motivations outlined above, the notion of replicability is also \emph{theoretically rich}. \cite{impagliazzo2022reproducibility} and subsequent work have shown that replicability is intimately connected to many other technical notions of algorithmic stability, including differential privacy \cite{kalavasis2023statistical, moran2023bayesian, bun2023stability, kalavasis2024computational}, generalization in adaptive data analysis \cite{impagliazzo2022reproducibility, moran2023bayesian, bun2023stability}, TV-stability \cite{kalavasis2023statistical, moran2023bayesian}, Local Computation Algorithms~\cite{canonne2025lca}, and other notions of stability \cite{chase1, chase2024local}. For many of these notions, there exist reductions between algorithms satisfying stability in one sense to replicability, e.g. any approximate differentially private algorithm can be turned into a replicable algorithm, albeit with polynomial blow-up in the sample complexity (the reduction is not computationally efficient).  We refer to Figure 1 in \cite{moran2023bayesian}, Figure 1 in \cite{bun2023stability}, and Figure 1.1 in \cite{kalavasis2024computational} for a web of reductions.

In addition to strong theoretical connections to various  notions of algorithmic stability,  \cref{def:replicability} has been quite influential and has inspired replicable algorithms for a wide range of statistical tasks, including reinforcement learning \cite{eaton2023replicable, karbasi2023replicability}, online learning \cite{esfandiari2023replicable, ahmadi2024replicable, kalavasis2024computational, larsen2025improved}, learning half-spaces  \cite{kalavasis2024replicable, blondal2025borsuk}, data clustering \cite{esfandiari2023replicable}, high-dimensional statistical estimation \cite{hopkins2024replicability}, and distribution testing \cite{liu2024replicableuniformity}, oftentimes with algorithm design and analysis that is tailored to the specific task at hand.

In this work, we are interested in principled approaches to designing algorithms and hardness results for replicable statistical testers. To reiterate, ensuring replicability when designing algorithms for statistical testing is challenging since the traditional notion of correctness is not sufficient: even if we can successfully distinguish between the cases where $\mathcal{D} \in H_0$ or $\mathcal{D} \in H_1$ (the usual notion of accuracy in statistical testing), we need to be stable even if $\mathcal{D}$ is arbitrary. Towards this, we give general structural results and tools to analyze and design replicable algorithms for hypothesis testing. 

\subsection{Our Contributions}\label{sec:contributions}

\begin{table*}[!ht]
\centering
{\renewcommand{\arraystretch}{5.0}

\begin{tabular}{c|c|c}
\textbf{Problem} & \textbf{Prior Bounds} & \textbf{Our Results} \\ \hline
\makecell{Coin \\ Testing} &
\makecell{$O\p{\frac{\log(1/\delta)}{\eps^2\rho^2}}$ (D), $O\p{\frac{\log(1/\delta)}{\eps^2\rho}}$ (E)~\cite{hopkins2024replicability}  \\
$\Omega\p{\frac{1}{\eps^2 \rho^2}}$ (D),
$\Omega\p{\frac{1}{\eps^2 \rho}}$ (E)~\cite{hopkins2024replicability}}  &
\makecell{$O\p{\frac{\log(1/\delta)}{\eps^2} + \frac{1}{\eps^2 \rho^2}}$ (D) \\$O\p{\frac{\log(1/\delta)}{\eps^2} + \frac{1}{\eps^2 \rho}}$ (E)
} \\ \hline
\makecell{Uniformity \\ Testing} &
\makecell{
$O\p{\frac{\sqrt{n \log(n/\rho)} \log(1/\rho)}{\eps^2\rho} + \frac{\log(1/\rho)}{\eps^2\rho^2}}$ (D)~\cite{liu2024replicableuniformity} \\
$\Omega\p{\frac{\sqrt{n}}{\eps^2\rho \log^2(n)} + \frac{1}{\eps^2\rho^2}}$ (D) ~\cite{liu2024replicableuniformity}$^*$
} &
\makecell{
$O\p{\frac{\sqrt{n}\log(1/\delta)}{\eps^2} + \frac{\sqrt{n}}{\eps \rho} + \frac{1}{\eps^2\rho}}$ (E) \\
$\Omega\p{\frac{\sqrt{n}}{\eps^2\rho \log^2(n)} + \frac{1}{\eps^2\rho^2}}$ (D) \\
$\Omega\p{\frac{\sqrt{n}}{\eps^2\rho} + \frac{1}{\eps^2\rho^2}}$ (D)$^{**}$
}\\ \hline
\makecell{Closeness \\ Testing} &
\makecell{
$O\p{\frac{n^{2/3}}{\eps^{4/3} \rho^2} + \frac{\sqrt{n}}{\eps^2 \rho^2}}$(D)~[Coin]$+$\cite{chan2014optimal} \\
$\Omega\p{\frac{n^{2/3}}{\eps^{4/3}} + \frac{\sqrt{n}}{\eps^2}}$ (D)~\cite{chan2014optimal}
} &
\makecell{
$O\p{\frac{n^{2/3}}{\eps^{4/3} \rho^{2/3}} + \frac{\sqrt{n}}{\eps^2\rho} + \frac{1}{\eps^2\rho^2}}$ (D) \\
$\Omega\p{\frac{n^{2/3}}{\eps^{4/3}\rho^{2/3}} + \frac{\sqrt{n}}{\eps^2\rho \log^2(n)} + \frac{1}{\eps^2\rho^2}}$ (D)
}\\ \hline
\makecell{Gaussian \\ Mean \\ Testing} &
\makecell{
$\tilde{O}\p{\frac{\sqrt{d}}{\alpha^2 \rho^2}}$ (D)~\cite{bun2023stability}$+$\cite{narayanan2022private}$^{***}$ \\
$\Omega\p{\frac{\sqrt{d}}{\alpha^2}}$ (D)~\cite{ingster2003nonparametric}
} &
\makecell{
$\tilde{O}\p{\frac{\sqrt{d}}{\alpha^2\rho} + \frac{\sqrt{d}}{\alpha\rho^2} + \frac{1}{\alpha^2\rho^2}}$ (D) \\
$\Omega\p{\frac{\sqrt{d}}{\alpha^2\rho} + \frac{1}{\alpha^2\rho^2}}$ (D)
} \\ \hline
\makecell{Hypothesis \\ Selection} &
\makecell{
$O\p{\frac{\log^2(n) \log(1/\rho)}{\eps^2 \rho^2}}$ (D)~\cite{bun2023stability}$+$\cite{bun2019privateselect}$^{***}$ \\
$\Omega\p{\frac{\log^2(n)}{\eps^2 \rho^2 \log(1/\eps)}}$ (D)~\cite{hopkins2024replicability}
}&
\makecell{
$O\p{\frac{\log^5(n)}{\eps^2\rho^2}}$ (D) \\
$O\p{\frac{\log^5(n)}{\eps^2\rho}}$ (E)
} \\ 
\end{tabular}

}
\caption{Summary of applications.\protect\footnotemark{}
Parentheticals (D) and (E) are used to indicate deterministic and in-expectation sample complexity, respectively.
All deterministic lower bounds can be translated to in-expectation lower bounds by multiplying by a factor of $\rho$, and all in-expectation upper bounds can be translated to deterministic upper bounds by multiplying by a factor of $1/\rho$ via \cref{prop:markov-translate}.
The single asterisks * indicates that the lower bound from \cite{liu2024replicableuniformity} only holds against symmetric algorithms and not in general.
Our first lower bound result comes from using canonical properties of replicable testers to show that it suffices to only show lower bounds against symmetric algorithms, extending their result to the general setting.
The double asterisks ** indicates that this lower bound only holds when $n \geq \frac{1}{\eps^6\rho^2}$.
The triple asterisks *** indicates that the result is \emph{computationally inefficient}, as it follows from the (inefficient) black-box differential privacy to replicability transformation of \cite{bun2023stability}.
}
\label{tab:results}
\end{table*}

\footnotetext{For all problems, $\delta \leq \rho$ is the failure probability. Some prior works do not give bounds for general $\delta$. In these cases, $\delta$ dependence is omitted and the failure probability is $\rho$.}

We first overview our major contributions and defer the technical summary to \cref{sec:technical_overview}.
The quantitative bounds given as applications of our main results are summarized in \cref{tab:results}.

\paragraph{Contribution 1: A Framework for Characterizing Optimal Replicable Algorithms.}
 Our first contribution explores the question: ``What is the canonical structure of replicable testers?" Informally, our contribution shows that all replicable algorithms for testing discrete distributions can be assumed to be of a specific form. Our description helps an algorithm designer simplify the algorithm design process by imposing rigid conditions on the structure of the algorithms, which we show hold without loss of generality.

 First, we define the rigid properties we impose on our algorithms. For clarity, we give intuitive, informal definitions here and defer the formal definition to Section \ref{sec:canonical-tester}. A tester is called a canonical threshold algorithm if it accepts or rejects based on comparing a statistic to a random threshold drawn uniformly from the interval $[0, 1]$ (\cref{def:canonical}). A tester satisfies permutation-robust replicability if its performance is stable even if the data is sampled again from a permuted distribution (\cref{def:perm-rubust-replicability}). Note that this is a much stronger notion of replicability since the underlying distribution changes. We prove the following. 

\begin{restatable}[Canonical properties of replicable testers]{thm}{canonical}
\label{thm:main_canonical}
  Let $\A(X; r)$ be a $\rho$-replicable algorithm for testing a symmetric property $\PP$ of discrete distributions over $[n]$, using $\ns$ i.i.d.\ samples $X = (X_1, \dots, X_{\ns})$ drawn from an underlying distribution $p$, and randomness $r$. The algorithm outputs a binary decision in $\{\text{\accept}, \text{\reject}\}$ and satisfies:
    \begin{itemize}
        \item If $p \in \PP$, then 
        $$\Prma[X\sim p^{\otimes \ns}, r]{\A(X;r) = \text{\accept}} \geq 1-\delta\,.$$
        \item If $p$ is $\epsilon$-far from $\PP$, then 
        $$\Prma[X\sim p^{\otimes \ns}, r]{\A(X;r) = \text{\reject}} \geq 1-\delta\,.$$
    \end{itemize}
    Then, there exists an algorithm $\A'(X; r)$ that achieves the same accuracy with $\ns$ samples and has the following canonical properties:
    \begin{itemize}
        \item It operates in the canonical format of comparing a deterministic function to a random threshold (\cref{def:canonical}).
        \item It is invariant to both the order and the labels of the samples (\cref{def:sample_invariant}).
        \item It is $\rho$-replicable and satisfies $\rho$-permutation robust replicability (\cref{def:perm-rubust-replicability}).
    \end{itemize}
\end{restatable}

The label invariance and $\rho$-permutation robust replicability only hold for symmetric properties while random thresholding and order invariance hold generally.
Our characterization above is particularly powerful when combined with our following generic tool for proving sample complexity \emph{lower bounds} for replicable testing. 

\begin{restatable}[Chaining lower bound]{thm}{chaininglb}
\label{thm:LB_chaining}
Let $\epsilon\in(0,1]$, $\delta\in(0,1/3]$, and $\rho\in(0,0.001]$ be arbitrary parameters, and let $n$, $k$ be positive integers and $t\leq 1/(300\rho)$ be a positive integer. Also, let $\PP$ be a symmetric property. Consider a collection of $t+1$ distributions over $[n]$, namely $p_0, p_1, \ldots, p_t$, with the following properties:
\begin{itemize}
    \item $p_0$ belongs to $\PP$. That is, any $(\epsilon,\delta)$-tester for $\PP$ must output \accept on $p_0$ with probability at least $1-\delta$.
    \item $p_t$ is $\epsilon$-far from $\PP$. That is, any $(\epsilon,\delta)$-tester for $\PP$ must output \reject on $p_t$ with probability at least $1-\delta$.
    \item There exist $t+1$ priors $\DD_0, \DD_1, \ldots, \DD_t$ with the following properties:
    \begin{itemize}
        \item For every $i$, $\DD_i$ is a prior distribution over $p_i$ and all of its permutations $p^{\pi}$.
        \item For every $i$, if we draw two sample sets of size $k$, namely $X^{(i)} \sim \DD_i$ and $X^{(i-1)} \sim \DD_{i-1}$\footnote{Here we are abusing the notation slightly: by writing $X \sim \DD$, we mean that $X$ is a sample set drawn from a distribution $d$ that is selected according to $\DD$.}, they are statistically close to each other, by which we mean that the total variation distance between the overall distributions of $X^{(i)}$ and $X^{(i-1)}$ is at most $0.5$. 
    \end{itemize}
\end{itemize}
Then, no $\rho$-replicable algorithm exists for $(\epsilon,\delta)$-testing of $\PP$ that uses $k$ samples.
\end{restatable}

In conjunction, these two contributions immediately imply new lower bound results. We show a lower bound for the classic problem of uniformity testing (which implies a lower bound for identity testing). 
\begin{restatable}[Uniformity lower bound]{thm}{uniflb}
\label{thm:LB_unif}
For parameters $\epsilon \in [0,1/4]$, $\rho \leq 0.001$, and $\delta \in (0,1/3]$, suppose $\AA$ is a $\rho$-replicable algorithm that uses $m$ samples drawn from an underlying distribution $p$ over $[n]$, and that, with probability at least $1-\delta$, distinguishes whether $p$ is the uniform distribution over $[n]$ or is $\epsilon$-far from it. Then, it must be:
\begin{equation*}
m = \Tilde{\Omega}\p{\max\bc{
    \frac{\sqrt{n}}{\eps^2 \rho},
    \frac{1}{\eps^2 \rho^2}
}}\,.
\end{equation*}
For a sufficiently large $n \geq (\epsilon^6\,\rho^2)^{-1}$, we can further show that it must be: 
\begin{equation*}
m = \Omega\p{\max\bc{
    \frac{\sqrt{n}}{\eps^2 \rho},
    \frac{1}{\eps^2 \rho^2}
}}\,.
\end{equation*}
\end{restatable}

A lower bound for uniformity testing was given in the previous work of \cite{liu2024replicableuniformity}, but their lower bound only holds for ``label invariant algorithms". Our canonical framework immediately implies that their lower bound holds for general algorithms as well without loss of generality, answering an open question of \cite{liu2024replicableuniformity}. Furthermore, our chaining lower bound framework circumvents several technical calculations for understanding the Lipschitz continuity of the acceptance probability of a tester, leading to an arguably simpler lower bound proof overall (see 
\cref{sec:unif_lb} for details). We also improve their lower bound by a logarithmic factor when $n$ is very large. 

In addition, we also give a sample complexity lower bound for the harder testing problem of replicable closeness testing (see \cref{def:replicable_closeness}. Note we also give almost matching upper bound described in the next section). This answers another open question raised in  \cite{liu2024replicableuniformity}. 

\begin{restatable}[Closeness testing lower bound]{thm}{closelb}
\label{thm:LB_close}
Assume that $\eps < 0.99$.
Any $\rho$-replicable 0-vs-$\eps$ closeness tester has sample complexity at least
\begin{equation*}
\Omega\p{\max\bc{
    \frac{n^{2/3}}{\eps^{4/3}\rho^{2/3}},\ \Tilde{\Theta}\left(\frac{\sqrt{n}}{\eps^2 \rho}\right),\  
    \frac{1}{\eps^2 \rho^2}
}}.
\end{equation*}
\end{restatable}

Lastly, we re-derive a known sample complexity lower bounds for replicable coin testing (which was originally proved in \cite{impagliazzo2022reproducibility}). While this is not a new result, we believe it further demonstrates the generality of our framework.

\paragraph{Contribution 2: A Framework for Designing Replicable Testers from Non-Replicable Testers}

Many statistical testers, both with and without replicability, have a ``expectation-gap'' structure.
In these algorithms, a real-valued statistic $Z$ is computed from the set of samples and the algorithm outputs \accept or \reject depending on which side of a threshold the empirical estimate of $Z$ falls on.
Analysis of the success of such an algorithm depends on (a) upper bounding $\Ema{Z}$ under the null hypothesis, (b) lower bounding $\Ema{Z}$ under the null hypothesis, and (c) bounding the variance of $Z$.

Many existing replicable estimators (see, e.g., \cite{impagliazzo2022reproducibility, hopkins2024replicability, liu2024replicableuniformity} adopt this framework, taking more samples and picking a random threshold to ensure replicability (see a detailed explanation in \cref{sec:technical_overview}).
Importantly, this is exactly the analysis framework for the replicable coin testing problem, where the hypothesis testing problem is to distinguish between samples from $\Ber(p)$ where $p=p_0$ or $p \geq q_0 = p_0 + \eps$.
By a standard argument, replicable coin testing as a black box can turn any non-replicable algorithm into a replicable one~\cite{impagliazzo2022reproducibility, hopkins2024replicability}, leading to a multiplicative overhead of $\frac{\log(1/\rho)}{\rho^2}$ in the samples needed for replicability.

While powerful, this approach is lossy in two ways. First, there is a $\log(1/\rho)$ gap between the best upper and lower bounds for replicable coin testing~\cite{impagliazzo2022reproducibility, hopkins2024replicability}.
Second, and more importantly, this black box approach does not make use of application-specific analysis of the estimator.
For many hypothesis testing problems (such as uniformity testing or closeness testing),  existing works have developed a sharp understanding of the expectation and variance of the test statistics.
In a recent work on replicable uniformity testing~\cite{liu2024replicableuniformity}, the authors show that, for large domain sizes, only a $1/\rho$ dependence is needed in the sample complexity by carefully adapting analysis of a known statistic to the replicable setting.

We develop general purpose estimators for expectation-gap statistics which \textbf{quantitatively improve over existing algorithms} and \textbf{makes it simple to port existing analyses from non-replicable setting}.

As applications of our framework, we get \emph{optimal bounds for replicable coin testing}, in both the expected number of samples needed as well as the worst-case sample complexity, up to constant factors in all terms. As a comparison, the prior state-of-the-art algorithm was given in \cite{hopkins2024replicability} which obtains a sample complexity of $O\p{\frac{q_0 \log(1/\delta)}{\eps^2 \rho}}$ in expectation and $O\p{\frac{q_0 \log(1/\delta)}{\eps^2 \rho^2}}$, where $\delta$ is the failure probability of the tester (see Theorem \ref{thm:prior_coin} for a formal statement of their guarantees). 

In contrast, we obtain the following:
\begin{theorem}[Informal; see \cref{thm:optimal-coin-testing}]
  There exists a $\rho$-replicable coin testing algorithm which succeeds with probability $1-\delta$ for any $\delta \leq \rho$ and uses $O\p{\frac{q_0}{\eps^2 \rho} + \frac{q_0 \log(1/\delta)}{\eps^2}}$ samples in expectation and $O\p{\frac{q_0}{\eps^2 \rho^2} + \frac{q_0 \log(1/\delta)}{\eps^2}}$
samples in the worst-case. All terms are necessary up to constant factors.
\end{theorem}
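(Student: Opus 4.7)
The plan is to instantiate the expectation-gap framework of Contribution 2 with the natural statistic $\hat p = \frac{1}{m}\sum_{i=1}^m X_i$, for which $\mathbb{E}[\hat p] = p$ and $\mathrm{Var}(\hat p) \le q_0/m$ throughout the relevant regime. The canonical algorithm draws $\tau \sim \mathrm{Unif}[p_0, q_0]$ from the shared randomness $r$, takes $m$ samples, and outputs \accept iff $\hat p < \tau$. Correctness is immediate from a Bernstein/Chernoff bound: for $m = \Omega(q_0 \log(1/\delta)/\eps^2)$, $|\hat p - p| \le \eps/2$ with probability $\ge 1-\delta$, which places $\hat p$ on the correct side of any $\tau \in [p_0, p_0+\eps]$. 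Replicability follows by marginalizing over $\tau$: conditional on the two independent samples, the probability that the shared $\tau$ separates $\hat p_1$ and $\hat p_2$ is $\min(1, |\hat p_1 - \hat p_2|/\eps)$, and $\mathbb{E}[|\hat p_1 - \hat p_2|] \le 2\sqrt{q_0/m}$ by the variance bound, so $\rho$-replicability holds once $m = \Omega(q_0/(\rho^2 \eps^2))$. Taking the maximum of the two requirements yields the stated worst-case bound.

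To improve the \emph{expected} complexity to $O(q_0/(\rho \eps^2) + q_0 \log(1/\delta)/\eps^2)$, I would use an adaptive variant that processes samples sequentially and stops at the first step $t$ with $(\hat p_t - \tau)^2 \, t \ge C q_0$, capped at the worst-case budget $m_{\max} = O(q_0/(\rho^2 \eps^2))$. Because $\tau$ is shared, both runs use the same stopping rule, and the expected running time integrates to $\mathbb{E}_\tau[\min(m_{\max}, C q_0/(p-\tau)^2)] = O(q_0/(\rho \eps^2))$: the uniform distribution of $\tau$ smooths the $1/(p-\tau)^2$ singularity near $\tau = p$, and truncation at $m_{\max}$ absorbs its divergent portion. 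To obtain the $\log(1/\delta)$ term \emph{additively} rather than picking it up multiplicatively inside a confidence threshold, the design uses a constant-failure-probability sequential tester and amplifies failure probability separately via a single mean-estimation phase of $O(q_0 \log(1/\delta)/\eps^2)$ samples that feeds into the boosted decision.

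Optimality of every term follows from the lower-bound machinery developed earlier. The $q_0 \log(1/\delta)/\eps^2$ term is the classical information-theoretic lower bound for distinguishing $\Ber(p_0)$ from $\Ber(q_0)$. The $q_0/(\rho^2 \eps^2)$ worst-case and $q_0/(\rho \eps^2)$ expected replicability lower bounds follow from applying \cref{thm:LB_chaining} to the chain of $t+1 = \Theta(1/\rho)$ Bernoullis $\Ber(p_0 + i\eps/t)$ for $i = 0, \dots, t$, whose adjacent members are statistically indistinguishable under the corresponding sample budgets. The main obstacle is the expected-complexity analysis: a deterministic $m$ achieves only one of the two replicability rates, so the adaptive stopping rule is essential, and its correctness rests on keeping the stopping time a function solely of shared randomness (so both runs stop simultaneously) while ensuring the additive failure-probability amplification does not contaminate the $1/\rho$ term.
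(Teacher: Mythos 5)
Your overall plan is the right one—the paper also instantiates the expectation-gap framework with $\hat p$, uses a uniformly random threshold, and achieves the improved expected sample complexity via adaptive sampling—but there are two real gaps in the adaptive part of your argument that the paper's construction is specifically designed to close.

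First, your replicability argument for the adaptive algorithm rests on the claim that ``because $\tau$ is shared, both runs use the same stopping rule'' and therefore ``both runs stop simultaneously.'' That inference is incorrect: your stopping rule $(\hat p_t - \tau)^2 t \ge C q_0$ depends on $\hat p_t$, which is data-dependent, so two independent runs will generically terminate at different times. The paper never relies on synchronized stopping. Instead it defines a ``replicably correct'' answer as a function only of the randomness $r$ and the (size-invariant, hence sample-size-independent) true mean $z=\E[Z(\ns)]$, and shows that each run returns that answer with probability $1-O(\rho)$ regardless of which level it terminates at; replicability follows by a union bound across the two runs. Your argument needs the same decoupling, and as written it does not have it. Relatedly, a stopping rule that is checked at \emph{every} time step $t$ gives $m_{\max}$ chances to commit to the wrong side of $\tau$, and the Chebyshev failure probability per step does not sum to $O(\rho)$ without additional amplification. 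The paper's version of your idea replaces the continuous-time boundary with $\lceil\lg(1/\rho)\rceil$ geometrically spaced levels, and at level $k$ it only commits when a median-of-$L_k$-means estimate (with $L_k = 16(K-k+1)$, giving exponentially small failure probability at coarse levels and constant failure at the last one) lands more than $2^{-k}\Delta$ from the random threshold. Both the discretization and the per-level amplification are doing work that your sketch elides, and both are needed for the final $O(\rho)$-replicability and $O(q_0/(\rho\eps^2))$ expected-sample calculation to go through.

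Second, a smaller issue: the uniform bound $\Var(\hat p)\le q_0/m$ is false when $p\gg q_0$ (e.g.\ $p\approx 1/2$ with $q_0$ small gives $\Var(\hat p)\approx 1/(4m)$). In the paper this is handled by the structured variance bound in \cref{def:expectation-gap}, which lets $\sigma(\ns)$ scale with $1+(p-q_0)/\eps$: the variance is allowed to grow exactly in proportion to how far the true mean is from the threshold interval, so Chebyshev still yields the needed $f(\ns)^2/\alpha^2$ bound (this is \cref{lem:chebyshev-expectation-gap}). Your argument should explicitly invoke this graceful degradation rather than asserting the crude $q_0/m$ bound, since replicability must hold for arbitrary $p$, not only $p\in[p_0,q_0]$. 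The integral computation for the expected stopping time and the pointer to the chaining lower bound are both essentially right (the paper in fact cites the prior lower bounds of Hopkins et al.\ and Lee et al.\ for tightness, and proves a chaining-based version only for the unbiased case in \cref{thm:coin_lb_via_chaining}), but they are the least delicate parts of the proof.
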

Notably, our tight dependency separates the replicable parameter $\rho$ with the parameter controlling the failure probability $\log(1/\delta)$.
As coin testing allows for black-box replicability of non-replicable algorithms, this immediately implies improved replicable algorithms in myriad applications.
We use our new result to design an algorithm for replicable hypothesis selection via a reduction to coin testing, see \cref{subsec:hypothesis-selection-overview} for details.

Adopting analyses in the non-replicable setting for uniformity testing~\cite{diakonikolas2019collisiontesters} and closeness testing~\cite{chan2014optimal} into our framework, we immediately get the following bounds for the replicable versions of those problems.
\begin{theorem}[Informal; see \cref{thm:uniformity-testing}]
There exists a $\rho$-replicable uniformity testing algorithm which succeeds with probability $1-\delta$ for any $\delta \leq \rho$ and uses $
O\p{\frac{\sqrt{n}\log(1/\delta)}{\eps^2} + \frac{\sqrt{n}}{\eps \rho} + \frac{1}{\eps^2 \rho}}
$
samples in expectation and \\
$
O\p{\frac{\sqrt{n}\log(1/\delta)}{\eps^2}
+ \frac{\sqrt{n}}{\eps^2 \rho} +\frac{\sqrt{n}}{\eps \rho^2} + \frac{1}{\eps^2 \rho^2}}
$
samples in the worst-case.
\end{theorem}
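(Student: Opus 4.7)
The plan is to apply the expectation-gap framework of Contribution~2 as a black box to the classical pairwise-collision statistic studied in~\cite{diakonikolas2019collisiontesters}. Given $m$ samples from $p$ over $[n]$, let $Z$ be the unbiased collision-based estimator of $\|p\|_2^2$, so $\mathbb{E}_p[Z] = \|p\|_2^2$. Under the uniform distribution $U$ we have $\mathbb{E}_U[Z]=1/n$, and by Cauchy--Schwarz any $p$ at $\ell_1$-distance $\geq \epsilon$ from $U$ satisfies $\|p\|_2^2 \geq 1/n + 4\epsilon^2/n$, yielding a mean gap $g=\Theta(\epsilon^2/n)$. The standard calculation gives $\mathrm{Var}_p(Z) = O(\|p\|_2^2/m^2 + \|p\|_3^3/m)$, with the second term vanishing exactly at $p=U$.

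Plugging $g$ and $\mathrm{Var}_p(Z)$ into the framework instantiates a uniformly random threshold in a length-$\Theta(g)$ interval around the decision boundary and outputs based on which side $Z$ lands on. The accuracy condition $\mathrm{Var}_p(Z)^{1/2} \lesssim g$, amplified by median-of-means to confidence $1-\delta$, contributes the first term $\sqrt{n}\log(1/\delta)/\epsilon^2$. The $\rho$-replicability condition $\mathrm{Var}_p(Z)^{1/2} \lesssim \rho g$, after case analysis of which variance contribution dominates and maximizing over $p$, yields the remaining worst-case terms $\sqrt{n}/(\epsilon^2\rho)$, $\sqrt{n}/(\epsilon\rho^2)$, and $1/(\epsilon^2\rho^2)$. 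The tighter in-expectation bounds would come from the framework's adaptive doubling: run with geometrically increasing sample sizes and halt as soon as $Z$ is safely outside the current uncertainty band, which happens with probability $1-O(\rho)$ per round and thus saves a factor of $1/\rho$ in expectation, replacing the replicability terms with $\sqrt{n}/(\epsilon\rho)$ and $1/(\epsilon^2\rho)$.

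The main obstacle is that replicability must hold with respect to \emph{every} underlying distribution $p$ on $[n]$, not just $p \in H_0$ or $p$ that is $\epsilon$-far, so the variance bounds must be established uniformly in $p$. Distributions with large $\|p\|_3^3$ (for example, highly concentrated $p$) create the dominant replicability cost and account for the worst-case $1/(\epsilon^2\rho^2)$ term; showing they do not interact badly with the random threshold mechanism requires splitting into the cases ``$Z$ is far from the threshold'' (safe by concentration) and ``$Z$ is near the threshold'' (safe by small variance). A secondary technical hurdle is verifying that the adaptive halting time is itself replicable across coupled runs on $X$ and $X'$, so that the expected-sample claim is not undone by mismatched termination rounds.
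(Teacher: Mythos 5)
Your proposal follows essentially the same route as the paper: the normalized collision statistic with expectation $\|p\|_2^2$, the null/alternate thresholds at $1/n$ and $(1+\Theta(\eps^2))/n$, the variance bound from the refined analysis of~\cite{diakonikolas2019collisiontesters}, a random threshold in the $\Theta(\eps^2/n)$-sized gap, median-of-means for the $\log(1/\delta)$ correctness term, a case analysis over which variance contribution dominates for the worst-case replicability terms, and geometrically increasing sample sizes with early halting for the in-expectation bound. The paper packages all of this as an instantiation of its ``size-invariant expectation-gap estimator'' (Definition~\ref{def:sample-invariant}, Theorem~\ref{thm:size-invariant-estimator}), feeding in $\sigma(\ns) = \frac{C_1}{\ns\sqrt{n}} + \frac{C_2}{\sqrt{\ns}}(\frac{\eps}{n} + \frac{\eps^{3/2}}{n^{3/4}})$ and breakpoints $t_k = 2^{-k}$, with the variance degradation for distributions far from the thresholds absorbed into the $\sigma(\ns)\cdot(1 + \max\{\ldots\})$ slack built into Definition~\ref{def:expectation-gap}.

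The one concern you flag but do not resolve --- ``verifying that the adaptive halting time is itself replicable across coupled runs on $X$ and $X'$, so the expected-sample claim is not undone by mismatched termination rounds'' --- is precisely what the size-invariance property buys, and it is the reason this statistic (rather than, say, a $\chi^2$-type statistic) works for the in-expectation bound. Because $\E[Z(\ns)] = \|p\|_2^2$ is independent of $\ns$, the notion of ``replicably correct answer'' (which side of $\tau_0 + r\Delta$ the expectation lies on) is the same at every level of the geometric doubling, so two coupled runs that terminate at different rounds still agree with high probability; the failure probability at each round is controlled so that the sum telescopes to $O(\rho)$. This is spelled out in the Remark after the proof of Theorem~\ref{thm:size-invariant-estimator}. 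If you had used a statistic whose expectation drifts with $\ns$, the argument would genuinely break, so your worry is well-placed --- the resolution is to recognize the collision statistic is already normalized to eliminate it.
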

The prior work on replicable uniformity testing from \cite{liu2024replicableuniformity} requires $O\p{\frac{\sqrt{n}\log(1/\rho)\sqrt{\log(n/\rho)}}{\eps^2\rho} + \frac{\log(1/\rho)}{\eps^2\rho^2}}$ samples and succeeds with probability $1-\rho$.
Our worst-case sampling bound is worse than this bound in some parameter regime due to the extra $\sqrt{n}/\eps\rho^2$ factor but otherwise improves upon this work.
Our in-expectation sampling bound is always superior, and there are no non-trivial prior bounds for sample complexity in expectation to our knowledge.
Intriguingly, our in-expectation bound implies that replicability comes \emph{for free} if $n$ is sufficently large and $\rho \ll \frac{\eps}{\log(1/\delta)}$.

\begin{theorem}[Informal; see \cref{thm:closeness-upperbound}]
For any constant $C$, there exists a $\rho$-replicable closeness testing algorithm which succeeds with probability $1-\rho^C$ and uses $O\p{\frac{n^{2/3}}{\eps^{4/3}\rho^{2/3}} + \frac{\sqrt{n}}{\eps^2 \rho} + \frac{1}{\eps^2 \rho^2}}$ samples in the worst-case. 
\end{theorem}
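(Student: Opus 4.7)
The plan is to apply the expectation-gap framework of Contribution~2 to the non-replicable closeness testing statistic from \cite{chan2014optimal}. Given $m$ samples from each of $p$ and $q$ with bucket counts $X_i, Y_i$, the statistic
\[
Z \;=\; \sum_{i=1}^{n}\bigl((X_i-Y_i)^2-X_i-Y_i\bigr)
\]
satisfies $\mathbb{E}[Z]=m^2\|p-q\|_2^2$ (in particular $0$ under the null) and has a sharp variance bound in \cite{chan2014optimal} as a function of $m$, $n$, $\|p\|_2$, and $\|q\|_2$. This is exactly the kind of univariate expectation-gap statistic that the framework of Contribution~2 is designed to consume.

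First I would import from \cite{chan2014optimal} the expectation gap $g$ (after the standard flattening preprocessing that replaces $p,q$ with distributions having $\ell_2^2$-mass $O(1/n)$ while preserving their $\ell_1$ distance up to a constant) together with the variance bounds under both hypotheses. Plugging these into the framework reduces the task to choosing $m$ large enough that both (i) the non-replicable accuracy condition $\sqrt{\mathrm{Var}[Z]}\ll g$ and (ii) the replicability condition $\sqrt{\mathrm{Var}[Z]}\lesssim\rho\cdot g$ hold, so that a uniformly random threshold placed inside $[0,g]$ classifies the two cases consistently with probability $1-\rho$. Theorem~\ref{thm:main_canonical} then lets us pass to a canonical algorithm at no extra cost.

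Solving these inequalities across the regimes considered in \cite{chan2014optimal} yields the three terms of the theorem: the $n^{2/3}/(\eps^{4/3}\rho^{2/3})$ term is the replicable analogue of the main dense closeness testing bound, where the $\rho^{-2/3}$ exponent (rather than the $\rho^{-1}$ seen for uniformity) arises from the subquadratic dependence of the tight variance bound on $m$ after flattening; the $\sqrt{n}/(\eps^2\rho)$ term handles the intermediate regime dominated by the uniform-like quadratic variance; and the $1/(\eps^2\rho^2)$ term is the coin-testing floor that the framework inherits from replicable Bernoulli testing.

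The main obstacle is amplifying the success probability to $1-\rho^C$ without the customary $\log(1/\delta)$ factor that boosting via median of independent runs would incur (and which would also require an extra argument to preserve replicability). Because the closeness statistic is, after Poissonization, a sum of many nearly-independent bounded bilinear summands, a Bernstein-type concentration inequality gives tails that decay exponentially, rather than polynomially as Chebyshev would, in the number of standard deviations. Placing the random threshold a constant fraction inside the gap and applying this sharper concentration lets the failure probability shrink to $\rho^C$ by inflating $m$ by only a constant depending on $C$, while the replicability guarantee is unaffected since the threshold still occupies a $\Theta(1)$ fraction of the gap. Assembling these pieces produces a $\rho$-replicable closeness tester matching the claimed sample complexity.
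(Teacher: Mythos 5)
Your plan and the paper's proof share the broad contour---apply the Contribution~2 expectation-gap framework to the Chan--Diakonikolas--Valiant--Valiant closeness statistic---but the two diverge in ways that leave real gaps.

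The paper plugs the \emph{normalized} statistic $Z = \sum_i \frac{(X_i-Y_i)^2 - X_i - Y_i}{X_i+Y_i}$ (Equation~\eqref{eq:closeness_stat}) into the general estimator \cref{alg:general-estimator} with $t=\rho$, uses the expectation and variance bounds from Lemma~\ref{lem:chan14} directly, and splits into the cases $\ns < n$ and $\ns\ge n$ to extract the three terms. You instead propose the \emph{unnormalized} statistic $\sum_i (X_i-Y_i)^2-X_i-Y_i$ preceded by flattening. That preprocessing is data-dependent: one typically spends a batch of samples to identify heavy elements and split them, and two independent sample sets may produce different flattenings and hence different effective test statistics. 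You would need a separate argument that the flattening step is itself replicable (or that the downstream statistic is insensitive to the choice), which your sketch does not supply. The normalized statistic avoids the issue entirely---the $X_i+Y_i$ denominator tames heavy elements automatically, and the variance bounds you need are already in \cite{chan2014optimal}.

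The bigger gap is in how you obtain the $1-\rho^C$ success probability. You assert that, after Poissonization, the statistic is a sum of ``bounded'' bilinear summands admitting a Bernstein tail bound. That premise fails: Poisson counts are unbounded, so $(X_i-Y_i)^2$ (and even the normalized summand, which scales like $X_i+Y_i$) has no uniform bound, and a Bernstein inequality does not apply off the shelf. Establishing sub-exponential concentration for this statistic would require truncation and moment control well beyond what is sketched. The paper sidesteps this entirely: \cref{thm:general-estimator} already delivers failure probability $t^{Ct^2/\rho^2}$ purely from a Chebyshev bound on a single repetition, amplified by a Chernoff bound on the median-of-means in Steps~3--5 of \cref{alg:general-estimator}; substituting $t=\rho$ gives exactly $\rho^C$. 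Since you are already invoking the Contribution~2 framework, you should take the success guarantee it hands you instead of re-deriving it via a Bernstein bound you have not proved and that would, at minimum, require significant extra work.

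Finally, a small inaccuracy: the $1/(\eps^2\rho^2)$ term in the sample complexity does not come from a reduction to replicable coin testing. In the paper it arises from the variance term $O(\ns\eps^2)$ in the regime $\ns\ge n$ (the $\sigma(\ns)\lesssim\sqrt{\ns}\eps$ case), which forces $\ns_t\gtrsim 1/(\eps^2 t^2)$. It coincides numerically with the coin-testing floor but is not obtained by invoking it.
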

To our knowledge, there has been no prior work on replicable closeness testing. Our bounds are \emph{optimal}, they match our lower bound in all terms up to constant factors.

\paragraph{Contribution 3: High-Dimensional Gaussian Mean Testing.} 
In the previous two contributions, we mainly focused our attention to the study of discrete distributions over a finite domain. For the third contribution, we turn our attention to high-dimensional distributions and focus on one of the most classical high-dimensional testing problems: Given sample access to a distribution $\mathcal{D}$ over $\R^d$ and a parameter $\alpha$, design a $\rho$-replicable algorithm which accepts if $\mathcal{D} = \mathcal{N}(0, I)$ and rejects if $\mathcal{D} = \mathcal{N}(\mu, I)$  for any $\mu$ satisfying $\|\mu\| \ge \alpha$. Note that we require replicability even if $\mathcal{D}$ is not among the null or alternate hypotheses and is an arbitrary high-dimensional distribution.  

Without replicability, rejecting if and only if the norm of the empirical mean exceeds a fixed threshold is an efficient and information-theoretical optimal algorithm, requiring $\Theta(\sqrt{d}/\alpha^2)$ samples~\cite{SrivastavaD08}. Under replicability, this problem was previously studied in \cite{bun2023stability}, which gave an inefficient algorithm (not running in polynomial time) with sample complexity $\tilde{O}\left( \frac{\sqrt{d}}{\rho^2 \alpha^2}\right)$ by appealing to a differentially private (DP) mean testing algorithm of \cite{narayanan2022private} and using reductions between DP and replicability. (We remark, however, that by using replicable coin testing to turn any non-replicable algorithm into a replicable one~\cite{impagliazzo2022reproducibility, hopkins2024replicability} and the non-replicable algorithm of~\cite{SrivastavaD08}, one can obtain the $\tilde{O}\left(\frac{\sqrt{d}}{\rho^2 \alpha^2}\right)$ sample complexity efficiently.)

Our algorithm obtains the following improved guarantees.

\begin{restatable}[Replicable Gaussian Mean Testing]{thm}{repgaussian}
\label{thm:gaussian-testing-main}
Let $\mathcal{D}$ be a distribution over $\R^d$ which we have sample access to, and fix parameters $\alpha \in (0, 1]$\footnote{While our algorithm can extend to $\alpha > 1$, we do not focus on this setting as our bound is smaller only when $\alpha \le 1$.} and $\rho \in (0, 1)$. There exists a polynomial-time algorithm $\mathcal{A}$ taking $\ns = \tilde{O}\left(\frac{\sqrt{d}}{\alpha^2 \rho} + \frac{\sqrt{d}}{\alpha \rho^2} + \frac{1}{\alpha^2 \rho^2}\right)$ samples from $\mathcal{D}$ which satisfies the following properties:
\begin{itemize}
    \item $\mathcal{A}$ is $\rho$-replicable.
    \item If $\mathcal{D} = \mathcal{N}(0, I)$ then $\mathcal{A}$ outputs \text{\accept} with probability at least $0.99$.
    \item If $\mathcal{D} = \mathcal{N}(\mu, I)$ for any $\mu$ satisfying $\|\mu\| \ge \alpha$, then $\mathcal{A}$ outputs \text{\reject} with probability at least $0.99$.
\end{itemize}
\end{restatable}
Note that Theorem \ref{thm:gaussian-testing-main} improves upon the guarantees of \cite{bun2023stability} in two distinct ways. First, our algorithm runs in polynomial time (the reduction from replicability to DP used in \cite{bun2023stability} is inherently inefficient). Secondly and perhaps more importantly, we decouple the $1/\rho^2$ term from the standard $\sqrt{d}/\alpha^2$ term, showing that a $1/\rho^2$ overhead is not needed for this fundamental problem.

We also prove the first nontrivial lower bound on the sample complexity on any replicable Gaussian mean testing algorithm. Our upper and lower bounds do not match (as the upper bound has an additional $\frac{\sqrt{d}}{\alpha \rho^2}$ additive term), and a natural open question is to resolve this gap.

\begin{restatable}[Replicable Gaussian Mean Testing Lower Bound]{thm}{repgaussianlb}
\label{thm:gaussian-testing-lb}
Let $\mathcal{A}$ be a $\rho$-replicable algorithm that distinguishes between samples from $\NN(0, I)$ and $\NN(\mu, I)$ for any $\|\mu\| \ge \alpha$. (I.e., it satisfies the three guarantees in \Cref{thm:gaussian-testing-main}). Then, $\A$ must use $\ns = \Omega\left(\frac{\sqrt{d}}{\alpha^2 \rho} + \frac{1}{\alpha^2 \rho^2}\right)$ samples in the worst case.
\end{restatable}

We remark that our lower bound builds on the techniques we establish in this paper, such as our reduction to the canonical properties of replicable testers and our chaining lower bound tool, though we will require some modifications (see \Cref{sec:technical_overview} for more details).

\paragraph{Concurrent Work}
In independent and concurrent work, Diakonikolas, Gao, Kane, Liu, and Ye \cite{diakonikolas2025replicable} also study replicable hypothesis testing.
They show similar lower bounds for uniformity testing (with no limitation to symmetric algorithms as in \cite{liu2024replicableuniformity}) and upper and lower bounds for closeness testing.
Our bounds are quantitatively tighter  by logarithmic factors in some settings and never worse.
Intriguingly, they prove these lower bounds without showing that there exist canonical permutation-robust/label-invariant replicable algorithms for symmetric properties, but rather their proof directly deals with asymmetric algorithms.
They also provide upper bounds for replicable independence testing which we do not study in this work.
They do not have results on canonical properties of replicable testers, general tools for making existing testers replicable, or results for Gaussian mean testing or hypothesis selection.

\section{Technical Overview}\label{sec:technical_overview}

\subsection{Canonical Properties of a Replicable Tester}

Replicability measures the stability of an algorithm with respect to each individual distribution. In contrast, standard techniques for proving lower bounds in distribution testing typically operate over families of distributions, creating a misalignment that complicates lower bound proofs for replicable algorithms. To bridge this gap, we impose well-structured properties on our algorithms, enabling rigorous analysis from both upper-bound and lower-bound perspectives.

To this end, we show
that the existence of a replicable testing algorithm implies the existence of another replicable algorithm for the same problem with a well-defined structural form. These structural assumptions will prove essential for deriving lower bounds later. In particular, we identify the following properties:

\begin{itemize}
    \item \textbf{Canonical random threshold algorithm:}  
    If a replicable algorithm exists, then without loss of generality we may assume it computes a deterministic function of its input, 
    $f: \mathcal{X}^n \rightarrow [0,1],$
    and compares the value of this function to a random variable \( r \) drawn uniformly from \([0,1]\). In this setting, the function \( f(X) \) is determined by the probability that \(\A(X; r')\) outputs \(\text{\accept}\) over the random choices of \( r' \). This crucial observation helps us fully separate the notion of randomness from the input sample set.

    \item \textbf{Sample order invariant algorithm:}  
    If a replicable algorithm exists for distribution testing, then there also exists an equivalent replicable algorithm whose output remains invariant under any permutation of the input samples, while maintaining the same performance.

    \item \textbf{Symmetric property and sample label invariance:} We show that for symmetric properties---where membership and distance remain unchanged under any permutation---any replicable algorithm can be assumed to be label invariant. Specifically, if a replicable algorithm exists for testing a symmetric property (e.g., uniformity or closeness testing), then there exists one whose output does not depend on the sample labeling while maintaining the same performance. The main idea to prove this fact is to use the canonical deterministic function 
$f$: the performance on a sample $X$ 
can be equated to the average performance over all permutations of 
$X,$ 
thereby eliminating label dependence.

    \item \textbf{Permutation-Robust Replicability:}  
    We show that the label-invariant algorithm (in the previous property) satisfies an even stronger replicability condition. Specifically, its outcome remains stable even if the underlying distribution is replaced by another obtained by permuting the labels. An algorithm is said to satisfy \(\rho\)-\emph{permutation robust replicability} if, for any prior distribution $\DD$ over a given distribution and all of its permutations, we have:
    $$\Prma[{r\sim\Unif[0,1]},\ p, p^\pi \sim \DD, X\sim p^{\otimes \ns}, X'\sim (p^\pi)^{\otimes \ns} ]{\A_3(X;r) \not = \A_3(X';r)} \leq \rho\,,$$
    where $p^\pi \coloneqq p \circ \pi^{-1}$.
    This last property is one of the key observations that brings us closer to the standard setting for proving lower bounds for replicable testers. The replicability assumption works not only on a single distribution \(p\), but also on any prior over \(p\) and its permutations.

\end{itemize}

\subsection{Lower Bounds via Chaining}\label{sec:tech_chaining}
We give a technical overview of our Theorem \ref{thm:LB_chaining}. Intuitively, our approach boosts standard indistinguishability lower bounds to replicable ones, which can be then applied to various downstream problems. 

First we review standard hypothesis testing, without replicablity. In standard distribution testing, one typically considers two distributions, $p^+$ and $p^-$, with the promise that the tester can distinguish between them. If $p^+$ and $p^-$ appear very similar based on $k$ samples, then no tester using $k$ samples can reliably distinguish them. We enhance this argument as follows.

Suppose there are $t = \Theta(1/\rho)$ distributions $p_1, p_2, \ldots, p_t$, where $p_0$ corresponds to $p^+$ and $p_t$ corresponds to $p^-$. While a standard tester only needs to distinguish between $p_0$ and $p_t$, we show that any replicable algorithm must distinguish between {\em some} consecutive pair $p_i$ and $p_{i+1}$. This observation implies that if one can construct a \emph{chain} of indistinguishable pairs, then an impossibility result for replicability immediately follows. The improvement in the sample complexity lower bound is a result of packing $t$ distributions much closer together (e.g., between $p_0$ and $p_1$), which significancy increases the difficulty of the problem.

Another challenge in converting standard lower bounds to replicable ones is that standard settings often provide lower bounds for distinguishing between two families of distributions rather than two individual distributions. This poses a problem because the replicability guarantee does not extend across different distributions. For instance, if given datasets $X \sim \DD$ and $X' \sim \DD'$ where $\DD, \DD'$ are part of the same family of distributions, the replicability of an algorithm $\AA$ does not state any relationship between $\AA(X)$ and $\AA(X')$. Therefore, one cannot directly replace the individual $p_i$'s with families drawn from the priors $\DD_i$'s.

We overcome this difficulty by leveraging our canonical characterization of a replicable tester given in \cref{thm:main_canonical}. For symmetric properties, we show that if a $\rho$-replicable algorithm exists, then there is one that is also $\rho$-permutation robust replicable. The label invariant property ensures that the replicability guarantee holds even when the second sample set is drawn from a distribution that is a permutation of the first. Consequently, if the priors $\DD_i$ are supported on $p_i$ and all its permutations, our lower bound theorem is applicable, affording us a significant advantage.

Most importantly, this approach reframes the problem in terms of statistical indistinguishability and distribution packing,
allowing us to apply established techniques for non-replicable distribution testing lower bounds.
Consequently, our lower bound immediately implies several key lower bounds for uniformity, identity, and closeness testing (see Section~\ref{sec:chaining_applications}).

\paragraph{Sketch of Proof of \cref{thm:LB_chaining} :} Here, we use yet another property of the canonical tester. Let $h(X)$ denote the deterministic function used by our canonical tester to determine its output (by comparing it to a random threshold $r\in[0,1]$). We prove that for any $X\sim\DD_i$, the value $h(X)$ is highly concentrated in an interval $\II_i$ of length $O(\rho)$—a direct consequence of $\rho$-permutation robust replicability.

Using the accuracy assumptions, we argue that $\II_0$ is centered near $1/3$, while $\II_t$ is centered near $2/3$. Moreover, the indistinguishability between $\DD_i$ and $\DD_{i-1}$ forces the intervals $\II_i$ and $\II_{i-1}$ to overlap; otherwise, membership in $\II_i$ would serve as an effective distinguisher. 

Thus, the sequence of intervals $\II_0, \II_1, \ldots, \II_t$ must cover an interval of constant length while each interval has length $O(\rho)$ and overlaps with its neighbors. However, if $t\ll\Theta(1/\rho)$, such a covering is impossible, leading to the desired lower bound. The formal details are given in \cref{sec:chaining}.
\subsection{Generalizing and Improving Expectation-Gap Estimators}

We systematize and quantitatively improve a ubiquitous strategy for designing replicable testing algorithms from non-replicable testing algorithms. As direct applications of our general estimation technique, we get the first constant-factor optimal bounds for replicable coin testing as well as new and improved bounds for replicable uniformity testing and replicable closeness testing with simpler analyses.

A classic strategy in hypothesis testing is the so-called expectation-gap approach, which we now describe.
Given $m$ samples from a distribution, consider a one-dimensional test statistic $Z(m) \in \R$.
Two thresholds are defined: $\tau_0(m) \geq \Ema{Z(m) | H_0}$ upper bounds the expectation under any distribution belonging to the null hypothesis, and $\tau_1(m) \leq \Ema{Z(m) | H_1}$ lower bounds the expectation under any distribution belonging to the alternate hypothesis.
Finally, let $\sigma(m)$ be an upper bound on the standard deviation of $Z(m)$.\footnote{We actually only require $\sigma(m)$ to upper bound the standard deviation when $\Ema{Z(m)} \in [\tau_0(m), \tau_1(m)]$ and allow it to smoothly degrade as the expectation moves away from this interval.}
For simplicity of notation, we drop the argument $m$ from $Z, \tau_0, \tau_1, \sigma$ for the remainder of this subsection.

Given these parameters, take enough samples such that $\Delta := \tau_1 - \tau_0 > 0$ and $\sigma \leq \Delta/4$.
Then, Chebyshev's bound implies that the algorithm which thresholds at the midpoint $\tau_0 + \Delta/2$, outputting \accept if the empirical statistic $Z$ is below the threshold and \reject otherwise, is correct with probability $3/4$.

Classic examples of this framework are testers based on the empirical mean (e.g., for coin testing problems), collision statistics for uniformity testing, $\chi^2$ statistics, among many others (we refer to the surveys \cite{canonne2020survey, Canonne_book_22}).

The design of replicable algorithms often makes use of the same framework \cite{impagliazzo2022reproducibility, hopkins2024replicability, liu2024replicableuniformity, eaton2023replicable, esfandiari2023replicable, esfandiari2023replicable_clustering, ahmadi2024replicable, larsen2025improved, kalavasis2024replicable}.
The key difference is that rather than thresholding at the midpoint of the interval, the threshold is chosen as $\tau_0 + r\Delta$ for $r \sim \Unif\br{\frac{1}{4}, \frac{3}{4}}$ (clearly, correctness still holds as long as twice the number of samples are taken).
Enough samples are taken so that the standard deviation is a $\rho$ fraction of the interval: $\sigma \leq \rho \Delta$.
Repeating $\log(1/\rho)$ times and taking the median $V$ (or simply taking $\log(1/\rho)$ times more samples for well-concentrated statistics), a $\rho \Delta$-sized interval around the expectation $\Ema{Z}$ contains the median with probability $1-\rho$. 
The algorithm will only fail to be replicable if, over the randomness of $r$, two separate samples lead to estimates $V_1$ and $V_2$ which are on opposite sides of the threshold $\tau_0 + r \Delta$.
Assuming that both of the median estimates are contained in the $\rho \Delta$-sized interval, which happens with probability at least $1-2\rho$, the algorithm will be replicable as long as $\tau_0 + r \Delta$ does not fall in this interval,
which also occurs with probability $1-O(\rho)$. Overall, this implies $O(\rho)$-replicability.
Without additional structure about the estimator, this algorithm has a multiplicative $O\p{\frac{\log(1/\rho)}{\rho^2}}$ overhead on the sample complexity required to solve the non-replicable problem. This approach is used in many of the aforementioned works.

We generalize and improve upon this approach in four ways:
\begin{itemize}
    \item \textbf{Worst-Case Sample Complexity:} We show that the $\log(1/\rho)$ overhead described above is unnecessary.
    At the point where the statistic has standard deviation $\sigma = O(\rho \Delta)$, it suffices to simply compare the statistic $Z$ with the threshold $\tau_0 + r \Delta$ to decide whether to \accept or \reject.
    The median step used in prior work is not needed.
    This leads to black-box replicability overhead of $O\p{1/\rho^2}$ samples, removing a $\log(1/\rho)$ factor.
    
    The key observation is to define a ``replicably correct'' answer depending on which side of the threshold $\tau_0 + r\Delta$ contains the expectation $\Ema{Z}$ (this concept appears in prior works on replicability \cite{impagliazzo2022reproducibility, hopkins2024replicability}). 
    Importantly, this version of correctness is defined even if the distribution belongs neither to the null or alternate hypothesis.
    If an algorithm $\A$ is replicably correct with probability $1-\rho$ over the randomness in $r$ and the sample, then it is easy to see that it is $O(\rho)$-replicable as correctness is consistently defined for any fixed $r$.

    Given this concept, the algorithm fails to be replicable if the empirical statistic $Z$ and the expectation $\Ema{Z}$ land on opposite sides of the threshold.
    We consider geometrically increasing buckets of size $\{\rho, 2\rho, 4\rho, \ldots, 1/4\}$ corresponding to the magnitude of the distance $\abs{\frac{\Ema{Z} - \tau_0}{\Delta}-r}$.
    The probability of $r$ landing in a bucket of size $2^{-i}$ is $O\p{2^{-i}}$ as $r$ is chosen uniformly from a constant-sized interval.
    On the other hand, Chebyshev's bound implies that the event of $Z$ deviating from its expectation by more than $2^{-i-1}$ occurs with probability at most $O\p{\rho^2 2^{2i}}$.
    Roughly, the probability of failing at a given level is $O\p{\rho^2 2^i}$.
    This geometric sum over all levels converges to $O(\rho)$, as required. As mentioned in \cref{sec:contributions}, this idea already leads to the right worst-case upper bounds for the replicable coin testing, uniformity, and closeness testing.
    
    \item \textbf{Expected Sample Complexity:} The idea of considering geometric levels of the distance $\abs{\frac{\Ema{Z} - \tau_0}{\Delta}-r}$ appeared previously in \cite{hopkins2024replicability} to show that, surprisingly, the overhead of replicablity can be significantly improved if we are concerned with expected sample size rather than worst-case sample size.
    Specifically, they show that the quadratic worst-case sampling overhead can be improved to $O\p{\frac{\log(1/\rho)}{\rho}}$ for the expected sample size.

    We generalize their argument beyond coin testing and improve it using the same underlying techniques as in our analysis for worst-case sample size.
    For the in-expectation results, we focus on a specific kind of expectation-gap statistic we refer to as size-invariant.
    As one example, this class of statistics captures the empirical mean as used in the coin testing problem studied in prior work~\cite{hopkins2024replicability} and thus apply in a black-box fashion to give replicability results.
    Our analysis improves the overhead to $O\p{1/\rho}$ in expectation, again removing a $\log(1/\rho)$ factor from prior work.

    Size-invariant statistics are those which are normalized such that $\tau_0$, $\tau_1$, and $\Ema{Z}$ are constant functions in $m$---the expectation of the statistic does not vary with the number of samples.
    This holds for several natural statistics such as the empirical mean or collision probability, but does not necessary hold for other statistics such as $\chi^2$-statistics.

    As the expectation $z = \Ema{Z}$ is independent of the sample size, we can define a ``replicably correct'' answer as above which \emph{does not depend on the sample size}.
    The algorithm then proceeds by taking geometrically increasing number of samples.
    In the case of coin testing, for a given level $i \in [\ceil{\log(1/\rho)}]$, the sampling overhead is $O\p{2^i \p{\log(1/\rho) - i + 1}}$.
    The algorithm only terminates if the estimate is roughly $2^{-i}$ far from the threshold. As we will see, a similar argument can be implies for other problems. e.g., in to uniformity testing when the domain is large.
    
    The structure of the replicability analysis is similar to the worst-case sample analysis.
    We show that at a given sampling level, the probability of failing replicably by the statistic deviating too far from the expectation is exponentially small in $\Omega(\log(1/\rho) - i + 1)$: at the first level, the probability failing is $\poly(\rho)$, and at the final level, the probability of failing is constant.
    On the other hand, the probability of reaching level $i$ (i.e., not terminating before round $i$) is $O\p{2^{-i}}$ as $r$ is chosen uniformly and the algorithm only does not terminate if it is roughly within $2^{-i}$ of the threshold defined by $r$.
    Combined, the probability of failing at a given level is a geometrically increasing sequence and is dominated by the final level, where the probability of deviation is constant, but the probability of ever reaching the level is at most $\rho$.
    
    \item \textbf{High Probability of Correctness:} Independent of replicability, it may be desirable that the algorithm returns $\accept$ under the null hypothesis and $\reject$ under the alternate hypothesis with probability $1-\delta$ for some $\delta > 0$.
    Many $\rho$-replicable algorithms are automatically guaranteed to also be correct with probability $1-\rho$.
    In prior works, to achieve high probability guarantees for $\delta < \rho$, the sample complexity has to be multiplied by an additional factor of $O\p{\frac{\log(1/\delta)}{\log(1/\rho)}}$.

    We show that algorithms in this framework already succeed with probability much greater than $1-\rho$ with at most a constant-factor more samples.
    We achieve this by separating the algorithmic steps and analysis which guarantee the correctness of the algorithm from the steps which are required for the algorithm to be replicable.
    
    Our algorithm for general expectation-gap statistics succeeds with probability at least $1-\rho^C$ for any constant $C$.
    For some statistics, such as the empirical mean for coin testing, the success probability is at least $1 - \exp\p{-\Omega(1/\rho^2)}$.

    In the case of size-invariant expectation-gap statistics, $\delta$ can be specified to the algorithm with an additive sample complexity term that depends on $\log(1/\delta)$ and not at all on $\rho$.
    This means that if the multiplicative overhead for replicability is $c(\rho)$, then the algorithm succeeds with probability $1-\exp(-c(\rho))$ by doubling the sample size.
    
    \item \textbf{General Framework:} We formalize this general strategy, so that after specifying valid $Z$, $\tau_0$, $\tau_1$, and $\sigma$ for any particular estimator, one can immediately get an algorithm which is provably replicable and correct with high probability and has tight sample complexity bounds (for this style of analysis).
    Furthermore, these bounds improve with better analysis of the size of the expectation gap or of the variance.
    
    Using expectation-gap statistics and their analyses which are folklore or appear in prior literature on non-replicable testing, we immediately get the following results:
    \begin{enumerate}[label=(\alph*)]
        \item Optimal worst-case and in-expectation sampling bounds for replicable coin testing (and thus for black-box replicable testing).
        \item Improved sampling bounds in the worst-case in some regimes and state-of-the-art in-expectation sampling bounds for replicable uniformity testing.
        \item Near-optimal worst-case sampling bounds for replicable closeness testing. These are the first non-trivial\footnote{To our knowledge, the only known prior bounds for this problem come from the black-box strategy which has multiplicative overhead $O\p{\frac{\log(1/\rho)}{\rho^2}}$ from prior work and $O\p{\frac{1}{\rho^2}}$ with our new bounds.} sampling bounds for replicable closeness testing and match our lower bound up to constant or logarithmic factors depending on the parameter regime.
    \end{enumerate}
\end{itemize}

The details of these results are in \cref{sec:expectation_gap_tester}.

\subsection{Gaussian Mean Testing}

For fixed parameters $\alpha > 0$, $\rho \in (0, 1)$, and $d \in \N$, we recall that a $\rho$-replicable Gaussian mean testing algorithm in $\R^d$ is a $\rho$-replicable algorithm that accepts with at least $0.99$ probability if given $\ns$ i.i.d. samples from $\NN(0, I)$ and rejects with at least $0.99$ probability if given $\ns$ i.i.d. samples from $\NN(\mu, I)$ for any $\|\mu\| \ge \alpha$.

We give an overview of our algorithm (\Cref{thm:gaussian-testing-main}) that uses only $\ns = \tilde{O}\left(\frac{\sqrt{d}}{\alpha^2 \rho} + \frac{\sqrt{d}}{\alpha \rho^2} + \frac{1}{\alpha^2 \rho^2}\right)$ samples, followed by an overview of our lower bound (\Cref{thm:gaussian-testing-lb}) showing $\ns = \tilde{O}\left(\frac{\sqrt{d}}{\alpha^2 \rho} + \frac{1}{\alpha^2 \rho^2}\right)$ samples are necessary.

\paragraph{Algorithm:}
A first attempt at a replicable Gaussian mean testing algorithm that ``almost works'' is to directly combine the known algorithm for Gaussian mean testing in the non-replicable setting with the expectation-gap estimator. To explain this idea further, we recall the non-replicable algorithm~\cite{SrivastavaD08} for Gaussian mean testing. Given samples $X_1, \dots, X_{\ns}$, the algorithm simply computes the statistic $T = \|\sum X_i \|^2 - s \cdot d,$ and accepts as long as the statistic is below a certain threshold. To see why this works, first note that for any distribution $\cD$ with mean $\mu$, the expectation of $T = \left\|\sum_{i=1}^{\ns} X_i \right\|^2 - s \cdot d$, where $X_1, \dots, X_{\ns} \overset{i.i.d.}{\sim} \cD$, equals $\ns^2 \cdot \|\mu\|^2.$ This can be verified by writing $\left\|\sum_{i=1}^{\ns} X_i \right\|^2 = \left\langle \sum_{i=1}^{\ns} X_i, \sum_{i=1}^{\ns} X_i \right\rangle = \sum_{i, j=1}^{\ns} \langle X_i, X_j\rangle$, and for $X_i, X_j \sim \cN(\mu, I)$, $\Ema[]{\langle X_i, X_j \rangle} = \langle \Ema[X_i \sim \cN(\mu, I)]{X_i}, \Ema[X_j \sim \cN(\mu, I)]{X_j} \rangle = \|\mu\|^2$ and $\Ema[X_i \sim \cN(\mu, I)]{\|X_i\|^2} = d + \|\mu\|^2$. Also, the variance of $T$ can be effectively bounded if $\cD$ is a Gaussian with identity covariance. Hence, as long as the standard deviation is much smaller than the discrepancy $\ns^2 \cdot \alpha^2$ of the mean between the null and alternative hypotheses, we can use Chebyshev's inequality.

In the replicable setting, based on the canonical tester, we can sample a random seed $r \sim \Unif([0, 1])$, and accept if $r \le \frac{T}{\ns^2 \alpha^2}$. The idea is that, if the statistic has standard deviation $\ns^2 \alpha^2 \cdot \rho$, then for probability that $r$ was below the threshold for $X_1, \dots, X_{\ns}$ but above the threshold for a fresh set of samples $X_1', \dots, X_{\ns}'$ (or vice versa) is at most $\rho$. So, we just need to make sure that the variance of $\|X_1 + \cdots + X_{\ns}\|^2$ is at most $\rho \cdot \alpha^2 \ns^2$ for Gaussians. A similar calculation to that done for the non-replicable algorithm~\cite{SrivastavaD08} will tell us that $\ns = O\left(\frac{\sqrt{d}}{\alpha^2 \rho} + \frac{1}{\alpha^2 \rho^2}\right)$ samples suffice.

Unfortunately, there is one major problem with this approach, which is that the algorithm must be replicable for \emph{any} distribution $\cD$ over $\R^d$, not just Gaussian distributions. Some of these distributions may have terrible variance for the statistic described. To fix this, we design a replicable tester that will reject certain families of ``bad'' distributions, and then apply the thresholding algorithm.

The main ways that a distribution can be bad are the following: either the distribution has large covariance in one or a few directions (as opposed to a spherical Gaussian which has evenly spread out covariance), or the distribution has a high probability of two sampled data points $X, Y$ having large inner product. We prove that, as long as these do not hold, the variance of $\left\|\sum_{i=1}^{\ns} X_i \right\|^2$ is small. Specifically, we will design a replicable algorithm that eliminates \emph{bad} distributions, i.e., distributions $\cD$ with either of the following properties.
\begin{enumerate}
    \item The operator norm of $\Ema[X_i \sim \cD]{X_iX_i^\top}$ is too large, i.e., for some direction the ``variance'' (where we do not subtract the mean) is too large.
    \item If we sample $2 \cdot \ns$ data points $X_1, \dots, X_{\ns}$ and $Y_1, \dots, Y_{\ns}$ from $\cD$, with reasonable probability there is a reasonably large bipartite matching where for each edge $(i, j)$ the inner product $\langle X_i, Y_j \rangle$ is much larger than $\tilde{O}(\sqrt{d})$ (which is expected for random Gaussian vectors). 
\end{enumerate}
To eliminate distributions satisfying either property, we again apply the thresholding technique, as we initially tried to test Gaussians. But now, we can in fact successfully accomplish this for any distribution. For the first property, we use the Matrix Chernoff inequality to show that the operator norm of the empirical covariance $\frac{1}{\ns} \sum X_i X_i^\top$ concentrates for any distribution. For the second property, we use a general concentration inequality of~\cite{boucheron2000sharp}, which can be used to establish the concentration of the maximum bipartite matching size when one side is fixed and the other side is random. In our setting, both $X_1, \dots, X_{\ns}, Y_1, \dots, Y_{\ns}$ are random, but this is not a problem, as we can fix one side and resample the other, and then switch sides and repeat. Overall, we prove that the maximum matching size and operator norm of the covariance concentrate well for \emph{arbitrary} distributions, so we can create a replicable algorithm to remove all bad distributions.

Unfortunately, this method does not suffice to reject all distributions where the variance of $T = \|\sum_{i=1}^{\ns} X_i \|^2$ is more than would be expected from a Gaussian. As a result, our sample complexity increases by an additional additive factor of $\frac{\sqrt{d}}{\alpha \rho^2}$. Yet, we still improve over the previous bound of $\frac{\sqrt{d}}{\alpha^2 \rho^2}$~\cite{bun2023stability}. A natural open question is whether one can remove this final additive factor from the upper bound.

\paragraph{Lower bound:} We recall that one difficulty in the algorithm is that the replicable algorithm must be replicable regardless of whether the distribution is $\cN(\mu, I)$ for some $\mu$, or some totally arbitrary distribution. In the lower bound case, it suffices to show a lower bound against the weaker class of algorithms that are only replicable when given i.i.d. samples from $\cN(\mu, I)$.

For such ``weakly replicable'' algorithms, we are able to generalize the canonical lower bound properties such as the symmetric and permutation-robust properties to much stronger assumptions. By using the rotational symmetry of identity-covariance Gaussians, we show that the algorithm can WLOG assume the samples are randomly rotated, i.e., it should behave the same on $X_1, \dots, X_{\ns} \in \R^d$ as on $MX_1, \dots, MX_{\ns}$ for any orthogonal $M \in \R^{d \times d}$. Moreover, using the fact that the empirical mean is a sufficient statistic for identity-covariance Gaussians (see \Cref{subsec:sufficient_statistics} for details on sufficient statistics), we can assume that the algorithm only depends on the samples $X_1, \dots, X_{\ns}$ via the empirical mean. Both of these reductions follow a similar approach to the symmetric and permutation-robust reductions, but the latter only holds for weakly replicable algorithms, because the empirical mean is not a sufficient statistic for general distributions. By combining these reductions, we may assume that the algorithm depends only on the $\ell_2$ norm of the empirical mean.

The final step involves a chaining lower bound similar to \Cref{thm:LB_chaining}. Specifically, we define $\mu_0 = 0$, $\mu_t = \mu$ to have norm $\alpha$, and choose appropriate $\mu_1, \mu_2, \dots, \mu_{t-1} \in \R^d$. For appropriate choices of $\mu_i$, we show that the norm of the empirical mean of $\ns$ samples from $\NN(\mu_i, I)$ or from $\NN(\mu_{i+1}, I)$ are statistically indistinguishable, unless $\ns$ is sufficiently large. This will allow us to prove our desired lower bound.

\medskip

The details of both the algorithm and the lower bound are found in \cref{sec:gaussian}.

\subsection{Selection via Testing} \label{subsec:hypothesis-selection-overview}

Hypothesis selection, also known as density estimation, is a core primitive underlying many statistical estimation tasks (see, e.g., \cite{devroye2001combinatorial}).
In this problem, given a collection of known distributions $\mathcal{H} = \{H_1, \ldots, H_n\}$ and sample access to an unknown distribution $P$, the goal is to return an index $i \in [n]$ corresponding to a distribution $H_i$ which is an approximate nearest neighbor of $P$ in total variation distance.
We design a $\rho$-replicable algorithm for this problem which reduces replicable selection to a sequence of (adaptively generated) replicable coin testing problems.
Using our optimal bounds for coin testing, we design an algorithm which is $\rho$-replicable, succeeds with high probability in $n$, and takes $O\p{\frac{\log^5 n}{\eps^2 \rho^2}}$ samples in the worst-case and $O\p{\frac{\log^5 n}{\eps^2 \rho}}$ samples in expectation.
Lower bounds for replicable Gaussian mean estimation imply our bounds are tight up to a factor of $\log^3 n \log(1/\eps)$. In the non-replicable setting, the optimal bounds are $\Theta\p{\frac{\log n}{\eps^2}}$.

Our reduction makes use of the non-replicable ``min distance estimate''~\cite{devroye2001combinatorial} for hypothesis selection.
This technique assigns a score $W_i$ to each hypothesis $H_i$.
The score can be estimated up to $\pm \eps$ with high probability in $\frac{\log(n)}{\eps^2}$ samples, and the hypothesis with a score within $\pm \eps$ of the minimum score is an approximate nearest neighbor of $P$.
Direct application of the concepts we explore for replicable testing seem difficult to apply to this problem as a replicable algorithm needs to coordinate, across independent sample sets, which among $n$ items it chooses, many of which may be close to being an approximate minimizer of the score.

We solve this problem via a hierarchy of testing problems.
We split the set of $n$ hypotheses into two groups.
We run the min distance estimator from ~\cite{devroye2001combinatorial} and observe which group contains the output of the algorithm.
We view the outcome of which group wins as a draw from a $\Ber(p)$ distribution.
Using our optimal replicable coin testing algorithm, we repeat this process several times and choose one of the groups.
Via coin testing, we guarantee that if one group's true probability of winning was at least $3/4$, we choose that group.
Thereby, we ensure that the group we pick has an approximate nearest neighbor.
We recursively repeat this procedure $\log n$ times until there is a single hypothesis remaining and output that hypothesis.

As the overall procedure is repeated $\log n$ times, within each iteration, we need to use error parameter $\eps_0 = \eps/\log(n)$ and replicability parameter $\rho_0 = \rho / \log(n)$ in order to guarantee the desired correctness and replicability over the entire algorithm.
The final sampling bounds follow from calculations using our sampling bounds for coin testing.
The details of this result are in \cref{sec:selection}.

\section{Preliminaries}\label{sec:prelims}

\subsection{Definitions}

As applications of our structural results on replicable testers as well as our general expectation-gap estimators, we design new upper and lower bounds on two classic distribution testing problems: uniformity (identity) testing and closeness testing. These follow the standard guarantees except we additionally require replicability. We formally define replicable uniformity and replicable closeness testing. 

\begin{definition}[Replicable Uniformity Testing (slightly modified from \cite{liu2024replicableuniformity})]\label{def:replicable_unif} Consider $n \in \N$, $0 \leq \delta \leq \rho \leq 1$ and $\eps > 0$. A randomized algorithm $\A$, given sample access to an unknown discrete distributions $p$ on $[n]$, is said to solve $(n, \eps, \rho, \delta)$-replicable uniformity testing if it is $\rho$-replicable and satisfies the following:
\begin{enumerate}
    \item If $p = \Unif([n])$, $\A$ accepts with probability at least $1-\delta$,
    \item If $\|p - \Unif([n])\|_1 \geq \eps$, $\A$ rejects with probability at least $1-\delta$.
\end{enumerate}
\end{definition}

\begin{definition}[Replicable Closeness Testing]\label{def:replicable_closeness}
Consider $n \in \N$, $0 \leq \delta \leq \rho \leq 1$ and $\eps > 0$.
A randomized algorithm $\A$, given sample access to a pair of distributions $p, q$ on $[n]$, is said to solve $(n, \eps, \rho, \delta)$-replicable closeness testing if it satisfies the following:
\begin{enumerate}
    \item If $p = q$, $\A$ accepts with probability at least $1-\delta$,
    \item If $\|p-q\|_1 \ge \eps$, $\A$ rejects with probability at least $1-\delta$,
    \item If for all pairs of distributions $(p', q')$ over $[n]$, 
    \[\Prma[X,X',r]{\A(X , r) = \A(X', r)} \ge 1 - \rho, \]
    where $r$ is the internal randomness of $\A$ and $X, X'$ consist of i.i.d. samples from the product distribution $p' \times q'$ over $[n]^2$.
\end{enumerate}
\end{definition}

We also define a concept called \emph{weak replicability testing}, which we will use in our lower bound against Gaussian mean testing. The intuition is that replicability must hold against all distributions, but we may want an algorithm to be replicable if the samples are actually drawn from a Gaussian.

\begin{definition}[Weak Replicability] \label{def:weak_replicability}
    Let $\DD_\theta$ be a family of distributions, parameterized by $\theta \in \Theta$, over some domain $\Omega$. An algorithm $\A$ taking $\ns$ samples is \emph{weakly replicable} if, given any $\theta \in \Theta$, we have 
\[\Prma[X, X' \sim \DD_\theta^{\otimes \ns}, r]{\A(X, r) = \A(X', r)} \ge 1-\rho.\]
\end{definition}

In our use of weak replicability, the choice of distribution family $\DD_\theta$ will always be evident.

\subsection{Concentration Inequalities}

We will need the following standard concentration inequalities in the analysis of our algorithms.

\begin{theorem}[Chernoff, Hoeffding, and Bernstein Bounds]\label{thm:concentration}
Let $Y = \sum_{i = 1}^n Y_i$ be a sum of independent random variables.  We have
\begin{itemize}
    \item If each $Y_i$ is distributed as Bernoulli $p_i$, then letting $\mu = \Ema[]{Y}$,
    \begin{enumerate}
        \item $\Prma[]{Y \ge (1+\delta)\Ema[]{Y}} \le \p{\frac{e^\delta}{(1+\delta)^{1+\delta}}}^{\Ema[]{Y}}$ for all $\delta > 0$,
        \item $\Prma[]{Y \ge (1+\delta)\Ema[]{Y}} \le \exp\left( -\frac{\delta^2 \mu}{2+\delta} \right)$ for all $\delta > 0$,
        \item $\Prma[]{Y \le (1-\delta)\Ema[]{Y}} \le \exp\left( -\frac{\mu \delta^2}{2} \right)$ for all $\delta \in (0, 1)$.
    \end{enumerate}
    \item If each $Y_i \in [a_i, b_i]$, then for all $t > 0$ 
    \[ \Prma[]{Y - \Ema[]{Y} \ge t } \le \exp\left( - \frac{2 t^2}{\sum_i (b_i-a_i)^2} \right).  \]
    \item If $|Y_i - \Ema[]{Y_i}| \le M$ with probability $1$ for all $i$, then for any $t > 0$,
    \[ \Prma[]{|Y - \Ema[]{Y}| \ge t } \le 2 \exp \left( - \frac{t^2/2}{\Varma[]{Y} + tM/3} \right). \]

\end{itemize}

\end{theorem}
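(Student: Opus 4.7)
The plan is to derive all three families of bounds via the classical Chernoff/moment-generating-function (MGF) method: bound the MGF of each independent summand, multiply using independence, apply Markov's inequality to $e^{\lambda(Y - \Ema{Y})}$, and optimize over the free parameter $\lambda > 0$ (or $\lambda < 0$ for lower tails).

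For the Bernoulli bounds, I would first observe that for $Y_i \sim \Ber(p_i)$, $\Ema{e^{\lambda Y_i}} = 1 - p_i + p_i e^{\lambda} \le \exp\!\p{p_i(e^\lambda - 1)}$. Multiplying over $i$ gives $\Ema{e^{\lambda Y}} \le \exp\!\p{\mu(e^\lambda - 1)}$ with $\mu = \Ema{Y}$. Markov's inequality then yields $\Prma{Y \ge (1+\delta)\mu} \le \exp\!\p{\mu(e^\lambda - 1) - \lambda(1+\delta)\mu}$, and choosing $\lambda = \ln(1+\delta)$ gives bound~(1). Bound~(2) follows from the one-variable analytic inequality $(1+\delta)\ln(1+\delta) - \delta \ge \delta^2/(2+\delta)$ valid for $\delta \ge 0$, which is a routine calculus check. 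Bound~(3) is obtained by the same method with $\lambda < 0$: taking $\lambda = \ln(1-\delta)$ and using $\ln(1-\delta) \le -\delta - \delta^2/2$ on $(0,1)$ produces the $\exp(-\mu\delta^2/2)$ estimate.

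For Hoeffding's inequality, the key ingredient is Hoeffding's lemma: any mean-zero random variable $Z$ supported on $[a,b]$ satisfies $\Ema{e^{\lambda Z}} \le \exp(\lambda^2 (b-a)^2 / 8)$, provable via convexity of $x \mapsto e^{\lambda x}$ on $[a,b]$ together with a standard second-order Taylor bound. Applying this to each centered $Y_i - \Ema{Y_i}$ and multiplying by independence gives $\Ema{e^{\lambda(Y - \Ema{Y})}} \le \exp\!\p{\lambda^2 \sum_i (b_i - a_i)^2 / 8}$. Markov's inequality and the choice $\lambda = 4t / \sum_i (b_i-a_i)^2$ then yield the stated tail bound. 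For Bernstein's inequality, I would prove the per-summand MGF bound $\Ema{e^{\lambda(Y_i - \Ema{Y_i})}} \le \exp\!\p{\frac{\lambda^2 \Varma{Y_i}/2}{1 - \lambda M /3}}$ for $0 < \lambda < 3/M$, using the Taylor expansion $e^x - 1 - x = \sum_{k\ge 2} x^k/k!$, the moment estimate $\Ema{|Y_i - \Ema{Y_i}|^k} \le M^{k-2} \Varma{Y_i}$ (which is where the almost-sure bound $M$ is used), and summing a geometric series via $k! \ge 2 \cdot 3^{k-2}$. Combining over $i$ by independence, applying Markov, and optimizing at $\lambda = t / (\Varma{Y} + t M / 3)$ produces one tail; applying the same argument to $-Y$ and union-bounding yields the symmetric factor of $2$.

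The main obstacle is essentially cosmetic: these are all classical textbook concentration inequalities with completely standard MGF-based proofs. The only step requiring modest care is the algebraic passage from the sharp Chernoff form~(1) to the cleaner exponential form~(2), but this reduces to the elementary one-variable calculus inequality on $(1+\delta)\ln(1+\delta) - \delta$ noted above; everything else amounts to routine bookkeeping of the MGF computation and the optimization over $\lambda$.
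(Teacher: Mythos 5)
The paper does not prove Theorem~\ref{thm:concentration}; it is stated in the Preliminaries as a collection of standard classical concentration inequalities and invoked without derivation. Your MGF-based proof sketch is the standard textbook route and is essentially correct, so there is no conflict with the paper. One small slip: for bound~(3), after setting $\lambda=\ln(1-\delta)$ the exponent is $\mu\bigl(-\delta-(1-\delta)\ln(1-\delta)\bigr)$, and you need the \emph{lower} bound $(1-\delta)\ln(1-\delta)\ge -\delta+\delta^2/2$. Multiplying your stated inequality $\ln(1-\delta)\le -\delta-\delta^2/2$ by $(1-\delta)>0$ produces an \emph{upper} bound on $(1-\delta)\ln(1-\delta)$, which goes the wrong way. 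The fix is trivial: either use the full expansion $(1-\delta)\ln(1-\delta)=-\delta+\sum_{k\ge 2}\frac{\delta^k}{k(k-1)}\ge -\delta+\delta^2/2$, or check by calculus that $g(\delta)=(1-\delta)\ln(1-\delta)+\delta-\delta^2/2$ satisfies $g(0)=0$ and $g'(\delta)=-\ln(1-\delta)-\delta\ge 0$ on $(0,1)$. The remaining parts (Chernoff multiplicative bounds~(1)--(2), Hoeffding via Hoeffding's lemma with $\lambda=4t/\sum_i(b_i-a_i)^2$, and Bernstein via the truncated-Taylor MGF bound with $\lambda=t/(\Varma{Y}+tM/3)$) are all correctly argued.
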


In our replicable Gaussian mean testing section, \cref{sec:gaussian}, we will make use of the following non-standard inequality for concentration of a submodular function.

\begin{theorem}[Theorem 1, \cite{boucheron2000sharp}, Restated]\label{thm:boucheron}
Let $(X_1, ..., X_n)$ be independent random variables taking values in some measurable set $\mathcal{X}$, and let $f : \mathcal{X}^n \to [0,\infty)$ be a function. Assume that there exists another function $g : \mathcal{X}^{n-1} \to \mathbb{R}$ such that for any $x_1,\ldots,x_n \in \mathcal{X}$, the following properties hold:

\[0 \leq f(x_1,\ldots,x_n) - g(x_1,\ldots,x_{i-1},x_{i+1},\ldots,x_n) \leq 1, \quad \text{for every } 1 \leq i \leq n\]

and

\[\sum_{i=1}^n [f(x_1,\ldots,x_n) - g(x_1,\ldots,x_{i-1},x_{i+1},\ldots,x_n)] \leq f(x_1,\ldots,x_n).\]

Denote $Z = f(X_1,\ldots,X_n)$. Then for every positive number $t$,
\[\Prma[]{Z \geq \Ema[]{Z} + t} \leq \exp\left[-0.1 \cdot \min\left(t, \frac{t^2}{\Ema[]{Z}}\right)\right].\]
and\footnote{While the second inequality is only stated for $t \le \Ema[]{Z}$, it trivially holds for $t > \Ema[]{Z}$ since $f$ is nonnegative.}
\[\Prma[]{Z \leq \Ema[]{Z} - t} \leq \exp\left[-0.1 \cdot \min\left(t, \frac{t^2}{\Ema[]{Z}}\right)\right].\]
\end{theorem}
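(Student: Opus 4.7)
The plan is to use the entropy method, the standard machinery for concentration of self-bounding functions. Set $Z_i = g(X_1,\ldots,X_{i-1},X_{i+1},\ldots,X_n)$, so that the two hypotheses on $f$ and $g$ translate into the \emph{self-bounding} conditions $Z - Z_i \in [0,1]$ for every $i$ and $\sum_{i=1}^n (Z - Z_i) \le Z$. The goal is to control the log-moment generating function $\psi(\lambda) = \log \Ema[]{e^{\lambda Z}}$ by deriving a Bennett-type differential inequality, and then to extract both tails by a Chernoff bound.

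The key analytic tool is Ledoux's modified log-Sobolev inequality for functions of independent coordinates: for any nonnegative $F$ and any $X_i$-free surrogates $F_i$,
\[
\mathrm{Ent}(F) \;\le\; \sum_{i=1}^n \Ema[]{F \cdot \phi(\log(F_i/F))},
\qquad \phi(u) = e^u - u - 1.
\]
Applying this with $F = e^{\lambda Z}$ and $F_i = e^{\lambda Z_i}$, the left-hand side equals $\lambda\Ema[]{Z e^{\lambda Z}} - \Ema[]{e^{\lambda Z}}\psi(\lambda)$, so dividing by $\Ema[]{e^{\lambda Z}}$ yields
\[
\lambda \psi'(\lambda) - \psi(\lambda) \;\le\; \frac{1}{\Ema[]{e^{\lambda Z}}}\sum_{i=1}^n \Ema[]{e^{\lambda Z}\,\phi(-\lambda(Z-Z_i))}.
\]
Convexity of $\phi$ together with $\phi(0)=0$ and $Z-Z_i \in [0,1]$ gives the termwise bound $\phi(-\lambda(Z-Z_i)) \le (Z-Z_i)\,\phi(-\lambda)$. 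Summing over $i$ and invoking $\sum_i(Z-Z_i) \le Z$ collapses the right-hand side to $\phi(-\lambda)\,\psi'(\lambda)$. The resulting separable differential inequality $\psi(\lambda) \le (1-e^{-\lambda})\psi'(\lambda)$ integrates, with the boundary condition $\psi(\lambda)/\lambda \to \Ema[]{Z}$ as $\lambda \to 0$, to the sub-Poissonian MGF bound $\psi(\lambda) \le \Ema[]{Z}(e^\lambda - 1 - \lambda)$.

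A Chernoff optimization then finishes both tails. For the upper tail,
\[
\Prma[]{Z \ge \Ema[]{Z} + t} \;\le\; \inf_{\lambda > 0} \exp\!\bigl(-\lambda t + \Ema[]{Z}(e^\lambda - 1 - \lambda)\bigr) \;=\; \exp\!\bigl(-\Ema[]{Z}\,h(t/\Ema[]{Z})\bigr),
\]
where $h(u) = (1+u)\log(1+u) - u$; splitting into the sub-Gaussian regime $t \le \Ema[]{Z}$ and the sub-Poissonian regime $t \ge \Ema[]{Z}$ recovers the stated form $\exp[-0.1\cdot\min(t, t^2/\Ema[]{Z})]$, with the constant $0.1$ being loose. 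The lower tail runs the same scheme with $\lambda < 0$ (convexity still yields the termwise bound on $\phi$ since $\phi(0)=0$), and because $Z \ge 0$ forces $t \le \Ema[]{Z}$ for any nontrivial lower deviation, the minimum in the exponent collapses to $t^2/\Ema[]{Z}$. The main technical care lies in maintaining the ODE argument uniformly across both signs of $\lambda$ and matching the numerical constants: handling the negative-$\lambda$ regime and sharpening the Chernoff step is precisely where the original Boucheron--Lugosi--Massart analysis does its most delicate work.
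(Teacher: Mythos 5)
The paper does not prove this statement; it simply restates Theorem~1 of \cite{boucheron2000sharp} (with a numerically loosened constant $0.1$), so there is no internal proof to compare against. Your proposal is a genuine re-derivation of the cited result, and it follows the standard Boucheron--Lugosi--Massart entropy-method route: modified log-Sobolev for $e^{\lambda Z}$, the self-bounding hypotheses to collapse the right-hand side, a differential inequality for the log-MGF, and a Chernoff/Bennett optimization in each tail. This is exactly how the source proves it, so the approach is sound, and your final Bernstein-type lower bounds on $h(\pm u)$ do yield a constant of $3/8$ (upper tail) and $1/2$ (lower tail), both safely above the stated $0.1$.

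One concrete error in the write-up: the differential inequality has its direction reversed. From $\lambda\psi'(\lambda)-\psi(\lambda)\le\phi(-\lambda)\,\psi'(\lambda)$ and $\phi(-\lambda)=e^{-\lambda}+\lambda-1$ you get $(1-e^{-\lambda})\psi'(\lambda)\le\psi(\lambda)$, not $\psi(\lambda)\le(1-e^{-\lambda})\psi'(\lambda)$ as you wrote; the latter would bound $\psi$ from \emph{below}, which is useless for a Chernoff bound. With the correct direction (and the observation that $v(\lambda)=\Ema[]{Z}(e^\lambda-1)$ is the equality solution with the right limit at $0$) a Gronwall-type comparison gives $\psi(\lambda)\le\Ema[]{Z}(e^\lambda-1)$, equivalently $\log\Ema[]{e^{\lambda(Z-\Ema[]{Z})}}\le\Ema[]{Z}(e^\lambda-\lambda-1)$, which is the bound you actually use in the Chernoff step. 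Relatedly, the displayed MGF bound $\psi(\lambda)\le\Ema[]{Z}(e^\lambda-1-\lambda)$ is for the \emph{centered} log-MGF even though you defined $\psi$ as uncentered; the computation downstream is correct, but the notation is off by $\lambda\Ema[]{Z}$. Finally, the Gronwall step needs a word of care because $1/(1-e^{-\lambda})\sim1/\lambda$ is singular at the origin; the standard fix is to work with $\psi(\lambda)/(e^\lambda-1)$ and show it is monotone, which sidesteps the singularity. None of this changes the conclusion, but as written the differential-inequality step would not go through.
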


The following inequality bounds the variance in terms of the location of the random variable.

\begin{theorem}[\cite{bhatia2000better}] Let $Y \in [a, b]$ be a random variable. We have 
\[ \Varma[]{Y} \le (b-\Ema[]{Y})(\Ema[]{Y}-a). \]
\end{theorem}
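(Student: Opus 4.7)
The plan is to leverage the pointwise inequality $(Y-a)(Y-b) \le 0$, which holds for every realization of $Y$ because $Y \in [a,b]$ forces $Y - a \ge 0$ and $Y - b \le 0$. Taking expectations preserves this inequality, yielding $\Ema[]{(Y-a)(Y-b)} \le 0$. This is the only probabilistic step in the argument; everything that follows is algebra.

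Expanding the product gives $\Ema[]{Y^2} - (a+b)\Ema[]{Y} + ab \le 0$, which I would rearrange to $\Ema[]{Y^2} \le (a+b)\Ema[]{Y} - ab$. Subtracting $(\Ema[]{Y})^2$ from both sides and using the identity $\Varma[]{Y} = \Ema[]{Y^2} - (\Ema[]{Y})^2$ produces
\[
\Varma[]{Y} \;\le\; (a+b)\Ema[]{Y} - ab - (\Ema[]{Y})^2.
\]
The final step is to recognize that the right-hand side factors as $(b-\Ema[]{Y})(\Ema[]{Y}-a)$: writing $\mu = \Ema[]{Y}$, a direct expansion gives $(b-\mu)(\mu-a) = b\mu - ab - \mu^2 + a\mu = (a+b)\mu - ab - \mu^2$, which matches our bound exactly.

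There is no serious obstacle in this proof; the argument is a few lines of algebra once one spots the right pointwise bound. The only conceptual point worth flagging is the choice to start from $(Y-a)(b-Y) \ge 0$ rather than from a one-sided symmetric bound such as Popoviciu's $\Varma[]{Y} \le (b-a)^2/4$. The form stated is strictly sharper and captures how close $\Ema[]{Y}$ lies to the endpoints of $[a,b]$, which is presumably why the authors invoke it in this form (e.g., for controlling the variance of an indicator-like statistic whose mean sits near $0$ or $1$).
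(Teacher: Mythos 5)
Your proof is correct, and it is the standard elementary argument for the Bhatia--Davis inequality: start from the pointwise nonnegativity of $(Y-a)(b-Y)$, take expectations, and massage the resulting quadratic in $\Ema[]{Y}$ into the factored form. The paper itself does not prove this statement — it simply cites \cite{bhatia2000better} in the preliminaries as a known concentration tool — so there is no in-paper argument to compare against; your derivation is exactly what one would write if asked to supply the missing proof.
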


Finally, we note the following folklore bound.

\begin{proposition} \label{prop:tv-chi-square-shifted}
    For any $k \ge 1$, the total variation distance between a chi-square $\chi_k^2$ and a shifted chi-square $\chi_k^2 + t,$ if $0 \le t \le 0.001 \sqrt{k}$, is at most $0.1$.
\end{proposition}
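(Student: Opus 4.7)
The plan is to apply the standard density-shift inequality
\[ \mathrm{TV}(X, X + t) \le \frac{t}{2}\, V_{BV}(f), \]
where $X$ has density $f$ on $\R$ and $V_{BV}(f)$ denotes the total variation of $f$ as a BV function on $\R$ (including any boundary jumps when $f$ is extended by zero outside its support). This inequality follows from bounding the pointwise increment $|f(x) - f(x-t)|$ by the variation of $f$ on $[x-t, x]$, integrating over $x$, and applying Fubini so that each point $s$ contributes at most a factor of $t$.

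For $k \ge 2$, extend the $\chi_k^2$ density $f_k$ by zero to all of $\R$. Then $f_k$ is unimodal (with a jump at $0$ of size $f_k(0^+) = 1/2$ when $k = 2$, and continuous with $f_k(0) = 0$ when $k \ge 3$), so $V_{BV}(f_k) = 2\|f_k\|_\infty$. Using Stirling's approximation on $\Gamma(k/2)$ at the mode $x^\ast = \max(0, k-2)$, a direct calculation shows
\[ \|f_k\|_\infty \le \frac{1}{\sqrt{2k}}, \]
with equality at $k = 2$. Combined with $t \le 0.001\sqrt{k}$, this gives
\[ \mathrm{TV}(\chi_k^2, \chi_k^2 + t) \le t\,\|f_k\|_\infty \le \frac{t}{\sqrt{2k}} \le \frac{0.001}{\sqrt{2}} < 0.001, \]
which is comfortably below the required bound of $0.1$.

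The main obstacle is the case $k = 1$, since $f_1(x) = (2\pi x)^{-1/2} e^{-x/2}$ diverges at $0$ and $\|f_1\|_\infty = \infty$ makes the density-TV bound vacuous. I would handle this case directly by exploiting that $f_1$ is monotone decreasing on $(0, \infty)$. Splitting the TV integral at $t$ and using that $f_1(x - t) \ge f_1(x)$ for $x \ge t$, a short computation yields $\mathrm{TV}(\chi_1^2, \chi_1^2 + t) = F_1(t)$, the CDF of $\chi_1^2$ evaluated at $t$. Since $\chi_1^2 = Z^2$ for $Z \sim \mathcal{N}(0, 1)$, this equals $\Pr(|Z| \le \sqrt{t}) \le \sqrt{2t/\pi} \le \sqrt{0.002/\pi} < 0.03$, again well under $0.1$. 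Combining the two cases completes the proof.
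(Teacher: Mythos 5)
The paper states Proposition~\ref{prop:tv-chi-square-shifted} as a ``folklore bound'' and gives no proof, so there is nothing to compare against; your task is to supply a proof, and what you have written is correct. The density-shift inequality $\mathrm{TV}(X, X+t) \le \tfrac{t}{2} V_{BV}(f)$ is valid, the chi-square density is unimodal for $k \ge 2$ (with a jump of $1/2$ at the origin when $k = 2$, and vanishing at the origin when $k \ge 3$), so $V_{BV}(f_k) = 2\|f_k\|_\infty$, and the Stirling estimate $\Gamma(m) \ge \sqrt{2\pi}\, m^{m-1/2} e^{-m}$ together with $\bigl(1 - \tfrac{1}{m}\bigr)^{m-1} \le \tfrac{1}{\sqrt{3}} < \tfrac{\sqrt{2\pi}}{e}$ for $m \ge 3/2$ does give $\|f_k\|_\infty \le \tfrac{1}{\sqrt{2k}}$ for $k \ge 3$, with the $k=2$ case checked directly ($\|f_2\|_\infty = 1/2$). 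Your computation for $k = 1$, showing $\mathrm{TV}(\chi_1^2, \chi_1^2 + t) = F_1(t) \le \sqrt{2t/\pi}$ by splitting the integral at $t$ and using monotonicity of $f_1$, is also correct, and both bounds land well under $0.1$. The one thing worth making explicit when writing this up is the $k=2$ boundary jump in $V_{BV}(f_2)$, since it is the only place the density fails to be continuous and a reader might otherwise expect $V_{BV}(f_k) = \int |f_k'|$ with no boundary term.
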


\subsection{Sufficient Statistics} \label{subsec:sufficient_statistics}

We recall the definition of sufficient statistics.

\begin{definition}[Sufficient Statistic] \label{def:sufficient_statistic}
    Given a distribution $\mathcal{D}(\theta)$ parameterized by some $\theta \in \Theta$, and given samples $X_1, \dots, X_{\ns} \overset{i.i.d.}{\sim} \mathcal{D}(\theta)$, a function $T = T(X_{1}, \dots, X_{\ns})$ is a \emph{sufficient statistic} for $\theta$ if the conditional distribution of $X_{1}, \dots, X_{\ns}$ conditioned on $T$ and $\theta$ is independent of $\theta$.
\end{definition}

Importantly, we use the well-known result that the empirical mean is a sufficient statistic for an identity-covariance Gaussian (see, e.g.,~\cite[Example 6.2.4]{CasellaBerger}, the proof of which easily generalizes to the multivariate setting).

\begin{proposition} \label{prop:gaussian_sufficient_statistic}
    Let $\mathcal{N}(\mu, I)$ be parameterized only by $\mu \in \R^d$. Then, given samples $X_{1}, \dots, X_{\ns},$ the empirical mean $\bar{X} = \frac{X_{1}+\cdots+X_{\ns}}{\ns}$ is a sufficient statistic for $\mu$. In other words, if given $\ns$ i.i.d. samples from $\mathcal{N}(\mu, I)$, the conditional distribution of $X_{1}, \dots, X_{\ns}$ conditioned on $\frac{X_{1}+\cdots+X_{\ns}}{\ns}$ is independent of $\mu$. 
\end{proposition}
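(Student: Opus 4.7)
The plan is to invoke the Fisher--Neyman factorization criterion: a statistic $T(X^{(1)},\dots,X^{(\ns)})$ is sufficient for $\mu$ if and only if the joint density factors as $h(x^{(1)},\dots,x^{(\ns)})\cdot g(T(x^{(1)},\dots,x^{(\ns)}),\mu)$ where $h$ does not depend on $\mu$. So I would just write down the joint density of $\ns$ i.i.d.\ samples from $\mathcal{N}(\mu,I)$ and expand the quadratic in the exponent to exhibit the required factorization.

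Concretely, the joint density at $(x^{(1)},\dots,x^{(\ns)})\in(\mathbb{R}^d)^{\ns}$ equals
\[
(2\pi)^{-\ns d/2}\exp\!\left(-\tfrac{1}{2}\sum_{i=1}^{\ns}\|x^{(i)}-\mu\|_2^2\right).
\]
Expanding $\|x^{(i)}-\mu\|_2^2=\|x^{(i)}\|_2^2-2\mu^\top x^{(i)}+\|\mu\|_2^2$ and using $\sum_i x^{(i)}=\ns\bar{x}$, this becomes
\[
\underbrace{(2\pi)^{-\ns d/2}\exp\!\left(-\tfrac{1}{2}\sum_{i=1}^{\ns}\|x^{(i)}\|_2^2\right)}_{h(x^{(1)},\dots,x^{(\ns)})}\cdot\underbrace{\exp\!\left(\ns\,\mu^\top\bar{x}-\tfrac{\ns}{2}\|\mu\|_2^2\right)}_{g(\bar{x},\mu)}.
\]
Since $h$ depends only on the sample and $g$ depends on the sample only through $\bar{x}$, Fisher--Neyman gives the claim.

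If I wanted to avoid citing the factorization theorem and prove the conditional-independence formulation of Definition~\ref{def:sufficient_statistic} directly, I would instead change variables to $\bar{X}$ together with the centered residuals $Y^{(i)}=X^{(i)}-\bar{X}$. The vector $(\bar{X},Y^{(1)},\dots,Y^{(\ns)})$ is jointly Gaussian; a direct covariance computation shows $\mathrm{Cov}(\bar{X},Y^{(i)})=0$ for each $i$, hence $\bar{X}\perp(Y^{(1)},\dots,Y^{(\ns)})$. The residuals $Y^{(i)}$ have distribution centered at $0$ with covariance structure independent of $\mu$, and $X^{(i)}=\bar{X}+Y^{(i)}$, so the conditional distribution of $(X^{(1)},\dots,X^{(\ns)})$ given $\bar{X}$ is exactly the distribution of $(\bar{X}+Y^{(1)},\dots,\bar{X}+Y^{(\ns)})$ with $\bar{X}$ now fixed, which does not depend on $\mu$.

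There is no real obstacle; the only thing to be careful about is the multivariate generalization of the univariate version cited in Casella--Berger. The Fisher--Neyman argument above handles the multivariate case with no change because the identity covariance makes the exponent separate cleanly into $\ns\mu^\top\bar{x}-\tfrac{\ns}{2}\|\mu\|_2^2$ plus a $\mu$-free term, so the same one-line factorization works coordinate-free. I would prefer the Fisher--Neyman route for brevity.
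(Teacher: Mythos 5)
Your proof is correct. The paper itself does not prove this proposition; it simply cites Casella--Berger Example 6.2.4 and asserts that the argument ``easily generalizes to the multivariate setting.'' (In Casella--Berger the cited example verifies sufficiency directly by showing the ratio of the sample density to the density of the statistic is free of $\mu$, which is their Theorem~6.2.2; the Factorization Theorem appears slightly later as Theorem~6.2.6.) Your Fisher--Neyman factorization is an equivalent, standard, and arguably cleaner route: the algebra is right, $h$ is $\mu$-free, and $g$ depends on the data only through $\bar{x}$. Your second route via the residuals $Y^{(i)} = X^{(i)} - \bar{X}$ is also sound: $\mathrm{Cov}(\bar{X}, Y^{(i)}) = \tfrac{1}{\ns}I - \tfrac{1}{\ns}I = 0$, joint Gaussianity then gives independence (this holds even though the residual vector is degenerate, as $\sum_i Y^{(i)} = 0$), and the residual law has mean $0$ and a covariance structure that does not involve $\mu$. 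Either argument supplies the proof the paper delegates to the textbook; nothing is missing.
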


\subsection{Translating between Worst-Case and In-Expectation Sampling Bounds}

An simple but powerful observation made in \cite{hopkins2024replicability} is that Markov's inequality can be used to translate between worst-case and in-expectation sample complexity bounds.
We summarize this observation in the following proposition, including its proof for completeness.

\begin{proposition}\label{prop:markov-translate}
    Let $\A(X;r)$ be a $\rho$-replicable algorithm which takes at most $\ns$ samples in expectation (over $X$ and $r$).
    Then, there exists an algorithm $\mathcal{B}(X;r)$ which deterministically takes at most $\ns/\rho$ samples, and has the property that
    \begin{equation*}
        \Prma[X, r]{\A(X;r) = \mathcal{B}(X;r)} \geq 1-\rho.
    \end{equation*}
\end{proposition}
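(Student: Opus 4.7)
The plan is to construct $\mathcal{B}$ by truncating $\A$. Specifically, $\mathcal{B}(X; r)$ will simulate $\A(X; r)$, drawing samples and using the randomness $r$ exactly as $\A$ does, but with a sample budget hard-capped at $\ns/\rho$. If the simulation of $\A$ terminates within this budget, $\mathcal{B}$ returns the same output; otherwise, $\mathcal{B}$ aborts and returns an arbitrary fixed default value (say, \accept). By construction, $\mathcal{B}$ is deterministic in its sample usage and never takes more than $\ns/\rho$ samples.

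The main step is then to bound the probability that the truncation actually occurs. Let $N = N(X; r)$ denote the (random) number of samples that $\A(X; r)$ would draw if run without any budget. By hypothesis, $\Ema[X, r]{N} \leq \ns$. Markov's inequality then yields
\[
\Prma[X, r]{N > \ns/\rho} \;\leq\; \frac{\Ema[X, r]{N}}{\ns/\rho} \;\leq\; \rho.
\]
On the complementary event $\{N \leq \ns/\rho\}$, which occurs with probability at least $1 - \rho$, the simulation of $\A$ finishes within the budget and $\mathcal{B}(X; r) = \A(X; r)$. Hence $\Prma[X, r]{\A(X;r) = \mathcal{B}(X;r)} \geq 1 - \rho$, as required.

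There is essentially no obstacle here: the argument is a direct application of Markov's inequality. The only conceptual point worth noting is that $\mathcal{B}$ uses the same randomness $r$ and the same sample stream $X$ as $\A$, so the coupling between the two algorithms is exact on the event that $\A$ does not overshoot its expected budget by more than a factor of $1/\rho$. No modification of $\A$'s decision logic is needed on the non-truncated event, so correctness on that event is automatic.
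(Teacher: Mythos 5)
Your proposal is correct and follows essentially the same approach as the paper: simulate $\A$ with a hard cap of $\ns/\rho$ samples, apply Markov's inequality to bound the truncation probability by $\rho$, and observe that the two algorithms agree on the non-truncated event. The only cosmetic difference is the default output on truncation ($\accept$ versus the paper's $\perp$), which is immaterial.
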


\begin{proof}
Let $\mathcal{B}(X;r)$ simulate $\A(X;r)$, terminating if more than $\ns/\rho$ samples are used and outputting $\perp$.
By Markov's inequality, the probability that $\mathcal{B}(X;r)$ terminates early is at most $\rho$.
Otherwise, the two algorithms are equivalent.
\end{proof}

This proposition is useful from \emph{in-expectation to worst case} for sampling upper bounds and \emph{worst-case to in-expectation} sampling lower bounds with a blowup of $1/\rho$.
The correctness of $\mathcal{B}(X;r)$ can be guaranteed beyond failure probability $\rho$ by post-processing: if $\mathcal{B}(X;r)$ returns $\perp$, run a non-replicable, high probability algorithm and outputs its answer.

\subsection{Other Notation}

For a matrix $M$, we use $\|M\|_{op}$ to denote its operator norm. 

\section{Canonical Properties of a Replicable Tester} \label{sec:canonical-tester}

In this section, we show that the existence of a replicable algorithm implies the existence of another with a well-defined structural form. These structural assumptions will enable us to derive lower bounds later. In particular, we identify the following structural properties:

\begin{definition}[Random threshold algorithm]\label{def:canonical}
     A random threshold algorithm computes a deterministic function of its input, \( f: \mathcal{X}^n \rightarrow [0,1] \), and compares the value of this function to a random variable \( r \) drawn uniformly from \([0,1]\). 
\end{definition}

\begin{definition}[Sample order invariant algorithm]\label{def:sample_invariant}
    An algorithm is sample order invariant if the output distribution of the algorithm is invariant to permutations of the order in which the samples are received. 
\end{definition}

For a large class of ``symmetric properties'', we show more structure.
\begin{definition}[Symmetric Property]\label{def:symmetric_property} Suppose $\PP$ is a property of discrete distributions over $[n]$ ($\PP$ is a collection of distributions).
We say a property $\PP$ is symmetric if, for every $p \in \PP$ and every permutation function $\pi:[n]\rightarrow[n]$, the distribution $p \circ \pi$ is also in $\PP$.
\end{definition}
For symmetric properties—such as uniformity testing or closeness testing—membership in the property (and even the distance to the property) does not change if we permute the labels of the elements. For such properties, the labels are, in a sense, irrelevant and do not carry any information. Here, we prove that if a replicable algorithm exists for testing a symmetric property, then there also exists a replicable algorithm whose output is invariant to the labels of the samples and which achieves the same performance.

Finally, we show that
the label-invariant algorithm  satisfies a stronger replicability assumption. Specifically, the outcome of the algorithm remains stable even if we change the underlying distribution to another one obtained by permuting the labels. 

\begin{definition}[Permutation Robust Replicability]\label{def:perm-rubust-replicability}
    We say an algorithm $\A$ satisfies $\rho$-permutation robust replicability iff 
    for any prior distribution $\DD$ over a given distribution and all of its permutation, we have:
    $$\Prma[{r\sim\Unif[0,1]},\ p, p^\pi \sim \DD, X\sim p^{\otimes \ns}, X'\sim (p^\pi)^{\otimes \ns} ]{\A(X;r) \not = \A(X';r)} \leq \rho\,,$$
    where $p^\pi \coloneqq p \circ \pi^{-1}$.
\end{definition}

This high stability allows us to apply the replicability constraint not only to two sample sets drawn from the same distribution, but to two sample sets drawn from any two distributions which are equivalent up to a permutation of the domain.

More formally, we have the following theorem:

\canonical*

The existence of the canonical random threshold algorithm is established in Lemma~\ref{lem:canonical}. In \cref{lem:order_invariant}, we demonstrate that the canonical algorithm can be modified to be invariant to the order of the samples. In \cref{lem:label_invariant}, we further modify the algorithm so that it exhibits identical behavior on every sample set that can be obtained by relabeling its elements according to some permutation. Finally, in Lemma~\ref{lem:perm_robust}, we prove that the algorithm robustly maintains its replicability even when the underlying distribution is altered to a permuted version.
These lemmas and their proofs are presented in Section~\ref{sec:canonical_lemmas}.

\subsection{Proof of Canonical Properties }\label{sec:canonical_lemmas}
\begin{lemma}[Canonical Random Threshold Algorithm] \label{lem:canonical}
Let $\A_0(X; r)$ be a $\rho$-replicable algorithm that solves a given problem using $\ns$ samples $X = (X_1, \dots, X_{\ns})$ and randomness $r$, and outputs a binary decision in $\{\text{\accept}, \text{\reject}\}$. Then, there exists another $\rho$-replicable algorithm $\A_1(X; r)$ that also solves the problem on $\ns$ samples with the same accuracy as $\A_0$, and it operates as follows:

\begin{center}
It computes a deterministic function $f: \XX^n \rightarrow [0,1]$ of the input sample set $X$. Then, it samples a seed $r \sim \Unif([0,1])$, and outputs \accept if $r \le f(X)$, and \reject otherwise.
\end{center}
\end{lemma}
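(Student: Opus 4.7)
The natural construction is to set
\[
f(X) \;\coloneqq\; \Prma[r']{\A_0(X;r') = \text{\accept}},
\]
which is a deterministic function of $X$ taking values in $[0,1]$, and to define $\A_1(X;r)$ to output \accept iff $r \le f(X)$, where $r \sim \Unif([0,1])$. I would verify the two required properties — matching accuracy and $\rho$-replicability — as two separate short calculations.

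\textbf{Matching accuracy.} For any distribution $p$ over $\XX$, the marginal acceptance probability of $\A_1$ is
\[
\Prma[X \sim p^{\otimes \ns},\, r]{\A_1(X;r) = \text{\accept}} \;=\; \Ema[X]{f(X)} \;=\; \Ema[X]{\Prma[r']{\A_0(X;r')=\text{\accept}}} \;=\; \Prma[X,r']{\A_0(X;r')=\text{\accept}}.
\]
Since this equality holds for every $p$, any accuracy guarantee assumed for $\A_0$ (for instance, the $(1-\delta)$ guarantees on $H_0$ and $H_1$ inputs appearing in Theorem~\ref{thm:main_canonical}) transfers verbatim to $\A_1$.

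\textbf{Replicability.} Fix any sample pair $X, X'$. Because $r \sim \Unif([0,1])$,
\[
\Prma[r]{\A_1(X;r) \neq \A_1(X';r)} \;=\; |f(X) - f(X')|.
\]
Now I would use the elementary ``coupling'' inequality: for any two $\{0,1\}$-valued random variables $a(X,r)$ and $a(X',r)$ defined on the \emph{same} $r$,
\[
|f(X) - f(X')| \;=\; \bigl|\Prma[r]{a(X,r)=1} - \Prma[r]{a(X',r)=1}\bigr| \;\le\; \Prma[r]{a(X,r) \neq a(X',r)}.
\]
Averaging over $X, X' \sim p^{\otimes \ns}$ (independently) and using $\rho$-replicability of $\A_0$,
\[
\Prma[X,X',r]{\A_1(X;r) \neq \A_1(X';r)} \;=\; \Ema[X,X']{|f(X)-f(X')|} \;\le\; \Prma[X,X',r]{\A_0(X;r) \neq \A_0(X';r)} \;\le\; \rho.
\]

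\textbf{What could go wrong.} The only real subtlety is that the internal randomness used by $\A_1$ (a single uniform $r \in [0,1]$ compared against $f(X)$) is structurally different from that of $\A_0$ (an arbitrary string $r$). One must resist conflating the two: the coupling inequality above is what lets me pass from the joint $r$-coupled replicability of $\A_0$ to a bound on $|f(X)-f(X')|$, which in turn governs $\A_1$'s replicability under its own independent uniform threshold. Once this is spelled out, the rest is routine. Note also that evaluating $f(X)$ exactly may be computationally infeasible; this is a structural/information-theoretic statement, which is what subsequent lower bound arguments (Theorem~\ref{thm:LB_chaining}) need.
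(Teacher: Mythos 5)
Your proof is correct and matches the paper's proof in every essential respect: the same definition $f(X) = \Prma[r']{\A_0(X;r')=\text{\accept}}$, the same marginal-probability argument for accuracy, and the same coupling/total-variation inequality $|f(X)-f(X')| \le \Prma[r]{\A_0(X;r)\neq\A_0(X';r)}$ followed by averaging over $X,X'$ for replicability. No substantive differences.
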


\begin{proof}
    For any sample set $X = (X_1, X_2, \dots, X_{\ns})$ of $\ns$ samples, we simply define: $$f(X) := \Prma[r]{\A_0(X; r) = \text{\accept}}\,.$$
    Then, as in the description of the lemma statement, the algorithm $\A_1$ is defined to sample $r \sim \Unif[0, 1]$ and accept if and only if $r \le f(X)$.
    
    First, note that for any fixed $X$, given the structure of $\A_1$, we have: 
    \begin{equation}\label{eq:A_0_accept}
        \Prma[{r \sim \Unif([0, 1])}]{\A_1(X; r) = \text{\accept}}  = \Prma[r] {r \le f(X)}  = f(X) = \Prma[r]{\A_0(X; r) = \text{\accept}}
        \,.
    \end{equation}
    Thus, given that the tester has only two possible outcome, the probabilities of both outputting \accept and \reject match between $\A_1$ and $\A_0$. Therefore, $\A_1$ will still solve the problem with the same accuracy guarantees as $\A_0$.

    Next, we check replicability. Let $\cD$ be any distribution, and let $X = (X_1, \dots, X_{\ns})$ and $X'= (X_1', \dots, X_{\ns}')$ be two sample sets each containing $\ns$ i.i.d.\  samples drawn from that distribution. Since $\A_0$ is replicable, and by the law of total expectation, we have
\begin{align*}
    1-\rho 
    &\le \Prma[{r, X, X'}] {\A_0(X; r) = \A_0(X'; r)} \\
    &= \Ema[{X, X'}] {\Prma[r] {\A_0(X; r) = \A_0(X'; r)} }.
\end{align*}

    Now, for any fixed $X$ and $X'$, note that 
\begin{align*}
    &\Prma[r]{\A_0(X; r) \neq \A_0(X'; r)} \geq \tv{\A_0(X; r),\, \A_0(X'; r)} \tag{by coupling inequality}
    \\ & \quaaad \ge \abs{\Prma[r]{\A_0(X; r) = \text{\accept}} - \Prma[r]{\A_0(X'; r) = \text{\accept}}}
    \\
    & \quaaad = \abs{f(X) ~-~f(X')} = \Prma[{r \sim \Unif([0, 1])}]{r \in (f(X) , f(X')]}
    \\ & \quaaad
    = \Prma[{r \sim \Unif([0, 1])}]{\A_1(X; r) \neq \A_1(X'; r)}\,,
\end{align*}
    by the way we have defined $\A_1$. Thus, by taking the expectation over $X$ and $X'$, we have that 
\begin{align*}
    1-\rho &\le \Ema[X, X'] {\Prma[r]{\A_0(X; r) = \A_0(X'; r)} } \\
    &\le \Ema[X, X'] {\Prma[r] {\A_1(X; r) = \A_1(X'; r)} } \\
    &= \Prma[r, X, X'] {\A_1(X; r) = \A_1(X'; r)}.
\end{align*}
    Thus, $\A_1$ is $\rho$-replicable.
\end{proof}

\begin{remark}
    We note that the reduction in \Cref{lem:canonical} is potentially inefficient if we cannot compute $f$ efficiently. However, to prove \emph{statistical} lower bounds, the efficiency of computing $f$ is irrelevant.
\end{remark}

\begin{lemma}[Order Invariant Algorithm]\label{lem:order_invariant}
    Let $\A_0(X; r)$ be a $\rho$-replicable algorithm that solves a given problem using $\ns$ i.i.d.\ samples $X = (X_1, \dots, X_{\ns})$ from an underlying distribution and randomness $r$, and outputs a binary decision in $\{\text{\accept}, \text{\reject}\}$. Then, there exists another $\rho$-replicable algorithm $\A_2(X; r)$ that solves the same problem on $\ns$ samples with the same accuracy and is invariant to the order of the samples. That is, for every seed $r$, permutation function $\sigma:[\ns]\rightarrow[\ns]$, and sample set $X$, we have: 
    $$\A_2(X;r) = \A_2(X_\sigma;r)\,,$$
where $X_\sigma$ denotes $(X_{\sigma(1)}, X_{\sigma(2)}, \ldots, X_{\sigma(n)})$.
\end{lemma}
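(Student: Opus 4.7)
The plan is to combine Lemma~\ref{lem:canonical} with a symmetrization of the deterministic function it produces, averaged over the symmetric group on the sample indices. Applying Lemma~\ref{lem:canonical} to $\A_0$ yields a canonical algorithm $\A_1$ defined by a function $f:\XX^{\ns} \to [0,1]$, where $\A_1(X;r) = \accept$ iff $r \le f(X)$ and $r \sim \Unif([0,1])$. I then define
\[
f'(X) \;:=\; \frac{1}{\ns!} \sum_{\sigma \in S_{\ns}} f(X_\sigma),
\]
and let $\A_2(X;r)$ output $\accept$ iff $r \le f'(X)$. The equality $f'(X_\sigma) = f'(X)$ for every permutation $\sigma$ is immediate from the symmetry of the sum, so $\A_2(X;r) = \A_2(X_\sigma;r)$ for every fixed $r$, giving the required order invariance.

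For accuracy, I use that $X = (X_1,\dots,X_{\ns})$ consists of i.i.d.\ samples, so for any fixed $\sigma$ the permuted tuple $X_\sigma$ is distributed identically to $X$. Thus $\Ema[X]{f(X_\sigma)} = \Ema[X]{f(X)}$, and by linearity
\[
\Prma[X,r]{\A_2(X;r) = \accept} \;=\; \Ema[X]{f'(X)} \;=\; \Ema[X]{f(X)} \;=\; \Prma[X,r]{\A_1(X;r) = \accept}.
\]
Since $\A_1$ matches $\A_0$'s acceptance probability on every input distribution by Lemma~\ref{lem:canonical}, so does $\A_2$, and both accuracy guarantees of $\A_0$ transfer directly.

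For replicability, I exploit the canonical form once more: when $r \sim \Unif([0,1])$,
\[
\Prma[r]{\A_2(X;r) \neq \A_2(X';r)} \;=\; |f'(X) - f'(X')| \;\le\; \frac{1}{\ns!} \sum_{\sigma \in S_{\ns}} |f(X_\sigma) - f(X'_\sigma)|,
\]
by the triangle inequality. When $X,X' \sim p^{\otimes \ns}$ are independent, for each fixed $\sigma$ the joint pair $(X_\sigma, X'_\sigma)$ is equidistributed with $(X, X')$, so each summand has expectation equal to $\Ema[X,X']{|f(X) - f(X')|}$, which is at most $\rho$ by the $\rho$-replicability of $\A_1$ combined with the identity $\Prma[r]{\A_1(X;r) \neq \A_1(X';r)} = |f(X) - f(X')|$. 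Averaging over $\sigma$ preserves the bound and yields $\Prma[X,X',r]{\A_2(X;r) \neq \A_2(X';r)} \le \rho$. The only real subtlety is that the same permutation $\sigma$ must be applied simultaneously to $X$ and $X'$ inside the triangle inequality; this is what keeps each permuted pair equidistributed with the original and allows the bound to be discharged summand-by-summand. Computational efficiency is not a concern since the lemma only asserts existence of an order-invariant $\rho$-replicable algorithm with matching sample complexity and accuracy.
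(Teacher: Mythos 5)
Your proposal is correct and matches the paper's proof essentially line for line: you build the canonical tester via Lemma~\ref{lem:canonical}, symmetrize its deterministic function over the full permutation group, and transfer accuracy via $X_\sigma \overset{d}{=} X$ and replicability via the triangle inequality together with the joint equidistribution of $(X_\sigma, X'_\sigma)$ with $(X, X')$. The subtlety you flag (applying the \emph{same} $\sigma$ to both copies inside the triangle inequality) is exactly the point the paper's argument relies on as well.
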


\begin{proof}
    Let $\A_1(X;r)$ be the algorithm defined in Lemma~\ref{lem:canonical} with the deterministic function $f:\XX^n \rightarrow [0,1]$.
    Consider the following deterministic function of the sample set $X$, $q:\XX^n\rightarrow[0,1]\,$:
    $$q(X) \coloneqq \frac{1}{\ns!}\sum_\sigma f(X_{\sigma})\,.$$
    Essentially, $q$ is the average of $f$ over all possible permutations of the samples. The algorithm $\A_2(X; r)$ operates similarly to $\A_1(X; r)$, except that it uses $q$ instead of $f$. For a random seed $r \sim \Unif([0, 1])$, $\A_2(X; r)$ outputs \accept if $r \le q(X)$, and \reject otherwise. Clearly, $\A_2$ is order invariant.

To prove the accuracy of $\A_2$, consider a fixed underlying distribution $p$. For any permutation function $\sigma$, since $X$ contains  i.i.d.\ samples from $p$, the permuted sample $X_\sigma$ has the same distribution as $X$; that is, $X_\sigma \overset{d}{=} X$.
Thus, we have:
\begin{align}
    \Prma[r,X\sim p^{\otimes \ns}]
    {\A_2(X;r) = 
    \text{\accept}} & = \Ema[X\sim p^{\otimes \ns}]
    {\Prma[r]{\A_2(X;r) = 
    \text{\accept}}} \nonumber
    \\& = 
    \Ema[X\sim p^{\otimes \ns}] {q(X)} = \frac{1}{\ns!}\sum_\sigma \Ema[X\sim p^{\otimes \ns}]{ f(X_{\sigma})} \nonumber
    \\ & =
    \frac{1}{\ns!}\sum_\sigma \Ema[X\sim p^{\otimes \ns}]{ f(X)} \tag{using $X_\sigma \overset{d}{=}X$} \nonumber
    = 
    \Ema[X\sim p^{\otimes \ns}]{ f(X)}
    \\ & = \Ema[X\sim p^{\otimes \ns}]
    {\Prma[{r\sim \Unif[0,1]}]{\A_1(X;r)= 
    \text{\accept}}} \nonumber
    \\ & =
    \Prma[r,X\sim p^{\otimes \ns}]{\A_1(X;r) = \text{\accept}} \nonumber
    \\ & = \Prma[r,X\sim p^{\otimes \ns}]{\A_0(X;r) = \text{\accept}}
    \,. 
        \tag{Using Lemma~\ref{lem:canonical}, Eq.~\eqref{eq:A_0_accept}}
    \\
    & \label{eq:A_2_A_0_accept}
\end{align}

Therefore, $\A_2$ has the same probabilities of outputting \accept and \reject as $\A_0$, and thus inherits the accuracy guarantees of $\A_0$.

Next, we show replicability of $\A_2$. Similar to Lemma~\ref{lem:canonical}, we have: 
\begin{equation} \label{eq:pr_rep_A_2_part1}
    \begin{split}
\Prma[r,X,X'\sim p^{\otimes \ns}]
    {\A_2(X;r) \not = \A_2(X';r)} & = \Ema[X,X']
    {\Prma[r]{\A_2(X;r) \not = \A_2(X';r)}} 
    \\& = 
    \Ema[X,X']
    {\abs{q(X) ~-~q(X')}}\,,
    \end{split}
\end{equation}
where in the last line, we use the structure of $\A_2$. Using the definition of $q$, we have:
\begin{align} 
    \Prma[r,X,X'\sim p^{\otimes \ns}]
    {\A_2(X;r) \not = \A_2(X';r)}
    & = 
    \Ema[X,X']
    {\abs{q(X) ~-~q(X')}} \nonumber
    \\ & \leq  
    \Ema[X,X']
    {\frac{1}{\ns!} \sum_\sigma \abs{f(X_\sigma) ~-~f(X'_\sigma)}} \tag{Via triangle inequality}
    \\&  \leq \frac{1}{\ns!} \sum_\sigma   
    \Ema[X,X']
    {\abs{f(X_\sigma) ~-~f(X'_\sigma)}}\,. \label{eq:pr_rep_A_2_part2}
\end{align}
Recall that the distribution of sample sets remains identical after permutation, so $X_\sigma \overset{d}{=} X$ and $X'_\sigma \overset{d}{=} X'$. Hence, we obtain:
\begin{align*}
\Prma[r,X,X'\sim p^{\otimes \ns}]
    {\A_2(X;r) \not = \A_2(X';r)}
    &   \leq 
    \frac{1}{\ns!} \sum_\sigma   
    \Ema[X,X']
    {\abs{f(X_\sigma) ~-~f(X'_\sigma)}} 
    \\ & =
    \frac{1}{\ns!} \sum_\sigma   
    \Ema[X,X']
    {\abs{f(X) ~-~f(X')}}
    \\ & = 
    \Ema[X,X']
    {\abs{f(X) ~-~f(X')}}
    \\ & = \Ema[X,X']{\Prma[{r\sim \Unif[0,1]}]{\A_1(X;r) \not = \A_1(X';r)}}
    \\& = \Prma[r,X,X'\sim p^{\otimes \ns}]
    {\A_1(X;r) \not = \A_1(X';r)} \leq \rho \tag{Using Lemma~\ref{lem:canonical}}
\end{align*}
Hence, the proof is complete. 
\end{proof}

\begin{lemma}[Label Invariant Algorithm]\label{lem:label_invariant}
    Let $\A_0(X; r)$ be a $\rho$-replicable algorithm for testing a symmetric property $\PP$ of discrete distributions over $[n]$, using $\ns$ i.i.d.\ samples $X = (X_1, \dots, X_{\ns})$ drawn from an underlying distribution $p$, and randomness $r$. The algorithm outputs a binary decision in $\{\text{\accept}, \text{\reject}\}$. The accuracy of $\A_0$ is determined by two parameters $\epsilon$ and $\delta$ in $(0,1)$, satisfying the following:

    \begin{itemize}
        \item If $p \in \PP$, then 
        $$\Prma[X\sim p^{\otimes \ns}, r]{\A_0(X;r) = \text{\accept}} \geq 1-\delta\,.$$
        \item If $p$ is $\epsilon$-far from $\PP$, then 
        $$\Prma[X\sim p^{\otimes \ns}, r]{\A_0(X;r) = \text{\reject}} \geq 1-\delta\,.$$
    \end{itemize}

    Then, there exists another $\rho$-replicable algorithm $\A_3(X; r)$ that solves the same problem using $\ns$ samples with the same accuracy, and is invariant to the labels of the samples. That is, for every seed $r$, permutation function $\pi:[\ns]\rightarrow[\ns]$, and sample set $X$, we have:
    $$\A_3(X;r) = \A_3(\pi(X);r)\,,$$
    where $\pi(X)$ denotes $\left(\pi(X_1), \pi(X_2), \ldots, \pi(X_{\ns})\right)$. 

    Moreover, $\A_3$ is invariant to the order of the samples, and operates in the canonical format of comparing a deterministic function to a random threshold, as defined in Lemma~\ref{lem:canonical}.
\end{lemma}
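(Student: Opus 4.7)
The plan is to begin with the canonical, order-invariant algorithm $\A_2$ produced by Lemma~\ref{lem:order_invariant}, together with its deterministic function $q:\XX^{\ns}\to[0,1]$, and then \emph{symmetrize} $q$ over relabelings of the domain $[n]$. Concretely, I would define
\[
h(X) \coloneqq \frac{1}{n!}\sum_{\pi \in S_n} q\bigl(\pi(X)\bigr),
\]
where $S_n$ is the symmetric group on $[n]$ and $\pi(X) = (\pi(X_1),\ldots,\pi(X_{\ns}))$, and take $\A_3(X;r)$ to draw $r \sim \Unif([0,1])$ and output \accept iff $r \le h(X)$. Then $\A_3$ is canonical by construction, and it inherits order invariance from the order invariance of $q$. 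Label invariance of $h$ is immediate because for any $\sigma \in S_n$, the substitution $\pi \mapsto \pi\sigma^{-1}$ is a bijection of $S_n$, so $h(\sigma(X)) = h(X)$.

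For accuracy, the key observation is that if $X\sim p^{\otimes\ns}$ then $\pi(X)\sim (p\circ\pi^{-1})^{\otimes\ns}$. When $p\in\PP$, symmetry of $\PP$ gives $p\circ\pi^{-1}\in\PP$, so the accuracy of $\A_2$ (which matches that of $\A_0$ by Lemma~\ref{lem:order_invariant}) yields $\Ema[Y\sim(p\circ\pi^{-1})^{\otimes\ns}]{q(Y)}\ge 1-\delta$ for every $\pi$. Averaging over $\pi$ then gives $\Prma[r,X]{\A_3(X;r)=\text{\accept}}=\Ema[X]{h(X)}\ge 1-\delta$. The $\epsilon$-far case is analogous: $\ell_1$-distance is invariant under relabeling by $\pi$, and combined with symmetry of $\PP$ this forces distance to $\PP$ to be preserved, so $p\circ\pi^{-1}$ is also $\epsilon$-far from $\PP$.

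For replicability, I would mimic the template of Lemma~\ref{lem:order_invariant}: by the triangle inequality,
\[
\Prma[r,X,X'\sim p^{\otimes\ns}]{\A_3(X;r)\ne\A_3(X';r)}
= \Ema[X,X']{\abs{h(X)-h(X')}}
\le \frac{1}{n!}\sum_{\pi}\Ema[X,X']{\abs{q(\pi(X))-q(\pi(X'))}}.
\]
For each fixed $\pi$, the pair $(\pi(X),\pi(X'))$ consists of two independent sample sets from $(p\circ\pi^{-1})^{\otimes\ns}$, so the inner expectation equals the replicability error of $\A_2$ on the distribution $p\circ\pi^{-1}$, which is at most $\rho$. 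Averaging over $\pi$ gives $\rho$, as required.

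The main subtlety, and the step to be most careful about, is keeping straight the two separate roles played by permutations: permutations of the sample order (handled in Lemma~\ref{lem:order_invariant}) versus permutations of the domain labels $[n]$ (handled here). A domain permutation $\pi$ transports $p^{\otimes\ns}$ to $(p\circ\pi^{-1})^{\otimes\ns}$, and to push the replicability of $\A_2$ through this transport one crucially needs that $\A_2$ is $\rho$-replicable on \emph{every} distribution, not only on those in $\PP$; likewise, to push the accuracy guarantee through, one crucially needs symmetry of $\PP$ so that both membership in $\PP$ and distance to $\PP$ are preserved.
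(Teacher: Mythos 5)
Your proposal matches the paper's own proof essentially step by step: same symmetrized function $h(X)=\frac{1}{n!}\sum_\pi q(\pi(X))$ built from the $\A_2$ of Lemma~\ref{lem:order_invariant}, same use of $\pi(X)\sim(p\circ\pi^{-1})^{\otimes\ns}$ together with symmetry of $\PP$ (and invariance of total variation under relabeling) for accuracy, and same triangle-inequality decomposition plus the fact that $\A_2$ is $\rho$-replicable on \emph{every} distribution $p^\pi$ for replicability. The only cosmetic addition is that you spell out the $S_n$ re-indexing bijection to verify label invariance explicitly, which the paper leaves implicit; everything else is the same argument.
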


\begin{proof}
Let $\A_2(X;r)$ be the algorithm defined in \cref{lem:order_invariant} corresponding to $\A_0$, with an associated deterministic function $q:\XX^n \rightarrow [0,1]$. Define a new deterministic function $h:\XX^n \rightarrow [0,1]$ as:
$$h(X) \coloneqq \frac{1}{n!}\sum_\pi q(\pi(X))\,.$$
That is, $h$ computes the average of $q$ over all permutations of the sample labels.

The algorithm $\A_3(X; r)$ behaves similarly to $\A_2(X; r)$ and $\A_1(X;r)$: for $r \sim \Unif([0, 1])$, it outputs \accept if $r \le h(X)$, and \reject otherwise. Since $q$ and therefore $h$ are invariant to the sample order, $\A_3$ is order-invariant.

Next, we verify the accuracy guarantee of $\A_3$. Consider any permutation function $\pi:[n]\rightarrow[n]$. For any sample set $X$, the permuted sample $\pi(X)$ can be viewed as drawn i.i.d.\ from $p^\pi \coloneqq p \circ \pi^{-1}$. That is, for all $j$, the probability of element $\pi(j)$ under $p^\pi$ equals $p(j)$. We now follow the structure of the proof in \cref{lem:order_invariant}:

\begin{align*}
    \Prma[r,X\sim p^{\otimes \ns}]{\A_3(X;r) = \text{\accept}} 
    &= \Ema[X\sim p^{\otimes \ns}]{\Prma[r]{\A_3(X;r) = \text{\accept}}} \\
    &= \Ema[X\sim p^{\otimes \ns}]{h(X)} \\
    &= \frac{1}{n!}\sum_\pi \Ema[X\sim p^{\otimes \ns}]{ q(\pi(X))} \\
    &= \frac{1}{n!}\sum_\pi \Ema[X\sim (p^\pi)^{\otimes \ns}]{ q(X)} 
    \tag{since $\pi(X) \sim (p^\pi)^{\otimes \ns}$ when $X \sim p^{\otimes \ns}$} \\
    &= \frac{1}{n!}\sum_\pi \Ema[X\sim (p^\pi)^{\otimes \ns}]{\Prma[r]{\A_2(X;r)= \text{\accept}}} \\
    &= \frac{1}{n!}\sum_\pi \Prma[r, X\sim (p^\pi)^{\otimes \ns}]{\A_2(X;r)= \text{\accept}} \\
    &= \frac{1}{n!}\sum_\pi \Prma[r, X\sim (p^\pi)^{\otimes \ns}]{\A_0(X;r)= \text{\accept}} 
    \tag{by \cref{lem:order_invariant}, Eq.~\eqref{eq:A_2_A_0_accept}}
\end{align*}

The above shows that the probability $\A_3$ accepts under $p$ is the average acceptance probability of $\A_0$ under $p^\pi$ for all permutations $\pi$.

Now, for symmetric properties: if $p \in \PP$, then by definition of symmetry (Definition~\ref{def:symmetric_property}), each $p^\pi$ also belongs to $\PP$. Thus, $\A_0$ accepts each $p^\pi$ with probability at least $1-\delta$, and therefore $\A_3$ accepts with probability at least $1-\delta$.

Conversely, if $p$ is $\epsilon$-far from $\PP$, then so is every $p^\pi$, since total variation distance is invariant under permutation. In particular, for every $\pi$:
\begin{align*}
    \dist(p,\PP) &= \min_{d \in \PP} \dist(p,d) = \min_{d \in \PP} \dist(p^\pi, d^\pi) \\
    &\leq \min_{d \in \PP} \dist(p^\pi, d) = \dist(p^\pi,\PP)\,,
\end{align*}
where the inequality holds since $d^\pi \in \PP$ for symmetric $\PP$. By applying the same reasoning in the reverse direction using $\pi^{-1}$, we conclude:
$$\dist(p,\PP) = \dist(p^{\pi}, \PP)\,.$$

Hence, each $p^\pi$ is also $\epsilon$-far from $\PP$, so $\A_0$ rejects each with probability at least $1 - \delta$. Therefore, $\A_3$ accepts with probability less than $\delta$, as desired.

We now prove the replicability of $\A_3$. Using the same logic as in Equations~\eqref{eq:pr_rep_A_2_part1} and~\eqref{eq:pr_rep_A_2_part2}, we get:
\begin{equation*}
    \begin{split}
\Prma[r,X,X'\sim p^{\otimes \ns}]{\A_3(X;r) \not = \A_3(X';r)} 
&\leq \frac{1}{n!} \sum_\pi \Ema[X,X'\sim p^{\otimes \ns}]{\abs{q(\pi(X)) - q(\pi(X'))}}\,.
    \end{split}
\end{equation*}
Since $\pi(X)$, when $X \sim p^{\otimes \ns}$, is distributed as $(p^\pi)^{\otimes \ns}$, we have:
\begin{equation*}
    \begin{split}
\Prma[r,X,X'\sim p^{\otimes \ns}]{\A_3(X;r) \not = \A_3(X';r)} 
&\leq \frac{1}{n!} \sum_\pi \Ema[X,X'\sim (p^\pi)^{\otimes \ns}]{\abs{q(X) - q(X')}}\,.
    \end{split}
\end{equation*}
Using \cref{eq:pr_rep_A_2_part1}, we get:
$$\Ema[X,X'\sim (p^\pi)^{\otimes \ns}]{\abs{q(X) - q(X')}} = \Prma[r,X,X'\sim p^{\otimes \ns}]{\A_2(X;r) \not = \A_2(X';r)} \leq \rho\,.$$
The inequality holds because $\A_2$ is $\rho$-replicable for any $p^\pi$ (by \cref{lem:order_invariant}). Therefore:
\begin{equation*}
    \begin{split}
\Prma[r,X,X'\sim p^{\otimes \ns}]{\A_3(X;r) \not = \A_3(X';r)} 
&\leq \frac{1}{n!} \sum_\pi \Prma[r,X,X'\sim p^{\otimes \ns}]{\A_2(X;r) \not = \A_2(X';r)} \leq \rho\,.
    \end{split}
\end{equation*}
Hence, the proof is complete.\end{proof}

\begin{lemma}[Permutation-Robust Replicability]\label{lem:perm_robust}
    Algorithm $\A_3$, introduced in \cref{lem:label_invariant}, satisfies an important stability assumption—namely, $\rho$-permutation robust replicability. That is, for any prior distribution $\DD$ over a given distribution and all of its permutation, we have:
    $$\Prma[{r\sim\Unif[0,1]},\ p, p^\pi \sim \DD, X\sim p^{\otimes \ns}, X'\sim (p^\pi)^{\otimes \ns} ]{\A_3(X;r) \not = \A_3(X';r)} \leq \rho\,,$$
    where $p^\pi \coloneqq p \circ \pi^{-1}$ and $\pi$ is a permutation $\pi:[n]\rightarrow[n]$.
\end{lemma}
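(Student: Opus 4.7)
\medskip

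\noindent\textbf{Proof proposal for Lemma~\ref{lem:perm_robust}.} The plan is to reduce permutation-robust replicability back to the ordinary $\rho$-replicability of $\A_3$ already established in Lemma~\ref{lem:label_invariant}, by exploiting the label-invariance of $\A_3$ together with a change of variables that maps the sample from $(p^\pi)^{\otimes \ns}$ back to a sample from $p^{\otimes \ns}$.

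Concretely, for a sequence $X' = (X_1',\ldots,X_{\ns}')$, write $\pi^{-1}(X') = (\pi^{-1}(X_1'), \ldots, \pi^{-1}(X_{\ns}'))$. The first step is the distributional identity: if $X' \sim (p^\pi)^{\otimes \ns}$, then $\pi^{-1}(X') \sim p^{\otimes \ns}$, since for each coordinate and each $j \in [n]$,
\[
\Prma{\pi^{-1}(X_i') = j} \;=\; \Prma{X_i' = \pi(j)} \;=\; p^\pi(\pi(j)) \;=\; p(j).
\]
The second step uses label-invariance of $\A_3$ from Lemma~\ref{lem:label_invariant}, which asserts $\A_3(X;r) = \A_3(\sigma(X);r)$ for every permutation $\sigma$ of $[n]$. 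Applying this with $\sigma = \pi^{-1}$ yields $\A_3(X';r) = \A_3(\pi^{-1}(X');r)$ pointwise in $(X', r)$.

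Let $Y := \pi^{-1}(X')$. Because $X$ and $X'$ are independent and $Y$ is a deterministic function of $X'$, the pair $(X, Y)$ has the joint distribution $p^{\otimes \ns} \times p^{\otimes \ns}$. Combining the two steps above,
\begin{align*}
\Prma[r,X\sim p^{\otimes \ns},X'\sim (p^\pi)^{\otimes \ns}]{\A_3(X;r) \ne \A_3(X';r)}
&= \Prma[r,X\sim p^{\otimes \ns},X'\sim (p^\pi)^{\otimes \ns}]{\A_3(X;r) \ne \A_3(\pi^{-1}(X');r)} \\
&= \Prma[r,X,Y \sim p^{\otimes \ns}]{\A_3(X;r) \ne \A_3(Y;r)} \;\le\; \rho,
\end{align*}
where the final inequality is just the $\rho$-replicability of $\A_3$ on the distribution $p$, already proved in Lemma~\ref{lem:label_invariant}.

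There is no real obstacle here: the hard work was done in constructing $\A_3$ as an average over all relabelings in Lemma~\ref{lem:label_invariant}. The only point that requires a bit of care is confirming that $(X, Y)$ is genuinely a product of two independent $p^{\otimes \ns}$'s, which follows because $Y$ is a deterministic (coordinatewise) function of $X'$ alone and $X \perp X'$. After that, the argument is a one-line change of variables followed by invoking the replicability clause of Lemma~\ref{lem:label_invariant}.
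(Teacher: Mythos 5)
Your proof is correct, and it is a cleaner route than the paper's. The paper proves permutation-robust replicability by unrolling the definition $h(X)=\frac{1}{n!}\sum_\tau q(\tau(X))$, constructing a bijection $\tau\mapsto\tau'$ with $\tau'=\tau\circ\pi^{-1}$ to match terms in the two sums, and then performing the change of variables $\pi^{-1}(X')\sim p^{\otimes \ns}$ before reducing to the bound from Lemma~\ref{lem:label_invariant}. You instead invoke the \emph{label-invariance conclusion} of Lemma~\ref{lem:label_invariant} directly: $\A_3(X';r)=\A_3(\pi^{-1}(X');r)$ pointwise, which is exactly what the bijection argument in the paper is re-deriving inline. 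Combined with the same change of variables and the ordinary $\rho$-replicability of $\A_3$ on $p$, this gives the result in one step. The mathematical content is the same (the change of variables $\pi^{-1}(X')\sim p^{\otimes\ns}$ is the heart of both), but your factoring is more modular and avoids duplicating work already done to establish label invariance. One minor point: you correctly verified that $Y=\pi^{-1}(X')$ is a deterministic coordinatewise function of $X'$ alone, so $(X,Y)\sim p^{\otimes\ns}\times p^{\otimes\ns}$; this independence check is the only place where carelessness could bite, and you handled it explicitly.
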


\begin{proof}
Fix a distribution $p$ over $[n]$. Consider any permutation function $\pi:[n]\rightarrow[n]$. A sample set $X' \sim (p^\pi)^{\otimes \ns}$ has the same distribution as the sample set  $\pi^{-1}(X'')$ where $X''$ from $p^{\otimes \ns}$. This identity in distribution allows us to write: 
\begin{align*}
& \Prma[{r\sim\Unif[0,1]},\ X\sim p^{\otimes \ns}, X'\sim (p^\pi)^{\otimes \ns} ]{\A_3(X;r) \not = \A_3(X';r)} \\& \quaad = \Prma[{r\sim\Unif[0,1]},\ X\sim p^{\otimes \ns}, X''\sim (p)^{\otimes \ns} ]{\A_3(X;r) \not = \A_3(\pi^{-1}(X'');r)}
\\& \quaad = \Prma[{r\sim\Unif[0,1]},\ X\sim p^{\otimes \ns}, X''\sim (p)^{\otimes \ns} ]{\A_3(X;r) \not = \A_3(X'';r)}
\\& \quaad \leq \rho
\end{align*}
In the second to last line above, we use the label-invariant property of $\A_3$ in~\cref{lem:label_invariant}. And, in the last line, we use the fact that $A_3$ is $\rho$-replicable. Taking an expectation over all possible choices of $p$ and $\pi$ gives us the desired statement. 
\end{proof}

\section{Replicable Lower Bounds via Chaining}\label{sec:chaining}

In this section, we introduce a general framework for proving lower bounds for replicable testing algorithms. We refer to \cref{sec:tech_chaining} for a high-level intuitive overview of the following theorem.

We start with the following general lemma, which we will apply to prove our chaining lower bound for discrete distributions.

\begin{lemma} \label{lem:chaining_lemma}
    Let $\rho \in (0, 0.001]$ and $1 \le t \le 1/(300 \rho)$. Let $Z_0, Z_1, \dots, Z_t$ be distributions in some probability space $\Omega$, such that for every $1 \le i \le t$, $\dtv(Z_{i-1}, Z_i) \le 0.5$. Let $\A$ be an algorithm that takes a value $x \in \Omega$ and randomness $r \sim \Unif[0, 1]$, computes a deterministic function $h(x) \in [0, 1]$ and accepts if and only if $r \le h(x)$. In other words, it satisfies the canonical property in \Cref{def:canonical}, though we think of the input as a single sample.

    Suppose that for every $0 \le i \le t$, $\Prma[r \sim \Unif{[0, 1]}, x, x' \sim Z_i]{(\A(x; r) \neq \A(x'; r))} \le \rho$. Then, either the probability that $\A$ accepts on $Z_0$ is less than $2/3$, or the probability that $\A$ rejects on $Z_t$ is less than $2/3$. 
\end{lemma}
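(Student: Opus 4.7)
The plan is to show that for each $i$, the real-valued statistic $h(x)$ under $x \sim Z_i$ is concentrated in a short interval around its mean, and that these intervals must overlap across consecutive $i$, forcing $|\E_{Z_0}[h] - \E_{Z_t}[h]|$ to be too small to separate the accepting and rejecting cases once $t \le 1/(300\rho)$.

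First, I would translate the replicability assumption into a statement about $h$. Since $\A$ is the canonical threshold algorithm, for fixed $x,x'$ the probability over $r \sim \Unif[0,1]$ that $\A(x;r) \ne \A(x';r)$ is exactly $|h(x)-h(x')|$. Hence
\[
\Ema[x,x' \sim Z_i]{|h(x)-h(x')|} \;=\; \Prma[r, x, x' \sim Z_i]{\A(x;r) \neq \A(x';r)} \;\le\; \rho.
\]
Setting $\mu_i \coloneqq \Ema[x \sim Z_i]{h(x)}$ and applying the triangle inequality inside the expectation gives $\Ema[x \sim Z_i]{|h(x)-\mu_i|} \le \rho$. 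By Markov's inequality, for the interval $\II_i \coloneqq [\mu_i - 10\rho,\, \mu_i + 10\rho]$ we obtain
\[
\Prma[x \sim Z_i]{h(x) \in \II_i} \;\ge\; 9/10.
\]

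Next I would use the TV bound between consecutive distributions to show $\II_i$ and $\II_{i+1}$ overlap. By definition of total variation, for any event $E$, $|\Prma[Z_{i+1}]{E} - \Prma[Z_i]{E}| \le \dtv(Z_i, Z_{i+1}) \le 1/2$. Applied to $E = \{h(x) \in \II_i\}$, this gives $\Prma[Z_{i+1}]{h(x) \in \II_i} \ge 9/10 - 1/2 = 2/5$. Combined with $\Prma[Z_{i+1}]{h(x) \in \II_{i+1}} \ge 9/10$, a union bound on the complements shows $\Prma[Z_{i+1}]{h(x) \in \II_i \cap \II_{i+1}} \ge 3/10 > 0$. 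Hence $\II_i \cap \II_{i+1} \neq \emptyset$, which forces $|\mu_{i+1} - \mu_i| \le 20\rho$.

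Finally I would chain these bounds: by the triangle inequality,
\[
|\mu_0 - \mu_t| \;\le\; \sum_{i=1}^{t} |\mu_i - \mu_{i-1}| \;\le\; 20\rho \cdot t \;\le\; \frac{20\rho}{300\rho} \;=\; \frac{1}{15}.
\]
But if the conclusion of the lemma fails, then both $\Prma[Z_0]{\A \text{ accepts}} = \mu_0 \ge 2/3$ and $\Prma[Z_t]{\A \text{ rejects}} = 1 - \mu_t \ge 2/3$ hold, giving $\mu_0 - \mu_t \ge 1/3$, which contradicts $|\mu_0 - \mu_t| \le 1/15$. The only genuinely non-routine step is turning the replicability condition into a concentration statement for $h$; everything else is a TV-distance triangle argument. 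No step requires sharp constants, so the Markov-with-$K=10$ choice comfortably absorbs all slack given the $1/(300\rho)$ bound on $t$.
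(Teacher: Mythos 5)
Your proof is correct and follows essentially the same route as the paper: translate replicability into $\Ema[x\sim Z_i]{|h(x)-\mu_i|}\le\rho$ (Jensen/triangle inequality), apply Markov to get a $\ge 0.9$ mass in an $O(\rho)$-width interval, use the TV bound to force consecutive intervals to overlap so $|\mu_{i}-\mu_{i-1}|\le 20\rho$, and chain over $t\le 1/(300\rho)$ steps to contradict $\mu_0\ge 2/3$ and $\mu_t\le 1/3$. The constants and structure match the paper's argument step for step.
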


Although \Cref{lem:chaining_lemma} is stated for a single sample, we can apply it to the setting of multiple samples by letting $\Omega$ be a product space, as we will see in the proof of \Cref{thm:LB_chaining}.

\begin{proof}
    Since $\A$ satisfies \Cref{def:canonical}, for any distribution $Z$, 
\begin{align*}
    \Prma[r \sim \Unif{[0, 1]}, x, x' \sim Z]{(\A(x; r) \neq \A(x'; r))} 
    &= \Ema[x \sim Z]{\Ema[x' \sim Z]{|h(x)-h(x')|}} \\
    &\ge \Ema[x \sim Z]{\left|h(x)-\Ema[x' \sim Z]{h(x')}\right|} \tag{by Jensen's inequality}.
\end{align*}
    So, if $\Prma[r \sim \Unif{[0, 1]}, x, x' \sim Z]{(\A(x; r) \neq \A(x'; r))} \le \rho$, then by Markov's inequality, with probability at least $0.9$ over $x \sim Z_i$, $|h(x) - \Ema[x' \sim Z_i]{h(x')}| \le 10 \rho$. If we define $q_i := \Ema[x' \sim Z_i]{h(x')}$, then $|h(x) - q_i| \le 10 \rho$ with at least $0.9$ probability.

    As a consequence, we claim that $|q_i-q_{i-1}| \le 20 \rho$ for all $1 \le i \le t$. This is because if $\dtv(Z_{i-1}, Z_i) \le 0.5$, then since $h$ is deterministic, $\dtv(h(Z_{i-1}), h(Z_i)) \le 0.5$. So, if $|h(x)-q_i| \le 10 \rho$ with probability at least $0.9$ for $x \sim Z_i$, then $|h(x)-q_i| \le 10 \rho$ with probability at least $0.4$ for $x \sim Z_{i-1}$. However, $|h(x)-q_{i-1}| \le 10 \rho$ with probability at least $0.9$ for $x \sim Z_{i-1}$, so with probability at least $0.3$, $|h(x)-q_{i-1}| \le 10 \rho$ and $|h(x)-q_i| \le 10 \rho$. By the Triangle inequality, $|q_i-q_{i-1}| \le 20 \rho$.

    Applying again the Triangle inequality for $i = 1, 2, \dots, t$, we have $|q_0-q_t| \le 20 \rho \cdot t \le \frac{1}{15}$, since $t \le \frac{1}{300 \rho}$. So, either $q_0 < \frac{2}{3}$ or $q_t > \frac{1}{3}$. Since $q_i = \Ema[x \sim Z_i]{h(x)} = \Prma[x \sim Z_i, r]{r \le h(x)}$ equals the probability of accepting a sample from $Z_i$, the claim is complete.
\end{proof}

\chaininglb*

\begin{proof}
Assume for the sake of contradiction that a $\rho$-replicable algorithm $\A_0$ exists. Let $\A_3$ be the improved version of $\A_0$, which has all the canonical properties including being a $\rho$-permutation robust replicable algorithm as guaranteed by \cref{thm:main_canonical}. The construction of $\A_3$ may be found in \cref{lem:label_invariant}. Recall that $\A_3$ computes a deterministic function $h(X)\in[0,1]$ and then compares it with a random threshold $r\sim\Unif[0,1]$, and accepts as long as $r \le h(X)$.

Note that the priors $\DD_0, \dots, \DD_t$ are distributions over $[n]^k$. Moreover, since $\A_3$ is label-invariant, 
\[\Prma[X \sim p_0, r]{\A_3(X; r) = \accept} = \Prma[X_1, \dots, X_k \sim p_0^{\pi}, r]{\A_3(x; r) = \accept} \ge 1-\delta\]
and 
\[\Prma[X_1, \dots, X_k \sim p_t, r]{\A_3(X; r) = \reject} = \Prma[x \sim p_t^{\pi}, r]{\A_3(X; r) =\reject} \ge 1-\delta\]
for any permutation $\pi$. Since $\DD_0$ is a prior over $p_0^{\pi}$ over different permutations, $\Prma[X \sim \DD_0, r]{\A_3(X; r) = \accept} \ge 1-\delta \ge \frac{2}{3}$ and $\Prma[X \sim \DD_t, r]{\A_3(X; r) = \reject} \ge 1-\delta \ge \frac{2}{3}$.

By setting $Z_i = \DD_i$ and $\Omega = [n]^k$ and applying \Cref{lem:chaining_lemma}, this is a contradiction.
\end{proof}

\subsection{Lower Bound Applications}\label{sec:chaining_applications}

\subsubsection{Coin Testing}\label{sec:coin_lb}


Our first application is a lower bound for coin testing (or coin estimation), a result already proven in~\cite{impagliazzo2022reproducibility}. Here, we include it as an application of our lower bound framework and demonstrate how our approach simplifies the proof. We set $p_0$ to be an unbiased coin and $p_t$ to be a coin with bias $\epsilon$. We then pack $t = \Theta(1/\rho)$ distributions between these two by setting the bias of the $i$-th coin to be $i\,\epsilon/t$. At this point, we follow the folklore result for showing the indistinguishability via Pinsker's inequality.

\begin{theorem} \label{thm:coin_lb_via_chaining} For any $\rho \leq 0.001$, $\epsilon \leq 0.25$, 
    any $\rho$-replicable algorithm that can distinguish an unbiased coin from one with bias of $1/2\pm\epsilon$ is required to use $\Omega(1/(\rho \epsilon)^2)$ samples. 
\end{theorem}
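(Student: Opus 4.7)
} The plan is to instantiate the chaining framework of Theorem~\ref{thm:LB_chaining} with $n=2$, identifying a coin of bias $b\in[0,1]$ with the distribution on $\{0,1\}$ that places mass $b$ on $1$. I take the property $\PP$ to be the single distribution $\mathrm{Ber}(1/2)$, which is symmetric under the only nontrivial permutation on $\{0,1\}$ (the swap), and I treat the sample budget as an unknown $k$ that we will constrain.

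Set $t = \lfloor 1/(300\rho)\rfloor$ (so $t\leq 1/(300\rho)$ as required), and define the chain
\[
p_i \;=\; \mathrm{Ber}\!\left(\tfrac{1}{2} + \tfrac{i\eps}{t}\right), \qquad i=0,1,\ldots,t.
\]
Then $p_0\in\PP$ and $p_t$ has bias $1/2+\eps$, so $\dtv(p_t,\mathrm{Ber}(1/2))=\eps$, i.e., $p_t$ is $\eps$-far from $\PP$. For each $i$ take $\DD_i$ to be the prior concentrated on $p_i$ itself (which is trivially supported on $p_i$ and its permutations, as permitted by the theorem statement). All that remains is to show that, for an appropriately chosen $k$, the two sample sets $X^{(i-1)}\sim p_{i-1}^{\otimes k}$ and $X^{(i)}\sim p_i^{\otimes k}$ are statistically indistinguishable in the sense required by Theorem~\ref{thm:LB_chaining}.

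For this indistinguishability step I would use Pinsker combined with tensorization of KL divergence:
\[
\dtv\!\left(p_{i-1}^{\otimes k},\, p_i^{\otimes k}\right)
\;\leq\; \sqrt{\tfrac{1}{2}\, k\cdot \mathrm{KL}(p_{i-1}\,\|\,p_i)}.
\]
Since the biases of $p_{i-1}$ and $p_i$ differ by $\eps/t$ and are both inside $[1/4,3/4]$ (using $\eps\leq 1/4$), the standard estimate gives $\mathrm{KL}(p_{i-1}\,\|\,p_i)=O(\eps^2/t^2)$. Hence for $k = c/(\rho\eps)^2$ with a sufficiently small absolute constant $c>0$, we obtain $\dtv(p_{i-1}^{\otimes k},p_i^{\otimes k})\leq 1/3$, which is the form of statistical indistinguishability needed to conclude via Theorem~\ref{thm:LB_chaining} that no $\rho$-replicable $(\eps,1/3)$-tester can use $k$ samples. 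This yields the desired $\Omega(1/(\rho\eps)^2)$ lower bound.

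The only real work is the routine KL calculation for Bernoullis, so I do not expect a genuine obstacle; the conceptual content is entirely absorbed by the chaining theorem, which is exactly the point of the framework. Compared to the original argument of \cite{impagliazzo2022reproducibility}, we avoid any explicit reasoning about the Lipschitz continuity of the acceptance probability along the family $\{p_i\}$: packing $t=\Theta(1/\rho)$ coins into a $\Theta(\eps)$-wide interval and checking pairwise indistinguishability via Pinsker is all that is required.
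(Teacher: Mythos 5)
Your proposal is correct and follows essentially the same route as the paper's proof: the same chain $p_i = \mathrm{Ber}(1/2 + i\eps/t)$ with $t = \lfloor 1/(300\rho)\rfloor$, the same invocation of Theorem~\ref{thm:LB_chaining} with point-mass priors $\DD_i$, and the same Pinsker-plus-tensorization bound on $\dtv(p_{i-1}^{\otimes k}, p_i^{\otimes k})$. The only difference is presentational: you cite the standard $\mathrm{KL}(\Ber(a)\,\|\,\Ber(b)) = O((a-b)^2)$ estimate for biases bounded away from $\{0,1\}$, whereas the paper grinds out this bound from scratch.
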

\begin{proof}
We apply \Cref{lem:chaining_lemma}, to the following chain of distributions. Let $t \coloneqq \floor{1/(300\rho)}$. 
\begin{align*}
    \text{For } i \in \{0, \ldots, t\}:\quaaad p_i(\text{head}) = \frac{1}{2} + \frac{i\,\epsilon}{t}\,, \text{ and }\quaad p_i(\text{tail}) = \frac{1}{2} - \frac{i\,\epsilon}{t}
    \,.
\end{align*}
Next, we show indistinguishability with $\ns = o(1/(\rho \epsilon)^2)$ samples.
If for every $i$, $\tv{p_i^{\otimes \ns} - p_{i-1}^{\otimes \ns}} \le 0.5$, then we may apply \Cref{lem:chaining_lemma} by setting $Z_i := p_i^{\otimes \ns}$.
So, we assume the contrary, and we have: 
\begin{align*}
    0.5 &\leq \tv{p_i^{\otimes \ns} - p_{i-1}^{\otimes \ns}} \leq \sqrt{\frac{1}{2}KL(p_i^{\otimes \ns} ~||~p_{i-1}^{\otimes \ns})}\tag{By Pinsker's inequality}
    \\ & \leq \sqrt{\frac{\ns}{2}KL(p_i~||~p_{i-1})} \tag{since samples are i.i.d.}\,.
\end{align*}
To achieve a contradiction, it suffices to show that $KL(p_i~||~p_{i-1})$ is $O(\epsilon^2/t^2)$.
\begin{align*}
    KL(p_i~||~p_{i-1}) & = \left(\frac{1}{2} - \frac{i\,\epsilon}{t}\right) \cdot \log\left[\frac{\frac{1}{2} - \frac{i\,\epsilon}{t}}{\frac{1}{2} - \frac{(i-1)\,\epsilon}{t}} 
    \right]
    +
    \left(\frac{1}{2} + \frac{i\,\epsilon}{t}\right) 
    \cdot \log \left[\frac{\frac{1}{2} + \frac{i\,\epsilon}{t}}{\frac{1}{2} + \frac{(i-1)\,\epsilon}{t}}
    \right] 
\\ 
    & = \frac{1}{2} \cdot
    \log\left[
    \frac{\frac{1}{2} - \frac{i\,\epsilon}{t}}{\frac{1}{2} - \frac{(i-1)\,\epsilon}{t}}
    \cdot
    \frac{\frac{1}{2} + \frac{i\,\epsilon}{t}}{\frac{1}{2} + \frac{(i-1)\,\epsilon}{t}}
    \right] 
    +
    \frac{i\,\epsilon}{t} \cdot 
    \log\left[
    \frac{\frac{1}{2} + \frac{i\,\epsilon}{t}}{\frac{1}{2} + \frac{(i-1)\,\epsilon}{t}}
    \cdot
    \frac{\frac{1}{2} - \frac{(i-1)\,\epsilon}{t}}{\frac{1}{2} - \frac{i\,\epsilon}{t}}
    \right] 
\\ &
    = \frac{1}{2} \cdot \log \left[
    \frac{\frac{1}{4} - \left(\frac{i\,\epsilon}{t}\right)^2}{\frac{1}{4} - \left(\frac{(i-1)\,\epsilon}{t}\right)^2}
    \right]
    + \frac{i\,\epsilon}{t} \cdot 
    \log \left[\frac{
    \frac{1}{4} + \frac{\epsilon}{2\,t} - \frac{i\,(i-1)\epsilon^2}{t^2}
    }{
    \frac{1}{4} - \frac{\epsilon}{2\,t} - \frac{i\,(i-1)\epsilon^2}{t^2}
    }
    \right]
    \\& = 
    \frac{1}{2} \cdot \log \left[1 - 
    \frac{
    \left(\frac{i\,\epsilon}{t}\right)^2
    -
    \left(\frac{(i-1)\,\epsilon}{t}\right)^2 }
    {\frac{1}{4} - \left(\frac{(i-1)\,\epsilon}{t}\right)^2}
    \right]
    + \frac{i\,\epsilon}{t} \cdot 
    \log \left[ 1 + \frac{
    \frac{\epsilon}{t}
    }{
    \frac{1}{4} - \frac{\epsilon}{2\,t} - \frac{i\,(i-1)\epsilon^2}{t^2}
    }
    \right]
    \,.
\end{align*} 
To remove the $\log$ terms, we upper bound $1+x$ with $e^x$ which holds for all $x \in \mathbb{R}$. We get:
\begin{align*}
    KL(p_i~||~p_{i-1}) & \leq
    \frac{-1}{2} \cdot  
    \frac{\frac{\epsilon}{t} \cdot \frac{(2\,i-1)\,\epsilon}{t}}
    {\frac{1}{4} - \left(\frac{(i-1)\,\epsilon}{t}\right)^2}
    + \frac{i\,\epsilon}{t} \cdot 
    \frac{
    \frac{\epsilon}{t}
    }{
    \frac{1}{4} - \frac{\epsilon}{2\,t} - \frac{i\,(i-1)\epsilon^2}{t^2}
    }
    \\ & =
    \frac{-(i-0.5)\,\epsilon^2}{t^2} \cdot \frac{1}{\frac{1}{2} + \frac{(i-1)\,\epsilon}{t}} \cdot \frac{1}{\frac{1}{2} - \frac{(i-1)\,\epsilon}{t}} 
    + 
    \frac{i\,\epsilon^2}{t^2} \cdot 
    \frac{1}{\frac{1}{2} + \frac{(i-1)\,\epsilon}{t}}
    \cdot 
    \frac{1}{\frac{1}{2} - \frac{i\,\epsilon}{t}}
    \\& = 
    \frac{(i-0.5)\,\epsilon^2}{t^2} \cdot 
    \frac{1}{\frac{1}{2} + \frac{(i-1)\,\epsilon}{t}} 
    \cdot \left(\frac{1}{\frac{1}{2} - \frac{i\,\epsilon}{t}}
    - 
    \frac{1}{\frac{1}{2} - \frac{(i-1)\,\epsilon}{t}} 
    \right)
    + 
    \frac{0.5 \, \epsilon^2}{t^2} \cdot 
    \frac{1}{\frac{1}{2} + \frac{(i-1)\,\epsilon}{t}}
    \cdot 
    \frac{1}{\frac{1}{2} - \frac{i\,\epsilon}{t}}
    \\ & = \frac{(i-0.5)\epsilon^3}{t^3} \cdot 
    \left(
    \frac{1}{\frac{1}{2} + \frac{(i-1)\,\epsilon}{t}} 
    \cdot 
    \frac{1}{\frac{1}{2} - \frac{i\,\epsilon}{t}}
    \cdot 
    \frac{1}{\frac{1}{2} - \frac{(i-1)\,\epsilon}{t}} 
    \right)
    + \cdot
    \frac{0.5\,\epsilon^2}{t^2} \cdot 
    \frac{1}{\frac{1}{2} + \frac{(i-1)\,\epsilon}{t}}
    \cdot 
    \frac{1}{\frac{1}{2} - \frac{i\,\epsilon}{t}}
    \\ & = O \left(\frac{i \,\epsilon^3}{t^3} + \frac{\epsilon^2}{t^2}
    \right)
    =  O\left(\frac{\epsilon^2}{t^2}\right)
    \,.
\end{align*}
In the last line, we used the fact that $i\leq t$, and $i\,\epsilon/t \leq 0.25$.
\end{proof}

\subsubsection{Uniformity and Identity Testing}\label{sec:unif_lb}
Our second application is a lower bound for uniformity testing, which also implies a lower bound for identity testing.  In~\cite{liu2024replicableuniformity}, a lower bound for uniformity was presented that is restricted to label-invariant algorithms, which the authors refer to as ``symmetric algorithms.''  Their proof, similar to our framework, provides $t = O(1/\rho)$ classes of distributions for which consecutive pairs are hard to distinguish.  Besides the indistinguishability of consecutive pairs, their proof required an extensive argument for the ``Lipschitz continuity of acceptance probability'' of the algorithm. 

We tighten their indistinguishability lower bound by a logarithmic factor and remove the restriction to label-invariant algorithms, making the result hold for all replicable algorithms.  Specifically, our result on the canonical tester implies that if \emph{any} $\rho$-replicable algorithm exists for a symmetric property (such as uniformity), a label-invariant tester must also exist.  This fact allows us to remove the assumption on the algorithm type, thereby simplifying the proof extensively.

\uniflb*

\begin{proof} 
The second term in the maximum is necessary for testing an unbiased coin, as shown in \Cref{thm:coin_lb_via_chaining} and~\cite{impagliazzo2022reproducibility}. For the rest of this proof, we focus on the first term in the lower bound which is relevant only when $n \gg 1/\rho^2$.

We aim to apply our lower bound machinery introduced in \Cref{thm:LB_chaining}. We introduce $\DD_0, \ldots, \DD_t$. Let $t = \floor{1/(300\rho)}$. Similar to ~\cite{liu2024replicableuniformity}, we define $p_i$ for each $i \in \{0,1,\ldots, t\}$ as follows: for every $j \in [n]$, the probability of $j$ is given by
\begin{align*}
    p_i(j) \coloneqq \left\{\begin{array}{ll}
         \frac{1+(i\,\epsilon/t)}{n} & \quaaad \text{if $j$ is even\,,} \vspace{2mm}\\ 
         \frac{1-(i\,\epsilon/t)}{n} & \quaaad \text{otherwise,} 
    \end{array}
    \right.
\end{align*}
Each $\DD_i$ is a uniform distribution over all possible permutations of $p_i$.
Clearly, $p_0$ is the uniform distribution and must be accepted, while $p_t$ is $\epsilon$-far from uniform.

Ignoring logarithmic factors, Lemma~4.3 in~\cite{liu2024replicableuniformity} establishes the indistinguishability of consecutive pairs, showing that distinguishing them requires at least $\Tilde{\Omega}(\sqrt{n} \epsilon^{-2} \rho^{-1})$ samples. Plugging this result directly into our lower bound machinery from \Cref{thm:LB_chaining} immediately implies the desired lower bound.

We now tighten their indistinguishability result by applying the ``wishful thinking'' lemma from~\cite{Valiant11}. First, let us define the necessary tools. 
\begin{definition}
The $m$-based moments $M(a)$ of a distribution $p$ are 
\begin{equation*}
    M(m)=m^{a} \sum_{i=1}^n p(i)^a.
\end{equation*}
\end{definition}

\begin{theorem}[\cite{Valiant11}]\label{thm:lb-moments-single}
Suppose we are given two integers $m$ and $n$, and 
two distributions $p^+$ and $p^-$ such that their probabilities are at most $1/(500\,m)$. Also, assume the $m$-based moments of $p^+$ and $p^-$, denoted by $M^+$ and $M^-$, satisfy
\begin{equation}\label{eq:moment_bnd}
    \sum_{a \geq 2} \frac{
        \abs{M^+(a) - M^-(a)}
    }
    {
         \floor{a/2}!\cdot \sqrt{1 + \max\bc{M^+(a), M^-(a)}}
    }
    < \frac{1}{24}\,.
\end{equation}
If a tester exists for a symmetric property $\PP$ that outputs \accept for $p^+$ with at least $2/3$ probability and \reject for $p^-$ with at least $2/3$ probability, then it must use more than $k$ samples. 
\end{theorem}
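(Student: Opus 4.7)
The plan is to prove the theorem via Poissonization followed by a moment-based bound on the total variation distance between the \emph{fingerprints} (vectors of sample-multiplicity counts) produced by $p^+$ and $p^-$. First, I would replace the deterministic sample size $m$ with a $\mathrm{Poisson}(m)$ sample size. Under this Poissonized model, the count $N_i$ of each domain element becomes an independent $\mathrm{Poisson}(m \cdot p(i))$ random variable, which is essential for decoupling the analysis across the domain. The sparsity hypothesis $p(i) \leq 1/(500 m)$ ensures each Poisson mean $m p(i)$ is tiny, which both makes Poissonization essentially free (transferring the indistinguishability back to the fixed-sample model with only a constant loss) and makes the Taylor expansion below rapidly convergent.

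Next, since $\PP$ is a symmetric property, I would reduce the tester's view to the fingerprint $F = (F_1, F_2, \ldots)$, where $F_j$ counts how many domain elements appear exactly $j$ times. By averaging any tester over relabelings of the domain, exactly as in the label-invariance argument of \cref{sec:canonical-tester}, one may assume without loss of generality that the tester depends only on $F$. Hence it suffices to prove $\dtv(F \mid p^+,\, F \mid p^-) < 1/2 - o(1)$, which would preclude any algorithm from producing the required $\accept/\reject$ separation on a constant probability.

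The heart of the argument is then to bound this fingerprint TV distance by the sum in \eqref{eq:moment_bnd}. My plan is to use a generating-function expansion: in the Poissonized regime the joint distribution of $F$ factors as a product over domain elements, so one can compute its multivariate probability generating function in closed form and Taylor-expand the logarithm (or equivalently, analyze the $\chi^2$ divergence) in powers of the small quantities $m\cdot p(i)$. The $a$-th order term contributes $\sum_i (m p(i))^a = M(a)$, and after symmetrizing over elements the quantity controlling distinguishability is the \emph{difference} $|M^+(a) - M^-(a)|$. The $\lfloor a/2 \rfloor!$ denominator is essentially the $a$-th central-moment scale of a $\mathrm{Poisson}(\lambda)$ (whose $a$-th central moment grows like $\lambda^{\lfloor a/2 \rfloor}$ times a combinatorial prefactor), and the $\sqrt{1+\max\{M^+(a),M^-(a)\}}$ normalization arises from a Cauchy--Schwarz / $\chi^2$-to-$\mathrm{TV}$ passage, measuring the moment discrepancy relative to the larger moment profile.

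The main obstacle is the careful bookkeeping in this moment expansion: one must track how high-order terms partially cancel between $p^+$ and $p^-$ so that only the moment \emph{differences} survive, and verify that the resulting combinatorial coefficients really do fit into the $\lfloor a/2 \rfloor!$ denominator uniformly in $a$. Following \cite{Valiant11}, I would organize this expansion via symmetric-function (power-sum) identities on the vector $(m p(i))_{i \in [n]}$, using the $1/(500m)$ sparsity bound to make the tail of the expansion geometric and hence negligible compared with the explicit sum in \eqref{eq:moment_bnd}. Combining the three ingredients yields $\dtv(F \mid p^+, F \mid p^-) < 1/2$, contradicting the existence of an $m$-sample tester and completing the lower bound.
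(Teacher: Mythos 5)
This theorem is imported verbatim from \cite{Valiant11}; the paper states it as a black box and never proves it, so there is no in-paper proof to compare against. (Incidentally, the quoted statement says ``more than $k$ samples'' while everything else is in terms of $m$ --- that $k$ is a typo for $m$.)

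On the merits of your reconstruction: the high-level architecture you describe --- Poissonize so that the per-element counts $N_i \sim \mathrm{Poisson}(m p(i))$ become independent, reduce to the fingerprint $F = (F_1, F_2, \ldots)$ by averaging over relabelings (exactly the label-invariance move from \cref{sec:canonical-tester}), and then bound $\dtv(F \mid p^+, F \mid p^-)$ by a moment comparison --- does match the skeleton of Valiant's argument. But the step you flag as ``the main obstacle'' and defer to ``careful bookkeeping'' is not bookkeeping: it is the entire content of the theorem. Passing from closeness of the power-sum moments $M(a) = m^a \sum_i p(i)^a$ to closeness in total variation of the fingerprint distributions is not accomplished by Taylor-expanding the log-PGF or by a direct $\chi^2$ computation, and neither route straightforwardly produces the $\lfloor a/2\rfloor!\,\sqrt{1+\max\{M^+(a),M^-(a)\}}$ normalization. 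What Valiant actually does is substantially heavier: he proves a multivariate central limit theorem for ``generalized multinomial distributions'' (of which the Poissonized fingerprint is an instance), approximates both fingerprint laws by Gaussians with matching low-order statistics, and bounds a Wasserstein-type distance between those Gaussians in terms of the moment discrepancies; the factorial and square-root factors fall out of that Gaussian comparison, not out of a cumulant expansion. Your sketch also needs a justification for why the Poissonization loss is only a constant and why the tester may be assumed to see only $F$ with no loss beyond what the $1/24$ threshold absorbs --- both are true but need to be made quantitative. As a high-level plan your outline is faithful to the known proof strategy, but as written it leaves the central lemma unproved and suggests elementary tools (PGF Taylor expansion, $\chi^2$) that do not suffice, so it does not constitute a proof.
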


Let's compute the $m$-based moment of each $p_i$:
\begin{align*}
    M_i(a) \coloneqq m^a \cdot   \left(\frac{n}{2} \cdot \frac{\left(1 + \frac{i\epsilon}{t}\right)^{a}}{n^a}  + \frac{n}{2} \cdot \frac{\left(1 - \frac{i\epsilon}{t}\right)^{a}}{n^a} \right)
\end{align*}

Consider the scenario where
$$\frac{1}{\epsilon^6 \rho^2} \leq n 
\quad
\Longleftrightarrow
\quad 
\frac{1}{\epsilon^3 \rho} \leq \sqrt{n}
\quad
\Longleftrightarrow
\quad 
\frac{\sqrt{n}}{\epsilon^2 \rho } \leq n \epsilon
\,.$$ Let $m$ be the following value: 
$$m = c\cdot\frac{\sqrt{n}}{\epsilon^2\,\rho}\,,$$ for a sufficiently small constant $c$.
Clearly, $m \ll n$ and $\frac{m}{n} \leq \epsilon$.

Now, assume there exists an algorithm that can distinguish $\DD_{i}$ and $\DD_{i-1}$ for some $i \in [t]$ using $m$ samples. If such an algorithm exists, \Cref{thm:lb-moments-single} implies that the sum described in \Cref{eq:moment_bnd} must be lower bounded by a constant:
\begin{align*}
\frac{1}{24} & \leq \sum_{a=2}^\infty \frac{M_i(a) - M_{i-1}(a)}{\floor{a/2}!\cdot \sqrt{1 + \max\left(M_i(a),\ M_{i-1}(a)\right)}}
\\ & \leq \sum_{a=2}^\infty \frac{\frac{n}{2} \cdot \left(\frac{m}{n}\right)^a \cdot \left(\left(1 + \frac{i\epsilon}{t}\right)^{a} + \left(1 - \frac{i\epsilon}{t}\right)^{a} - \left(1 + \frac{(i-1)\epsilon}{t}\right)^{a} - \left(1 - \frac{(i-1)\epsilon}{t}\right)^{a}\right)}{\sqrt{\frac{n}{2} \cdot \left(\frac{m}{n}\right)^a }}    
\end{align*}
In the inequality above, we used that $\max\left(M_i(a),\ M_{i-1}(a)\right)$ is at least $(n/2) \cdot (m/n)^a$. For $a = 2$, the above term is: 
\begin{align*}
    \sqrt{\frac{n}{2}} &\cdot \left(\sqrt{\frac{m}{n}}\right)^2 \left(2 + 2\,\left(\frac{i\epsilon}{t}\right)^2 - 2 - 2\,\left(\frac{(i-1)\,\epsilon}{t}\right)^2\right) = \Theta\left(\frac{m}{\sqrt{n}} \cdot \left(\left(\frac{i\epsilon}{t}\right)^2 - \left(\frac{(i-1)\,\epsilon}{t}\right)^2 \right)\right) \\ & = \Theta \left(\frac{m\,i\,\epsilon^2}{ \sqrt{n}\,t^2}\right) = \Theta \left(\frac{m\,\epsilon^2}{ \sqrt{n}\,t}\right) = \Theta \left(\frac{m\,\epsilon^2 \rho}{ \sqrt{n}}\right)
\end{align*}
For the rest of the terms, we have: 
\begin{align*}
\frac{1}{24} & \leq \Theta \left(\frac{m\,\epsilon^2 \rho}{ \sqrt{n}}\right) + \sqrt{\frac{n}{2}} \ \sum_{a=3}^\infty \cdot \left(\sqrt{\frac{m}{n}}\right)^{a} \cdot \left(
\left(1 + \frac{i\epsilon}{t}\right)^{a} 
+ \left(1 - \frac{i\epsilon}{t}\right)^{a} 
- \left(1 + \frac{(i-1)\epsilon}{t}\right)^{a}
- \left(1 - \frac{(i-1)\epsilon}{t}\right)^{a}
\right)
\end{align*}
The terms in the summation form four geometric series for which we have $\sum_{a\geq3} x^a = x^3/(1-x)$ for any $|x| < 1$. Summing the the geometric series yields:
\begin{equation}\label{eq:large_terms}
= \sqrt{\frac{n}{2}} \cdot \left(\frac{m}{n}\right)^{3/2}\cdot \left( \frac{\left(1 + \frac{i\epsilon}{t}\right)^3}{1 - \sqrt{\frac{m}{n}}\left(1 + \frac{i\epsilon}{t}\right)} 
+ \frac{\left(1 - \frac{i\epsilon}{t}\right)^3}{1 - \sqrt{\frac{m}{n}}\left(1 - \frac{i\epsilon}{t}\right)} 
- \frac{\left(1 + \frac{(i-1)\,\epsilon}{t}\right)^3}{1 - \sqrt{\frac{m}{n}}\left(1 + \frac{(i-1)\,\epsilon}{t}\right)} 
- \frac{\left(1 - \frac{(i-1)\,\epsilon}{t}\right)^3}{1 - \sqrt{\frac{m}{n}}\left(1 - \frac{(i-1)\,\epsilon}{t}\right)}\right)
\end{equation}
Now, consider a function $f$ parameterized by $c \in (0,1/2]$:
$$f_c(x) = \frac{(1-x)^3}{1- c(1+x)} - \frac{(1-x)^3}{1- c(1-x)}\,$$
Clearly, $f_c(0)$ is zero. The derivative is:
$$\frac{df}{dx}(x) = \frac{2c(1 - x)^2\left((1-c)^2 (1-4x) + c^2 x^2 + 2 c^2 x^3\right)}{\left(c(x - 1) + 1\right)^2\,(cx + c - 1)^2},
$$
which is positive for any $x \in [0,0.25]$. Hence, $f_c$ is increasing in this interval. That is, for any $x<x'$ in $[0,0.25]$ $f_c(x) \leq f_c(x')$. Now, we set $x = (i-1)\,\epsilon/t$ and $x' = i\epsilon/t$. Hence, we have
\begin{align*}
    0 & \leq f_{\sqrt{m/n}}\left(\frac{i\epsilon}{t}\right) - f_{\sqrt{m/n}}\left(\frac{(i-1)\,\epsilon}{t}\right) 
    \\ & = 
    \frac{\left(1 - \frac{i\epsilon}{t}\right)^3}{1 - \sqrt{\frac{m}{n}}\left(1 + \frac{i\epsilon}{t}\right)}  
    - 
    \frac{\left(1 - \frac{i\epsilon}{t}\right)^3}{1 - \sqrt{\frac{m}{n}}\left(1 - \frac{i\epsilon}{t}\right)} 
    -
    \frac{\left(1 - \frac{(i-1)\epsilon}{t}\right)^3}{1 - \sqrt{\frac{m}{n}}\left(1 - \frac{(i-1)\epsilon}{t}\right)} 
    + 
    \frac{\left(1 - \frac{(i-1)\epsilon}{t}\right)^3}{1 - \sqrt{\frac{m}{n}}\left(1 - \frac{(i-1)\epsilon}{t}\right)} 
\end{align*}
Using this inequality, we can bound the expression in \Cref{eq:large_terms}:
\begin{align*}
& \leq \frac{m^{3/2}}{\sqrt{2}\, n} \cdot \left( \frac{\left(1 + \frac{i\epsilon}{t}\right)^3 + \left(1 - \frac{i\epsilon}{t}\right)^3}{1 - \sqrt{\frac{m}{n}}\left(1 + \frac{i\epsilon}{t}\right)} 
- \frac{\left(1 + \frac{(i-1)\,\epsilon}{t}\right)^3 + \left(1 - \frac{(i-1)\,\epsilon}{t}\right)^3}{1 - \sqrt{\frac{m}{n}}\left(1 + \frac{(i-1)\,\epsilon}{t}\right)} 
\right)
\\&= \frac{m^{3/2}}{\sqrt{2}\, n} \cdot \left( \frac{2 + 6 \left(\frac{i\epsilon}{t}\right)^2}{1 - \sqrt{\frac{m}{n}}\left(1 + \frac{i\epsilon}{t}\right)} 
- \frac{2 + 6\,\left(\frac{(i-1)\,\epsilon}{t}\right)^2}{1 - \sqrt{\frac{m}{n}}\left(1 + \frac{(i-1)\,\epsilon}{t}\right)} 
\right)
\end{align*}

After algebraic simplification, the expression is:
\begin{align*}
& = \frac{m^{3/2}}{\sqrt{2}\, n}\left[ 
\frac{6\,\left(1-\sqrt{\frac{m}{n}}\right)\left(\left(\frac{i\epsilon}{t}\right)^{2}-\left(\frac{(i-1)\,\epsilon}{t}\right)^{2}\right)
- \sqrt{\frac{m}{n}} \cdot 
\left(
\frac{(i-1)\epsilon}{t} \cdot \left(2 + 6\,\left(\frac{i\epsilon}{t}\right)^{2}\right) -\frac{i\epsilon}{t} \cdot\left(2+6\,\left(\frac{(i-1)\,\epsilon}{t}\right)^{2}\right)
\right)}
{\left(1-\sqrt{\frac{m}{n}}\left(1+\frac{i\epsilon}{t}\right)\right)\left(1-\sqrt{\frac{m}{n}}\left(1+\frac{(i-1)\epsilon}{t}\right)\right)} 
\right]
\\& = \frac{m^{3/2}}{\sqrt{2}\, n}\left[ 
\frac{6\,\left(1-\sqrt{\frac{m}{n}}\right)\left(\left(\frac{i\epsilon}{t}\right)^{2}-\left(\frac{(i-1)\,\epsilon}{t}\right)^{2}\right)
+ \sqrt{\frac{m}{n}} \cdot 
\left(
2-6\,\left(\frac{(i-1)\epsilon}{t}\right) \left(\frac{i\epsilon}{t}\right)\right)
\cdot \left(
\left(\frac{i\epsilon}{t}\right) -\left(\frac{(i-1)\epsilon}{t}\right)
\right)
}
{\left(1-\sqrt{\frac{m}{n}}\left(1+\frac{i\epsilon}{t}\right)\right)\left(1-\sqrt{\frac{m}{n}}\left(1+\frac{(i-1)\epsilon}{t}\right)\right)} 
\right]
\\ & = \Theta\left(\frac{m^{3/2} \epsilon^2}{n\,t} + \frac{m^{2}\,\epsilon}{n^{3/2}\,t}\right) = \Theta\left(\frac{m^{3/2} \epsilon^2}{n\,t} + \frac{m^{2}\,\epsilon}{n^{3/2}\,t}\right)
\end{align*}
Now, adding the term for $a = 2$ leaves us with:
\begin{align*}
    \frac{1}{24} \leq \Theta\left(\sqrt{n} \cdot \left(\left(\sqrt{\frac{m}{n}}\right)^2\,\epsilon^2 \,\rho 
    + 
    \left(\sqrt{\frac{m}{n}}\right)^3\,\epsilon^2 \,\rho 
    + 
    \left(\sqrt{\frac{m}{n}}\right)^4\,\epsilon \,\rho\right)\right)
\end{align*}
Given our assumption, we show that  $m/n \leq \epsilon < 1$. Clearly the second term is dominated by the first term. The third term is also dominated by the first one. Putting this together implies: 

$$\frac{1}{24} \leq \Theta\left(\sqrt{n} \cdot \left(\sqrt{\frac{m}{n}}\right)^2\,\epsilon^2 \,\rho \right) \quad \Longleftrightarrow\quad m = \Omega\left(\frac{\sqrt{n}}{\epsilon^2 \rho}\right)\,.$$
Hence, the proof is complete. 
\end{proof}

\subsubsection{Closeness Testing}

To prove our lower bound, we use the machinery of wishful thinking lemma introduce in~\cite{Valiant11}, and its application in to proving lower bounds for closeness testing in~\cite{chan2014optimal}. We start by defining the $(m,m)$-based moments: 

\begin{definition}
The $(m, m)$ moments $M(r,s)$ of a distribution pair $(p,q)$ are 
\begin{equation*}
    M(r,s)=m^{r+s} \sum_{i=1}^n p_i^r q_i^s.
\end{equation*}
\end{definition}

\begin{theorem}[\cite{Valiant11}]\label{thm:lb-moments}
If distributions $p_1^+, p_2^+, p_1^-, p_2^-$ have probabilities at most $1/1000m$ and their $(m, m)$-based moments $M^+$ and $M^-$ satisfy
\begin{equation*}
    \sum_{r+s \geq 2} \frac{
        \abs{M^+(r,s) - M^-(r,s)}
    }
    {
        \floor{r/2}!\floor{s/2}! \sqrt{1 + \max\bc{M^+(r,s), M^-(r,s)}}
    }
    < \frac{1}{360},
\end{equation*}
then the distribution pair $(p_1^+, p_2^+)$ cannot be distinguished with probability $13/24$ from $(p_1^-, p_2^-)$ by a tester for a symmetric property that takes $\Poi(m)$ samples from each distribution.
\end{theorem}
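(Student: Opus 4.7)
The plan is to follow the original two-step strategy: reduce to comparing fingerprint distributions via Poissonization, and then bound the total variation distance between those fingerprints by the moment-difference sum in the hypothesis. In the Poissonized model the per-element counts $X_i \sim \Poi(mp_{1,i})$ and $Y_i \sim \Poi(mp_{2,i})$ are mutually independent across $i \in [n]$, and for any tester of a symmetric property the fingerprint $F = (F_{r,s})_{(r,s)\neq(0,0)}$ --- where $F_{r,s}$ counts the number of domain elements that appear $r$ times in the first sample and $s$ times in the second --- is a sufficient statistic. Hence the distinguishing advantage is at most $\dtv(F^+, F^-)$, and it suffices to show $\dtv(F^+, F^-) \le 1/24$, so that the success probability is capped at $1/2 + 1/24 = 13/24$.

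The core step is to express $\dtv(F^+, F^-)$ in terms of moment differences. First, I would compute $\mu^\pm_{r,s} := \Ema{F_{r,s}}^\pm = \sum_i P(r; mp_{1,i})\, P(s; mp_{2,i})$, with $P(k;\lambda) = e^{-\lambda}\lambda^k/k!$. Under the hypothesis $p_{j,i} \le 1/(1000m)$ the Poisson rates are small, so Taylor-expand $P(r;\lambda_1) P(s;\lambda_2)$ around $(0,0)$ as an absolutely convergent series $\sum_{r',s'} c_{r,s,r',s'}\,\lambda_1^{r'}\lambda_2^{s'}$ with explicit coefficients of order $1/(r'!\,s'!)$. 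Summing over $i$ converts the monomials $(mp_{1,i})^{r'}(mp_{2,i})^{s'}$ into the $(m,m)$-based moments $M(r',s')$, so each $\mu^+_{r,s} - \mu^-_{r,s}$ becomes an explicit linear combination of the differences $M^+(r',s') - M^-(r',s')$. Second, since each $F_{r,s}$ is a sum of independent Bernoullis it is well approximated by $\Poi(\mu^\pm_{r,s})$, and the Hellinger comparison of two Poissons gives $d_H^2(\Poi(\mu^+),\Poi(\mu^-)) = O\!\bigl(|\mu^+ - \mu^-|^2/(1 + \max(\mu^+,\mu^-))\bigr)$; tensorizing across coordinates yields a bound on $\dtv(F^+, F^-)$ by $\sum_{(r,s)} |\mu^+_{r,s} - \mu^-_{r,s}|/\sqrt{1 + \max(\mu^+_{r,s}, \mu^-_{r,s})}$. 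Substituting the Taylor expansion and using the crude bound $r'!\,s'! \ge \lfloor r'/2 \rfloor!\,\lfloor s'/2 \rfloor!$ to absorb the coefficient factorials into the denominators of the target sum reproduces exactly the weighting $1/(\lfloor r'/2 \rfloor!\,\lfloor s'/2 \rfloor!\,\sqrt{1 + \max})$. The constant $1/360$ is then calibrated so that the resulting TV bound is at most $1/24$.

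The main obstacle is the careful bookkeeping that couples the two-variable Taylor expansion of $P(r;\lambda_1)P(s;\lambda_2)$ with the coordinate-wise Hellinger estimate, so that each moment difference $M^+(r',s') - M^-(r',s')$ is hit by exactly the advertised weight and no more, with a uniform constant. The small-probability assumption $p_{j,i} \le 1/(1000m)$ plays a dual role here: it makes the Poisson Taylor expansion absolutely convergent and lets its tails be absorbed into the leading terms, and it also controls the Poisson approximation error for each fingerprint coordinate, ensuring that the reduction from $F_{r,s}$ to $\Poi(\mu_{r,s})$ is lossless up to a constant factor that ultimately combines with the $1/1000$ to yield the final $1/360$ and $13/24$.
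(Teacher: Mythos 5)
You should first note that the paper does not prove this statement at all: it is imported verbatim as the ``wishful thinking'' theorem of \cite{Valiant11} and used as a black box, so there is no in-paper proof to compare against. Measured against Valiant's original argument, your outline follows the right opening moves (the fingerprint $(F_{r,s})$ is a sufficient statistic for symmetric-property testers under Poissonized sampling, $\Ema{F_{r,s}} = \sum_i P(r;mp_{1,i})P(s;mp_{2,i})$, and expanding the Poisson pmfs turns differences of these means into linear combinations of the $(m,m)$-moment differences), but it has a genuine gap at its central step.

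The step that fails as written is ``tensorizing across coordinates.'' The fingerprint entries $F_{r,s}$ for different $(r,s)$ are \emph{not} independent: each domain element lands in exactly one cell, so the cells are jointly constrained, and a Hellinger/TV tensorization bound applies only to product measures. Marginal Poisson approximation of each $F_{r,s}$ (each being a sum of independent Bernoullis) does not give you control of the joint law of the vector $F$. What is needed --- and what is the heart of Valiant's proof --- is a \emph{multivariate} Poisson approximation (a Roos/Le Cam-type law of small numbers for the generalized multinomial distribution of the whole fingerprint), whose error is controlled precisely by the hypothesis $p_{j,i} \le 1/(1000m)$; only after both fingerprint laws are replaced by vectors of independent Poissons may you sum coordinate-wise bounds of the form $|\mu^+_{r,s}-\mu^-_{r,s}|/\sqrt{1+\max\{\mu^+_{r,s},\mu^-_{r,s}\}}$. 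Your sketch assumes this ingredient rather than supplying it. A second, smaller issue is the bookkeeping you assert ``reproduces exactly'' the advertised weight: the factor $\floor{r/2}!\floor{s/2}!$ does not come from crudely bounding $r'!\,s'!$ from below; it arises from the interplay between the $1/(r!\,s!)$ in the Poisson pmf and the standard deviation $\approx\sqrt{\mu_{r,s}}$ of the cell (effectively contributing $\sqrt{r!\,s!}$), and one must additionally justify replacing $\sqrt{1+\max\{\mu^+_{r,s},\mu^-_{r,s}\}}$ by $\sqrt{1+\max\{M^+(r,s),M^-(r,s)\}}$, i.e.\ that the $(r,s)$-diagonal term of the expansion dominates. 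Without these two pieces the specific constants $1/360$ and $13/24$ (note also that TV distance $\delta$ caps the success probability at $1/2+\delta/2$, not $1/2+\delta$) cannot be recovered from your outline.
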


\closelb*

\begin{proof}
The second and third terms in the lower bound are implied by uniformity testing \Cref{thm:LB_unif}, and coin testing~\cite{impagliazzo2022reproducibility}, respectively.

It remains to consider the first term. Note that this term is only relevant when
\begin{equation}\label{eq:lb-n}
    \frac{n^{2/3}}{\eps^{4/3} \rho^{2/3}} \gg \frac{\sqrt{n}}{\eps^2 \rho}
    \iff n^{1/6} \gg \frac{1}{\eps^{2/3} \rho^{1/3}}
    \iff n \gg \frac{1}{\eps^4 \rho^{2}}.
\end{equation}

We will construct a chain of $t = \floor{1/(300\rho)}$ distributions that are pairwise hard to distinguish and invoke~\Cref{thm:LB_chaining}. The construction of our lower bound is similar to \cite{chan2014optimal}.
Let $b = \eps^{4/3} \rho^{2/3} /n^{2/3}$ and let $a = 4/n$.
Note that $b \geq a$ by \cref{eq:lb-n}:
\begin{equation*}
    b = \frac{\eps^{4/3} \rho^{2/3}}{n^{2/3}}
    \geq \frac{4 \eps^{4/3} \rho^{2/3}}{n^{2/3} \p{n \eps^{4} \rho^{2}}^{1/3}}
    = \frac{4}{n} = a.
\end{equation*}
Observe that both $1/a$ and $1/b$ are much larger than $t =\Theta(1/\rho)$. Hence, without loss of generality assume $(1-\eps)/b$ and $1/a$ are both integer and divisible by $t$. This will not affect our lower bound by more than a constant factor.

Let $A$, $B$, and $D$ be three disjoint sets of size $(1-\eps)/b$, $1/a$, and $1/a$ respectively that are subsets of the domain $[n]$. Partition $B$ and $D$ into $t$ sets of equal sizes, namely $B = \{B_i\}_{i=1}^t$, and $D= \{D_i\}_{i=1}^t$.
Now, we define a series of sets $C_0, C_1, \ldots, C_t$. We define $C_i$ to be a set of $1/a$, disjoint from $A$ but overlapping with $B$ and $D$:
$$C_i \coloneqq  \left(\bigcup_{j={i+1}}^t B_j \right) \cup \left(\bigcup_{j=1}^i D_j \right) \,.$$ 
Define pairs of difficult distributions:
\begin{equation*}
    p = b \mathbf{1}_A + \eps a \mathbf{1}_B
    \qquad
    q_i =  b \mathbf{1}_{A} + \eps a \mathbf{1}_{C_i}
\end{equation*}

Note that the pair, $q_0 = p$, and $q_t$ is the standard hard example from \cite{chan2014optimal}.
Note that
\begin{equation*}
    \|p - q_i\|_{TV}
    = \frac{1}{2} \sum_{j \in B \triangle C} \eps a
    = \frac{i\,\abs{B}}{t} \cdot \eps\,a = \frac{i}{t} \cdot \eps
\end{equation*}

Let $m = \frac{c_1 n^{2/3}}{\eps^{4/3} \rho^{2/3}}$ for some constant $c_1$. We wish to show that this is an insufficient number of samples for distinguishing two consecutive pairs. Note that the maximum probability mass of $p$ or any $q_i$ is $b < 1/1000m$ as long as $c_1$ is small enough (this is a condition of \cref{thm:lb-moments}). 

Next, we aim to show that for any $i \in [t]$ many samples are required to distinguish between the pair $(p_1^+, p_2^+) = (p, q_{i})$ and the pair $(p_1^-, p_2^-) = (p, q_{i+1})$.

Consider the $(m,m)$ moments of both pairs of distributions using $u = r+s$, and set $\alpha = i/t$ and $\Delta = 1/t$:
\begin{align*}
    M^+(r,s)
    &= m^u \p{\frac{1-\eps}{b}} b^u
    + m^u \p{\frac{1-\alpha}{a}} \eps^u a^u \\
    &= m^u \p{1-\eps} b^{u-1}
    + m^u \p{1-\alpha} \eps^u a^{u-1} \\
    &= m^u\p{(1-\eps)\frac{\eps^{4u/3 - 4/3} \rho^{2u/3 - 2/3}}{n^{2u/3 - 2/3}} + (1-\alpha)\frac{4^{u-1}\eps^u}{n^{u-1}}}
\end{align*}
and
\begin{equation*}
    M^-(r,s)
    = m^u\p{(1-\eps)\frac{\eps^{4u/3 - 4/3}\rho^{2u/3 - 2/3}}{n^{2u/3 - 2/3}} + (1-\alpha - \Delta)\frac{4^{u-1}\eps^u}{n^{u-1}}}.
\end{equation*}

Recall that $\Delta = \Theta(\rho)$. We will focus on the following key term in \cref{thm:lb-moments}: 
\begin{align*}
    \frac{\abs{M^+(r,s) - M^-(r,s)}}{\sqrt{1 + \max\bc{M^+(r,s), M^-(r,s)}}}
    &= \frac{m^u \Delta 4^{u-1} \eps^u/n^{u-1}}{\sqrt{1 + m^u(1-\eps) \eps^{4u/3 - 4/3} \rho^{2u/3 - 2/3}/n^{2u/3 - 2/3} + m^u(1-\alpha) 4^{u-1} \eps^u / n^{u-1} }} \\
    &\leq \frac{m^u \Delta 4^{u-1} \eps^u/n^{u-1}}{\sqrt{m^u(1-\eps) \eps^{4u/3 - 4/3} \rho^{2u/3 - 2/3}/n^{2u/3 - 2/3}}} \\
    &\leq O\p{\frac{m^{u/2} \rho 4^{u-1} \eps^{u/3 + 2/3}}{n^{2u/3-2/3} \rho^{u/3 - 1/3}}} \\
    &\leq O\p{\frac{m^{u/2} 4^{u-1} \eps^{u/3 + 2/3}}{n^{2u/3-2/3} \rho^{u/3 - 4/3}}}.
\end{align*}

Note the following inequality which is a consequence of \cref{eq:lb-n} when $u \geq 2$:
\begin{equation*}
    n \gg \frac{1}{\eps^4 \rho^{2}}
    \implies \frac{1}{n} \leq \frac{\eps \rho^{2}}{4}
    \implies \frac{1}{n^{u/3 - 2/3}} \leq \frac{\eps^{u/3-2/3} \rho^{2u/3 - 4/3}}{4^{u/3-2/3}}.
\end{equation*}

Using this inequality,
\begin{align*}
    \frac{\abs{M^+(r,s) - M^-(r,s)}}{\sqrt{1 + \max\bc{M^+(r,s), M^-(r,s)}}}
    &\leq O\p{\frac{m^{u/2} 4^{u-1} \eps^{u/3 + 2/3}}{n^{u/3} \rho^{u/3 - 4/3}} \p{\frac{\eps^{u/3-2/3} \rho^{2u/3-4/3}}{4^{u/3-2/3}}}} \\
    &= O\p{ \frac{m^{u/2} 4^{2u/3-1/3} \eps^{2u/3} \rho^{u/3}}{n^{u/3}}} \\
    &= O\p{c_1^{u/2} 4^{2u/3-1/3}} =  O\p{ \p{4^{2/3} \cdot \sqrt{c_1}}^u}
\end{align*}
For a sufficiently small $c_1$, the above function is exponentially decreasing in $u$. Then, one can argue the sum of all moments converges to a value less than $1/360$ as desired due to the convergence of the geometric series. 
The result follows directly from the argument in \cite{chan2014optimal} (see Proposition 9).
\end{proof}

\section{Expectation-Gap Replicable Testers}\label{sec:expectation_gap_tester}

We generalize and quantitatively improve bounds for replicable expectation-gap estimators.

\subsection{General Expectation-Gap Estimator}

\begin{definition}[Expectation-Gap Statistic]\label{def:expectation-gap}
Consider a hypothesis testing problem with a null hypothesis $H_0$ (e.g., the distribution belongs to a property) and alternate hypothesis $H_1$ (the distribution is far from the property).
Let $\ns$ be the number of samples taken from the distribution.
Then, an expectation-gap statistic is defined by  a test statistic $Z(\ns)$ of the samples with the following properties given by real-valued functions $\tau_0(\ns), \tau_1(\ns), \sigma(\ns)$ defined as follows:
\begin{itemize}
    \item Null threshold (upper bound): $\Ema{Z(\ns) | H_0} \leq \tau_0(\ns)$.
    \item Alternative threshold (lower bound): $\Ema{Z(\ns) | H_1} \geq \tau_1(\ns)$. We require that for some $\nsmin \in \N$, for all $\ns \geq \nsmin$, $\tau_1(\ns) \geq \tau_0(\ns)$.
    \item Variance upper bound: $\sqrt{\Varma{Z(\ns)}} \leq \sigma(\ns)\p{1 + \max\bc{0, \frac{\Ema[]{Z(\ns)} - \tau_1(\ns)}{\Delta(\ns)}, \frac{\tau_0(\ns) - \Ema[]{Z(\ns)}}{\Delta(\ns)}}}$.
    If $\Ema[]{Z(\ns)} \in [\tau_0(\ns), \tau_1(\ns)]$, the condition is simply $\sqrt{\Varma{Z(\ns)}} \leq \sigma(\ns)$.\footnote{Importantly, we do not simply use a uniform bound on the variance. Numerous applications of expectation-gap style analyses allow for the variance to increase when the expectation of the statistic is far from the interval between the null and alternate thresholds.}
\end{itemize}
The following quantities summarize the important properties of the test statistic $Z$:
\begin{itemize}
    \item Threshold gap: $\Delta(\ns) = \tau_1(\ns) - \tau_0(\ns)$. Note that this is a lower bound on the true gap $\Ema{Z(\ns)|H_1} - \Ema{Z(\ns)|H_0}$.
    \item Noise-to-signal ratio: $f(\ns) = \frac{\sigma(\ns)}{\Delta(\ns)}$.
    \item Sampling breakpoints $\ns_t$ defined for any $t \in (0,1]$:
    \begin{equation}\label{eq:sampling-breakpoint}
        \ns_t = \min \{\ns \in \N: \ns \geq \nsmin \text{ and } f(\ns) \leq t/2\}.
    \end{equation}
\end{itemize}
\end{definition}

A core primitive will be bounding deviations of the expectation-gap statistic via Chebyshev's inequality.
\begin{lemma}\label{lem:chebyshev-expectation-gap}
Let $Z(\ns), \tau_0(\ns), \tau_1(\ns), \sigma(\ns)$ be the parameters of an expectation-gap statistic.
Consider any $\ns \geq \nsmin$ and $\alpha \in [0,1]$.
If $\Ema[]{Z(\ns)} \in [\tau_0(\ns),\tau_1(\ns)]$, then
\begin{equation*}
    \Prma{\abs{Z(\ns) - \Ema[]{Z(\ns)}} \geq \alpha \Delta(\ns)} \leq \frac{f(\ns)^2}{\alpha^2}.
\end{equation*}
If $\Ema[]{Z(\ns)} > \tau_1(\ns)$, then
\begin{equation*}
    \Prma{Z(\ns) \leq \tau_1(\ns) - \alpha \Delta(\ns)} \leq \frac{f(\ns)^2}{\alpha^2}.
\end{equation*}
If $\Ema[]{Z(\ns)} < \tau_0(\ns)$, then
\begin{equation*}
    \Prma{Z(\ns) \geq \tau_0(\ns) + \alpha \Delta(\ns)} \leq \frac{f(\ns)^2}{\alpha^2}.
\end{equation*}
\end{lemma}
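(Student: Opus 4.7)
\medskip

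\noindent\textbf{Plan.} The proof is a direct application of Chebyshev's inequality in each of the three regimes, where the only subtlety is correctly accounting for the degraded variance bound when $\Ema[]{Z(\ns)}$ lies outside the interval $[\tau_0(\ns), \tau_1(\ns)]$. Throughout I suppress the argument $\ns$ and write $Z, \tau_0, \tau_1, \Delta, \sigma, f$ for brevity.

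\medskip

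\noindent\textbf{Case 1: $\Ema[]{Z} \in [\tau_0, \tau_1]$.} In this regime the max in the variance bound in \cref{def:expectation-gap} evaluates to $0$, so $\Varma[]{Z} \leq \sigma^2$. Applying Chebyshev's inequality with deviation $\alpha \Delta$ gives
\begin{equation*}
    \Prma{\abs{Z - \Ema[]{Z}} \geq \alpha \Delta} \leq \frac{\Varma[]{Z}}{\alpha^2 \Delta^2} \leq \frac{\sigma^2}{\alpha^2 \Delta^2} = \frac{f^2}{\alpha^2},
\end{equation*}
as desired.

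\medskip

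\noindent\textbf{Case 2: $\Ema[]{Z} > \tau_1$.} Let $d = \Ema[]{Z} - \tau_1 > 0$. The variance bound of \cref{def:expectation-gap} gives $\sqrt{\Varma[]{Z}} \leq \sigma(1 + d/\Delta) = \sigma(\Delta + d)/\Delta$. The event $\{Z \leq \tau_1 - \alpha \Delta\}$ is equivalent to $\{\Ema[]{Z} - Z \geq d + \alpha \Delta\}$, so Chebyshev gives
\begin{equation*}
    \Prma{Z \leq \tau_1 - \alpha \Delta} \leq \Prma{\abs{Z - \Ema[]{Z}} \geq d + \alpha \Delta} \leq \frac{\sigma^2 (\Delta + d)^2}{\Delta^2 (d + \alpha \Delta)^2}.
\end{equation*}
It remains to check $(\Delta + d)^2/(d + \alpha \Delta)^2 \leq 1/\alpha^2$, i.e., $\alpha(\Delta + d) \leq d + \alpha \Delta$, which simplifies to $\alpha d \leq d$ and holds since $\alpha \in [0,1]$ and $d > 0$. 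This yields the claimed bound $f^2/\alpha^2$. Case 3 is symmetric, setting $d = \tau_0 - \Ema[]{Z} > 0$ and applying the same calculation to the event $\{Z - \Ema[]{Z} \geq d + \alpha \Delta\}$.

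\medskip

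\noindent\textbf{Main obstacle.} There is no serious obstacle; the only point requiring attention is ensuring that the slack in the variance bound (the factor $1 + d/\Delta$) is absorbed by the extra slack in the deviation threshold (the shift by $d$ on top of $\alpha \Delta$). The inequality $\alpha(\Delta + d) \leq \alpha \Delta + d$ for $\alpha \leq 1$ is exactly what makes the two sides match, and this is why the statement requires $\alpha \in [0,1]$.
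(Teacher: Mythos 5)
Your proof is correct and follows essentially the same argument as the paper's: direct Chebyshev in the interior case, and in the out-of-interval cases, applying Chebyshev at the shifted threshold and observing that the slack $d$ added to the deviation cancels the inflation $(1+d/\Delta)$ in the variance bound via $\alpha(\Delta+d)\leq\alpha\Delta+d$ for $\alpha\leq 1$. The only cosmetic difference is that you introduce the abbreviation $d=\Ema[]{Z}-\tau_1$, which arguably makes the final cancellation cleaner to read than the paper's inline fractions.
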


\begin{proof}
First, consider the case where $\Ema{Z(\ns)} \in [\tau_0(\ns), \tau_1(\ns)]$.
By Chebyshev's inequality:
\begin{align*}
    \Prma{\abs{Z(\ns) - \Ema[]{Z(\ns)}} \geq \alpha \Delta(\ns)}
    &\leq \frac{\Varma{Z(\ns)}}{\alpha^2 \Delta(\ns)^2} \\
    &= \frac{f(\ns)^2}{\alpha^2}.
\end{align*}
Now, consider the case where $\Ema{Z(\ns)} > \tau_1(\ns)$. The case where the expectation is less than $\tau_0(\ns)$ is symmetric.
\begin{align*}
    \Prma{Z(\ns) \leq \tau_1(\ns) - \alpha \Delta(\ns)}
    &\leq \Prma{\abs{Z(\ns) - \Ema{Z(\ns)}} \geq \alpha \Delta(\ns) + \Ema{Z(\ns)} - \tau_1(\ns)} \\
    &\leq \frac{\Varma{Z(\ns)}}{\Delta(\ns)^2\p{\alpha + \frac{\Ema{Z(\ns)} - \tau_1(\ns)}{\Delta(\ns)}}^2} \\
    &\leq \frac{\sigma(\ns)^2 \p{1 + \frac{\Ema{Z(\ns)} - \tau_1(\ns)}{\Delta(\ns)}}^2}{\Delta(\ns)^2\p{\alpha + \frac{\Ema{Z(\ns)} - \tau_1(\ns)}{\Delta(\ns)}}^2} \\
    &\leq \frac{f(\ns)^2}{\alpha^2}.
\end{align*}
\end{proof}

When $\Ema{Z(\ns)} \in [\tau_0(\ns), \tau_1(\ns)]$, at the sampling breakpoints,
\begin{equation}\label{eq:sampling-breakpoint-chebyshev}
    \Prma[]{\abs{Z(\ns_t) - \Ema{Z(\ns_t)}} \geq t \Delta(\ns_t)}
    \leq \frac{1}{4}.
\end{equation}
For example, taking $\ns_{0.5}$ samples and thresholding at $\tau_0(\ns_{0.5}) + \Delta(\ns_{0.5})/2$ would yield a standard non-replicable tester with constant success probability.

\begin{example}
In the (unbiased) coin testing problem, samples are drawn from a $\Ber(p)$ distribution. The null hypothesis is that $p = 1/2$ and the alternative hypothesis is that $p \geq 1/2 + \eps$ for a parameter $\eps > 0$.
The test statistic $Z(\ns)$ is the number of sampled elements which are heads.
\begin{itemize}
    \item A valid null and alternative threshold are $\tau_0(\ns) = \ns/2$ and $\tau_1(\ns) = \ns/2 + \eps \ns$, respectively, with $\nsmin = 0$.
    \item A variance upper bound is given by $\sigma(\ns) = \sqrt{\ns}/2$.
    \item The threshold gap and noise-to-signal ratio are $\Delta(\ns) = \eps \ns$ and $f(\ns) = \frac{1}{2\eps \sqrt{\ns}}$, respectively.
    \item The sampling breakpoints are therefore $\ns_t = \ceil{\frac{1}{t^2\eps^2}}$. 
\end{itemize}
\end{example}

\begin{enumeratebox}[alg:general-estimator]{General Expectation-Gap Estimator}
\begin{enumerate}
    \item Pick $t \in [\rho, 1/16]$.
    \item Sample $r \sim \Unif\p{\br{\frac{1}{4}, \frac{3}{4}}}$.
    \item Repeat $L=O\p{\frac{t^2}{\rho^2}}$ times: take $\ns_t$ samples and compute the statistic $\hat{Z}(\ns_t)_\ell$ for $\ell \in [L]$.
    \item Compute the median estimate $\hat{V} = \text{median}\p{\hat{Z}(\ns_t)_1, \ldots, \hat{Z}(\ns_t)_L}$.
    \item Output \accept if $\hat{V} < \tau_0(\ns_t) + \Delta(\ns_t)/8$ and \reject if $\hat{V} > \tau_1(\ns_t) - \Delta(\ns_t)/8$. Otherwise, continue.
    \item Compute the mean estimate $\hat{W} = \frac{1}{L} \sum_{\ell=1}^L \hat{Z}(\ns_t)_\ell$.
    \item Output \accept if $\hat{W} \leq \tau_0(\ns_t) + r\Delta(\ns_t)$ and \reject otherwise.
\end{enumerate}
\end{enumeratebox}

\begin{theorem}\label{thm:general-estimator}
Given parameters $0 \leq \delta \leq \rho \leq 1$,
 and $t \in [\rho, 1/16]$, as well as a constant $C$ and given a hypothesis testing problem $(H_0, H_1)$ and statistic with $Z(\ns)$ with $\tau_0(\ns), \tau_1(\ns), \sigma(\ns)$, there exists an Algorithm(\cref{alg:general-estimator}) with the following properties: 

\begin{itemize}
    \item It takes $\ns = O\p{\frac{\ns_t t^2}{\rho^2}}$ samples.
    \item It succeeds with probability at least $1-t^{C t^2/\rho^2}$.
    \item The algorithm is $\rho$-replicable.
\end{itemize}
\end{theorem}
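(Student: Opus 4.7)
The plan is to verify the three guarantees in sequence. The sample-complexity claim is immediate: Step~3 draws $L \cdot \ns_t = O(\ns_t t^2/\rho^2)$ samples in total, and Steps~4--7 only post-process these samples.

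For correctness, I would first establish a \emph{median concentration} event $E_V \coloneqq \{|\hat{V} - z| \leq t\Delta(\ns_t)\}$, where $z \coloneqq \Ema{Z(\ns_t)}$. By \cref{lem:chebyshev-expectation-gap} and the definition of $\ns_t$, each $\hat{Z}(\ns_t)_\ell$ lies within $t\Delta(\ns_t)$ of $z$ with probability at least $3/4$ (with the analogous one-sided bound when $z$ falls outside $[\tau_0,\tau_1]$). A Hoeffding bound on the indicators that $\hat Z_\ell$ exceeds $z + t\Delta$ or falls below $z - t\Delta$ then gives $\Prma{E_V^c} \leq 2 e^{-\Omega(L)}$. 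Since $t \leq 1/16$, $E_V$ implies $|\hat V - z| \leq \Delta/16$. Under $H_0$, $z \leq \tau_0$ forces $\hat V \leq \tau_0 + \Delta/16 < \tau_0 + \Delta/8$, so Step~5 correctly accepts; the $H_1$ case is symmetric. Choosing the hidden constant in $L = \Theta(t^2/\rho^2)$ as a function of $C$ makes $2 e^{-\Omega(L)} \leq t^{Ct^2/\rho^2}$, giving the claimed success probability.

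The main technical work, and the anticipated obstacle, is the replicability bound. Following the \emph{replicably correct} paradigm from prior work, I would define a canonical target that depends only on $r$ and the underlying distribution (not on the samples):
\begin{equation*}
  \mathrm{RC}(r) \coloneqq \begin{cases} \accept & \text{if } z \leq \tau_0(\ns_t) + r\Delta(\ns_t), \\ \reject & \text{otherwise.} \end{cases}
\end{equation*}
The goal becomes proving $\Prma[X, r]{\A(X;r) \neq \mathrm{RC}(r)} \leq \rho/2$; independence of two runs plus a union bound then yield $\rho$-replicability. On $E_V$, the two early-termination branches of Step~5 always match $\mathrm{RC}(r)$: an early \accept requires $\hat V < \tau_0 + \Delta/8$, which on $E_V$ forces $z < \tau_0 + 3\Delta/16 \leq \tau_0 + r\Delta$ using $r \geq 1/4$, and the \reject branch is symmetric. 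The complement $E_V^c$ contributes only $e^{-\Omega(L)} \ll \rho$.

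It remains to bound failure in Step~7. Conditional on $E_V$ and on reaching Step~7, $z \in [\tau_0 + \Delta/16,\, \tau_1 - \Delta/16] \subseteq [\tau_0, \tau_1]$, so \cref{def:expectation-gap} yields $\Varma{\hat W} \leq \sigma(\ns_t)^2/L \leq t^2\Delta^2/(4L) = O(\rho^2 \Delta^2)$ by our choice of $L$. The algorithm disagrees with $\mathrm{RC}(r)$ exactly when $|\hat W - z| \geq d$ for $d \coloneqq |z - \tau_0 - r\Delta|$. Since $r \sim \Unif\p{\br{1/4, 3/4}}$, $\Prma[r]{d \leq \alpha\Delta} \leq 4\alpha$ for all $\alpha \geq 0$. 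Partitioning into geometric levels $d \in [2^{-i-1}\Delta,\, 2^{-i}\Delta]$, each level has probability at most $4 \cdot 2^{-i}$ under $r$, and Chebyshev gives a conditional mismatch probability of $O(\rho^2 \cdot 2^{2i})$, so the contribution of level $i$ is $O(\rho^2 \cdot 2^i)$. Summing this geometrically increasing sequence up to $i \approx \log(1/\rho)$---beyond which Chebyshev is vacuous but the event $\{d \leq \rho\Delta\}$ itself has probability only $O(\rho)$---yields total Step~7 failure probability $O(\rho)$. Adjusting the constants then gives exactly $\rho$-replicability.
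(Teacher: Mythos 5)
Your overall structure matches the paper's: reduce both correctness and replicability to concentration of the median $\hat V$ and the mean $\hat W$, introduce the canonical target $\mathrm{RC}(r)$, show that an early stop at Step~5 matches $\mathrm{RC}(r)$ on the median-concentration event, and bound Step~7 by the geometric-level Chebyshev argument over $D(r)$. The geometric-level analysis of $\hat W$ and the logic at Step~5 are correct and are exactly what the paper does.

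There is, however, a genuine gap in your bound for $\Prma{E_V^c}$, and it propagates into both the success-probability claim and the replicability claim. You chose $E_V \coloneqq \{|\hat V - z|\le t\Delta\}$, for which Chebyshev only gives a per-repetition deviation probability of $(f(\ns_t)/t)^2 \le 1/4$. The additive Hoeffding bound on $L$ indicators with mean $\le 1/4$ then yields $\Prma{E_V^c}\le 2e^{-\Omega(L)}=e^{-\Omega(t^2/\rho^2)}$. This is \emph{not} $t^{Ct^2/\rho^2}$ for any choice of constant hidden in $L=O(t^2/\rho^2)$: taking logarithms, you would need the constant in $L$ to exceed $\Theta(C\ln(1/t))$, which is unbounded as $t\downarrow\rho$. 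More immediately, the replicability argument requires $\Prma{E_V^c}=O(\rho)$; at $t=\rho$ your bound is $e^{-\Omega(1)}$, a constant, not $O(\rho)$. So as written, neither the stated success probability nor $\rho$-replicability follows.

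The fix is to use the \emph{looser} deviation threshold $\Delta/8$ (matching the algorithm's threshold in Step~5) when defining $E_V$, so that the per-repetition failure probability becomes $64 f(\ns_t)^2 \le 16t^2 \ll 1/2$, and then invoke the \emph{multiplicative} Chernoff bound $\Prma{Y\ge(1+\delta)\mu}\le \bigl(e^\delta/(1+\delta)^{1+\delta}\bigr)^\mu$ with $\mu=16t^2L$ and $\delta=\Theta(1/t^2)$. This is what the paper does, and it gives $\Prma{E_V^c}\le \bigl(e/(1+\Theta(1/t^2))\bigr)^{\Theta(L)} \le t^{Ct^2/\rho^2}$ with a constant in $L$ depending only on $C$. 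At $t=\rho$ this is $\rho^C\le\rho$ for $C\ge1$, which is what the replicability argument needs. In short: the tight threshold $t\Delta$ is not wrong per se, but it forces a constant per-sample failure probability, at which point Hoeffding cannot beat $e^{-\Omega(L)}$; the correct move is a looser threshold combined with a sharper tail bound.
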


\begin{remark}
For any application, $t$ should be optimized given the sampling breakpoints of the given estimator and desired sample complexity/failure probability tradeoff. In the two extremes of the setting of $t$, the failure probability can be as small as $\exp\p{-\Omega(1/\rho^2)}$ when $t=O(1)$ or as large as $\rho^C$ when $t=\rho$. In either case, the algorithm succeeds with high probability in $1/\rho$.
\end{remark}

\begin{proof}[Proof of \cref{thm:general-estimator}]
The sample complexity of the algorithm is immediate.
Let $X$ be a random variable representing the set of samples collected by the algorithm. Let $C'$ be such that $L \leq C't^2/\rho^2$.

\paragraph{Correctness}
Correctness of the algorithm is guaranteed by the median estimate.
Let $z = \Ema[]{Z(\ns_t)}$ be the expectation of the statistic.
Recall that in the null hypothesis, $z \leq \tau_0(\ns_t)$, and in the alternate hypothesis, $z \geq \tau_1(\ns_t)$.
The algorithm is correct if it outputs \accept and \reject in these two cases, respectively.

We will show that the median estimate $\hat{V}$ is contained within an interval of length $\Delta(\ns_t)/4$ around $z$ with high probability.
Recall from \cref{lem:chebyshev-expectation-gap}, we get Chebyshev-style bounds within the interval $[\tau_0(\ns_t), \tau_1(\ns_t)]$ regardless of the location of $\Ema{Z(\ns)}$. We will only be concerned with deviations within this interval and assume without loss of generality that $\Ema{Z(\ns_t)} \in [\tau_0(\ns_t), \tau_1(\ns_t)]$.
By \cref{lem:chebyshev-expectation-gap} and the definition of $\ns_t$ (\cref{eq:sampling-breakpoint}):
\begin{equation*}
    \Prma[]{\abs{Z(\ns_t) - z} \geq \Delta(\ns_t)/8}
    \leq 64 f(\ns_t)^2
    \leq 16 t^2.
\end{equation*}
As we choose $t \leq 1/16$, this probability is at most $1/16$.

By a standard median analysis via a Chernoff bound, the probability that $\hat{V}$ deviates from $z$ decays exponentially:
\begin{align*}
    \Prma{\abs{\hat{V} - z} \geq \Delta(\ns_t)/8}
    &\leq \Prma{\sum_{\ell=1}^L \mathbf{1}\br{\abs{\hat{Z}(\ns_t)_\ell - z} \geq \Delta(\ns_t)/8} \geq L/2} \\
    &\leq \Prma[A \sim \Bin(L, 16t^2)]{\abs{A - \Ema{A}} \geq L/4} \\
    &= \Prma[A \sim \Bin(L, 16t^2)]{\abs{A - \Ema{A}} \geq \frac{1}{64t^2} \Ema{A}} \\
    &\leq \p{\frac{e^{1/64t^2}}{(1+1/64t^2)^{1+1/64t^2}}}^{16t^2 L} \\
    &\leq \p{\frac{e}{1 + 1/64t^2}}^{\frac{C' t^2}{4\rho^2}}.
\end{align*}

For any constant $C$ as given in the theorem statement, for a large enough constant $C'$, the probability that the median $\hat{V}$ deviates from $z$ by more than $\Delta(\ns_t)/8$ can be upper bounded by $t^{C t^2/\rho^2}$.
By thresholding above at $\tau_0(\ns_t) + \Delta(\ns_t)/8$ and below at $\tau_1(\ns_t) - \Delta(\ns_t)/8$, we ensure correctness under $H_0$ and $H_1$ with probability $1 - t^{C t^2/\rho^2}$.

\paragraph{Replicability}
Replicability must be satisfied even if neither $H_0$ nor $H_1$ holds.
In this case, the standard hypothesis testing definition does not specify a correct answer between \accept and \reject.
In order to prove replicability, we will chose a correct answer as follows.
We say that the algorithm should \accept if $z \leq \tau_0 + r\Delta(\ns_t)$ and \reject otherwise.
We will show that, with probability $1-\rho$ over the randomness of $r$ and the sampling process, the algorithm will return the ``correct'' answer with probability $1-\rho$.
By union bound over resampling with the same draw of $r$, the algorithm is $2\rho$-replicable. We will proceed by cases over the location of $z$.

First, consider the case that the algorithm terminates on the median comparison of $\hat{V}$.
By the correctness analysis, $\hat{V}$ is concentrated within $\Delta(\ns_t)/8$ of $z$ with high probability in $1/\rho$.
Therefore, the median comparison will only \accept if $z < \tau_0(\ns_t) + \Delta(\ns_t)/4$ and will only \reject if $z > \tau_1(\ns_t) - \Delta(\ns_t)/4$.
Since we choose $r \in \br{\frac{1}{4}, \frac{3}{4}}$, this step will only terminate with the correct answer.

If the median comparison does not return decisively, the output of the algorithm is determined by the mean comparison of $\hat{W}$.
Note that $\hat{W}$ is the mean of $L$ independent estimators with expectation $z$ and standard deviation at most $\sigma(\ns_t) \leq f(\ns_t) \Delta(\ns_t) \leq t \Delta(\ns_t)/2$ (the last step uses the definition of $\ns_t$ in \cref{eq:sampling-breakpoint}).
It follows that the mean of $\hat{W}$ is $z$ and its variance is $t^2\Delta(\ns_t)^2/4L$.
By Chebyshev's inequality,
\begin{equation} \label{eq:hatw-chebyshev}
    \Prma[X]{\abs{\hat{W} - z} \geq \alpha \Delta(\ns_t)}
    \leq \frac{t^2\Delta(\ns_t)^2}{4L \alpha^2\Delta(\ns_t)^2}
    = \frac{\rho^2}{4 C' \alpha^2}.
\end{equation}

The algorithm outputs the incorrect answer if $\hat{W}$ is on the wrong side (i.e., not the same side as $z$) of the random threshold defined by $r$. The probability that this occurs is upper bounded by the probability that $\hat{W}$ deviates from its expectation by more than the distance between $\tau_0(\ns_t) + r\Delta(\ns_t)$ and $z$.
Let $D(r) = \abs{\tau_0(\ns_t) + r\Delta(\ns_t) - z}$ be a random variable (over the randomness of $r$) for this distance.
As $r$ is sampled uniformly in $\br{\frac{1}{4}, \frac{3}{4}}$ independently of $z$, the probability that $D(r) \leq \beta\Delta(\ns_t)$ is upper bounded by the probability that $r$ lands in an interval of length $2\beta$.
This probability is at most $4\beta$.
Then, the probability of a replicability failure is upper bounded by:
\begin{align*}
    \Prma[X, r]{\abs{\hat{W} - z} \geq D(r)}
    &\leq \sum_{k=1}^{\infty} \Prma[X, r]{D(r) \in \br{2^{-(k+1)}, 2^{-k}} \Delta(\ns_t) \text{ and } \abs{\hat{W} - z} \geq D(r)} \\
    &\leq \sum_{k=1}^{\infty} \Prma[X, r]{D(r) \in \br{2^{-(k+1)}, 2^{-k}} \Delta(\ns_t) \text{ and } \abs{\hat{W} - z} \geq 2^{-(k+1)}\Delta(\ns_t)} \\
    &= \sum_{k=1}^{\infty} \Prma[r]{D(r) \in \br{2^{-(k+1)}, 2^{-k}} \Delta(\ns_t)} \Prma[X]{\abs{\hat{W} - z} \geq 2^{-(k+1)}\Delta(\ns_t)} \\
    &\leq \sum_{k=1}^{\infty} 2^{-k+2} \Prma[X]{\abs{\hat{W} - z} \geq 2^{-(k+1)}\Delta(\ns_t)} \\
    &\leq \sum_{k=1}^\infty 2^{-k+2} \min\bc{\frac{\rho^2}{4C'2^{-2(k+1)}}, 1} \tag{by \cref{eq:hatw-chebyshev}} \\
    &= \sum_{k=1}^\infty \min\bc{\frac{2^k \rho^2}{4C'}, 2^{-k+2}}. 
\end{align*}

Consider the case when the second term in the minimization dominates:
\begin{equation*}
    \frac{2^k \rho^2}{4C'} \geq 2^{-k+2}
    \iff 2^{2k} \geq \frac{16C'}{\rho^2}
    \iff k \geq \frac{1}{2} \lg \frac{16C'}{\rho^2}.
\end{equation*}

Then, the probability of a replicability failure is at most
\begin{align*}
    \Prma[X, r]{\abs{\hat{W} - z} \geq D(r)}
    &\leq \sum_{k=1}^{\ceil{\frac{1}{2} \lg \frac{16C'}{\rho^2}} - 1} \frac{2^k \rho^2}{4C'} + \sum_{k=\ceil{\frac{1}{2} \lg \frac{16C'}{\rho^2}}}^\infty 2^{-k+2} \\
    &\leq \sqrt{\frac{16C'}{\rho^2}}\p{\frac{\rho^2}{4C'}} + 8\sqrt{\frac{\rho^2}{16C'}} \\
    &= \frac{3\rho}{\sqrt{C'}}.
\end{align*}

Choosing a large enough constant $C'$ suffices for $\rho$-replicability.
\end{proof}

\subsection{Size-Invariant Expectation-Gap Estimator}

In this subsection, we define a special type of expectation-gap statistics (generalizing coin testers and collision-based testers) where, under some normalization, the expectation of the statistic as well as the null and alternate thresholds are constant with respect to the number of samples.
For such statistics, we design a general estimator which also gives improved bounds on the number of samples taken \emph{in expectation}.

\begin{definition}[Size-Invariant Expectation-Gap Statistics]\label{def:size-invariant}
An expectation-gap statistic defined by $Z(\ns),\allowbreak \tau_0(\ns), \tau_1(\ns), \sigma(\ns)$ (see \cref{def:expectation-gap}) is ``size-invariant'' if there exist fixed values $z, \tau_0, \tau_1$ such that for all $\ns \in \N$, $\Ema{Z(\ns)} = z$, $\tau_0(\ns) = \tau_0$, and $\tau_1(\ns) = \tau_1$.
In words, the location of the test statistic as well as the expectation thresholds do not vary with the number of samples.
We parameterize such a statistic by $Z(\ns), \tau_0, \tau_1, \sigma(\ns)$ and define $\Delta$, $f(\ns)$, and sampling breakpoints $\ns_t$ analogously to \cref{def:expectation-gap}.
\end{definition}

\begin{example}
For the coin testing problem, a size-invariant expectation gap statistic is given by $Z(\ns)$ being the fraction of heads in the sample, $\tau_0=1/2$, $\tau_1=1/2+\eps$, and $\sigma(\ns)=\frac{1}{2\sqrt{\ns}}$.
\end{example}

\begin{theorem}\label{thm:size-invariant-estimator}
Given parameters $0 \leq \delta \leq \rho \leq 1$, \cref{alg:size-invariant-estimator} for a hypothesis testing problem $(H_0, H_1)$ with a size-invariance expectation gap statistic defined by $Z(\ns), \tau_0, \tau_1, \sigma(\ns)$ and a sequence of breakpoints given by $t_1, \ldots, t_K$ has the following properties:
\begin{itemize}
    \item The algorithm succeeds with probability at least $1-\delta$.
    \item The algorithm is $O(\rho)$-replicable.
    \item The algorithm takes
    \begin{equation*}
        O\p{\ns_{1/8} \log(1/\delta)
        + \sum_{k=1}^{\ceil{\lg(1/\rho)}} 2^{k} (K-k+1) t_k^2 \ns_{t_k}}
    \end{equation*}
    samples in expectation, and
    \begin{equation*}
        O\p{\ns_{1/8} \log(1/\delta)
        + \sum_{k=1}^{\ceil{\lg(1/\rho)}} 2^{2k} (K-k+1) t_k^2 \ns_{t_k}}
    \end{equation*}
    samples in the worst-case.
\end{itemize}
\end{theorem}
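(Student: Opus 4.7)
The proof strategy combines the bucketing argument from the general expectation-gap estimator (\cref{thm:general-estimator}) with an amortization over geometric sample sizes that is enabled precisely by size-invariance. I would organize the proof around four pieces: fixing a replicably-correct target, establishing correctness, analyzing the sample complexity, and bounding the replicability failure.

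The key opening move is to exploit \cref{def:sample-invariant}. Because $z = \Ema{Z(\ns)}$ does not depend on $\ns$, once the random seed $r$ is drawn the ``replicably correct'' answer can be defined as \accept iff $z \leq \tau_0 + r\Delta$, and this answer is well-defined independently of how many samples the algorithm eventually collects. This is the crucial property that was unavailable in the general setting of \cref{thm:general-estimator}, and it is what permits the algorithm to take adaptively few samples across levels while still converging on a single consistent decision. For correctness against the hypotheses $H_0,H_1$, I would appeal to the high-precision phase that takes $\ns_{1/8}$ samples repeated $O(\log(1/\delta))$ times: Chebyshev's bound via \cref{lem:chebyshev-expectation-gap} gives constant-probability concentration of each block within $\Delta/8$ of $z$, and a standard median Chernoff (as in the proof of \cref{thm:general-estimator}) boosts this to failure probability $\delta$, which suffices whenever $\Ema[]{Z(\ns)}$ falls at least $\Delta/4$ outside $[\tau_0,\tau_1]$.

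For the sample bounds, I would analyze the algorithm level-by-level for $k\in\{1,\ldots,K\}$ with $K=\ceil{\lg(1/\rho)}$. At level $k$ the algorithm takes $O(t_k^2 \cdot 2^{2k} (K-k+1))$ repetitions of size $\ns_{t_k}$: the $t_k^2\cdot 2^{2k}$ factor arises from requiring the empirical mean at level $k$ to concentrate at precision $2^{-k}\Delta$ despite the per-block standard deviation being $\sigma(\ns_{t_k})\leq t_k\Delta/2$, and the $(K-k+1)$ factor comes from median-of-means amplification to failure probability $2^{-\Omega(K-k+1)}$. This gives the worst-case bound by summing. For the in-expectation bound, the key observation is that the algorithm only advances past level $k-1$ when the level-$(k-1)$ estimate lies within $O(2^{-(k-1)})$ of the random threshold $\tau_0 + r\Delta$; since $r$ is uniform in $[\tfrac14,\tfrac34]$, this event has probability $O(2^{-k})$ over $r$, saving exactly the factor of $2^k$ between the two bounds.

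Replicability is handled by a dyadic bucketing argument analogous to the one in the replicability analysis of \cref{thm:general-estimator}, now applied per level. For each level $k$, I would decompose the failure event: the seed $r$ places the normalized distance $\abs{(z-\tau_0)/\Delta - r}$ in the bucket $[2^{-(k+1)}, 2^{-k}]$ with probability $O(2^{-k})$, and, conditional on this, the level-$k$ median estimate crosses $r$ only if it deviates from $z$ by more than $2^{-(k+1)}\Delta$, which by the amplified median bound above happens with probability $2^{-\Omega(K-k+1)}$. The product is geometric in $k$ and the sum is dominated by the final level $k=K$, at which bucket probability $O(\rho)$ meets a constant deviation probability, yielding $O(\rho)$-replicability overall. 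The main obstacle will be coordinating these three budgets simultaneously: the Chebyshev precision $t_k$ that defines $\ns_{t_k}$, the repetition count $L_k$, and the early-termination threshold must all be set so that the per-level ``correctness failure'' (deviation beyond the intended band around $z$) is absorbed into $\delta$, the per-level ``replicability failure'' contributes only its share to the telescoping $O(\rho)$ sum, and the termination test is conservative enough that the in-expectation bound holds after conditioning on not yet terminating.
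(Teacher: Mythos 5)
Your proposal follows essentially the same approach as the paper's proof: exploiting size-invariance to fix a seed-dependent replicably-correct target $z \lessgtr \tau_0 + r\Delta$, establishing correctness via the initial median-of-means phase, analyzing non-termination probabilities level-by-level to save a factor of $2^k$ for the in-expectation sample bound, and summing a geometric sequence of per-level replicability failure probabilities that is dominated at $k = K$. One small slip to note: you claim the correctness phase suffices ``whenever $\Ema[]{Z(\ns)}$ falls at least $\Delta/4$ outside $[\tau_0,\tau_1]$,'' but the hypotheses only guarantee $z \leq \tau_0$ or $z \geq \tau_1$, possibly sitting exactly at the endpoint---and this weaker promise already suffices because the first step's decision thresholds $\tau_0 + \Delta/8$ and $\tau_1 - \Delta/8$ are placed so that a median deviating from $z$ by at most $\Delta/8$ still lands on the correct side even when $z$ is at the boundary.
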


\begin{enumeratebox}[alg:size-invariant-estimator]{Size-Invariant Expectation-Gap Estimator}
\begin{enumerate}
    \item Repeat $L=\ceil{8 \ln(1/\delta)}$ times: take $\ns_{1/8}$ samples and compute the statistic $\hat{Z}(\ns_{1/8})$. Compute the median of these estimates. Output \accept if the median is less than $\tau_0 + \Delta/8$ and \reject if it is greater than $\tau_1 - \Delta/8$. Otherwise, continue.
    \item Pick $r \sim \Unif\p{\br{\frac{1}{4}, \frac{3}{4}}}$.
    \item Let $K=\ceil{\lg(1/\rho)}$. For $k \in [K]$:
    \begin{enumerate}[label=(\alph*)]
        \item Choose $t_k \in [2^{-k}, 1/2]$.
        \item Repeat $J_k=\ceil{t_k^2 2^{2k}}$ times: take $\ns_{t_k}$ samples and compute the test statistic $\hat{Z}(\ns_{t_k})$. Call the mean of these test statistics $\hat{W}_k$.
        \item Repeat the preceding step $L_k = 16(K-k+1)$ times and take the median-of-means estimate of the $\hat{W}_k$'s, call this quantity $\hat{V}_k$.
        \item Output \accept if $\hat{V}_k \leq \tau_0 + \p{r - 2^{-k}}\Delta$ and \reject if $\hat{V}_k \geq \tau_0 + \p{r + 2^{-k}}\Delta$. Otherwise, continue.
    \end{enumerate}
    \item Output \accept.
\end{enumerate}
\end{enumeratebox}

The flexibility of choosing $t_1, \ldots t_K$ allows the estimator to leverage an analysis of estimator's variance beyond the black-box sample mean bounds (see new results on uniformity testing in \cref{sec:unif-testing-upper-bound}).
We present the following corollary which gives a simpler expression when no such analysis exists (or when the basic analysis is tight as in coin testing in \cref{sec:coin-testing-upper-bound}).

\begin{corollary}\label{cor:scale-invariant-simple}
Given parameters $0 \leq \delta \leq \rho \leq 1$, there exists an algorithm for a hypothesis testing problem $(H_0, H_1)$ with a size-invariant expectation gap statistic defined by $Z(\ns), \tau_0, \tau_1, \sigma(\ns)$ which is $\rho$-replicable, fails with probability at most $\min\{\delta, \exp(-1/\rho)\}$, and takes samples $O\p{\frac{\ns_{1/8}}{\rho} + \ns_{1/8}\log(1/\delta)}$ in expectation and $O\p{\frac{\ns_{1/8}}{\rho^2} + \ns_{1/8}\log(1/\delta)}$ in the worst-case.
\end{corollary}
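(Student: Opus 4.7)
The plan is to apply \cref{thm:size-invariant-estimator} with the choice of breakpoints $t_k = 2^{-k}$ for each $k \in [K]$, where $K = \lceil \lg(1/\rho) \rceil$, and with the accuracy parameter set to $\delta' = \min\{\delta, \exp(-1/\rho)\}$. This choice sits on the lower boundary of the allowed interval $[2^{-k}, 1/2]$ in \cref{alg:size-invariant-estimator}, so it is admissible. In the regime captured by the corollary---where no better variance analysis is available beyond the black-box sample-mean bound---the statistic's standard deviation decays as $\sigma(\ns) = \Theta(1/\sqrt{\ns})$, hence $f(\ns) = \Theta(1/\sqrt{\ns})$, and the defining condition $f(\ns_t) \leq t/2$ from \cref{eq:sampling-breakpoint} is first satisfied at $\ns_t = \Theta(1/t^2)$. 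This yields the key invariant $t_k^2 \ns_{t_k} = \Theta(\ns_{1/8})$, \emph{independent of $k$}.

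With this invariant, the two sums in \cref{thm:size-invariant-estimator} collapse to geometric series. Writing $j = K - k + 1$, the expected-sample sum becomes
\begin{equation*}
\sum_{k=1}^{K} 2^k (K - k + 1)\, t_k^2\, \ns_{t_k}
\;=\; \Theta(\ns_{1/8}) \cdot 2^{K+1} \sum_{j=1}^{K} j\, 2^{-j}
\;=\; O(\ns_{1/8}/\rho),
\end{equation*}
using $\sum_{j \geq 1} j \cdot 2^{-j} = 2$ and $2^K = \Theta(1/\rho)$. The worst-case sum differs only by a factor of $2^k$ in each summand; it is dominated by its $k=K$ term, giving $O(\ns_{1/8} \cdot 4^K) = O(\ns_{1/8}/\rho^2)$. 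The preliminary median-step cost $O(\ns_{1/8}\log(1/\delta'))$ from \cref{thm:size-invariant-estimator} is absorbed using $\log(1/\delta') = \max\{\log(1/\delta), 1/\rho\} \leq \log(1/\delta) + 1/\rho$, yielding the stated expected and worst-case sample complexities.

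Replicability and the success probability $1 - \delta' = 1 - \min\{\delta, \exp(-1/\rho)\}$ are inherited directly from \cref{thm:size-invariant-estimator}. The only mild obstacle is verifying that $t_k^2 \ns_{t_k} = \Theta(\ns_{1/8})$ holds uniformly in $k$; this follows because the ceiling in $\ns_t = \min\{\ns : f(\ns) \leq t/2\}$ introduces only $O(1)$ slack over the ideal value $4c^2/t^2$ (where $c$ is the constant in $\sigma(\ns) = c/\sqrt{\ns}$), so each $t_k^2 \ns_{t_k}$ is squeezed between two absolute constants times $c^2$, which matches $\ns_{1/8} = \Theta(c^2)$.
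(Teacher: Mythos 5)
Your plan diverges from the paper's proof in a way that creates a genuine gap. The paper sets $t_k = 1/8$ for \emph{all} $k$, so that $\ns_{t_k} = \ns_{1/8}$ is a single fixed quantity independent of $k$; then the sums in \cref{thm:size-invariant-estimator} collapse immediately ($\sum_k 2^k(K-k+1)\cdot\ns_{1/8}/64 = O(\ns_{1/8}\, 2^K) = O(\ns_{1/8}/\rho)$ and similarly with $4^k$) with \emph{no assumption whatsoever} on the shape of $\sigma(\ns)$. You instead set $t_k = 2^{-k}$ and lean on the invariant $t_k^2\,\ns_{t_k} = \Theta(\ns_{1/8})$, which you justify by positing $\sigma(\ns) = \Theta(1/\sqrt{\ns})$ and hence $\ns_t = \Theta(1/t^2)$.

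That assumption is not part of \cref{def:sample-invariant} or the corollary's hypotheses, and the invariant does fail for legitimate size-invariant statistics. For a concrete counterexample, take $\sigma(\ns) = \Delta\cdot\ns^{-1/4}$, so $f(\ns)=\ns^{-1/4}$ and $\ns_t \approx 16/t^4$. Then $t_k^2\,\ns_{t_k} \approx 16\cdot 2^{2k}$, which grows geometrically in $k$ rather than being flat. Plugging your $t_k = 2^{-k}$ into the expected-sample sum gives $\sum_k 2^k(K-k+1)\cdot 16\cdot 2^{2k} = \Theta(8^K) = \Theta(1/\rho^3)$, far exceeding the corollary's claimed $O(\ns_{1/8}/\rho)$, whereas the paper's constant choice still yields $O(\ns_{1/8}/\rho)$ for this same $\sigma$. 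Your phrase ``the regime captured by the corollary'' is doing load-bearing work that the statement itself does not support: the corollary is asserted for an arbitrary size-invariant statistic with whatever $\sigma(\ns)$ is handed to you. To repair the proof, simply take $t_k$ constant (modulo the small caveat that $1/8 \notin [2^{-k},1/2]$ for $k\in\{1,2\}$, which is harmless since $\ns_{1/2},\ns_{1/4}\le\ns_{1/8}$), and the rest of your accounting — the handling of $\delta' = \min\{\delta, e^{-1/\rho}\}$ and the geometric-sum arithmetic — goes through as written.
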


\begin{proof}
The corollary follows from applying \cref{thm:size-invariant-estimator} with $t_1, \ldots, t_K$ all set to $1/8$ and $\delta = \exp\p{1/\rho}$.
As $t_k$ and $\ns_{t_k}$ are fixed, the summations in the sample complexity (both expected and worst-case) are dominated by the final term with $k=\ceil{\log(1/\rho)}$.
\end{proof}

\begin{proof}[Proof of \cref{thm:size-invariant-estimator}]
Let $X$ be a random variable representing the samples collected by the algorithm.
Recall that $z = \Ema{Z(\ns)}$ is a constant due to the size-invariant property of the statistic.

\paragraph{Correctness}
The correctness of the algorithm is achieved in the first step. Note that correctness only needs to hold in the case that the true distribution belongs to the null hypothesis $H_0$ or the alternate hypothesis $H_1$. Assume without loss of generality that we are in the former case.

Recall from \cref{lem:chebyshev-expectation-gap}, we get Chebyshev-style bounds within the interval $[\tau_0, \tau_1]$ regardless of the location of $z$. We will only be concerned with deviations within this interval and assume without loss of generality that $z \in [\tau_0, \tau_1]$. By \cref{lem:chebyshev-expectation-gap} and \cref{eq:sampling-breakpoint-chebyshev}, a single estimate $\hat{Z}(\ns_{1/8})$ will deviate from its expectation $\E[Z(\ns_{1/8} | H_0] \leq \tau_0$ by more than $\Delta/8$ with probability at most $1/4$. By the standard median-of-means analysis, the probability that the median of $L$ estimates of $\hat{Z}(\ns_{1/8})$ will deviate from the expectation by more than $\Delta/8$ is at most the probability that the sum of $L$ $\Ber(1/4)$ i.i.d.\ random variables exceeds $L/2$.
By Hoeffding's bound, this occurs with probability at most $\exp\p{-\frac{2(L/4)^2}{L}} = \exp\p{-L/8}$.
By choosing $L = 8\ln(1/\delta)$ and outputting \accept if the median-of-means estimate is less than $\tau_0(\ns_{1/8}) + \Delta(\ns_{1/8})/8$, the algorithm only fails (does not output \accept) with probability at most $\delta$.

\paragraph{Sample Complexity} The worst-case sample complexity comes from a simple summation of the number of samples used if the algorithm does not terminate early.

For the analysis of the expected sample size, we will need to analyze the quality of the median-of-means estimates $\hat{V}_k$.
First, consider a single mean estimate $\hat{W}_k$.
As it is the mean of independent random variables with standard deviation $\sigma(\ns_t) \leq t\Delta/2$, the expectation of $\hat{W}_k$ is $z$ and its variance is $t^2\Delta^2/4J_k$. Chebyshev's inequality implies that
\begin{equation*}
    \Prma[X]{\abs{\hat{W}_k - z} \geq 2^{-k} \Delta}
    \leq \frac{t^2 \Delta^2 2^{2k}}{4J_k \Delta^2}
    \leq \frac{t^2 2^{2k}}{4(t_k^2 2^{2k})}
    = 1/4.
\end{equation*}
By the standard median-of-means Hoeffding bound for the median of $L_k$ such estimates,
\begin{equation}\label{eq:median-of-means-hoeffding}
    \Prma[X]{\abs{\hat{V}_k - z} \geq 2^{-k} \Delta}
    \leq \exp\p{-L_k/8}.
\end{equation}

Consider the random variable $D(r) = \abs{z - (\tau_0 + r\Delta)}$ which is the distance between the random threshold and the expectation of the statistic.
Let $A_k$ be a binary random variable for the event that the algorithm has not terminated at the end of step $k$.
The algorithm will have terminated at the end of step $k$ if the estimate $\hat{V}_{k'}$ is far from the random threshold $r$ for any $k' \leq k$.
The quantity of how far $\hat{V}_{k'}$ must be from $r$ in order to terminate decreases with $k'$.
The probability of the event $A_k$ is upper bounded by:
\begin{align*}
    \Prma[X,r]{A_k}
    &\leq \Prma[X,r]{\abs{\hat{V}_k - (\tau_0 + r\Delta)} < 2^{-k}\Delta)} \\
    &\leq \sum_{k'=1}^k \p{\Prma[r]{D(r) \in \br{2^{-k'}, 2^{-k'+1}}\Delta} \prod_{\ell=k'}^k \Prma[X]{\abs{\hat{V}_\ell - z]} > 2^{-k'}\Delta)}} + \Prma[r]{D(r) > 2^{-k+1}}.
\end{align*}
The inequality follows from conditioning on the geometric interval (defined by $k'$) which contains $D(r)$.
Conditioned on this interval, the algorithm only does not terminate by step $k$ only if every estimate of $\hat{V}_\ell$ for $\ell \leq k$ deviates from its expectation by more than the lower endpoint of this interval, which is $2^{-k'}\Delta$.

Recall from the proof of \cref{thm:general-estimator} that the probability that $D(r) \leq \beta \Delta$ is upper bounded by $4\beta$ (simply from the probability of the uniform random variable $r$ landing in a $2\beta$ sized interval).
Combined with \cref{eq:median-of-means-hoeffding},
\begin{align}
    \Prma[X,r]{A_k}
    &\leq \sum_{k'=1}^k \p{2^{-k'+3} \prod_{\ell=k'}^k \exp(-L_k/8)} + 2^{-k+3} \nonumber \\
    &\leq \sum_{k'=1}^k \p{2^{-k'+3} \exp\p{- \sum_{\ell=k'}^k L_k/8}} + 2^{-k+3} \nonumber \\
    &\leq \sum_{k'=1}^k \p{2^{-k'+3} 2^{-2(K-k'+1)}} + 2^{-k+3} \nonumber \\
    &= \sum_{k'=1}^k 2^{-2K+k'+1} + 2^{-k+3} \nonumber \\
    &\leq 2^{-2K+k+1} + 2^{-k+3} \nonumber \\
    &\leq 2^{-k+4}. \label{eq:a_k}
\end{align}

The expected number of samples is equal to the samples taken in each step times the probability that the algorithm reaches that step.
For simplicity, define $A_0 = 1$ deterministically.
Ignoring the $8\ns_{1/8}\ln(1/\delta)$ samples taken at the beginning of the algorithm, the expected number of samples is upper bounded by
\begin{align*}
    \sum_{k=1}^K \Prma[X, r]{A_{k-1}} J_k L_k \ns_{t_k}
    &\leq \sum_{k=1}^K 2^{-k+4} \ceil{t_k^2 2^{2k}} \ceil{16(K-k+1)} \ns_{t_k}\\
    &= O\p{\sum_{k=1}^{\ceil{\lg(1/\rho)}} 2^k (K-k+1) t_k^2 \ns_{t_k}}.
\end{align*}

The final bound follows by including the $8\ns_{1/8}\ln(1/\delta)$ samples taken deterministically at the beginning of the algorithm.

\paragraph{Replicability} As in the proof of \cref{thm:general-estimator}, we will define a notion of replicable correctness (which depends on the randomness $r$) which is defined even if the underlying distribution satisfies neither the null nor alternate hypothesis.
As long as the algorithm is correct in this notion with probability $1-\rho$, it will overall be $2\rho$-replicable.
As in the prior proof, we will say that the algorithm should output \accept if $z \leq \tau_0 + r\Delta$ and \reject otherwise.

The algorithm only outputs the wrong answer in the first step if the median estimate is smaller than $\tau_0 + \Delta/8$ or larger than $\tau_1 + \Delta/8$.
To show correctness (in the standard, non-replicable) sense, we showed that the median deviates from $z$ by more than $\Delta/8$ only with probability $\delta \leq \rho$.
As $r \in \bc{\frac{1}{4}, \frac{3}{4}}$, with probability $1-\rho$, if the algorithm terminates in the first step, it is replicably correct as it must be the case that $z < \tau_0 + \Delta/4$ or $z > \tau_0 + 3\Delta/4$.

Now, consider the case that the algorithm does not terminate in the first step.
As in the proof of expected sample complexity, we will break down the failure probability of the algorithm depending on the event $A_k$ that the algorithm does has not terminated by the end of step $k$.
Let $B_k$ be the event that the algorithm outputs the replicably incorrect answer at step $k$.
At step $k < K$, the algorithm only terminates if $\hat{V}_k$ is $2^{-k}\Delta$ far from the random threshold $\tau_0 + r\Delta$.
As the definition of replicable correctness is based on the location of $z$ relative to the threshold, the algorithm only terminates incorrectly if $\hat{V}_k$ is $2^{-k}\Delta$ far from $z$.
By \cref{eq:median-of-means-hoeffding}, this occurs with probability at most $\exp(-L_k/8)$.
Note that this deviation event is independent of $A_{k-1}$, the event of the algorithm having not terminated up to this point.
Therefore,
\begin{align*}
    \Prma[X, r]{B_k}
    &\leq A_{k-1} \exp(-L_k/8) \\
    &\leq 2^{-k+5} e^{-2(K-k+1)} \tag{by \cref{eq:a_k}} \\
    &\leq 2^{-2K+k+4}.
\end{align*}
By union bound, the failure probability across of the steps is upper bounded by
\begin{equation*}
    \sum_{k=1}^K \Prma[X, r]{B_k}
    \leq 2^{-2K+4} \sum_{k=1}^k 2^k
    \leq 2^{-K+5}
    = O(\rho).
\end{equation*}

Overall, the algorithm is $O(\rho)$-replicable, as required.
\end{proof}

\begin{remark}[Importance of Size-Invariance]
The size-invariant property is key to the replicability of our algorithm.
Replicable correctness is defined by comparing $\Ema{Z(\ns)}$ to the random threshold defined by $r$.
If the location of $\Ema{Z(\ns)}$ changes with the number of samples across different levels of the algorithm, then two independent runs of the algorithm may terminate on different levels with a different notion of replicable correctness.
This would violate replicability.
\end{remark}

\subsection{Upper Bound Applications}
\subsubsection{Coin Testing}\label{sec:coin-testing-upper-bound}

We apply our expectation-gap framework to the fundamental replicable coin testing problem which is ubiquitous as a subroutine in replicable algorithms.
Our algorithm improves the sample complexity both in expectation and in the worst-case over existing bounds and \emph{matches lower bounds in all parameter regimes up to constant factors}.
The proof of this result is a direct application of \cref{thm:size-invariant-estimator}.

\begin{definition}[Replicable (Biased) Coin Testing (as defined in \cite{hopkins2024replicability})]\label{def:coin_testing}
Consider $0 \leq p_0 < q_0 \leq 1$ and let $\eps = q_0 - p_0$. For a $p \in [0,1]$, an algorithm receives samples from $\Ber(p)$. The null hypothesis is that $p = p_0$ and the alterate hypothesis is that $p \geq q_0$.
\end{definition}

\paragraph{Prior work: }
Recall that the prior state-of-the-art sample complexity was given in \cite{hopkins2024replicability}.
\begin{theorem}[Theorem 3.5 of \cite{hopkins2024replicability}]\label{thm:prior_coin}
For any $0 \leq \delta \leq \rho \leq 1$, there exists an algorithm for replicable coin testing which is $\rho$-replicable, fails with probability at most $\delta$, and uses samples $O\p{\frac{q_0 \log(1/\delta)}{\eps^2 \rho}}$ in expectation and $O\p{\frac{q_0 \log(1/\delta)}{\eps^2 \rho^2}}$ in the worst-case.
\end{theorem}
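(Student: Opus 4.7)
The plan is to derive \cref{thm:prior_coin} as a direct instantiation of the size-invariant expectation-gap framework developed in this subsection. The first step is to cast biased coin testing as a size-invariant expectation-gap problem in the sense of \cref{def:sample-invariant}: take $Z(m)$ to be the empirical fraction of heads from $m$ i.i.d.\ draws of $\Ber(p)$ and fix $\tau_0 = p_0$, $\tau_1 = q_0 = p_0 + \eps$, so that $\Delta = \eps$. The Bernoulli variance satisfies $p(1-p)/m \le q_0/m$ in the relevant parameter regime, giving $\sigma(m) = \sqrt{q_0/m}$, noise-to-signal ratio $f(m) = \sqrt{q_0/m}/\eps$, and sampling breakpoints $m_t = \Theta(q_0/(t^2 \eps^2))$. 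In particular $m_{1/8} = O(q_0/\eps^2)$, which matches the non-replicable coin-testing sample complexity.

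The second step is to invoke a slight variant of \cref{alg:size-invariant-estimator}: use $t_k = 1/8$ at every level $k$, but boost the inner median-of-means parameter uniformly to $L_k = \Theta(\log(K/\delta))$ with $K = \lceil \lg(1/\rho) \rceil$, rather than the finer $L_k = \Theta(K-k+1)$ schedule used in our sharper analysis. A standard median-of-means Hoeffding bound then yields $\Pr[\,|\hat V_k - z| \ge 2^{-k}\Delta\,] \le \delta/(2K)$ at every level, so correctness holds with probability $1-\delta$ by union-bounding over the $K$ levels and the initial median-of-means step in line 1 of the algorithm.

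For the sample complexity I would reuse the geometric bucketing argument from the proof of \cref{thm:size-invariant-estimator}. The probability $\Pr[A_{k-1}]$ of reaching level $k$ is $O(2^{-k})$, since $r$ is uniform on $[1/4,3/4]$ and the algorithm only survives to step $k$ when every prior estimate $\hat V_{k'}$ lies within $2^{-k'}\Delta$ of the random threshold. Each level costs $J_k L_k m_{1/8} = \Theta(2^{2k} \log(1/\delta)) \cdot O(q_0/\eps^2)$ samples, so
\[
\sum_{k=1}^{K} \Pr[A_{k-1}] \cdot J_k L_k m_{1/8}
\;=\; \sum_{k=1}^{K} O(2^{-k}) \cdot \Theta\!\left(2^{2k} \log(1/\delta)\right) \cdot O\!\left(\frac{q_0}{\eps^2}\right)
\;=\; O\!\left(\frac{q_0 \log(1/\delta)}{\eps^2 \rho}\right),
\]
dominated by the final level $k=K$. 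Replacing $\Pr[A_{k-1}]$ by $1$ gives the worst-case bound $O(q_0 \log(1/\delta)/(\eps^2 \rho^2))$.

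The main obstacle is verifying that the coarser uniform choice of $L_k$ still preserves $O(\rho)$-replicability. In the sharper analysis of \cref{thm:size-invariant-estimator}, the schedule $L_k = \Theta(K-k+1)$ is tuned precisely so that $\sum_k \Pr[A_{k-1}] \exp(-L_k/8) = O(\rho)$; with the strictly larger boost $L_k = \Theta(\log(K/\delta))$ the analogous sum evaluates to $\sum_k O(2^{-k}) \cdot \delta/(2K) = O(\delta/K) \le O(\rho)$, using the hypothesis $\delta \le \rho$. Thus the replicable-correctness bound of the framework applies at every level simultaneously, the algorithm is $O(\rho)$-replicable overall, and the theorem follows after rescaling $\rho$ by a constant.
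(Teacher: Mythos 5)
The statement you are proving is a result cited from \cite{hopkins2024replicability} (their Theorem 3.5); the paper does not give a proof of it, only a proof of its improvement \cref{thm:optimal-coin-testing} via \cref{cor:scale-invariant-simple}. So there is no paper proof to compare against directly, but it is worth comparing your derivation to the paper's own proof of the sharper result, which uses the identical framework.

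Your derivation is essentially a correct ``degraded'' instantiation of \cref{thm:size-invariant-estimator}, and it is illuminating as such. Both you and the paper cast coin testing as a size-invariant expectation-gap statistic with $m_{1/8} = O(q_0/\eps^2)$ and run \cref{alg:size-invariant-estimator} with $t_k = 1/8$ at every level. The single point of divergence is the inner median-of-means schedule: the paper takes the graduated $L_k = \Theta(K-k+1)$ (equivalently, it invokes \cref{cor:scale-invariant-simple} with $\delta$ replaced by $\exp(-1/\rho)$), which lets the $\log(1/\delta)$ appear \emph{additively} as $\ns_{1/8}\log(1/\delta)$ in Step 1 while the levels contribute only $O(\ns_{1/8}/\rho)$ in expectation. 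Your proposal instead inflates $L_k$ uniformly to $\Theta(\log(K/\delta))$, which forces the $\log(1/\delta)$ factor to \emph{multiply} the level-by-level cost and reproduces the older bound $O\!\left(\frac{q_0\log(1/\delta)}{\eps^2\rho}\right)$. The replicability accounting you give --- $\sum_k \Pr[A_{k-1}]\exp(-L_k/8) = O(\delta/K) \le O(\rho)$ --- is correct, as is the observation that $\Pr[A_{k-1}] = O(2^{-k})$ still holds with the larger $L_k$ (the dominant term in the paper's bound for $\Pr[A_k]$ is the final $2^{-k+3}$ summand, which is insensitive to $L_\ell$). One harmless redundancy: you union-bound correctness over all $K$ levels, whereas the paper's Step~1 alone already secures correctness under $H_0/H_1$ with probability $1-\delta$, since the statistic's expectation lies outside $[\tau_0 + \Delta/8, \tau_1 - \Delta/8]$ in those regimes and the median concentrates; the extra union bound does not hurt but is not needed. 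On balance this is a genuinely different route that exposes exactly where the paper's improvement comes from: the $(K-k+1)$ taper is what decouples $\rho$ from $\log(1/\delta)$, and a constant-per-level schedule cannot.
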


Lower bounds (with and without replicability) appear in \cite{hopkins2024replicability} and \cite{lee2021uncertaintycoins}.
\begin{theorem}[Theorem 3.7 of \cite{hopkins2024replicability}]\label{thm:coin-lb-1}
Consider any $p_0 < q_0 < \frac{1}{2}$ and $\delta \leq \rho \leq 1/16$. Any replicable coin testing algorithm with these parameters must use $\Omega\p{\frac{q_0}{\eps^2\rho}}$ samples in expectation and $\Omega\p{\frac{q_0}{\eps^2\rho^2}}$ in the worst-case.
\end{theorem}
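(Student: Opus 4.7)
The plan is to derive both bounds via the chaining framework of \cref{thm:LB_chaining}, supplemented by \cref{prop:markov-translate} for the in-expectation statement. The argument generalizes the proof of \cref{thm:coin_lb_via_chaining} for the unbiased case to arbitrary $p_0 < q_0 < 1/2$.

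For the worst-case lower bound, I would take the chain $p_i = p_0 + i\eps/t$ for $i \in \{0, 1, \ldots, t\}$ with $t = \floor{1/(300\rho)}$. Although the property ``$p = p_0$'' on $\{0,1\}$ is not symmetric under label swaps, the full machinery of \cref{thm:LB_chaining} is not needed here: \cref{lem:chaining_lemma} only requires the canonical form (available by \cref{lem:canonical}) together with $\rho$-replicability at each individual $\Ber(p_i)$, which is immediate from the definition of replicability. Thus the priors $\DD_i$ can be taken to be point masses on $\Ber(p_i)$, and the lemma applies directly with $\Omega = \{0,1\}^k$ and $Z_i = \Ber(p_i)^{\otimes k}$.

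Next I need to certify that consecutive product distributions have total variation distance at most $1/2$ whenever $k$ is below the desired threshold. By tensorization of KL divergence and Pinsker's inequality,
\begin{equation*}
    \dtv\p{\Ber(p_{i-1})^{\otimes k}, \Ber(p_i)^{\otimes k}}
    \leq \sqrt{\tfrac{k}{2}\,\mathrm{KL}\p{\Ber(p_{i-1}) \,\|\, \Ber(p_i)}},
\end{equation*}
and an elementary calculation mirroring the one in the proof of \cref{thm:coin_lb_via_chaining} gives $\mathrm{KL}(\Ber(p_{i-1}) \,\|\, \Ber(p_i)) = O((\eps/t)^2/q_0)$ when $p_i \in [p_0, q_0]$ with $q_0 < 1/2$, using $p_i(1-p_i) = \Theta(q_0)$ in this regime. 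Demanding the TV bound to stay below $1/2$ forces $k = O(q_0/(\eps\rho)^2)$, so \cref{lem:chaining_lemma} contradicts any correct $\rho$-replicable tester using fewer than $\Omega(q_0/(\eps^2\rho^2))$ samples in the worst case. The in-expectation bound then follows by applying \cref{prop:markov-translate} contrapositively: any $\rho$-replicable tester using $\ns$ samples in expectation yields an $O(\rho)$-replicable tester with deterministic sample complexity $\ns/\rho$, so combining with the worst-case bound gives $\ns = \Omega(q_0/(\eps^2\rho))$.

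The main obstacle is the corner regime $p_0 \ll q_0$, where the smallest $p_i$'s along the uniform chain invalidate $p_i(1-p_i) = \Theta(q_0)$ and a direct chaining only yields $\Omega(p_0/(\eps\rho)^2)$. Since $p_0 < q_0/2$ forces $\eps = \Theta(q_0)$, the target bound collapses to $\Omega(1/(q_0\rho^2))$, and to recover it one must either reanchor the chain to start at some $p_0' \asymp q_0/2$ (and handle the ``see a head'' subproblem that separates $\Ber(p_0)$ from $\Ber(p_0')$ by a direct $\Omega(1/(q_0\rho^2))$ argument via Le Cam's method plus replicability overhead) or use a non-uniform, possibly geometric, spacing that balances the per-step KL contributions. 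I expect this case analysis, rather than the chaining itself, to be where the proof requires the most care; the ``bulk'' regime $p_0 \geq q_0/2$ is a routine adaptation of \cref{thm:coin_lb_via_chaining}.
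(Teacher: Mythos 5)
The paper does not prove this statement: it is quoted verbatim as Theorem 3.7 of \cite{hopkins2024replicability} in the "prior work" paragraph of \cref{sec:coin-testing-upper-bound}, and the only coin lower bound the paper derives itself is the unbiased special case \cref{thm:coin_lb_via_chaining}. So there is no in-paper proof to compare your proposal against, but it is reasonable to judge it as an extension of the paper's own chaining machinery.

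Your reductions are all sound. Using \cref{lem:canonical} (which needs no symmetry hypothesis) to pass to canonical form, then invoking \cref{lem:chaining_lemma} with point-mass priors $Z_i = \Ber(p_i)^{\otimes k}$, is exactly the right way to bypass the symmetric-property assumption of \cref{thm:LB_chaining} for Bernoulli testing; the $\rho$-replicability needed at each $Z_i$ is immediate from \cref{def:replicability}. The Markov step via \cref{prop:markov-translate} to get the in-expectation bound from the worst-case bound is also correct, provided you account for the fact that the stopped algorithm is only $O(\rho)$-replicable and may output $\perp$, which only costs constants since the contradiction comes from the constant-probability accept/reject requirement in \cref{lem:chaining_lemma}.

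The one place where you stop short is the regime $p_0 \ll q_0$, and your diagnosis there is exactly right: the uniform chain $p_i = p_0 + i\eps/t$ has its tightest link near $p_0$, where $\mathrm{KL}(\Ber(p_{i-1})\,\|\,\Ber(p_i)) = \Theta\left((\eps/t)^2/p_i\right)$, so Pinsker only yields $k = O\left(p_0/(\eps^2\rho^2)\right)$, which is short of the target $\Omega\left(q_0/(\eps^2\rho^2)\right) = \Omega\left(1/(q_0\rho^2)\right)$ once $p_0 < q_0/2$. Of the two fixes you float, the reanchoring idea is the more delicate one, because an algorithm for the $p_0$-vs-$q_0$ problem is not obliged to accept $\Ber(p_0')$ for any intermediate $p_0'$, so you would need the decoupled argument you describe and then a stitching step. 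The cleaner route is the one you gesture at as "non-uniform spacing," which should be \emph{square-root} (variance-stabilizing) rather than geometric: set $\sqrt{a_i} = \sqrt{p_0} + \tfrac{i}{t}\left(\sqrt{q_0} - \sqrt{p_0}\right)$. Then $a_i - a_{i-1} \approx 2\sqrt{a_i}\,\tfrac{\sqrt{q_0}-\sqrt{p_0}}{t}$, so each per-step KL is $\Theta\left((a_i - a_{i-1})^2/a_i\right) = \Theta\left((\sqrt{q_0}-\sqrt{p_0})^2/t^2\right)$ uniformly along the chain, and since $(\sqrt{q_0}-\sqrt{p_0})^2 = \eps^2/(\sqrt{q_0}+\sqrt{p_0})^2 = \Theta(\eps^2/q_0)$ for $q_0<1/2$, Pinsker gives $k = O\left(q_0 t^2/\eps^2\right) = O\left(q_0/(\eps^2\rho^2)\right)$ for all $p_0 < q_0 < 1/2$ in one stroke, with no case split. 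So the gap you flagged is real, but it closes exactly as you conjectured; the remaining work is to carry out that calculation rather than to find a new idea.
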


\begin{theorem}[Direct Corollary of Theorem 1.3 of \cite{lee2021uncertaintycoins}]\label{thm:coin-lb-2}
Consider any $p_0 \in [0, 1/2)$ and $q_0 > p_0$ where $q_0 - p_0 \leq 1 - 2p_0$. Any (non-replicable) algorithm which solves the coin testing problem must use $\Omega\p{\frac{p_0 \log(1/\delta)}{\eps^2}}$ samples in expectation.
\end{theorem}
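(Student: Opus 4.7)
The plan is to invoke Theorem 1.3 of \cite{lee2021uncertaintycoins} essentially as a black box, after checking the correspondence between their setting and ours. That theorem gives a matching $\Omega(p_0 \log(1/\delta)/\eps^2)$ in-expectation lower bound for sequentially distinguishing $\Ber(p_0)$ from $\Ber(q_0)$ under a condition that the two biases and their complements are appropriately balanced. Our hypothesis $q_0 - p_0 \le 1 - 2p_0$ rewrites as $q_0 \le 1 - p_0$, which together with $p_0 \in [0,1/2)$ yields $1 - p_0 \ge 1/2 \ge p_0$, placing us squarely in the regime covered by their theorem. The in-expectation sample complexity bound then follows directly.

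For context, here is the underlying information-theoretic picture that makes clear why such a bound should hold. Let $P_i^{(n)} = \Ber(p_i)^{\otimes n}$ with $p_1 = q_0$. Any algorithm that succeeds with probability at least $1-\delta$ under both hypotheses defines, by post-processing, a $\{\accept,\reject\}$-valued test whose output laws under $P_0^{(n)}$ and $P_1^{(n)}$ are separated in total variation by at least $1 - 2\delta$. The data-processing inequality for KL divergence together with the chain rule across i.i.d.\ samples gives
\begin{equation*}
n \cdot KL\bigl(\Ber(p_0)\,\|\,\Ber(q_0)\bigr) \;\ge\; KL\bigl(\Ber(\delta)\,\|\,\Ber(1-\delta)\bigr) \;=\; \Omega(\log(1/\delta)).
\end{equation*}
A second-order Taylor expansion yields $KL(\Ber(p_0)\|\Ber(q_0)) = O(\eps^2/(p_0(1-p_0)))$, and the hypothesis $q_0 \le 1 - p_0$ forces $1 - p_0 \ge 1/2$, simplifying this to $O(\eps^2/p_0)$. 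Rearranging gives the worst-case sample bound $n = \Omega(p_0 \log(1/\delta)/\eps^2)$.

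The main obstacle in deriving this bound from scratch is upgrading the above worst-case statement to an in-expectation one: the data-processing calculation treats $n$ as deterministic, whereas the theorem permits adaptive stopping rules whose sample size is a random variable. The standard remedy is a Wald-style change-of-measure argument built on the likelihood-ratio martingale, which controls $\Ema[]{n}$ in place of $n$; this is exactly the route taken in \cite{lee2021uncertaintycoins}, so invoking their theorem lets us bypass this technical step entirely and obtain the stated in-expectation lower bound.
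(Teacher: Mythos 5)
Your approach is exactly the paper's: the result is labeled a direct corollary of Theorem 1.3 of Lee and Trauthwein, and the paper offers no further proof beyond the citation. Your verification that $q_0 - p_0 \le 1 - 2p_0 \iff q_0 \le 1 - p_0$, together with $p_0 < 1/2$, correctly places the parameters in the cited theorem's regime, and your supplementary KL/Wald sketch is an accurate account of why the in-expectation bound holds.
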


\paragraph{Our result:}
Our result improves upon the prior best upper bound and matches the lower bounds up to constant factors for the sample complexity in the worst-case and in expectation.
\footnote{Note that the non-replicable lower bound has $p_0$ in the numerator rather than $q_0 = p_0 + \eps$ which introduces an extra $\frac{\log(1/\delta)}{\eps}$ additive term in our upper bound. This is necessary even if $p_0 = 0$ as with $\ns = o\p{\frac{\log(1/\delta)}{\eps}}$ from $\Ber(\eps)$, no heads appear in the sample with probability greater than $\delta$ and thus it is impossible to distinguish from sampling from $\Ber(0)$.} 
\begin{theorem}\label{thm:optimal-coin-testing}
For any $0 \leq \delta \leq \rho \leq 1$, \cref{alg:size-invariant-estimator} solves replicable coin testing.
It is $\rho$-replicable, fails with probability at most $\min\{\delta, \exp(-1/\rho)\}$, and uses samples $O\p{\frac{q_0}{\eps^2 \rho} + \frac{q_0 \log(1/\delta)}{\eps^2}}$ in expectation and $O\p{\frac{q_0}{\eps^2 \rho^2} + \frac{q_0 \log(1/\delta)}{\eps^2}}$ in the worst-case.
\end{theorem}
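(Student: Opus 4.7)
The plan is to instantiate the size-invariant expectation-gap framework from \cref{cor:scale-invariant-simple} with a natural statistic for coin testing, verify its parameters, and then read off the claimed bounds.

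First, I would define the statistic $Z(\ns) = \frac{1}{\ns}\sum_{i=1}^{\ns} X_i$, the empirical mean of $\ns$ samples $X_1, \ldots, X_{\ns} \sim \Ber(p)$. This is size-invariant in the sense of \cref{def:sample-invariant}: $\Ema[]{Z(\ns)} = p$ regardless of $\ns$. Set $\tau_0 = p_0$ and $\tau_1 = q_0 = p_0 + \eps$, which gives $\Delta = \eps$ and is consistent with the null/alternative hypotheses of \cref{def:coin_testing}. Next, I would specify the variance bound $\sigma(\ns) = C\sqrt{q_0/\ns}$ for a suitable constant $C$.

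The heart of the verification is checking that $\sigma(\ns)$ is a valid variance upper bound as required in \cref{def:expectation-gap}. Since $\Varma[]{Z(\ns)} = p(1-p)/\ns$, whenever $p \in [p_0, q_0]$ we have $p(1-p) \leq q_0$, so $\sqrt{\Varma[]{Z(\ns)}} \leq \sqrt{q_0/\ns} \leq \sigma(\ns)$. For $p > q_0$, I would check that $\sqrt{p/q_0} \leq 1 + (p - q_0)/\eps$, which holds at $p = q_0$ and whose derivative inequality reduces to $\eps \leq 2\sqrt{p q_0}$; this follows from $q_0 \geq \eps$ and $p \geq q_0$. The case $p < p_0$ is symmetric. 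Given this, the sampling breakpoints are obtained by solving $f(\ns) = \sigma(\ns)/\Delta = C\sqrt{q_0/\ns}/\eps \leq t/2$, yielding $\ns_t = O\!\p{q_0/(t^2 \eps^2)}$ and in particular $\ns_{1/8} = O(q_0/\eps^2)$.

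The final step is to invoke \cref{cor:scale-invariant-simple} with the above parameters. It immediately returns a $\rho$-replicable algorithm with failure probability $\min\{\delta, \exp(-1/\rho)\}$ using $O\!\p{\ns_{1/8}/\rho + \ns_{1/8}\log(1/\delta)} = O\!\p{\frac{q_0}{\eps^2 \rho} + \frac{q_0 \log(1/\delta)}{\eps^2}}$ samples in expectation and $O\!\p{\ns_{1/8}/\rho^2 + \ns_{1/8}\log(1/\delta)} = O\!\p{\frac{q_0}{\eps^2 \rho^2} + \frac{q_0 \log(1/\delta)}{\eps^2}}$ in the worst case, matching the claim. The only nontrivial step is the variance verification outside $[\tau_0, \tau_1]$; replicability must hold for arbitrary $p \in [0,1]$, not just for $p$ consistent with the null or alternate hypothesis, so the ``graceful degradation'' form of the variance bound in \cref{def:expectation-gap} is exactly what allows the framework to apply here.
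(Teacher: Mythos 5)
Your proposal matches the paper's proof essentially step for step: same statistic (fraction of heads), same thresholds $\tau_0=p_0,\tau_1=q_0$, same variance bound $\sigma(\ns)=\Theta(\sqrt{q_0/\ns})$ with the graceful-degradation verification for $p>q_0$, and the same final appeal to \cref{cor:scale-invariant-simple}. The only cosmetic difference is that you verify $\sqrt{p/q_0}\le 1+(p-q_0)/\eps$ via a derivative comparison where the paper does it by a short algebraic chain, and you note the $p<p_0$ case as ``symmetric'' whereas it is in fact strictly easier (the raw variance is already below $q_0/\ns$ there, no degradation factor needed); both observations are correct and equivalent in substance.
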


\begin{proof}
Consider the size-invariant expectation-gap statistic statistic $Z(\ns)$ which is the fraction of samples which are heads.
As $\Ema[]{Z(\ns)} = p$, $\tau_0 = p_0$ and $\tau_1 = q_0$ are valid null and alternate thresholds, respectively.
Therefore, $\Delta = \eps \leq 1$.
The variance of the estimator is $\Varma[]{Z(\ns)} = p(1-p)/\ns \leq p/\ns$. Note that if $p > q_0$,
\begin{equation*}
    \sqrt{\Varma[]{Z(\ns)}}
    \leq \sqrt{\frac{p}{\ns}}
    = \sqrt{\frac{q_0}{\ns}\p{1 + \frac{p - q_0}{q_0}}}
    \leq \sqrt{\frac{q_0}{\ns}\p{1 + \frac{p - q_0}{\eps}}}
    \leq \sqrt{\frac{q_0}{\ns}}\p{1 + \frac{p - q_0}{\eps}}.
\end{equation*}
Therefore, $\sigma(\ns) = \sqrt{q_0/\ns}$ is a valid variance upper bound.
The resulting noise-to-signal ratio is $\frac{\sqrt{q_0}}{\eps\sqrt{\ns}}$, and the constant sampling breakpoint is achieved at $\ns_{1/8}=O(q_0/\eps^2)$.

Applying \cref{cor:scale-invariant-simple} with this statistic yields the result.
\end{proof}

\subsubsection{Uniformity Testing}\label{sec:unif-testing-upper-bound}

\paragraph{Prior work:}
The work of \cite{liu2024replicableuniformity} introduces the problem of replicable uniformity testing (see \cref{def:replicable_unif}) and gave the following upper bound.

\begin{theorem}[Theorem 1.3 of \cite{liu2024replicableuniformity}]
Consider $n \in \N$, $0 \leq \rho \leq 1$ and $\eps > 0$.
There exists an algorithm which solves $(n, \eps, \rho, \rho)$-replicable uniformity testing and takes
$$O\p{\frac{\sqrt{n}\log(1/\rho)\sqrt{\log(n/\rho)}}{\eps^2\rho} + \frac{\log(1/\rho)}{\eps^2\rho^2}}$$
samples in the worst-case.
\end{theorem}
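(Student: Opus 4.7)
The plan is to instantiate the expectation-gap framework of this paper with the classical collision statistic used in non-replicable uniformity testing, and then handle the small-$n$ regime by a separate reduction to coin testing. First, define the test statistic $Z(\ns) = \sum_{i<j} \mathbf{1}[X_i = X_j]$ on a sample $X_1,\dots,X_{\ns}$ from the unknown distribution $p$ on $[n]$. Then $\Ema{Z(\ns)} = \binom{\ns}{2}\|p\|_2^2$. Under the null (uniform) this equals $\binom{\ns}{2}/n$, while $\|p - \mathrm{Unif}\|_1 \geq \eps$ implies $\|p\|_2^2 \geq (1+\eps^2)/n$, so $\tau_0(\ns) = \binom{\ns}{2}/n$ and $\tau_1(\ns) = \binom{\ns}{2}(1+\eps^2)/n$ are valid expectation-gap thresholds, giving $\Delta(\ns) = \Theta(\ns^2 \eps^2 / n)$.

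Next, I would use the standard variance bound $\Varma{Z(\ns)} = O\bigl(\binom{\ns}{2}\|p\|_2^2 + \binom{\ns}{2}^2 \|p\|_3^3\bigr)$, which for distributions near uniform gives $\sigma(\ns) = O(\ns/\sqrt{n})$ in the relevant regime $\ns \lesssim n$. Plugging this into the general expectation-gap estimator of \cref{thm:general-estimator}, the per-round sample size required to drive the noise-to-signal ratio below $\rho$ solves $\ns/\sqrt{n} \lesssim \rho \cdot \ns^2 \eps^2 / n$, i.e.\ $\ns = \Theta(\sqrt{n}/(\rho \eps^2))$. The $O(\log(1/\rho))$ overhead then comes from the median-of-means confidence amplification, and the extra $\sqrt{\log(n/\rho)}$ factor should arise from a Bernstein-type refinement of Chebyshev that uses a high-probability bound of $O(\log(n/\rho)/\log\log(n/\rho))$ on the maximum bin occupancy together with the resulting bounded-difference/conditional-variance estimates for $Z(\ns)$; this sharpens the per-level deviation tail so that, across the geometric scales indexed by the distance $|\tau_0 + r\Delta - \Ema{Z(\ns)}|$, the resulting sum remains $O(\rho)$.

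For the second additive term $\log(1/\rho)/(\eps^2 \rho^2)$, which dominates when $n$ is too small for the collision tester to do any work, I would fall back on a reduction to coin testing. Since $p$ must have some element $i$ with either $p(i) \geq (1+\Omega(\eps))/n$ or $p(i) \leq (1-\Omega(\eps))/n$ whenever $\|p-\mathrm{Unif}\|_1 \geq \eps$ (e.g.\ by pigeonhole on the positive/negative mass), testing $p(1) = 1/n$ versus $p(1) = 1/n \pm \Omega(\eps/n)$ is a $(1/n, \Theta(\eps/n))$ coin problem. The replicable coin-testing bound (\cref{thm:optimal-coin-testing}, or the earlier coin bound at this paper's level of generality) then yields the $\log(1/\rho)/(\eps^2\rho^2)$ samples; the algorithm simply returns whichever of the two strategies has the smaller complexity for the given $(n,\eps,\rho)$.

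The main obstacle is the $\sqrt{\log(n/\rho)}$ factor in the first term, which the naive Chebyshev analysis embedded in \cref{thm:general-estimator} does not produce. Controlling it requires either a tail bound for the collision statistic that is tighter than second moment (Bernstein or bounded-difference conditioned on a high-probability event that no bin receives too many samples), or a direct balls-and-bins argument bounding the maximum load and inserting it into the variance. A secondary, more technical issue is that the collision statistic is not size-invariant in the sense of \cref{def:sample-invariant}, so one cannot directly invoke \cref{thm:size-invariant-estimator}; the expected-samples improvements from that theorem are unavailable, and one must proceed through \cref{thm:general-estimator} with a worst-case budget.
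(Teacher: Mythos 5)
The statement you are asked to prove is cited from Liu and Ye~\cite{liu2024replicableuniformity} as prior work; this paper does not prove it. What the paper does prove, using exactly the ingredients you reach for (collision statistic $+$ expectation-gap framework), is \cref{thm:uniformity-testing}, which is a \emph{different} bound: $O\p{\frac{\sqrt{n}\log(1/\delta)}{\eps^2} + \frac{\sqrt{n}}{\eps^2\rho} + \frac{\sqrt{n}}{\eps\rho^2} + \frac{1}{\eps^2\rho^2}}$. In particular, the framework in this paper does \emph{not} produce the $\log(1/\rho)$ multiplicative overhead or the $\sqrt{\log(n/\rho)}$ factor; it trades those away for a $\frac{\sqrt n}{\eps\rho^2}$ additive term. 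Your proposal therefore lands on a bound that is neither the statement nor the paper's theorem, and the one piece you would need to make it match Liu--Ye --- the $\sqrt{\log(n/\rho)}$ factor, which in their analysis comes from a bounded-differences/Lipschitz-continuity argument on the acceptance probability of the collision tester --- is left entirely as a conjecture. You flag this yourself in the last paragraph, which is honest, but it means the first term is not actually derived.

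The second term is also not correct as argued. From $\|p - \mathrm{Unif}\|_1 \geq \eps$ you can conclude by averaging that \emph{some} $i$ has $|p(i) - 1/n| \geq \eps/n$, but you do not know which one, so the proposed test ``is $p(1) = 1/n$ or $p(1) = 1/n \pm \Omega(\eps/n)$'' can pass with flying colors even when $p$ is very far from uniform (the deviation may live entirely on other coordinates). A coin-testing \emph{lower} bound applies because the case $n=2$ \emph{is} coin testing, but the upper bound for general $n$ cannot be obtained by singling out one coordinate. In the paper's proof of \cref{thm:uniformity-testing}, the $\frac{1}{\eps^2\rho^2}$ term simply falls out of the sampling-breakpoint calculation $\ns_t = O\p{\frac{\sqrt n}{\eps^2 t} + \frac{\sqrt n}{\eps t^2} + \frac{1}{\eps^2 t^2}}$ for the collision statistic (via the \cite{diakonikolas2019collisiontesters} variance bound), without any separate reduction. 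Your other observation --- that the collision statistic is size-invariant in the sense of \cref{def:sample-invariant} (indeed the paper normalizes it by $\binom{\ns}{2}$ and applies \cref{thm:size-invariant-estimator}) --- is actually the opposite of what you claimed: the paper's route is \cref{alg:size-invariant-estimator}, not \cref{alg:general-estimator}.
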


\paragraph{Our result:}
We give the first in-expectation and with high probability sampling bounds for replicable uniformity testing as a direct application of our framework.
These bounds significantly improve the $\rho$ dependence from prior work and show that \textbf{replicability can be achieved for free} if $\rho \gg \frac{\eps}{\log(1/\delta)}$ and $n$ is sufficiently large.
The worst-case sample complexity of our algorithm improves upon log factors over the prior work in some regimes though includes a $\frac{\sqrt{n}}{\eps \rho^2}$ term which does not appear in the preceeding theorem.
\begin{theorem}\label{thm:uniformity-testing}
Consider $n \in \N$, $0 \leq \delta \leq \rho \leq 1$ and $\eps > 0$.
\cref{alg:size-invariant-estimator} solves $(n, \eps, \rho, \delta)$-replicable uniformity testing, taking
$$
O\p{\frac{\sqrt{n}\log(1/\delta)}{\eps^2} + \frac{\sqrt{n}}{\eps \rho} + \frac{1}{\eps^2 \rho}}
$$
samples in expectation and
$$
O\p{\frac{\sqrt{n}\log(1/\delta)}{\eps^2}
+ \frac{\sqrt{n}}{\eps^2 \rho} +\frac{\sqrt{n}}{\eps \rho^2} + \frac{1}{\eps^2 \rho^2}}
$$
samples in the worst-case.
\end{theorem}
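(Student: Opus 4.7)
\textbf{Proof Proposal for \cref{thm:uniformity-testing}.} The plan is to instantiate the size-invariant expectation-gap framework of \cref{thm:size-invariant-estimator} with the classical collision-based uniformity test of \cite{diakonikolas2019collisiontesters}. Concretely, I will take the test statistic to be the normalized collision count $Z(m) = \binom{m}{2}^{-1} \sum_{i < j} \mathbf{1}[X_i = X_j]$, which is a size-invariant statistic since $\Ema{Z(m)} = \|p\|_2^2$ is independent of $m$. The null threshold is $\tau_0 = 1/n$ (achieved when $p$ is uniform) and the alternate threshold is $\tau_1 = (1+\eps^2)/n$ (since any $p$ that is $\eps$-far from uniform in $\ell_1$ satisfies $\|p\|_2^2 \geq (1+\eps^2)/n$ by a standard Cauchy–Schwarz argument), so the threshold gap is $\Delta = \eps^2/n$.

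The next step is to supply the variance bound required by \cref{def:expectation-gap}. Using the standard second-moment analysis of the collision count (folklore; see \cite{diakonikolas2019collisiontesters, liu2024replicableuniformity}), one obtains $\Varma{Z(m)} = O(\|p\|_2^2/m^2 + (\|p\|_3^3 - \|p\|_2^4)/m)$. For $p$ inside the window $\|p\|_2^2 \in [\tau_0, \tau_1]$, one can show that $\|p\|_3^3 - \|p\|_2^4 = O(\eps^2/n^2)$, leading to a variance bound of the form $\sigma(m)^2 = O(1/(m^2 n) + \eps^2/(m n^2))$, or equivalently $\sigma(m) = O(1/(m\sqrt{n}) + \eps/(n\sqrt{m}))$. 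The growth of $\Varma{Z(m)}$ outside of the window (required by \cref{def:expectation-gap}) follows from the fact that both $\|p\|_2^2$ and $\|p\|_3^3 - \|p\|_2^4$ scale smoothly with $\|p\|_2^2 - 1/n$. This yields the sampling breakpoints $m_t = O(\sqrt{n}/(t\eps^2) + 1/(t^2 \eps^2))$ by inverting $\sigma(m)/\Delta \leq t/2$.

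With the breakpoints in hand, I will plug into \cref{thm:size-invariant-estimator} and optimize the schedule $t_1,\ldots,t_K$ with $K = \lceil \lg(1/\rho) \rceil$. The first term of $m_t$ prefers $t_k$ large (to avoid blowup in $1/t_k$), while the second prefers $t_k$ small (to keep $t_k^2 m_{t_k}$ bounded), so the right balance is an intermediate schedule. Using $t_k \asymp \max(2^{-k}, \sqrt{\eps})$ (or similar), the per-level contribution $2^k t_k^2 m_{t_k}$ becomes $O(\sqrt{n}/\eps + 2^k/\eps^2)$, summing to $O(\sqrt{n}/(\eps \rho) + 1/(\eps^2 \rho))$ in expectation; the analogous worst-case sum with $2^{2k}$ weights yields the extra $\sqrt{n}/(\eps \rho^2)$ and $\sqrt{n}/(\eps^2 \rho)$ terms claimed. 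The $\sqrt{n}\log(1/\delta)/\eps^2$ term comes directly from the high-probability correctness step of \cref{alg:size-invariant-estimator}, which takes $O(m_{1/8} \log(1/\delta)) = O(\sqrt{n}\log(1/\delta)/\eps^2)$ samples.

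The main obstacle will be choosing the schedule $\{t_k\}$ carefully enough so that each of the target terms in the worst-case and in-expectation bounds emerges exactly from the summations in \cref{thm:size-invariant-estimator}; in particular, getting the $\sqrt{n}/(\eps\rho)$ term (with $\eps$ only to the first power) rather than the naive $\sqrt{n}/(\eps^2 \rho)$ requires leveraging the $\eps$-dependence of the second term in $m_t$ coming from the refined $\|p\|_3^3 - \|p\|_2^4 = O(\eps^2/n^2)$ bound and trading off the two regimes. The rest is bookkeeping: the correctness, replicability, and sampling guarantees of the algorithm then follow directly from \cref{thm:size-invariant-estimator}.
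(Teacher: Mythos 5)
Your high-level plan is the same as the paper's: instantiate \cref{thm:size-invariant-estimator} with the normalized collision statistic, $\tau_0 = 1/n$, $\tau_1 = (1+\eps^2)/n$, and the variance analysis from~\cite{diakonikolas2019collisiontesters}. However, there is a genuine error in your variance bound that propagates through the rest of the argument.

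You claim that for $p$ with $\|p\|_2^2 \in [\tau_0,\tau_1]$ one has $\|p\|_3^3 - \|p\|_2^4 = O(\eps^2/n^2)$, giving $\sigma(m)^2 = O\bigl(1/(m^2 n) + \eps^2/(m n^2)\bigr)$. This is false. Writing $p_i = 1/n + x_i$ with $\sum_i x_i = 0$ and $\sum_i x_i^2 = \alpha/n$ where $\alpha \leq \eps^2$, one computes $\|p\|_3^3 - \|p\|_2^4 = (\alpha - \alpha^2)/n^2 + \sum_i x_i^3$, and the cubic term can be as large as $\alpha^{3/2}/n^{3/2}$ (e.g.\ for a distribution concentrating the excess $\ell_2$ mass on a single coordinate, $x_{i_0} \approx \sqrt{\alpha/n}$). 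For $n \gg 1/\eps^2$ this dominates $\eps^2/n^2$, so the missing $\alpha^{3/2}/n^{3/2}$ term cannot be dropped. This is precisely the third term in Lemma~7 of~\cite{diakonikolas2019collisiontesters} as quoted in \cref{lem:uniformity-variance}, and it produces the $\eps^{3/2}/(n^{3/4}\sqrt{m})$ piece of $\sigma(m)$, the $n^{1/4}/(\eps^{1/2}\sqrt m)$ piece of $f(m)$, and hence the $\sqrt{n}/(\eps t^2)$ term in the breakpoint $m_t$. Your $m_t = O(\sqrt n/(t\eps^2) + 1/(t^2\eps^2))$ is therefore too optimistic.

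The downstream consequence is that your schedule argument is built on a wrong foundation, and in fact it is internally inconsistent even granting your $m_t$: with $t_k = \max(2^{-k},\sqrt\eps)$ you compute $2^k t_k^2 m_{t_k} = 2^k t_k\sqrt n/\eps^2 + 2^k/\eps^2$, and $2^k t_k \geq 1$ for every $k$ in the $t_k = 2^{-k}$ regime (and $2^k t_k = 2^k\sqrt\eps \gg \eps$ in the other), so the first summand is never $O(\sqrt n/\eps)$ as claimed. There is no need for a cleverly optimized schedule: with the correct breakpoint $m_t = O\bigl(\sqrt n/(t\eps^2) + \sqrt n/(t^2\eps) + 1/(t^2\eps^2)\bigr)$ and the plain schedule $t_k = 2^{-k}$, the $\sqrt n/(\eps\rho)$ (in expectation) and $\sqrt n/(\eps\rho^2)$ (worst case) terms fall out directly from the $\sqrt n/(t^2\eps)$ part of $m_t$, and the remaining bookkeeping goes through exactly as in the paper's proof.
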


Key to our result will be the well-known collision tester for uniformity testing where $Z(\ns)$ is the number of collisions among the sampled elements divided by $\binom{\ns}{2}$. We will make use of the tight analysis of this statistic given by \cite{diakonikolas2019collisiontesters}.

\begin{lemma}[From the proof of Lemma 7 in \cite{diakonikolas2019collisiontesters}]
Consider a distribution $p$ over $[n]$ where $\|p\|_2^2 = (1+\alpha)/n$ with $\alpha > 0$.
Then, there exists a universal constant $C$ such that:
\begin{equation*}\label{lem:uniformity-variance}
    \Varma[]{Z(\ns)}
    \leq C\p{\frac{1}{\ns^2} \p{\frac{1+\alpha}{n}} + \frac{1}{\ns} \p{\frac{\alpha}{n^2} + \frac{\alpha^{3/2}}{n^{3/2}}}}.
\end{equation*}
\end{lemma}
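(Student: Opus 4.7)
The plan is to compute $\Varma{Z(\ns)}$ by expanding $Z(\ns) = \binom{\ns}{2}^{-1}\sum_{i<j} Y_{ij}$ with $Y_{ij} = \mathbf{1}[X_i=X_j]$ and decomposing the resulting sum of covariances according to how the two index pairs $\{i,j\}$ and $\{k,l\}$ overlap. Coincident pairs contribute $\binom{\ns}{2}\cdot \Varma{Y_{ij}} \leq \binom{\ns}{2}\|p\|_2^2$; disjoint pairs contribute zero by independence; and pairs sharing a single index---of which there are $O(\ns^3)$ ordered configurations---each contribute $\sum_x p(x)^3 - \|p\|_2^4 = \|p\|_3^3 - \|p\|_2^4$, which is nonnegative by Jensen's inequality applied to $p$. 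After dividing by $\binom{\ns}{2}^2 = \Theta(\ns^4)$, this yields
\[
\Varma{Z(\ns)} \;=\; O\!\left(\frac{\|p\|_2^2}{\ns^2} + \frac{\|p\|_3^3 - \|p\|_2^4}{\ns}\right),
\]
so substituting $\|p\|_2^2 = (1+\alpha)/n$ recovers the first summand of the lemma immediately.

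For the second term I would decompose $p(x) = 1/n + q(x)$, which forces $\sum_x q(x) = 0$ and $\|q\|_2^2 = \|p\|_2^2 - 1/n = \alpha/n$. Direct expansion then gives
\[
\|p\|_3^3 = \frac{1+3\alpha}{n^2} + \sum_x q(x)^3, \qquad \|p\|_2^4 = \frac{(1+\alpha)^2}{n^2},
\]
so that $\|p\|_3^3 - \|p\|_2^4 \leq \alpha/n^2 + \|q\|_3^3$. The only genuinely nontrivial step is bounding $\|q\|_3^3$. Using monotonicity of $\ell^p$ norms under counting measure (i.e., $\|v\|_p$ is non-increasing in $p$ for $p\geq 1$ on a fixed finite-dimensional vector), I get $\|q\|_3 \leq \|q\|_2 = \sqrt{\alpha/n}$, hence $\|q\|_3^3 \leq \alpha^{3/2}/n^{3/2}$. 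Dividing by $\ns$ gives $O(\alpha/(\ns n^2) + \alpha^{3/2}/(\ns n^{3/2}))$, which combines with the coincident-pair contribution $O((1+\alpha)/(\ns^2 n))$ to produce exactly the bound stated in the lemma.

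The main obstacle is the sharp bound on $\|q\|_3^3$: a crude $\|q\|_\infty$-based argument produces only $\|q\|_\infty\cdot\|q\|_2^2 \leq \alpha/n$, which is too weak by a factor of $\sqrt{n/\alpha}$; invoking $\|q\|_3 \leq \|q\|_2$ directly is what yields the advertised $\alpha^{3/2}/n^{3/2}$ rate. The remainder of the proof is purely bookkeeping: correctly counting the ordered pair-of-pair configurations in the covariance decomposition and tracking the $\binom{\ns}{2}^{-2}$ normalization.
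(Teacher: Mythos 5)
Your proof is correct. The paper states this lemma as a citation to the proof of Lemma 7 in Diakonikolas--Gouleakis--Peebles--Price and does not reprove it, and your argument reproduces the essential computation from that source: the covariance decomposition by pair overlap (coincident, one shared, disjoint), the Jensen/Cauchy--Schwarz step giving $\|p\|_3^3 \geq \|p\|_2^4$ so the cross-term is a genuine nonnegative quantity to upper-bound, the shift $p = \frac{1}{n}\mathbf{1} + q$, and the $\ell^p$ monotonicity bound $\|q\|_3 \leq \|q\|_2$ (which is the step that produces the $\alpha^{3/2}/n^{3/2}$ rate rather than the weaker $\alpha/n$). The only minor notational slip is writing $\sum_x q(x)^3$ as $\|q\|_3^3$: since $q$ takes negative values these are not equal, and the correct statement is $\sum_x q(x)^3 \leq \sum_x |q(x)|^3 = \|q\|_3^3$; you clearly intend this as an upper bound when you write the next inequality, so the argument goes through, but it is worth being explicit.
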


\begin{proof}[Proof of \cref{thm:uniformity-testing}]
Consider the size-invariant expectation-gap statistic $Z(\ns)$ which is the number of collisions among the sampled elements divided by $\binom{\ns}{2}$.
Standard analysis (e.g. see \cite{canonne2020survey}) shows $z = \Ema[]{Z(\ns)} = \|p\|_2^2$.
Furthermore, if $p=\Unif([n])$, $\|p\|_2^2 = 1/n$, and if $\|p, \Unif[n]\|_1 \geq \eps$, $\|p\|_2^2 \geq (1 + \eps^2)/n$.
Therefore, setting $\tau_0 = 1/n$ and $\tau_1 = (1+ \eps^2)/n$ are valid choices for the null and alternate thresholds, respectively.
Then, $\Delta = \eps^2/n$.

Let $\alpha$ be such that $z = (1 + \alpha)/n$.
To choose a valid variance upper bound $\sigma(\ns)$, it must satisfy the condition that
\begin{equation*}
    \sqrt{\Varma[]{Z(\ns)}}
    \leq \sigma(\ns)\p{1 + \max\bc{0,\frac{z - \tau_1}{\Delta}}}
    = \sigma(\ns)\p{1 + \max\bc{0, \frac{(\alpha - \eps^2)/n}{\eps^2/n}}}
    = \sigma(\ns)\max\bc{1, \frac{\alpha}{\eps^2}}.
\end{equation*}

Using the variance analysis from \cref{lem:uniformity-variance} which is monotonically increasing in $\alpha$, it suffices to choose $\sigma(\ns)$ such that, for any $\alpha \geq \eps^2$,
\begin{align*}
    &\frac{\sigma(\ns) \alpha}{\eps^2}
    \geq \sqrt{C\p{\frac{1}{\ns^2} \p{\frac{1+\alpha}{n}} + \frac{1}{\ns} \p{\frac{\alpha}{n^2} + \frac{\alpha^{3/2}}{n^{3/2}}}}} \\
    \Longleftarrow \qquad & \sigma(\ns)
    \geq \frac{C_1 \eps^2}{\alpha m \sqrt{n}} + \frac{C_2}{\sqrt{\ns}}\p{\frac{\eps^2}{\alpha^{1/2} n} + \frac{\eps^2}{\alpha^{1/4} n^{3/4}}} \\
    \Longleftarrow \qquad & \sigma(\ns)
    \geq \frac{C_1}{\ns \sqrt{n}} + \frac{C_2}{\sqrt{\ns}}\p{\frac{\eps}{n} + \frac{\eps^{3/2}}{n^{3/4}}} \tag{as $\alpha \geq \eps^2$}. 
\end{align*}
We will choose this final expression as our variance upper bound $\sigma(\ns)$.

The resulting noise-to-signal ratio is
\begin{equation*}
    f(\ns) = \frac{\sigma(\ns)}{\Delta} 
    = \frac{C_1 \sqrt{n}}{\ns \eps^2} + \frac{C_2}{\sqrt{\ns}}\p{\frac{1}{\eps} + \frac{n^{1/4}}{\eps^{1/2}}}. 
\end{equation*}
To bound the sampling breakpoints $\ns_t$, we will proceed by cases on each of the two terms of $f(\ns)$.
For the first term, it must be the case that
\begin{equation*}
    \frac{C_1\sqrt{n}}{\ns_t\eps^2} \leq t/2
    \iff
    \ns_t \geq \frac{2C_1\sqrt{n}}{t\eps^2}.
\end{equation*}
For the second term, it must be that
\begin{equation*}
    \frac{C_2}{\sqrt{\ns}}\p{\frac{1}{\eps} + \frac{n^{1/4}}{\eps^{1/2}}} \leq t/2
    \iff \ns_t \geq \frac{4C_2^2}{t^2}\p{\frac{1}{\eps^2} + \frac{\sqrt{n}}{\eps}}.
\end{equation*}
Overall, the sampling breakpoint is
\begin{equation*}
   \ns_t = O\p{\frac{\sqrt{n}}{\eps^2t} + \frac{\sqrt{n}}{\eps t^2} + \frac{1}{\eps^2 t^2}}. 
\end{equation*}

Now that we have described the size-invariant expectation-gap statistic, it remains to choose breakpoints for our algorithm.
Consider running \cref{alg:size-invariant-estimator} with $t_k = 2^{-k}$.
The algorithm is $\rho$-replicable and is correct with probability $1-\delta$.
The expected number of samples taken are
\begin{align*}
    O&\p{\ns_{1/8} \log(1/\delta)
    + \sum_{k=1}^{\ceil{\lg(1/\rho)}} 2^{k} (K-k+1) t_k^2 \ns_{t_k}}
    = O\p{\frac{\sqrt{n}\log(1/\delta)}{\eps^2}
    + \sum_{k=1}^{\ceil{\lg(1/\rho)}} 2^{-k} (K-k+1) \ns_{t_k}} \\
    &= O\p{\frac{\sqrt{n}\log(1/\delta)}{\eps^2}
    + \sum_{k=1}^{\ceil{\lg(1/\rho)}} 2^{-k} (K-k+1) \p{\frac{\sqrt{n}2^k}{\eps^2} +\frac{\sqrt{n}2^{2k}}{\eps} + \frac{2^{2k}}{\eps^2}}} \\
    &= O\p{\frac{\sqrt{n}\log(1/\delta)}{\eps^2}
    + \sum_{k=1}^{\ceil{\lg(1/\rho)}} (K-k+1) \p{\frac{\sqrt{n}}{\eps^2} +\frac{\sqrt{n}2^{k}}{\eps} + \frac{2^{k}}{\eps^2}}} \\
    &= O\p{\frac{\sqrt{n}\log(1/\delta)}{\eps^2}
    + \frac{\sqrt{n} \log(1/\rho)}{\eps^2} +\frac{\sqrt{n}}{\eps \rho} + \frac{1}{\eps^2 \rho}}.
\end{align*}
Note that the second term in the summation is dominated by the first term as $\delta <= \rho$.
The worst-case number of samples is
\begin{align*}
    O&\p{\ns_{1/8} \log(1/\delta)
    + \sum_{k=1}^{\ceil{\lg(1/\rho)}} 2^{2k} (K-k+1) t_k^2 \ns_{t_k}}
    = O\p{\frac{\sqrt{n}\log(1/\delta)}{\eps^2}
    + \sum_{k=1}^{\ceil{\lg(1/\rho)}} (K-k+1) \ns_{t_k}} \\
    &= O\p{\frac{\sqrt{n}\log(1/\delta)}{\eps^2}
    + \sum_{k=1}^{\ceil{\lg(1/\rho)}} (K-k+1) \p{\frac{\sqrt{n}2^k}{\eps^2} +\frac{\sqrt{n}2^{2k}}{\eps} + \frac{2^{2k}}{\eps^2}}} \\
    &= O\p{\frac{\sqrt{n}\log(1/\delta)}{\eps^2}
    + \frac{\sqrt{n}}{\eps^2 \rho} +\frac{\sqrt{n}}{\eps \rho^2} + \frac{1}{\eps^2 \rho^2}}.
\end{align*}
\end{proof}

\subsubsection{Closeness Testing}\label{sec:closeness-testing-upper-bound}

We use the general estimator of \cref{thm:general-estimator} to get the first non-trivial bounds for replicable closeness testing. These bounds match our lower bound in \cref{thm:LB_close} up to constant or logarithmic factors depending on the parameter regime.

The statistic we will utilize is the $\chi^2$ style statistic used in prior work on optimal non-replicable closeness testing~\cite{chan2014optimal}.
Unfortunately, this statistic (and any normalization of it) is not size-invariant, so we will utilize the general expectation-gap estimator in \cref{alg:general-estimator} without in-expectation sampling bounds.

\begin{lemma}[\cite{chan2014optimal}]\label{lem:chan14} Given a set $T$ of Pois$(\ns)$ samples from the product distribution $p \times q$ over $[n]^2$, let $X_i, Y_i$ denote the number of occurrences of the $i$th domain elements in the samples from $p$ and $q$, respectively. Define 
\begin{equation}\label{eq:closeness_stat}
    Z = \sum_{i = 1}^n \frac{ (X_i - Y_i)^2 - X_i - Y_i }{X_i + Y_i}.
\end{equation}
We have
\begin{enumerate}
    \item $\Ema[]{Z} = \ns \sum_i \frac{(p_i-q_i)^2}{p_i + q_i} \left( 1 - \frac{1-e^{-\ns(p_i+q_i)}}{\ns(p_i + q_i)}\right)$,
    \item $\Ema[]{Z} \ge \frac{\ns^2}{4n + 2\ns} \|p-q\|_1^2$,
    \item If $p = q$ then $\Ema[]{Z} = 0$,
    \item $\Varma[]{Z} \le 2 \min(n, \ns) + 5m \sum_i  \frac{(p_i-q_i)^2}{p_i + q_i} \le 10(\min(n,\ns) + \ns)$,
    \item If $\ns \ge n,$ $\Varma[]{Z} \le 10(n + \Ema[]{Z})$.
\end{enumerate}
\end{lemma}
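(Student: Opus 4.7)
The plan is to prove each part by Poissonization and conditioning. Under Poissonization, $X_i \sim \Poi(m p_i)$ and $Y_i \sim \Poi(m q_i)$ are mutually independent across $i \in [n]$, so the summands
\[
Z_i := \frac{(X_i - Y_i)^2 - X_i - Y_i}{X_i + Y_i}
\]
(with the convention $0/0 := 0$) of $Z$ are independent, and in particular $\Varma[]{Z} = \sum_i \Varma[]{Z_i}$. I would then condition on $N_i := X_i + Y_i \sim \Poi(m(p_i + q_i))$, since given $N_i = k > 0$ we have $X_i \sim \mathrm{Bin}(k, p_i/(p_i + q_i))$ and all moments reduce to binomial moments.

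For part~1, a direct expansion of the second binomial moment yields
\[
\Ema[]{(2X_i - k)^2 \mid N_i = k} = \frac{k^2 (p_i - q_i)^2 + 4 k p_i q_i}{(p_i + q_i)^2},
\]
whence $\Ema[]{Z_i \mid N_i = k} = (k - 1)(p_i - q_i)^2/(p_i + q_i)^2$ for $k > 0$. Taking the expectation over $N_i$, using $\Ema[]{N_i - \mathbf{1}[N_i > 0]} = m(p_i + q_i) - (1 - e^{-m(p_i+q_i)})$, gives the formula stated in part~1, and part~3 follows immediately as the special case $p_i = q_i$. For part~2, I would combine the elementary inequality $1 - (1 - e^{-x})/x \ge x/(x + 2)$ for $x \ge 0$ (verified by rearranging to $2x/(x+2) \ge 1 - e^{-x}$, with matching values and first derivatives at $0$ and positive residual beyond) with Cauchy--Schwarz applied to $|p_i - q_i| = \sqrt{2 + m(p_i + q_i)} \cdot \bigl(|p_i - q_i|/\sqrt{2 + m(p_i + q_i)}\bigr)$, obtaining
\[
\Ema[]{Z} \ge \sum_{i=1}^n \frac{m^2 (p_i - q_i)^2}{2 + m(p_i + q_i)} \ge \frac{m^2 \|p - q\|_1^2}{2n + 2m} \ge \frac{m^2 \|p - q\|_1^2}{4n + 2m}.
\]

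The main obstacle is parts~4 and~5, which require bounding $\Varma[]{Z_i} = \Ema[]{Z_i^2} - \Ema[]{Z_i}^2$ using the fourth central moment of the binomial. My approach would be to compute $\Ema[]{(2X_i - k)^4 \mid N_i = k}$ via the standard binomial central-moment formula, subtract the square of the conditional mean computed above, take expectation over $N_i \sim \Poi(m(p_i + q_i))$, and split the resulting expression into a ``constant'' piece (yielding the $2\min(n, m)$ contribution through the residual binomial variance together with the $\Ema[]{\mathbf{1}[N_i > 0]/N_i}$ Poisson tail) and a ``proportional'' piece bounded by $5m \sum_i (p_i - q_i)^2/(p_i + q_i)$. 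For part~5, when $m \ge n$ the factor $1 - (1 - e^{-x})/x$ appearing in the formula for $\Ema[]{Z}$ is bounded below by a constant on indices with $m(p_i + q_i)$ bounded below by an appropriate constant, while the complementary indices contribute at most $O(n)$ to $m \sum_i (p_i - q_i)^2/(p_i + q_i)$, so the proportional term can be absorbed into $10(n + \Ema[]{Z})$. This variance bookkeeping is the technical heart of the argument and would closely follow the calculations in \cite{chan2014optimal}.
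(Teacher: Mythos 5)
The paper itself does not prove this lemma: it is imported wholesale from \cite{chan2014optimal}, so there is no in-paper derivation to compare against, and your proposal is effectively a reconstruction of that work.

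Your reconstruction of parts 1--3 is correct and uses exactly the decomposition that \cite{chan2014optimal} uses: Poissonize so the $Z_i$ are independent, condition on $N_i = X_i + Y_i \sim \Poi(m(p_i+q_i))$, and reduce to moments of $\Bin(k, p_i/(p_i+q_i))$. The conditional-mean identity $\Ema[]{Z_i \mid N_i = k} = (k-1)(p_i-q_i)^2/(p_i+q_i)^2$ checks out, and summing against the Poisson law via $\Ema[]{N_i} - \Prma[]{N_i \ge 1} = m(p_i+q_i) - (1-e^{-m(p_i+q_i)})$ gives part 1, with part 3 the trivial specialization. For part 2, your inequality $1 - (1-e^{-x})/x \ge x/(x+2)$ does hold — rearrange to $x - 2 + (x+2)e^{-x} \ge 0$, which vanishes at $0$ and has nonnegative derivative $1 - (x+1)e^{-x}$ — and the Cauchy--Schwarz split then delivers the (slightly stronger) denominator $2n+2m$ before you relax to $4n+2m$.

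The gap is parts 4 and 5. You outline the right route — compute the conditional fourth moment of $2X_i - N_i$, take the Poisson average, and split into a piece summing to $O(\min(n,m))$ and a piece proportional to $m\sum_i(p_i-q_i)^2/(p_i+q_i)$, then for $m \ge n$ compare the proportional piece against the formula for $\Ema[]{Z}$ — but you never carry out the computation. The stated constants ($2$, $5$, $10$) come precisely from that fourth-moment bookkeeping, and it is the genuinely nontrivial content of the lemma (parts 1--3 are routine by comparison). So as written the proposal is not a self-contained proof of the statement: it establishes parts 1--3, correctly identifies the argument for 4--5, but defers exactly the step where the work lives.
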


\begin{remark}\label{rem:poisson}
    To simplify the computations involved, much of the the analysis of distribution testing algorithms in the literature applies the standard ``Poissonlization" trick \cite{canonne2020survey, Canonne_book_22}. In particular, this means we draw a random number of samples from a Poisson distribution rather than a fixed number. This simplifies the calculation as the number of occurrences of each element become mutually independent. Furthermore, it is without loss of generality using the fact that Poisson distributions are highly concentrated (in all cases, the failure probability of not receiving a sample that is within a constant factor of $\Poi(\ns)$ for our choices of $\ns$ can be made to be an arbitrary large polynomial in $\rho$). The same trick was also applied in the replicable uniformity testing paper of \cite{liu2024replicableuniformity}.
\end{remark}

\begin{theorem}\label{thm:closeness-upperbound}
Consider $n \in \N$, $0 \leq \rho \leq 1$ and $\eps > 0$.
Let $C$ be a constant with $\delta = \rho^C$.
\cref{alg:general-estimator} solves $(n, \eps, \rho, \delta)$-replicable closeness testing and with worst-case sample complexity
\begin{equation*}
    O\p{\frac{n^{2/3}}{\eps^{4/3}\rho^{2/3}} + \frac{\sqrt{n}}{\eps^2 \rho} + \frac{1}{\eps^2 \rho^2}}.
\end{equation*}
\end{theorem}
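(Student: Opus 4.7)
The plan is to apply the general expectation-gap estimator of \cref{thm:general-estimator} to the $\chi^2$-style statistic $Z$ defined in \cref{eq:closeness_stat}, drawing $\Poi(\ns)$ samples from each of $p$ and $q$ (following the Poissonization trick of \cref{rem:poisson}) so that the per-element counts $X_i, Y_i$ are mutually independent. Items 2 and 3 of \cref{lem:chan14} immediately make $\tau_0(\ns) = 0$ a valid null upper bound and $\tau_1(\ns) = \ns^2 \eps^2/(4n + 2\ns)$ a valid alternate lower bound, so the threshold gap is $\Delta(\ns) = \tau_1(\ns)$.

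Selecting a valid variance upper bound $\sigma(\ns)$ requires some care, since \cref{def:expectation-gap} demands a bound that degrades smoothly as $\Ema[]{Z}$ leaves $[\tau_0, \tau_1]$. Item 4 of \cref{lem:chan14} with the inequality $\sum_i (p_i-q_i)^2/(p_i+q_i) \leq 2$ yields $\Varma[]{Z} \leq 12\ns$ whenever $\ns \leq n$, and item 5 gives the sharper bound $\Varma[]{Z} \leq 10(n + \Ema[]{Z})$ whenever $\ns \geq n$. I will set
\[
\sigma(\ns)^2 \;=\; 12\min(n,\ns) + 10\,\tau_1(\ns),
\]
which dominates both of these bounds whenever $\Ema[]{Z} \in [\tau_0, \tau_1]$; a short algebraic check then verifies the smooth-growth condition $\sqrt{\Varma[]{Z}} \leq \sigma(\ns)\bigl(1 + (\Ema[]{Z} - \tau_1)/\Delta\bigr)$ when $\Ema[]{Z} > \tau_1$, using that $\sigma(\ns)^2 \geq 10\,\tau_1(\ns)$.

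From these parameters, the sampling breakpoint $\ns_t$ of \cref{eq:sampling-breakpoint} splits naturally according to $\ns \leq n$ vs.\ $\ns \geq n$. In the first regime $\Delta(\ns) \asymp \ns^2 \eps^2/n$ and $\sigma(\ns)^2 \asymp \ns$; in the second $\Delta(\ns) \asymp \ns \eps^2$ and $\sigma(\ns)^2 \asymp n + \ns \eps^2$. Solving $f(\ns) = \sigma(\ns)/\Delta(\ns) \leq t/2$ in each sub-regime gives
\[
\ns_t \;=\; O\!\left(\frac{n^{2/3}}{t^{2/3}\eps^{4/3}} \;+\; \frac{\sqrt{n}}{t\,\eps^2} \;+\; \frac{1}{t^2\,\eps^2}\right).
\]
Plugging into the worst-case sample bound $O(\ns_t \cdot t^2/\rho^2)$ of \cref{thm:general-estimator} produces $O\bigl(n^{2/3} t^{4/3}/(\eps^{4/3}\rho^2) + \sqrt{n}\,t/(\eps^2 \rho^2) + 1/(\eps^2 \rho^2)\bigr)$; each summand is monotone nondecreasing in $t$ over $[\rho, 1/16]$, so the choice $t = \rho$ is optimal and yields exactly the bound in the theorem. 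The corresponding failure probability $t^{C' t^2/\rho^2} = \rho^{C'}$ realizes any desired constant $C$ by tuning the internal constant $C'$ of \cref{thm:general-estimator}.

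The main obstacle is the variance-condition verification: choosing $\sigma(\ns)$ so that it simultaneously (i)~upper bounds $\sqrt{\Varma[]{Z}}$ as a single function of $\ns$ uniformly in $p, q$, (ii)~satisfies the smooth-growth requirement of \cref{def:expectation-gap} when $\Ema[]{Z}$ exits $[\tau_0, \tau_1]$, and (iii)~retains the sharper, $\Ema[]{Z}$-dependent variance bound in the $\ns \geq n$ regime --- which is precisely what produces the middle $\sqrt{n}/(\eps^2\rho)$ term in the final bound rather than a looser $1/(\eps^4\rho^2)$. Everything after the parameter choice is bookkeeping given the prior analysis of the $\chi^2$ statistic in \cite{chan2014optimal}.
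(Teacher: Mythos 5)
Your proposal is correct and follows the same route as the paper's own proof: identify the $\chi^2$-style statistic of \cite{chan2014optimal} as an expectation-gap statistic with $\tau_0 = 0$, $\tau_1 = \Delta = \ns^2\eps^2/(4n+2\ns)$, exploit item 5 of \cref{lem:chan14} in the $\ns \geq n$ regime to get a variance bound that degrades with $\Ema[]{Z}$ (yielding the $\sqrt{n}/(\eps^2\rho)$ middle term rather than a weaker $1/(\eps^4\rho^2)$), split the breakpoint analysis on $\ns \lessgtr n$, and apply \cref{thm:general-estimator} at $t = \rho$. The only cosmetic difference is that you package the variance bound into a single closed formula $\sigma(\ns)^2 = 12\min(n,\ns) + 10\tau_1(\ns)$ while the paper writes it as an explicit case split, but the resulting $\ns_t$ and sample complexity are identical.
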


\begin{proof}
Consider the statistic $Z(\ns)$ defined in \cref{eq:closeness_stat}.
Using \cref{lem:chan14}, we will choose the rest of the parameters of the expectation-gap statistic.
Let $\tau_0(\ns) = 0$ and $\tau_1(\ns) = \frac{\ns^2 \eps^2}{4n + 2\ns}$ be the null and alternate thresholds.
Note that $\Delta(\ns) = \tau_1(\ns)$.
It remains to choose a variance bound $\sigma(\ns)$.
We will split into two cases depending on whether $\ns < n$ (note that this is a property of the input parameters to the algorithm).

\paragraph{Case 1:} $\ns < n$.
In this case, $\Varma[]{Z(\ns)} \leq 20 \ns$ by \cref{lem:chan14}, so it suffices to choose $\sigma(\ns) = \sqrt{20\ns}$.
The sampling breakpoints $\ns_t$ must satisfy $f(\ns_t) \leq t/2$.
Expanding this condition:
\begin{align*}
    f(\ns_t) \leq t/2
    &\iff \frac{\sqrt{20\ns_t}}{\frac{\ns_t^2 \eps^2}{4n + 2\ns_t}} \leq t/2 \\
    &\iff \frac{\sqrt{20}\p{4n + 2\ns_t}}{\ns_t^{3/2} \eps^2} \leq t/2 \\
    &\Longleftarrow \frac{30n}{\ns_t^{3/2} \eps^2} \leq t/2 \tag{as $\ns < n$}\\
    &\iff \ns_t^{3/2} \geq \frac{60n}{\eps^2t} \\
    &\iff \ns_t \geq \frac{16 n^{2/3}}{\eps^{4/3}t^{2/3}}.
\end{align*}

\paragraph{Case 2: $\ns \geq n$}
Let $\|p-q\|_1 = \alpha$.
In this case, $\Varma[]{Z(\ns)} \leq 10\p{n + \Ema[]{Z}}$ via \cref{lem:chan14}.
We must choose $\sigma(\ns)$ such that
\begin{align*}
    &\sqrt{\Varma[]{Z(\ns)}} \leq \sigma(\ns) \p{1 + \min\bc{0, \frac{\Ema[]{Z(\ns)} - \tau_1(\ns)}{\Delta(\ns)}}} \\
    \iff \qquad & \sqrt{\Varma[]{Z(\ns)}} \leq \sigma(\ns) \min\bc{1, \frac{\Ema[]{Z(\ns)}}{\Delta(\ns)}} \\
    \Longleftarrow \qquad & \sigma(\ns) \geq \frac{\sqrt{10\p{n + \Ema[]{Z(\ns)}}}}{\min\bc{1, \frac{\Ema[]{Z(\ns)}}{\Delta(\ns)}}}
\end{align*}
Note that the right hand side is maximized when $\Ema[]{Z(\ns)} \geq \Delta(\ns)$.
Recall from \cref{lem:chan14} that $\Ema[]{Z(\ns)} \geq \frac{\ns^2 \alpha^2}{4n + 2\ns}$.
Therefore, it suffices to choose $\sigma(\ns)$ with
\begin{align*}
    & \sigma(\ns)^2 \geq \frac{10\p{n + \Ema[]{Z(\ns)}}}{\Ema[]{Z(\ns)}^2/\Delta(\ns)^2}
    = \frac{10n \Delta(\ns)^2}{\Ema[]{Z(\ns)}^2} + \frac{10\Delta(\ns)^2}{\Ema[]{Z(\ns)}} \\
    \Longleftarrow \qquad & \sigma(\ns)^2 \geq 10n + 10\Delta(\ns)
    = 10 n + \frac{10\ns^2 \eps^2}{4n + 2\ns} \\
    \Longleftarrow \qquad & \sigma(\ns)^2 \geq 10n + 5\ns \eps^2.
\end{align*}
Therefore, a valid choice is $\sigma(\ns) = \sqrt{10n + 5\ns\eps^2}$.
We will proceed by cases depending on which of these two terms dominates when bounding the sampling breakpoint $\ns_t$.

Assume first that $\sigma(\ns_t) \leq \sqrt{20n}$.
\begin{align*}
    f(\ns_t) \leq t/2
    &\iff \frac{\sqrt{20n}}{\frac{\ns_t^2 \eps^2}{4n + 2\ns_t}} \leq t/2 \\
    &\iff \frac{\sqrt{20n}(4n + 2\ns_t)}{\ns_t^2 \eps^2} \leq t/2 \\
    &\Longleftarrow \frac{27 \sqrt{n}}{\ns_t \eps^2} \leq t/2 \tag{as $\ns \geq n$} \\
    &\iff \ns_t \geq \frac{54 \sqrt{n}}{\eps^2 t}.
\end{align*}

Now, assume that $\sigma(\ns_t) \leq \sqrt{10\ns_t \eps^2}$.
\begin{align*}
    f(\ns_t) \leq t/2
    &\iff \frac{\sqrt{10\ns_t \eps^2}}{\frac{\ns_t^2 \eps^2}{4n + 2\ns_t}} \leq t/2 \\
    &\iff \frac{\sqrt{10}(4n + 2\ns_t)}{\ns_t^{3/2} \eps} \leq t/2 \\
    &\Longleftarrow \frac{19}{\sqrt{\ns_t} \eps} \leq t/2 \tag{as $\ns \geq n$} \\
    &\iff \sqrt{\ns_t} \geq \frac{38}{\eps t} \\
    &\iff \ns_t \geq \frac{1444}{\eps^2 t^2}.
\end{align*}

Completing the case analysis, across all parameter settings, the breakpoint $\ns_t$ will be bounded by:
\begin{equation*}
    \ns_t = O\p{\frac{n^{2/3}}{\eps^{4/3}t^{2/3}} + \frac{\sqrt{n}}{\eps^2 t} + \frac{1}{\eps^2 t^2}}.
\end{equation*}
Applying \cref{thm:general-estimator} with $t = \rho$ completes the proof.
\end{proof}

\begin{remark}
We are not aware of a size-invariant statistic for closeness testing which gets optimal bounds in the non-replicable setting.
Therefore, we do not get improved sampling bounds in expectation for this problem: this is an interesting open question.
\end{remark}

\section{Gaussian Mean Testing}\label{sec:gaussian}

In this section, we extend our study to continuous distributions by proving upper and lower bounds for Gaussian mean testing. In both cases, we make use of our frameworks for lower and upper bounds developed in the prior sections, but more work is needed to optimize these tools for the Gaussian setting.

We recall our upper bound for replicable Gaussian mean testing.

\repgaussian*

\paragraph{Our Algorithm.}
We first present some intuition. At a high level, our algorithm works as follows. Since the distribution $\mathcal{D}$ can be arbitrary in $\R^n$, we must be careful in filtering out pathological distributions. For example, imagine a distribution that samples from a standard Gaussian with probability $1-\rho$, but with $\rho$-probability picks points far away in a manner such that simple estimators such as the sample mean are tricked into thinking the mean is very large. This is evidently not replicable, since our answer really hinges on an event with $\rho$-probability. 

Thus, we first filter out ``bad'' distributions that do not behave like a standard identity-covariance Gaussian distribution, such as distributions which large ``bias" in a certain direction. We capture this by filtering out distributions that have high probability of pairs of samples having large inner product (much larger than we expect for ``well-behaved" Gaussians). We then use a variation of the standard Gaussian identity test, which computes the norm of the sum of data points, and accepts if the norm is below some threshold. For replicability, we modify this test to accept with some probability that depends on the norm, in a manner similar to the canonical replicable tester in \Cref{sec:canonical-tester}.

Fix a parameter $L \ge 1$, that will be decided later. We start by making the following definition, which will be required to describe our algorithm in more detail. It helps us deal with the case of distributions that can sample points arbitrarily far away.  

\begin{definition}
    Given a distribution $\cD$ over $\R^d$, we define $\cD_{\proj}$ represent the projected distribution of $\cD$ onto $\cB$, the ball of radius $L \cdot \sqrt{d}$ around the origin. Formally, we sample $X_i \sim \cD$ and output $Y_i = \frac{X_i}{\max\left(1, \|X_i\|/(L \sqrt{d})\right)}$, meaning a point $X_i$ in $\cB$ is left as is and a point $X_i$ outside $\cB$ is projected to lie on the boundary of $\cB$.
\end{definition}

Based on this definition, we can assume that the distribution $\cD$ is contained in the ball of radius $L \sqrt{d}$ with probability $1$, by replacing $\cD$ with $\cD_{\proj}$ if necessary. Formally, given $n$ samples $X_1, \dots, X_n$, we perform the algorithm on $\{Y_i\}$ where $Y_i = \frac{X_i}{\max\left(\|X_i\|/(L \sqrt{d})\right)}$. Note that this preserves independence of the samples, so a replicable algorithm on $\{Y_i\}$ is still replicable on $\{X_i\}$. However, note that our desired goal is now slightly different. The null hypothesis is now $\cN(0, I)_{\proj}$, i.e., the projection of the standard Gaussian onto the ball of radius $L \sqrt{d}$, and the alternative hypothesis is $\cN(\mu, I)_{\proj}$ for any $\|\mu\| \ge \alpha$. Even with this simplification, we still need to deal with the case that the distribution we sample from can be biased along certain directions. 

Our algorithm as follows. For the sake of clarity, we break down our algorithm into three key primitives (denoted as Steps A, B, C below) and abstract away the contents of these steps to their own subsections (\cref{sec:gaussian_A}, \cref{sec:gaussian_B}, \cref{sec:gaussian_C} respectively). 

\begin{enumeratebox}[alg:gaussian-mean]{Gaussian Mean Tester}
\begin{enumerate}
    \item Let $\ns = \tilde{O}\left(\frac{\sqrt{d}}{\alpha^2 \rho} + \frac{\sqrt{d}}{\alpha \rho^2} + \frac{1}{\alpha^2 \rho^2}\right)$ be sufficiently large, and set thresholds $T_1, T_2$ and $S$ according to \cref{sec:gaussian_C}. We sample up to $5 \ns$ data points.
    \item \textbf{Step A:} Sample $\ns$ data points from $\cD$, and run the replicable algorithm from \cref{sec:gaussian_A} that rejects with high probability if $\|\Ema[X_i \sim \mathcal{D}]{X_iX_i^\top}\|_{op} \ge 5T_1$.
    \item \textbf{Step B:} Sample $2 \cdot \ns$ fresh data points. Given $2 \cdot \ns$ samples $X_1, \dots, X_{\ns}, Y_1, \dots, Y_{\ns} \sim \cD$, consider creating a bipartite graph between $X_1, \dots, X_{\ns}$ and $Y_1, \dots, Y_{\ns}$ that connects $X_i$ to $Y_j$ if and only if $|\langle X_i, Y_j \rangle| \ge S$. Let $f(\cD)$ be the maximum matching size in this bipartite graph.
    Then, run the replicable algorithm from \cref{sec:gaussian_B} that rejects with high probability if $f(\cD)$ exceeds $5T_2$.
    \item \textbf{Step C:} Assuming we have not rejected in Steps $A$ and $B$, draw $2 \ns$ fresh data points $X_1, \dots, X_{\ns}, Y_1, \dots, Y_{\ns}$\footnote{In a slight abuse of notation, we again call these points $X_1, \dots, X_{\ns}$.}, and compute the value $\langle \sum_{i=1}^{\ns} X_i,  \sum_{i=1}^{\ns} Y_i \rangle$. Use the replicable Expectation-Gap  \cref{alg:general-estimator} (detailed in \cref{sec:gaussian_C}) to reject if this value is too large, and accept if this value is small enough.
\end{enumerate}
\end{enumeratebox}

\subsection{Threshold Algorithm}
We note the following simple thresholding algorithm which will be used for Steps A and B. Its guarantees are very similar to that of our Expectation-Gap Estimator framework of \cref{def:expectation-gap} and \cref{thm:general-estimator} in \cref{sec:expectation_gap_tester}. However, we find it easier to work with the slightly modified guarantees that deal with high probability events rather than directly dealing with variances.

The threshold algorithm is as follows: Given a threshold parameter $T$, and a dataset $X = \{X_1, \dots, X_n\}$, suppose that $h: \XX^n \to \R_{\ge 0}$ is a positive-valued statistic (deterministic in $X$). We consider the following algorithm, that we call $\A_{h, T}$: compute $\gamma = \frac{3T - h(X)}{T}$, sample $r \sim \Unif([0, 1])$, and accept if and only if $\gamma \le r$. Note that if $h$ can be efficiently computed, then the algorithm $\A_{h, T}$ can be as well.

We have the following analysis of this basic threshold algorithm. Since this is a slight modification of the proof of \cref{thm:general-estimator}, its proof is presented in \cref{sec:appendix_gaussian}.

\begin{proposition} \label{prop:basic-thresholding}
    Fix a function $h: \XX^n \to \R$ and parameters $T \ge 0$ and $0 < \delta \le \rho \le 1$. Suppose that for any distribution $\cD$ over $\XX$ and i.i.d. samples $X = \{X_1, \dots, X_{\ns}\} \sim \cD$, there exists a value $q = q(\cD)$ (which may implicitly depend on $h$ and $T$) such that with probability $1-\delta$ over the randomness of $X$, $|h(X) - q| \le \rho \cdot \max(|q|, T)$. Then, the following claims hold:
\begin{itemize}
    \item $\A_{h, T}$ is $12 \rho$-replicable.
    \item If $q(\cD) \ge 5T$, then with probability at least $1-\delta$, the algorithm rejects.
    \item If $q(\cD) \le T$, then with probability at least $1-\delta$, the algorithm accepts.
\end{itemize}
\end{proposition}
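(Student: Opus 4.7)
The plan is to analyze the acceptance probability as a clipped affine function of $h(X)$ and then verify the three conclusions via concentration plus a case split on $q = q(\mathcal{D})$.

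First, for a fixed dataset $X$, the algorithm $\A_{h,T}$ accepts with probability (over $r$) exactly
\[
p(X) \;:=\; \min\bc{1,\ \max\bc{0,\ \alpha(X)}},\quad\text{where } \alpha(X) = (3T - h(X))/T,
\]
so $p(X) = 1$ whenever $h(X) \le 2T$, $p(X) = 0$ whenever $h(X) \ge 3T$, and the map $h \mapsto p(X)$ is $1/T$-Lipschitz. Write $G$ for the good event $|h(X) - q| \le \rho\max(|q|, T)$, which holds with probability at least $1-\delta$ by hypothesis. For correctness, if $q \ge 5T$ then under $G$ we have $h(X) \ge (1-\rho)q \ge 3T$ (assuming $\rho$ is below a small absolute constant; otherwise $12\rho \ge 1$ and the replicability conclusion is vacuous so there is nothing to prove), so $p(X) = 0$ and the algorithm rejects; symmetrically, if $q \le T$ then $h(X) \le q + \rho T \le 2T$, so $p(X) = 1$ and the algorithm accepts. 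Each conclusion then holds with probability at least $1-\delta$.

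For replicability, with two independent samples $X, X'$ and shared $r$,
\[
\Prma[X,X',r]{\A_{h,T}(X;r) \neq \A_{h,T}(X';r)}
\;=\; \Ema[X,X']{|p(X) - p(X')|}
\;\le\; \Ema[X,X']{\min\bc{1,\ |h(X)-h(X')|/T}},
\]
since the $r$-disagreement region has measure exactly $|p(X)-p(X')|$ and the clipping is $1$-Lipschitz. On $G \cap G'$ (probability at least $1-2\delta$ by union bound) we have $|h(X)-h(X')| \le 2\rho\max(|q|,T)$, and I will case-split on $|q|$: if $|q| \le T$ the deviation is $\le 2\rho T$, giving $|p(X)-p(X')| \le 2\rho$; if $T < |q| \le 4T$ the deviation is $\le 8\rho T$, giving $|p(X)-p(X')| \le 8\rho$; and if $|q| > 4T$ then under $G \cap G'$ both $h(X), h(X') \ge (1-\rho)|q| > 3T$, so $p(X)=p(X')=0$ and the contribution vanishes. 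Combining with the trivial $\le 1$ bound on the complementary event of mass $\le 2\delta$, the total replicability failure is at most $8\rho + 2\delta \le 10\rho \le 12\rho$, using $\delta \le \rho$.

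The obstacle I expect to need the most care is the large-$|q|$ regime: the raw Lipschitz bound $|h(X)-h(X')|/T$ scales like $|q|/T$ and cannot be $O(\rho)$ there, so the argument must exploit the clipping in $p$, which pins both values to $0$ whenever concentration holds and $(1-\rho)|q| > 3T$. Everything else is straightforward bookkeeping.
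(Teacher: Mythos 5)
Your proof takes essentially the same approach as the paper's: reduce replicability to bounding $\Ema[X,X']{|p(X)-p(X')|}$, exploit the $1/T$-Lipschitzness and clipping of the acceptance probability together with the concentration hypothesis, and case-split on the magnitude of $q$; the paper splits instead on $q \ge 5T$, $q \le 0$, and $0 < q < 5T$, but the underlying argument is the same (and your $8\rho + 2\delta$ is a hair tighter than the paper's $10\rho + 2\delta$).

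There is, however, a small slip in your $|q| > 4T$ case. You claim that under $G \cap G'$ both $h(X), h(X') \ge (1-\rho)|q| > 3T$ and hence $p(X) = p(X') = 0$. This is only correct when $q > 4T$. When $q < -4T$, the concentration event gives $h(X) \le q + \rho|q| = q(1-\rho) < 0 < 2T$, so in fact $p(X) = p(X') = 1$, not $0$; the claimed lower bound $h(X)\ge (1-\rho)|q|$ is simply false there. Your conclusion that $|p(X)-p(X')|$ vanishes in this case is still correct, but you must split into $q > 4T$ and $q < -4T$ (or, as the paper does, treat $q \ge 5T$ and $q \le 0$ as disjoint cases from the start). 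Separately, your justification for assuming $\rho$ small is not quite right: when $12\rho \ge 1$ the replicability bullet becomes vacuous, but the two accuracy bullets do not, and they genuinely require $\rho$ bounded away from $1$ (e.g.\ $5(1-\rho) \ge 3$). That said, the paper's own accuracy argument makes the same implicit small-$\rho$ assumption, so this is a shared imprecision in the proposition's statement rather than a flaw unique to your argument.
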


For Step C, we directly rely on \cref{thm:general-estimator}.

\subsection{Step A}\label{sec:gaussian_A}

First, we note the following basic consequence of the Matrix Chernoff bound~\cite{tropp2015introduction}.

\begin{lemma} \label{lem:matrix-chernoff-application-1}
    Fix any parameters $L \ge 1$, $\delta \le 1$, and let $\cD$ be a distribution over $\R^d$ such that each sample $X_i \sim \cD$ is bounded in $\ell_2$ norm by $L \sqrt{d}$ with probability $1$. Then, with probability at least $1-\delta$, the operator norm of the empirical covariance, $\left\|\frac{1}{\ns} \sum_{k=1}^{\ns} X_i X_i^\top\right\|_{op},$ is in the range $\left[\|\Sigma\|_{op} - H, \|\Sigma\|_{op} + H\right]$, where 
\[H = O\left(\max\left(\frac{d}{\ns} \cdot L^2 \cdot \log \frac{d}{\delta}, \sqrt{\frac{d}{\ns} \cdot \|\Sigma\|_{op} \cdot L^2 \cdot \log \frac{d}{\delta}}\right) \right).\]
\end{lemma}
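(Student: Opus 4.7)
The plan is to apply the Matrix Bernstein inequality (a standard consequence of the Matrix Chernoff framework) to the centered matrices $Y_i := X_i X_i^\top - \Sigma$, where $\Sigma = \Ema{X_iX_i^\top}$. Each $Y_i$ is symmetric, mean zero, and independent. Once we get a tail bound on $\|\sum_i Y_i\|_{op}$, the triangle inequality gives
\begin{equation*}
\left|\,\Bigl\|\tfrac{1}{\ns}\textstyle\sum_i X_iX_i^\top\Bigr\|_{op}-\|\Sigma\|_{op}\,\right|\;\le\;\Bigl\|\tfrac{1}{\ns}\textstyle\sum_i Y_i\Bigr\|_{op},
\end{equation*}
which is the quantity we want to bound by $H$.

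The two Matrix Bernstein parameters are (i) a uniform operator-norm bound on each $Y_i$ and (ii) the matrix variance proxy $\sigma^2 := \|\sum_i \Ema{Y_i^2}\|_{op}$. For (i), since $\|X_i\|_2\le L\sqrt{d}$ almost surely we have $\|X_iX_i^\top\|_{op}=\|X_i\|_2^2\le L^2 d$, and since $\|\Sigma\|_{op}\le L^2 d$ as well, $\|Y_i\|_{op}\le 2L^2 d$ almost surely. For (ii), I will expand
\begin{equation*}
\Ema{Y_i^2}\;=\;\Ema{X_iX_i^\top X_iX_i^\top}\;-\;\Sigma^2\;=\;\Ema{\|X_i\|_2^2\, X_iX_i^\top}\;-\;\Sigma^2,
\end{equation*}
and use the pointwise bound $\|X_i\|_2^2\le L^2 d$ together with monotonicity of the operator norm on PSD matrices to get $\|\Ema{Y_i^2}\|_{op}\le L^2 d\,\|\Sigma\|_{op}$. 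Summing and using the triangle inequality for the operator norm gives $\sigma^2 \le \ns L^2 d\,\|\Sigma\|_{op}$.

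Now the Matrix Bernstein inequality states that for any $t>0$,
\begin{equation*}
\Prma\left[\Bigl\|\textstyle\sum_i Y_i\Bigr\|_{op}\ge t\right]\;\le\;2d\,\exp\!\left(-\frac{t^2/2}{\sigma^2+(2L^2 d)t/3}\right).
\end{equation*}
Setting $t = \ns H$ and requiring the right-hand side to be at most $\delta$ gives the condition $(\ns H)^2 \gtrsim (\sigma^2 + L^2 d \cdot \ns H)\,\log(d/\delta)$. Splitting into the two regimes (variance-dominated vs.\ spike-dominated) solves for
\begin{equation*}
H \;=\; O\!\left(\max\!\left\{\sqrt{\tfrac{d}{\ns}\,L^2\|\Sigma\|_{op}\log\tfrac{d}{\delta}},\;\tfrac{d}{\ns}\,L^2\log\tfrac{d}{\delta}\right\}\right),
\end{equation*}
which matches the statement of the lemma.

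This proof is essentially a direct invocation of the black-box Matrix Bernstein inequality, so I do not expect a genuine obstacle. The only place that requires a moment of care is the variance computation: one must resist the temptation to bound $\|\Ema{Y_i^2}\|_{op}$ by the cruder $L^4 d^2$, which would yield the weaker dependence $L^2 d/\sqrt{\ns}$ instead of the desired $\sqrt{L^2 d\,\|\Sigma\|_{op}/\ns}$ in the variance-dominated regime. Using the sharper bound $L^2 d\,\|\Sigma\|_{op}$ — available precisely because one factor of $X_iX_i^\top$ gets replaced by its expectation $\Sigma$ when taking norms — is what produces the correct $\|\Sigma\|_{op}$ dependence.
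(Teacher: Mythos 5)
Your proof is correct and gives exactly the stated bound, but it takes a genuinely different route from the paper. The paper does not center the matrices or compute any matrix variance proxy. For the upper tail it applies the \emph{uncentered} Matrix Chernoff inequality for sums of PSD matrices, which needs only the almost-sure bound $\|X_iX_i^\top\|_{op}\le L^2d$ and the mean $\|\Sigma\|_{op}$, and yields $\Pr\bigl[\|\tfrac{1}{\ns}\sum X_iX_i^\top\|_{op}\ge(1+\eps)\|\Sigma\|_{op}\bigr]\le d\,e^{-\Omega(\min(\eps,\eps^2)\,\|\Sigma\|_{op}\,\ns/(L^2d))}$. For the lower tail it picks a fixed top eigenvector $v$ of $\Sigma$, notes $v^\top\bigl(\tfrac{1}{\ns}\sum X_iX_i^\top\bigr)v=\tfrac{1}{\ns}\sum\langle v,X_i\rangle^2$ is an average of scalar random variables with mean $\|\Sigma\|_{op}$ and range $[0,L^2d]$, and applies a one-dimensional Chernoff bound; since $\|\cdot\|_{op}$ dominates any Rayleigh quotient, this controls the lower deviation directly. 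Your argument instead centers to $Y_i=X_iX_i^\top-\Sigma$, computes the variance proxy $\|\sum\Ema{Y_i^2}\|_{op}\le\ns L^2 d\|\Sigma\|_{op}$ via $\Ema{Y_i^2}=\Ema{\|X_i\|^2X_iX_i^\top}-\Sigma^2\preceq L^2d\,\Sigma$, and applies Matrix Bernstein to get both tails at once. The two proofs encode the same quantitative insight in different places: the factor $\|\Sigma\|_{op}$ that makes the bound nontrivial appears in your variance proxy, and in the paper it appears in the multiplicative scaling built into the PSD Matrix Chernoff exponent. Your route is more unified (one black-box invocation, symmetric tails) at the cost of the matrix second-moment calculation; the paper's route avoids any second-moment computation but needs the separate scalar argument for the lower tail, which is somewhat ad hoc but elementary.
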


\begin{proof}
    Note that $\|X_i X_i^\top\|_{op} \le L^2 \cdot d$ for all $X_i$. Therefore, by Matrix Chernoff, for any $\eps > 0$, 
\begin{equation} \label{eq:matrix-chernoff-application-v1}
    \Prma[]{\left\|\frac{1}{\ns} \sum_{k=1}^{\ns} X_i X_i^\top\right\|_{op} \ge (1+\eps) \cdot \|\Sigma\|_{op}} \le d \cdot e^{-O(\min(\eps, \eps^2) \cdot \|\Sigma\|_{op} \cdot \ns/(L^2 d))}.
\end{equation}
    Next, let $v$ be a unit vector such that $v^\top \Sigma v = \|\Sigma\|_{op}$. If we consider $v^\top \left(\frac{1}{\ns} \sum_{k=1}^{\ns} X_i X_i^\top\right) v = \frac{1}{\ns} \sum_{k=1}^{\ns} \langle v, X_i \rangle^2,$ note that each $\langle v, X_i \rangle^2$ is an independent random variable with mean $\|\Sigma\|_{op}$ and is bounded by $L^2 d$ since we assume $\|X_i\| \le L \sqrt{d}$ with probability $1$. So, by a standard Chernoff bound, 
\begin{equation} \label{eq:chernoff-application-v2}
    \Prma[]{\left\|\frac{1}{\ns} \sum_{k=1}^{\ns} X_i X_i^\top\right\|_{op} \le (1-\eps) \cdot \|\Sigma\|_{op}} \le \Prma[]{\frac{1}{\ns} \langle v, X_i \rangle^2 \le (1-\eps) \cdot \|\Sigma\|_{op}} \le e^{-O(\min(\eps, \eps^2) \cdot \|\Sigma\|_{op} \cdot \ns/(L^2 d))}.
\end{equation}

    Hence, if we set $\eps$ to be a sufficiently large multiple of $\max\left(\frac{d}{\ns \cdot \|\Sigma\|_{op}} \cdot L^2 \cdot \log \frac{d}{\delta}, \sqrt{\frac{d}{\ns \cdot \|\Sigma\|_{op}} \cdot L^2 \cdot \log \frac{d}{\delta}}\right),$ both \eqref{eq:matrix-chernoff-application-v1} and \eqref{eq:chernoff-application-v2} are at most $\delta/2$. By writing $H = \eps \cdot \|\Sigma\|_{op},$ the lemma is complete.
\end{proof}

\begin{lemma} \label{lem:matrix-chernoff-application-2}
    Let $\delta \le \rho \le 0.01$. There exists a $O(\rho)$-replicable algorithm $\A_1$ with the following properties. Fix a parameter $L \ge 1$, and let $T_1$ be any parameter such that $T_1 \ge O\left(\frac{d}{\ns \cdot \rho^2} \cdot L^2 \cdot \log \frac{d}{\delta}\right)$. Then, for any distribution $\cD$ over $\R^d$ contained in the ball of radius $L \sqrt{d}$ around the origin:
\begin{itemize}
    \item if $\|\Ema[X_i \sim \cD]{X_iX_i^\top}\|_{op} \le T_1,$ the algorithm, given $\ns$ samples from $\cD$, accepts with probability at least $1-\delta$.
    \item if $\|\Ema[X_i \sim \cD]{X_iX_i^\top}\|_{op} \ge 5T_1,$ the algorithm, given $\ns$ samples from $\cD$, rejects with probability at least $1-\delta$.
\end{itemize}
\end{lemma}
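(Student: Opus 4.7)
The plan is to apply the general thresholding primitive in \cref{prop:basic-thresholding} with the statistic $h(X) = \left\|\frac{1}{\ns}\sum_{i=1}^{\ns} X_i X_i^\top\right\|_{op}$, the reference value $q(\cD) = \|\Sigma\|_{op}$ where $\Sigma = \Ema[X_i \sim \cD]{X_iX_i^\top}$, and threshold $T = T_1$. Then \cref{prop:basic-thresholding} will automatically deliver $O(\rho)$-replicability together with the two-sided accept/reject guarantees at gap ratio $5$, so the only thing that needs to be verified is the concentration hypothesis
\[
|h(X) - q(\cD)| \;\le\; \rho \cdot \max(|q(\cD)|, T_1) \qquad \text{with probability } 1-\delta.
\]

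First I would invoke \cref{lem:matrix-chernoff-application-1} (applied with failure probability $\delta$), which gives
\[
|h(X) - \|\Sigma\|_{op}| \;\le\; H \;=\; O\Bigl(\tfrac{d}{\ns} L^2 \log(d/\delta) \;+\; \sqrt{\tfrac{d}{\ns} \|\Sigma\|_{op} L^2 \log(d/\delta)}\Bigr)
\]
with probability $1-\delta$. It then remains to check that $H \le \rho \cdot \max(\|\Sigma\|_{op}, T_1)$ once $T_1 \ge C \cdot \frac{d L^2 \log(d/\delta)}{\rho^2 \ns}$ for a sufficiently large constant $C$. I would split into two cases based on whether $\|\Sigma\|_{op} \le T_1$ or $\|\Sigma\|_{op} \ge T_1$. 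In the first case, $\max(\|\Sigma\|_{op}, T_1) = T_1$, and both summands in $H$ are bounded by $\rho T_1$ using the lower bound on $T_1$ (the linear term directly, and the square-root term after using $\|\Sigma\|_{op} \le T_1$ to replace $\|\Sigma\|_{op}$ inside the square root and then solving a quadratic). In the second case, $\max(\|\Sigma\|_{op}, T_1) = \|\Sigma\|_{op}$, and since $\|\Sigma\|_{op} \ge T_1 \ge C \cdot \frac{d L^2 \log(d/\delta)}{\rho^2 \ns}$ both summands are at most $\rho \|\Sigma\|_{op}$ by the same algebra.

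Having verified the concentration hypothesis, \cref{prop:basic-thresholding} immediately yields: (i) $\A_{h,T_1}$ is $O(\rho)$-replicable; (ii) if $\|\Sigma\|_{op} \le T_1$ it accepts with probability $1-\delta$; and (iii) if $\|\Sigma\|_{op} \ge 5 T_1$ it rejects with probability $1-\delta$. This is exactly the statement of the lemma.

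The proof is essentially a bookkeeping exercise once the two building blocks are in place; the only mildly nontrivial step is the case analysis to absorb the square-root term $\sqrt{(d/\ns)\|\Sigma\|_{op} L^2 \log(d/\delta)}$ into $\rho \cdot \max(\|\Sigma\|_{op}, T_1)$, which is what forces the $\rho^2$ in the denominator of the lower bound on $T_1$. No new probabilistic tool is required beyond the Matrix Chernoff bound already captured in \cref{lem:matrix-chernoff-application-1}.
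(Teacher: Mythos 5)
Your proposal is correct and takes essentially the same approach as the paper: both set $h(X) = \left\|\frac{1}{\ns}\sum X_i X_i^\top\right\|_{op}$ with reference value $q = \|\Sigma\|_{op}$, invoke \cref{lem:matrix-chernoff-application-1} to get the concentration bound $H$, then verify $H \le \rho\cdot\max(q,T_1)$ via the same two-case split on $q$ versus $T_1$, and finally hand everything to \cref{prop:basic-thresholding}. (One small note: the paper's proof writes $\log\frac{d}{\rho}$ in a couple of places where $\log\frac{d}{\delta}$ is intended, as you correctly use.)
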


\begin{proof}
    Given data points $X_1, \dots, X_{\ns} \sim \cD$, let $t$ be the statistic $\|\frac{1}{\ns} \sum X_i X_i^\top\|_{op}$. We show that for any distribution $\cD$, $t = \|\frac{1}{\ns} \sum X_i X_i^\top\|_{op}$ lies in the interval $[q - \rho \cdot \max(q, T_1), q + \rho \cdot \max(q, T_1)]$, where $q = \|\Ema[]{X_i X_i^\top}\|_{op}$, with probability $1-\delta$ over the randomness of $X_i \sim \cD$.
    To see why, by Lemma~\ref{lem:matrix-chernoff-application-1}, we know that $t$ lies in the interval $[q-H, q+H]$ with $1-\delta$ probability, where $H \le O\left(\max\left(\frac{d}{\ns} \cdot L^2 \cdot \log \frac{d}{\rho}, \sqrt{\frac{d}{\ns} \cdot q \cdot L^2 \cdot \log \frac{d}{\rho}}\right) \right)$. So, it suffices to verify that $H \le \rho \cdot \max(q, T_1)$. If $q \le T_1$, then it suffices to verify that $\rho \cdot T_1 \ge O\left(\max\left(\frac{d}{\ns} \cdot L^2 \cdot \log \frac{d}{\rho}, \sqrt{\frac{d}{\ns} \cdot T_1 \cdot L^2 \cdot \log \frac{d}{\rho}}\right) \right)$. This holds as long as $T_1 \ge O\left(\frac{d}{\ns \cdot \rho^2} \cdot L^2 \cdot \log \frac{d}{\rho}\right)$. If $q \ge T_1$, when we need to verify that $\rho \cdot q \ge O\left(\max\left(\frac{d}{\ns} \cdot L^2 \cdot \log \frac{d}{\rho}, \sqrt{\frac{d}{\ns} \cdot q \cdot L^2 \cdot \log \frac{d}{\rho}}\right) \right)$. This holds as long as $q \ge O\left(\frac{d}{\ns \cdot \rho^2} \cdot L^2 \cdot \log \frac{d}{\rho}\right)$, which is true if $T_1 \ge O\left(\frac{d}{\ns \cdot \rho^2} \cdot L^2 \cdot \log \frac{d}{\rho}\right)$ since we assumed $q \ge T_1$.

    Therefore, by Proposition~\ref{prop:basic-thresholding}, the algorithm that computes $\A_{h, T_1}$ where $h(X) = \|\frac{1}{\ns}\sum X_i X_i^\top\|_{op}$ is $O(\rho)$-replicable. Moreover, since $q = \|\Ema[X_i \sim \cD]{X_iX_i^\top} \|_{op}$, the accuracy guarantees of the lemma hold as well.
\end{proof}

\subsection{Step B}\label{sec:gaussian_B}

In this section, we consider the following bipartite graph on data points.

\begin{definition} \label{def:matching-graph}
    For any data points $X_1, \dots, X_{\ns}, Y_1, \dots, Y_{\ns}$ and a threshold parameter $S \ge 0$, define the bipartite graph $G_S(X, Y)$ on $X = \{X_1, \dots, X_{\ns}\},$ $Y = \{Y_1, \dots, Y_{\ns}\}$ that connects $X_i, Y_j$ if $|\langle X_i, Y_j \rangle| \ge S$. Define $M_S(X, Y)$ to be the maximum matching size between $X, Y$ in this bipartite graph $G_S(X, Y)$.
\end{definition}

Our first step in this subsection is to prove the following lemma.

\begin{lemma} \label{lem:matching-concentration}
    Fix $X = \{X_1, \dots, X_{\ns}\}$ and $S$, and let $Y = \{Y_1, \dots, Y_{\ns}\}$ be drawn i.i.d. from any distribution $\cD$. Then, the random variable $M_S(X, Y)$, as a function of $Y$, satisfies the concentration inequality
\[\Prma[]{\left|M_S(X, Y) - \Ema[Y]{M_S(X, Y)}\right| \ge t }\le 2 \cdot \exp\left(-0.1 \cdot \min\left(t, \frac{t^2}{\Ema[Y]{M_S(X, Y)}}\right)\right).\]
\end{lemma}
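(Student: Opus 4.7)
The statement has the exact shape of the Boucheron--Lugosi--Massart concentration inequality for self-bounding functions (Theorem~\ref{thm:boucheron}), so the plan is to verify that $f(Y_1,\ldots,Y_{\ns}) := M_S(X,Y)$ is a self-bounding function of the independent random variables $Y_1,\ldots,Y_{\ns}$ (with $X$ held fixed throughout) and then apply that theorem directly.

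Concretely, I would define the leave-one-out function
$g(y_1,\ldots,y_{i-1},y_{i+1},\ldots,y_{\ns}) := M_S(X, Y\setminus\{Y_i\}),$
i.e., the maximum matching size in the bipartite graph on $X$ and $Y\setminus\{Y_i\}$ with edges determined by threshold $S$. Clearly $f \geq 0$. The bounded-differences property $0 \leq f - g_i \leq 1$ is immediate: deleting a single vertex $Y_i$ from one side of a bipartite graph can only decrease the maximum matching, and it can decrease it by at most one (any maximum matching uses $Y_i$ at most once, so removing the edge touching $Y_i$ from a maximum matching of the full graph yields a matching in the reduced graph of size $\geq f - 1$).

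The main step---and the only one that is not mechanical---is verifying the self-bounding condition $\sum_{i=1}^{\ns}\bigl[f - g_i\bigr] \leq f$. Fix any one maximum matching $M^*$ in $G_S(X,Y)$, so $|M^*| = f$. If $Y_i$ is \emph{not} matched in $M^*$, then $M^*$ is still a valid matching in $G_S(X, Y\setminus\{Y_i\})$, so $g_i \geq f$, giving $f - g_i = 0$. Thus only the indices $i$ for which $Y_i$ is matched in $M^*$ contribute positively to the sum, and there are exactly $|M^*| = f$ such indices, each contributing at most $1$ by the bounded-differences property. Summing yields $\sum_i (f - g_i) \leq f$ as required.

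With both hypotheses of Theorem~\ref{thm:boucheron} verified, applying it to $Z = f(Y_1,\ldots,Y_{\ns}) = M_S(X,Y)$ gives the one-sided tail bounds with constant $0.1$; a union bound over the upper and lower tails contributes the factor of $2$ and gives the stated two-sided inequality. The only potential pitfall is the self-bounding inequality, but the ``fix a maximum matching and charge only matched coordinates'' argument above handles it cleanly; no delicate calculation is needed.
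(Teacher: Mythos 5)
Your proposal is correct and matches the paper's proof essentially step for step: both verify the two self-bounding hypotheses of Theorem~\ref{thm:boucheron} for $f(Y) = M_S(X,Y)$ by fixing a maximum matching and observing that only the $f$ matched $Y_i$'s can decrease the matching size upon removal, and then invoke the theorem. No gap.
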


To prove this lemma, we use \cref{thm:boucheron} from \cite{boucheron2000sharp} (see \cref{sec:prelims}).

\begin{proof}[Proof of \Cref{lem:matching-concentration}]
    We apply \Cref{thm:boucheron} as follows. Let $f(Y_1, \dots, Y_{\ns})$ be the maximum matching size $M_S(X, Y)$, and let $g(Y_1, \dots, Y_{i-1}, Y_{i+1}, \dots, Y_{\ns})$ be the maximum matching size between $X_1, \dots, X_{\ns}$ and $Y_1, \dots, Y_{i-1}, Y_{i+1}, \dots, Y_{\ns}$. Adding a data point to $Y$ will never decrease the matching size and will increase it by at most $1$. Moreover, if there is a maximum matching between $X$ and $Y$ of some size $k$, using $Y_{i_1}, \dots, Y_{i_k}$, then removing $x_i$ for $i \not\in \{i_1, \dots, i_k\}$ will maintain the maximum matching size at $k$. So, $f(x_1,\ldots,x_{\ns}) - g(x_1,\ldots,x_{i-1},x_{i+1},\ldots,x_{\ns})$ is positive for at most $k$ choices of $i$, and is at most $1$ in that setting. Thus, $f$ satisfies the required properties.

    Hence, for any fixed $X_1, \dots, X_{\ns}, S$, by \Cref{thm:boucheron}, 
\[\Prma[]{\left|M_S(X, Y) - \Ema[Y]{M_S(X, Y)}\right| \ge t }\le 2 \cdot \exp\left(-0.1 \cdot \min\left(t, \frac{t^2}{\Ema[Y]{M_S(X, Y)}}\right)\right).\]
    as desired.
\end{proof}

\begin{lemma} \label{lem:step-2}
    For any distribution $\cD$ and any fixed threshold $S \ge 0$, there exists a value $\mu_1 = \mu_1(\cD, S, \delta)$ such that for samples $X_1, \dots, X_{\ns}, Y_1, \dots, Y_{\ns} \sim \cD$, the matching size $M_S(X, Y)$ satisfies $\Prma[X, Y]{|M_S(X, Y)-\mu_1|} \le \rho \cdot \max\left(\mu_1, O\left(\frac{1}{\rho^2} \log \frac{1}{\delta}\right)\right)$ with probability at least $1-\delta$.
\end{lemma}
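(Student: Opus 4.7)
The plan is to define $\mu_1 := \Ema[X, Y]{M_S(X, Y)}$ and establish concentration around $\mu_1$ via the conditional mean $\mu(X) := \Ema[Y]{M_S(X, Y) \mid X}$. I will decompose the deviation as $M_S(X, Y) - \mu_1 = [M_S(X, Y) - \mu(X)] + [\mu(X) - \mu_1]$ and control each term using a self-bounding concentration inequality. The first term is handled for fixed $X$ directly by \Cref{lem:matching-concentration}, which yields a tail of $2 \exp(-0.1 \min(t', t'^2/\mu(X)))$.

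For the second term, I will apply \Cref{thm:boucheron} to the function $\mu(\cdot)$ viewed as a function of $X_1, \dots, X_{\ns}$, with $g(X_{-i}) := \Ema[Y]{M_S(X_{-i}, Y)}$ given by removing the $i$-th $X$-sample. The boundedness condition $\mu(X) - g(X_{-i}) \in [0, 1]$ follows because adding one vertex to one side of a bipartite matching can only increase the matching size by at most $1$, and expectation preserves this. The key sum condition $\sum_i [\mu(X) - g(X_{-i})] \le \mu(X)$ reduces, via linearity of expectation, to the combinatorial statement $\sum_i [M_S(X, Y) - M_S(X_{-i}, Y)] \le M_S(X, Y)$: fixing any maximum matching of size $k = M_S(X, Y)$, for every $X_i$ not participating in this matching the removal of $X_i$ does not decrease the matching size, so the sum has at most $k$ nonzero terms, each bounded by $1$. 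This produces the bound $\Prma[X]{|\mu(X) - \mu_1| \ge t} \le 2 \exp(-0.1 \min(t, t^2/\mu_1))$.

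Setting $t, t' = \Theta(\log(1/\delta) + \sqrt{\mu_1 \log(1/\delta)})$ so each tail is at most $\delta/2$, and noting that on the good $X$-event we have $\mu(X) \le \mu_1 + t = O(\max(\mu_1, \log(1/\delta)))$ so that the $Y$-step bound is of the same order, a union bound then yields $|M_S(X, Y) - \mu_1| = O(\log(1/\delta) + \sqrt{\mu_1 \log(1/\delta)})$ with probability at least $1 - \delta$. A two-case comparison of this bound against $\rho \cdot \max(\mu_1, C \log(1/\delta)/\rho^2)$, splitting on whether $\mu_1 \ge C \log(1/\delta)/\rho^2$, verifies the claim for a sufficiently large universal constant $C$: in the first regime, both $\log(1/\delta)$ and $\sqrt{\mu_1 \log(1/\delta)}$ are absorbed into $\rho \mu_1$; in the second regime, both are absorbed into $C \log(1/\delta)/\rho$.

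The main obstacle I expect is verifying the sum condition of \Cref{thm:boucheron} for $\mu(X)$, since that theorem is formulated for deterministic functions while $\mu(X)$ is itself an expectation. The resolution is the combinatorial inequality above combined with linearity of expectation, but pinning down the correct self-bounding structure of $\mu(X)$ is the crux of the argument; once it is in place, the rest of the proof is a routine two-step union bound followed by case analysis.
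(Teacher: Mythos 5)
Your proof is correct, but it takes a genuinely different route from the paper's. The paper uses a decoupling/resampling argument: it draws four independent sample sets $X, X', Y, Y'$, applies \Cref{lem:matching-concentration} twice as a black box (once fixing $X$ and varying $Y, Y'$; once fixing $Y'$ and varying $X, X'$), chains the resulting bounds by the triangle inequality through the shared point $M_S(X, Y')$, and then invokes a probabilistic-existence argument to extract a single deterministic $\mu_1$ — in the paper $\mu_1$ is $\Ema[Y]{M_S(X,Y)}$ for some favorable (nonconstructive) choice of $X$. You instead set $\mu_1 := \Ema[X,Y]{M_S(X,Y)}$ (the natural unconditional mean), decompose $M_S(X,Y) - \mu_1$ into $[M_S(X,Y) - \mu(X)] + [\mu(X) - \mu_1]$, and apply \Cref{thm:boucheron} directly to the conditional mean $\mu(X)$ as a function of $X_1, \dots, X_{\ns}$. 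This requires you to re-verify the self-bounding conditions for $\mu$, and your verification is correct: the pointwise inequalities $M_S(X,Y) - M_S(X_{-i},Y) \in \{0,1\}$ and $\sum_i [M_S(X,Y) - M_S(X_{-i},Y)] \le M_S(X,Y)$ (at most $M_S(X,Y)$ left vertices participate in any maximum matching) hold deterministically, and linearity of expectation over $Y$ transfers them to $\mu$ and $g(X_{-i}) = \Ema[Y]{M_S(X_{-i},Y)}$. The tradeoff: the paper's route avoids any new self-bounding verification at the cost of a nonconstructive $\mu_1$; your route yields the canonical $\mu_1$ at the cost of one extra (but clean) verification. The final two-case comparison against $\rho\cdot\max(\mu_1, C\log(1/\delta)/\rho^2)$ is the same in both proofs.
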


\begin{proof}
    Suppose we sample $X = \{X_1, \dots, X_{\ns}\},$ $X' = \{X_1', \dots, X_{\ns}'\},$ $Y = \{Y_1, \dots, Y_{\ns}\}$, and $Y' = \{Y_1', \dots, Y_{\ns}'\}$, all i.i.d. from $\cD$. Define $\mu_1 = \Ema[Y]{M_S(X, Y)}$, where $X$ is fixed. By Lemma~\ref{lem:matching-concentration}, with probability at least $1-\delta/2$, both $M_S(X, Y)$ and $M_S(X, Y')$ are within $O(\log \frac{1}{\delta} + \sqrt{\mu_1 \cdot \log \frac{1}{\delta}})$ of $\mu_1$. Next, define $\mu_2 = \Ema[X]{M_S(X, Y')}$. Again, applying Lemma~\ref{lem:matching-concentration}, with probability at least $1-\delta/2$, both $M_S(X, Y')$ and $M_S(X', Y')$ are within $O(\log \frac{1}{\delta} + \sqrt{\mu_2 \cdot \log \frac{1}{\delta}})$ of $\mu_2$. So, by Triangle inequality, with probability at least $1-\delta,$ over the randomness of $X, X', Y, Y'$ both $M_S(X, Y)$ and $M_S(X', Y')$ are within $O(\log \frac{1}{\delta} + \sqrt{\mu_1 \cdot \log \frac{1}{\delta}})$ of $\mu_1 := \Ema[Y]{M_S(X, Y)}$.

    Therefore, there exists a choice of $\mu_1$ such that with probability at least $1-\delta$ over $X', Y'$, $|M_S(X', Y') - \mu_1| \le O\left(\log \frac{1}{\delta} + \sqrt{\mu_1 \cdot \log \frac{1}{\delta}}\right) \le \rho \cdot \max\left(\mu_1, O\left(\frac{1}{\rho^2} \log \frac{1}{\delta}\right)\right)$.
\end{proof}

Hence, we can apply \Cref{prop:basic-thresholding} again, to obtain the following corollary.

\begin{corollary} \label{cor:step-2}
    Let $\delta \le \rho \le 0.01$, and let $S \ge 0$ by any threshold. For $\mu_1(\cD, S, \delta)$ as in \Cref{lem:step-2}, and for some threshold $T_2 = O\left(\frac{1}{\rho^2} \log \frac{1}{\delta}\right)$, there exists an $O(\rho)$-replicable algorithm $\A_2$ that accepts with probability $1-\delta$ whenever $\mu_1(\cD, S, \delta) \le T_2$ and rejects with probability $1-\delta$ whenever $\mu_1(\cD, S, \delta) \ge 5T_2$.
\end{corollary}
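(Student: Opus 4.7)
The plan is to invoke Proposition~\ref{prop:basic-thresholding} as a black box, instantiating it with the statistic provided by Lemma~\ref{lem:step-2}. The work has essentially already been done in establishing the concentration bound; the corollary is just the packaging of that bound through the generic thresholding algorithm.

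Concretely, first I would treat the $2\ns$ samples $X_1,\dots,X_{\ns},Y_1,\dots,Y_{\ns}$ collectively as the input dataset, and define the deterministic statistic $h(X,Y) := M_S(X,Y)$, the maximum matching size in the bipartite graph $G_S(X,Y)$ of Definition~\ref{def:matching-graph}. This is an efficiently computable function of the samples (bipartite matching is polynomial time), so the resulting algorithm is efficient.

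Next, I would set $T_2$ equal to the bound $O\!\left(\tfrac{1}{\rho^{2}}\log\tfrac{1}{\delta}\right)$ appearing inside the $\max$ in Lemma~\ref{lem:step-2}, and take $q := \mu_1(\cD,S,\delta)$. Lemma~\ref{lem:step-2} then states exactly the concentration hypothesis required by Proposition~\ref{prop:basic-thresholding}, namely that
\[
\Pr_{X,Y}\!\left[\,|h(X,Y)-q|\le \rho\cdot\max(|q|,T_2)\,\right]\ge 1-\delta.
\]
(The absolute value on $q$ is redundant here since $\mu_1\ge 0$.) The value $q$ depends on $\cD$ as required by the proposition, and $h$ is nonnegative.

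Finally, I would define $\A_2 := \A_{h,T_2}$ and read off the three conclusions of Proposition~\ref{prop:basic-thresholding}: the algorithm is $O(\rho)$-replicable, it rejects with probability $\ge 1-\delta$ when $q=\mu_1(\cD,S,\delta)\ge 5T_2$, and it accepts with probability $\ge 1-\delta$ when $q\le T_2$. These are precisely the three claims of the corollary. No nontrivial step is required beyond verifying the syntactic match between the concentration bound of Lemma~\ref{lem:step-2} and the hypothesis of Proposition~\ref{prop:basic-thresholding}; the only ``obstacle'' is the mild cosmetic point that the proposition is written for a single sample set $X$ while the matching statistic uses two, which is handled by relabeling $(X,Y)$ as a single input.
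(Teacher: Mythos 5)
Your plan is correct and coincides with what the paper does: the paper's proof of Corollary~\ref{cor:step-2} is the single sentence ``Hence, we can apply \Cref{prop:basic-thresholding} again,'' with exactly the instantiation you describe ($h = M_S$, $q = \mu_1(\cD,S,\delta)$, threshold $T_2$ taken from the $\max$ in Lemma~\ref{lem:step-2}). Your remarks about $\mu_1\ge 0$ making the absolute value redundant, efficiency of computing a bipartite matching, and treating the $2\ns$ samples as one dataset are all accurate and require no further work.
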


\subsection{Step C}\label{sec:gaussian_C}

\paragraph{Setting of parameters.} We now set parameters to properly initialize our \cref{alg:gaussian-mean}. Let $K$ be a sufficiently large polylogarithmic multiple of $\ns, d, \frac{1}{\alpha}, \frac{1}{\rho}$.
Fix parameters $L = K$, $S = K \cdot \sqrt{d}$, $T_1 = \left(1 + \frac{d}{\ns \cdot \rho^2}\right) \cdot L^2 \cdot K$, and $T_2 = \frac{1}{\rho^2} \cdot K$. Now, we define $\cD$ contained in the ball of radius $L \sqrt{d}$ to be a \emph{good} distribution, if $\|\Ema[X_i \sim \cD]{X_iX_i^\top}\|_{op} \le 5 T_1$, and $\mu_1(\cD, S, \frac{\rho}{L^4 d^2}) \le 5T_2$, where $\mu_1$ is defined in \Cref{lem:step-2}.

Now suppose we are given samples $X_1, \dots, X_{\ns}, Y_1, \dots, Y_{\ns} \sim \cD$, where $\cD$ is good, and we compute the statistic $\langle X_1 + \cdots + X_{\ns}, Y_1 + \cdots + Y_{\ns} \rangle$. In expectation, this statistic equals $\ns^2 \cdot \|\mu\|^2,$ where $\mu = \Ema[X_i \sim \cD]{X_i}$. The variance of this statistic is
\begin{align*}
    &O\left(\sum_{i, j} \Varma[]{\langle X_i, Y_j \rangle} + \ns^3 \cdot \Cov_{X_1, Y_1, Y'_1 \sim \cD}(\langle X_1, Y_1 \rangle, \langle X_1, Y'_1 \rangle)\right)\\
    &\le O\left(\sum_{i, j} \Ema[]{\langle X_i, Y_j \rangle^2} + \ns^3 \cdot \Ema[X_1, Y_1, Y'_1 \sim \cD]{\langle X_1, Y_1 \rangle \langle X_1, Y'_1 \rangle}\right) \\
    &= O\left(\Ema[]{ \sum_{i, j} \langle X_i, Y_j \rangle^2} + \ns^3 \cdot \Ema[X_1 \sim \cD]{\langle X_1, \mu \rangle^2}\right).
\end{align*}

To bound $\Ema[]{\langle X_1, \mu \rangle^2}$, note that we can write this as $\Ema[]{ \mu^\top X_1X_1^\top \mu} = \mu^\top \cdot \Ema[]{X_1X_1^\top} \cdot \mu \le \|\mu\|^2 \cdot \|\Ema[]{X_1X_1^\top}\|_{op} \le 5 T_1 \cdot \|\mu\|^2.$

To bound $\Ema[]{\sum_{i, j} \langle X_i, Y_j \rangle^2},$ we again consider the matching size $M_S(X, Y)$ from the previous subsection. If the matching has size $m$, there are subsets $A, B \subset [n]$ of size $m$, such that for all $i \not\in A, j \not\in B,$ $|\langle X_i, Y_j \rangle| \le S$.
We write
\[\sum_{i, j} \langle X_i, Y_j \rangle^2 = \sum_{i \not\in A, j \not\in B} \langle X_i, Y_j \rangle^2 + \sum_{i \in A} X_i^\top \cdot \left(\sum_j Y_j Y_j^\top\right) \cdot X_i + \sum_{j \in A} Y_j^\top \cdot \left(\sum_j X_i X_i^\top\right) \cdot Y_j - \sum_{i \in A, j \in B} \langle X_i, Y_j \rangle^2.\]
The first term is at most $\ns^2 \cdot S^2,$ since $|\langle X_i, Y_j \rangle| \le S$ if $i \not\in A, j \not\in B$. The second term is at most $m \cdot \|X_i\|^2 \cdot \|\sum_j Y_j Y_j^\top\|_{op}$. By \Cref{lem:matrix-chernoff-application-1}, with probability at least $1 - \frac{1}{L^4 d^2},$ $\|\sum_j Y_j Y_j^\top\|_{op} = \ns \cdot \|\frac{1}{\ns} \sum_j Y_j Y_j^\top\|_{op} \le O(\ns \cdot T_1)$. The third term can be bounded similarly, and the fourth term is at most $0$. Overall, with probability at least $1 - O(\frac{1}{L^4 d^2})$, $\sum_{i, j} \langle X_i, Y_j \rangle^2 \le O(\ns^2 \cdot S^2 + T_2 \cdot L^2 d \cdot \ns \cdot T_1)$, and otherwise, it is still bounded by $\ns^2 \cdot (L \sqrt{d})^4 = \ns^2 \cdot L^4 d^2$, since there are $\ns$ choices for each of $i, j$ and $\|X_i\|, \|Y_j\| \le L \sqrt{d}$. So, 
\[\Ema[]{\sum_{i, j} \langle X_i, Y_j \rangle^2} \le O(\ns^2 \cdot S^2 + T_2 \cdot L^2 d \cdot \ns \cdot T_1).\]

Overall, the mean of the statistic is $\ns^2 \cdot \|\mu\|^2$, and the variance is bounded by $O(\ns^3 \cdot T_1 \cdot \|\mu\|^2 + \ns^2 \cdot S^2 + L^2 \cdot \ns d \cdot T_1 \cdot T_2)$.

Our approach, based on this calculation, is the following. First, we show that the projected null hypothesis (i.e., $\cN(0, I)_{\proj}$) is good. Then, we show that for good distributions, we can create an algorithm that is replicable on good distributions, accepts $\cN(0, I)_{\proj}$, and rejects $\cN(\mu, I)_{\proj}$ whenever $\cN(\mu, I)_{\proj}$ is good and $\|\mu\| \ge \alpha$. Finally, by combining with the previous subsections, we can extend both the replicability and accuracy guarantees beyond good distributions.

\begin{lemma} \label{lem:projected-null-is-good}
    We have $\|\Ema[X_i \sim \cN(0, I)_{\proj}]{X_iX_i^\top}\|_{op} \le 1$.
    Also, if $X = \{X_1, \dots, X_{\ns}\}, Y = \{Y_1, \dots, Y_{\ns}\} \sim \cN(0, I)_{\proj}$, then with probability at least $1 - \frac{\rho}{L^4 d^2}$, the maximum matching $M_S(X, Y)$, as defined in \Cref{def:matching-graph} has size $0$. Hence, $\cN(0, I)_{\proj}$ is good.
\end{lemma}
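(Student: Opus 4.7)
The first claim follows from a stochastic-dominance observation. Write $X \sim \cN(0, I)_{\proj}$ as $X = c(X')\cdot X'$ with $X' \sim \cN(0,I)$ and $c(X') = \min\{1, L\sqrt{d}/\|X'\|\} \in [0,1]$. Then $XX^\top = c(X')^2\, X'X'^\top \preceq X'X'^\top$ in the positive-semidefinite order, so $\Ema{}{XX^\top} \preceq \Ema{}{X'X'^\top} = I$, giving $\|\Ema{}{XX^\top}\|_{op} \le 1$.

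For the matching claim, it suffices to show that with probability at least $1 - \rho/(L^4 d^2)$, every pair $(i,j)$ satisfies $|\langle X_i, Y_j\rangle| < S = K\sqrt{d}$. Applying $\chi^2$ concentration to the pre-projection norms $\|X_i'\|^2, \|Y_j'\|^2$ and union bounding over the $2\ns$ samples, with probability at least $1-\rho/(2L^4 d^2)$ all of them are at most $C d \log(\ns L d / \rho)$ for some universal constant $C$. Condition on the $X_i$'s. Since projection only shrinks norms, $|\langle X_i, Y_j\rangle| \le |\langle X_i, Y_j'\rangle|$, and conditional on $X_i$ the latter is $\cN(0, \|X_i\|^2)$-distributed. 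A Gaussian tail bound then yields
\[
\Prma[Y_j]{|\langle X_i, Y_j\rangle| \ge K\sqrt{d}} \;\le\; 2\exp\!\bigl(-K^2 d/(2\|X_i\|^2)\bigr) \;\le\; 2\exp\!\bigl(-\Omega(K^2/\log(\ns L d/\rho))\bigr).
\]
Since $K$ is a sufficiently large polylogarithmic multiple of $\ns, d, 1/\alpha, 1/\rho$, this per-pair probability is at most $\rho/(2\ns^2 L^4 d^2)$, and a union bound over the $\ns^2$ pairs gives $\Prma{}{M_S(X,Y) \ge 1} \le \rho/(L^4 d^2)$.

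The ``hence good'' conclusion then bundles the two pieces. The operator-norm clause of goodness is immediate since $1 \le T_1$. For the matching clause, \Cref{lem:step-2} only asserts the \emph{existence} of some $\mu_1$ witnessing the concentration bound, so we are free to set $\mu_1 = 0$: on the probability-$\ge 1 - \rho/(L^4 d^2)$ event above, $M_S = 0$, and the required inequality $|M_S - 0| \le \rho \cdot O((1/\rho^2)\log(L^4 d^2/\rho))$ is trivially met. Hence $\mu_1(\cD, S, \rho/(L^4 d^2)) = 0 \le 5T_2$, confirming that $\cN(0,I)_{\proj}$ is good. The only place that requires genuine care is bookkeeping---verifying that the polylogarithmic $K$ is large enough to absorb every $\log$ arising from the $\chi^2$ tails and the two layers of union bounds---but no single step is deeper than a Gaussian tail bound and $\chi^2$ concentration.
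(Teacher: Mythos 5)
Your proof is correct, and the first claim uses a genuinely different (and arguably cleaner) argument than the paper. For $\|\Ema[]{XX^\top}\|_{op} \le 1$, the paper observes by symmetry that the covariance of $\cN(0,I)_{\proj}$ is a scalar multiple of $I$, bounds the trace by $\Ema[]{\|\hat{x}\|^2} \le \Ema[]{\|x\|^2} = d$, and divides by $d$. You instead observe that the projection is a multiplicative contraction $X = c(X')X'$ with $c \le 1$, so $XX^\top = c^2 X'X'^\top \preceq X'X'^\top$ pointwise in the Loewner order, and take expectations. This avoids invoking isotropy entirely and would also work for projections of non-spherical distributions, so it buys a bit more generality at no cost. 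For the matching claim, the paper simply asserts the sub-exponential tail $\Prma[]{|\langle X_i', Y_j'\rangle| \ge K\sqrt{d}} \le e^{-\Omega(K)}$ for independent standard Gaussians and union-bounds; you derive the same conclusion more explicitly by conditioning on a $\chi^2$ concentration event for $\|X_i'\|^2$ and applying the Gaussian tail for $\langle X_i, Y_j'\rangle \mid X_i$. Both are valid, and your extra care is actually warranted here: since $S = K\sqrt d$ and the projection radius is $L\sqrt d$ with $L = K$, one cannot get away with the crude bound $\|X_i\| \le L\sqrt d$ — the conditional variance would be $L^2 d$ and the exponent $-K^2/(2L^2)$ would be $O(1)$. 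Finally, your explicit treatment of the ``hence good'' step (picking $\mu_1 = 0$ as the witness in Lemma~\ref{lem:step-2}) makes precise something the paper leaves implicit.
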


\begin{proof}
    By symmetry, $\cN(0, I)_{\proj}$ has mean $0$ and covariance that is a scalar multiple of identity. Also, if $x \sim \cN(0, I)$ and $\hat{x}$ is the projection, $\|x\| \ge \|\hat{x}\|$, so $\Ema[]{\|\hat{x}\|^2} \le \Ema[]{\|x\|^2} \le d$. So, the trace of the covariance matrix is $d$, which means the operator norm of $\Ema[X_i \sim \cN(0, I)_{\proj}]{X_iX_i^\top}$ is at most $1$.

    Given $X_i, Y_j \sim \cN(0, I)$, the probability that $|\langle X_i, Y_j \rangle| \ge K \sqrt{d}$ is at most $e^{-\Omega(K)}$. So, assuming $K \ge \poly\log(\ns, d, 1/\rho),$ this probability is at most $\frac{\rho}{L^4 d^2 \ns^2}$. By taking a union bound over $\ns^2$ pairs $(X_i, Y_j)$, we have that the corresponding graph $G_S(X, Y)$ is in fact empty with at least $\frac{\rho}{L^4 d^2}$ probability. 
\end{proof}

We will need the following auxiliary proposition, which characterizes the norm of a spherical Gaussian after the projection.

\begin{proposition} \label{prop:projection-preserves-norms}
    For a vector $\mu$ and any $\alpha \le 1$, consider the mean of the distribution $\cN(\mu, I)_{\proj},$ i.e., where we sample $X_i \sim \cN(\mu, I)$ and project on to the ball of radius $L \sqrt{d}$. If $\mu$ is the origin, then the mean of the distribution $\cN(\mu, I)_{\proj}$ is also the origin, and if $\|\mu\| \ge \alpha$, then the mean of the distribution $\cN(\mu, I)_{\proj}$ has norm at least $\alpha/2$.
\end{proposition}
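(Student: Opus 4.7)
The plan is to handle the $\mu = 0$ and $\|\mu\| \ge \alpha$ cases separately. For $\mu = 0$, the distribution $\cN(0, I)$ is symmetric under $X \mapsto -X$, and since the projection depends on $X$ only via its norm and direction (specifically $Y(X) = c(\|X\|) X$ with $c(r) := \min(1, L\sqrt{d}/r)$), we have $Y(-X) = -Y(X)$. Thus the distribution of $Y$ is symmetric about the origin, so $\Ema[]{Y} = 0$.

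For $\|\mu\| \ge \alpha$, the first step is to reduce to a one-dimensional estimate. Since $\cN(\mu, I)$ is invariant under rotations fixing $\mu$ and the projection commutes with all rotations of $\R^d$, the mean $\Ema[]{Y}$ must be parallel to $\hat{\mu} := \mu/\|\mu\|$. It therefore suffices to lower bound
\begin{equation*}
\langle \Ema[]{Y}, \hat{\mu}\rangle = \Ema[]{c(\|X\|)\langle X, \hat{\mu}\rangle}
\end{equation*}
by $\alpha/2$. I would then split on the size of $\|\mu\|$ relative to $L\sqrt{d}$, using throughout that $L$ is a sufficiently large polylogarithmic multiple of $1/\alpha$ so that any term of the form $e^{-\Omega(L^2 d)}$ is much smaller than $\alpha$.

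In the small-mean case $\alpha \le \|\mu\| \le L\sqrt{d}/2$, couple $X = \mu + X_0$ with $X_0 \sim \cN(0, I)$ and decompose
\begin{equation*}
\langle \Ema[]{Y}, \hat{\mu}\rangle = \|\mu\| \cdot \Ema[]{c(\|X\|)} - \Ema[]{(1 - c(\|X\|))\langle X_0, \hat{\mu}\rangle}.
\end{equation*}
The event $\{\|X\| > L\sqrt{d}\}$ has probability at most $e^{-\Omega(L^2 d)}$ by a standard tail bound on $\|X_0\|$, and since $1 - c(\|X\|)$ is supported on this event, Cauchy--Schwarz bounds the second term in magnitude by $\sqrt{\Prma[]{\|X\| > L\sqrt{d}}}$, which is much smaller than $\alpha$. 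Similarly $\Ema[]{c(\|X\|)} \ge 1 - e^{-\Omega(L^2 d)}$, so the first term is at least $\|\mu\|(1 - o(\alpha)) \ge \alpha - o(\alpha)$, yielding the desired bound $\alpha/2$.

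In the large-mean case $\|\mu\| > L\sqrt{d}/2$, condition on the high-probability event $E$ that $|\langle X_0, \hat{\mu}\rangle| \le \|\mu\|/2$ and $\|X_0^\perp\|^2 \le 2d$, where $X_0^\perp$ is the component of $X_0 = X - \mu$ orthogonal to $\hat{\mu}$; a Gaussian tail bound and $\chi^2$ concentration give $\Prma[]{E^c} \le e^{-\Omega(d)}$. On $E$, $\langle X, \hat{\mu}\rangle \ge \|\mu\|/2 \ge L\sqrt{d}/4$ and, once $L$ exceeds an absolute constant, $\|X\|^2 = \langle X, \hat{\mu}\rangle^2 + \|X_0^\perp\|^2 \le 2\langle X, \hat{\mu}\rangle^2$, so a two-case check on whether $\|X\|$ exceeds $L\sqrt{d}$ gives $c(\|X\|)\langle X, \hat{\mu}\rangle \ge L\sqrt{d}/4$ on $E$. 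Since $|c(\|X\|)\langle X, \hat{\mu}\rangle| \le L\sqrt{d}$ always, the contribution from $E^c$ is negligible and $\langle \Ema[]{Y}, \hat{\mu}\rangle \ge L\sqrt{d}/5 \gg \alpha/2$. The main subtlety is the small-mean case, where the error term must be shown to be much smaller than $\alpha$ itself; this is precisely where the polylogarithmic growth of $L$ in $1/\alpha$ is essential.
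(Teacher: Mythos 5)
Your proof is correct and uses a genuinely different decomposition than the paper's, so here is a brief comparison. The paper's small-mean argument bounds the \emph{displacement} $\Ema[]{\|X_1-Y_1\|}\le\sqrt{\Ema[]{\|X_1-Y_1\|^2}\cdot\Prma[]{X_1\ne Y_1}}$ by Cauchy--Schwarz and then invokes the vector triangle inequality $\|\Ema[]{Y_1}\|\ge\|\mu\|-\Ema[]{\|X_1-Y_1\|}$; similarly for the large-mean case it compares $\hat{x}$ to the projected center $\hat{\mu}$ and applies the triangle inequality. You instead exploit rotational symmetry to reduce to the scalar quantity $\langle\Ema[]{Y},\hat\mu\rangle$ and then split into the ``shrinkage'' term $\|\mu\|\Ema[]{c(\|X\|)}$ and a cross term $\Ema[]{(1-c(\|X\|))\langle X_0,\hat\mu\rangle}$, which you control by Cauchy--Schwarz against the radial component; in the large-mean case you bound $c(\|X\|)\langle X,\hat\mu\rangle$ pointwise on a high-probability event. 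The symmetry reduction is a clean structural observation that the paper does not use, and it localizes the error to a one-dimensional coordinate rather than a full $\ell_2$ displacement. Two small cleanups worth making: (i) the step ``$\|\mu\|(1-o(\alpha))\ge\alpha-o(\alpha)$'' is notationally loose when $\|\mu\|\gg1$, since $\|\mu\|\cdot o(\alpha)$ is then not $o(\alpha)$ --- what actually saves the argument is that for $\|\mu\|\ge1$ the main term $\|\mu\|\Ema[]{c(\|X\|)}$ dwarfs $\alpha$ outright; and (ii) in the large-mean case, the statement ``$\Prma[]{E^c}\le e^{-\Omega(d)}$'' needs the implicit constants checked so that the contribution $L\sqrt{d}\cdot\Prma[]{E^c}$ is genuinely smaller than $L\sqrt{d}/4$ for all $d\ge1$, which holds for the specific thresholds you chose but deserves a sentence.
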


\begin{proof}
    The claim when $\mu$ is the origin is trivial by symmetry.

    First, assume $\alpha \le \|\mu\| \le L \sqrt{d}/2$. Consider sampling $X_1 \sim \cN(\mu, I)$ and $Y_1$ as the projection of $X_1$. Note that $\|X_1-Y_1\| = \|X_1-Y_1\| \cdot \I[X_1 \neq Y_1]$, since if $X_1 = Y_1$ then $\|X_1-Y_1\| = 0$. So, $ \Ema[]{\|X_1-Y_1\|}  = \Ema[]{\|X_1-Y_1\|^2 \cdot \I[X_1 \neq Y_1]} \le \sqrt{ \Ema[]{\|X_1-Y_1\|^2} \cdot \Prma[]{X_1 \neq Y_1}}$, by Cauchy-Schwarz. The probability that $X_1 \neq Y_1$ equals the probability that $\|X\| \ge L \sqrt{d}$, which for $\|\mu\| \le L\sqrt{d}/2$ and $L$ at least a sufficiently large constant, is at most $e^{-L}$. Moreover, $ \Ema[]{\|X_1-Y_1\|^2} \le 2 \cdot (\Ema[]{\|X_1\|^2} + \Ema[]{\|Y_1\|^2})$, and we know $\Ema[]{\|X_1\|^2} = \|\mu\|^2 + d$ and $\Ema[]{\|Y_1\|^2} \le L^2 d$ since $Y$ is always contained in the ball of radius $L \sqrt{d}$. Overall, this means $ \Ema[]{\|X_1-Y_1\|^2} \le 2 \cdot (L^2 d + (L \sqrt{d}/2)^2 + d) \le 4L^2 d$, which means $\sqrt{\Ema[]{\|X_1-Y_1\|^2} \cdot \Prma[]{X \neq Y}} \le 2 e^{-L/2} \cdot L \sqrt{d}.$ Assuming $L$ is a sufficiently large polylogarithmic multiple of $1/\alpha$ and $d$, this is at most $\alpha/2$. Hence, $\Ema[]{\|X_1-Y_1\|} \le \alpha/2$, and since $\Ema[]{X} = \mu$ which has norm at least $\alpha$, by the Triangle inequality $\|\Ema[]{Y_1}\| \ge \alpha/2$.

    Alternatively, suppose $\|\mu\| \ge L \sqrt{d}/2$. In that case, let $\hat{\mu}$ be the projection of $\mu$ onto the ball of radius $L \sqrt{d}$. Since the projection never dilates distances, for any point $x$ with projection $\hat{x},$ $\|\hat{\mu}-\hat{x}\| \le \|\mu-x\|$. So, for $x \sim \cN(\mu, I)$, $\|\hat{\mu}-\hat{x}\| \le \|\mu-x\| \le \sqrt{\|\mu-x\|^2} = \sqrt{d},$ which means that by Triangle inequality, $\|\Ema[]{\hat{x}}\| \ge \|\hat{\mu}\| - \sqrt{d}$. Since $\|\hat{\mu}\| = \min(L \sqrt{d}, \|\mu\|) \ge L \sqrt{d}/2$, we have $\|\Ema[]{\hat{x}}\| \ge \sqrt{d} \ge \alpha$.
\end{proof}

We are now ready to show that there is a replicable algorithm, at least for good distributions, that can distinguish between $\cN(0, I)$ and $\cN(\mu, I)$ with $\|\mu\| \ge \alpha$.

\begin{lemma} \label{lem:step-3}
    Suppose $\ns \ge \tilde{O}\left(\frac{\sqrt{d}}{\alpha^2 \rho} + \frac{\sqrt{d}}{\alpha \rho^2} + \frac{1}{\alpha^2 \rho^2}\right)$, and let $K, L, S, T_1, T_2$ be as in the beginning of this subsection.
    There exists an algorithm $\A_3$ with the following properties.
\begin{itemize}
    \item For any good distribution $\cD$, if given samples $X_1, \dots, X_{\ns}, Y_1, \dots, Y_{\ns} \sim \cD$ and $X_1', \dots, X_{\ns}', Y_1', \dots, Y_{\ns}' \sim \cD$, we have \newline $\Prma[]{\A_3(X_1, \dots, X_{\ns}, Y_1, \dots, Y_{\ns}; r) = \A_3(X_1', \dots, X_{\ns}', Y_1, \dots, Y_{\ns}'; r)} \ge 1-\rho$.
    \item The algorithm accepts $2 \ns$ samples from $\cN(0, I)_{\proj}$ with probability at least $0.99$.
    \item For any $\mu$ with $\|\mu\| \ge \alpha$, if $\cN(\mu, I)_{\proj}$ is good, the algorithm rejects $2 \ns$ samples from $\cN(\mu, I)_{\proj}$ with probability at least $0.99$.
\end{itemize}
\end{lemma}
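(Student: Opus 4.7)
The plan is to apply the general expectation-gap estimator \cref{thm:general-estimator} to a normalized inner-product statistic. Specifically, treat each estimate as being computed from a batch of $2m$ samples (split into $m$ $X$'s and $m$ $Y$'s) and set
\[ Z(m) = \frac{1}{m^2}\,\Big\langle\sum_{i=1}^m X_i,\; \sum_{j=1}^m Y_j\Big\rangle. \]
Then $\Ema{Z(m)} = \|\mu\|^2$ with $\mu := \Ema[X\sim\cD]{X}$, so the expectation is size-invariant in $m$. For $\cD = \cN(0,I)_{\proj}$ symmetry forces $\mu = 0$, and for $\cD = \cN(\mu^*,I)_{\proj}$ with $\|\mu^*\|\ge\alpha$, \cref{prop:projection-preserves-norms} gives $\|\mu\|\ge\alpha/2$. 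Hence I can take $\tau_0=0$, $\tau_1=\alpha^2/4$, and $\Delta=\alpha^2/4$.

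The variance calculation already carried out in the preceding discussion (with $m$ substituted for $\ns$) shows that for every \emph{good} $\cD$,
\[ \Varma{Z(m)}\le O\!\left(\frac{T_1\|\mu\|^2}{m} + \frac{S^2}{m^2} + \frac{L^2 d\, T_1 T_2}{m^3}\right). \]
I take the proxy $\sigma(m)^2 = C\bigl(T_1\alpha^2/m + S^2/m^2 + L^2 d T_1 T_2/m^3\bigr)$ for a large constant $C$. When $\|\mu\|^2\le \tau_1$, substituting into the variance bound yields at most $\sigma(m)^2$; when $\|\mu\|^2 > \tau_1$, \cref{def:expectation-gap} permits $\sigma$ to be multiplied by $(1 + (\|\mu\|^2-\tau_1)/\Delta)^2 \gtrsim \|\mu\|^4/\alpha^4$, and a direct check shows $T_1\|\mu\|^2/m \le \sigma(m)^2\,\|\mu\|^4/\alpha^4$ once $C$ is large. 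Thus $(Z(m),\tau_0,\tau_1,\sigma(m))$ is a legitimate expectation-gap statistic on all good distributions. Solving $\sigma(m_\rho)\le\rho\Delta/2$ and then instantiating \cref{thm:general-estimator} with $t=\rho$ uses samples
\[ m_\rho = O\!\left(\frac{T_1}{\rho^2\alpha^2} + \frac{S}{\rho\alpha^2} + \frac{(L^2 d T_1 T_2)^{1/3}}{\rho^{2/3}\alpha^{4/3}}\right). \]
Plugging in $T_1=(1+d/(\ns\rho^2))L^2 K$, $T_2=K/\rho^2$, $S=K\sqrt d$, $L=K$ with $K$ polylogarithmic, a routine term-by-term comparison shows $m_\rho$ is dominated by the hypothesized $\ns=\tilde O(\sqrt d/(\alpha^2\rho)+\sqrt d/(\alpha\rho^2)+1/(\alpha^2\rho^2))$.

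The three bullets of the lemma then follow. Replicability on any good $\cD$ is inherited from the $O(\rho)$-replicability guarantee of \cref{thm:general-estimator}, whose only distributional requirement is exactly the variance condition we verified. Correctness under the null and under good alternatives follows from the theorem's $1-\rho^C$ success guarantee (well below $0.01$ once $C\ge 3$), combined with $\Ema{Z(m)}=0\le\tau_0$ for $\cN(0,I)_\proj$ and $\Ema{Z(m)}\ge\alpha^2/4=\tau_1$ for any good $\cN(\mu^*,I)_\proj$ with $\|\mu^*\|\ge\alpha$. The main subtlety is the verification of the \emph{nonuniform} variance condition in \cref{def:expectation-gap}: because our variance bound scales with $\|\mu\|^2$ rather than being uniform in $\cD$, we crucially rely on the expectation-dependent growth of $\sigma$ that \cref{def:expectation-gap} allows, rather than an unconditional upper bound — otherwise distributions whose mean substantially exceeds $\alpha$ would leave the framework.
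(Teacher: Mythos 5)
Your proposal is correct and matches the paper's approach: both take the inner-product statistic $\langle\sum X_i,\sum Y_j\rangle$ (yours normalized by $m^2$, which is purely cosmetic), plug the same variance bound $O(T_1\|\mu\|^2/m + S^2/m^2 + L^2 d\, T_1 T_2/m^3)$ into \cref{thm:general-estimator} at $t=\rho$, and arrive at the same breakpoint and hence the same sample complexity. The one small methodological difference is how you dispose of the problematic $T_1\|\mu\|^2$ term when $\|\mu\|$ is large: the paper first applies an AM--GM split $\|\mu\|^2 \lesssim \|\mu\|^4/\alpha^2 + \alpha^2$ and then runs a four-case analysis on which variance term dominates, whereas you absorb that term directly via the expectation-dependent multiplier $(1 + (\Ema{Z}-\tau_1)/\Delta)^2 \gtrsim \|\mu\|^4/\alpha^4$ built into \cref{def:expectation-gap}. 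Both are sound; yours is marginally more streamlined and makes explicit (as you note at the end) why the nonuniform variance allowance in the definition is load-bearing here.
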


\begin{proof}
    Consider sampling $X_1, \dots, X_{\ns}$, $Y_1, \dots, Y_{\ns}$ from $\cD$, and compute the statistic $Z = \langle \sum X_i, \sum Y_j \rangle$. For any distribution $\cD$ satisfying the assumptions in the lemma statement, the expectation of $Z$ is $\ns^2 \cdot \|\mu\|^2$ and the variance is $O(\ns^3 \cdot T_1 \cdot \|\mu\|^2 + \ns^2 \cdot S^2 + L^2 \cdot nd \cdot T_1 T_2)$. We upper bound the first term in the variance as
    \[\ns^3 \cdot T_1 \cdot \|\mu\|^2 \le  O\left( \ns^3 \cdot T_1 \cdot  \left( \frac{\|\mu\|^4}{\alpha^2} + \alpha^2 \right)\right),\]
    and instead use the variance upper bound of $O(\ns^3 \cdot T_1 \cdot \|\mu\|^4/\alpha^2 + \ns^3 \cdot T_1 \cdot \alpha^2 + \ns^2 \cdot S^2 + L^2 \cdot \ns d \cdot T_1 T_2)$.

    Our goal is to use \cref{thm:general-estimator} to prove the lemma. Towards that end, let $\tau_0(\ns) = 0, \tau_1(\ns) = \ns^2 \alpha^2/4$ be the null and alternate hypothesis thresholds. Note that $\Delta(\ns) = \tau_1(\ns)$ and 
    \[\p{1 + \max\bc{0, \frac{\Ema[]{Z(\ns)} - \tau_1(\ns)}{\Delta(\ns)}, \frac{\tau_0(\ns) - \Ema[]{Z(\ns)}}{\Delta(\ns)}}} = O\left( 1 + \frac{\|\mu\|^2}{\alpha^2} \right). \]
    It remains to pick an appropriate function $\sigma(\ns)$, which we do based on which terms in the variance of $Z$ dominate.

    \paragraph{Case 1:} $\Varma[]{Z(\ns)} \le O(\ns^3 \cdot T_1 \cdot \|\mu\|^4/\alpha^2)$. In this case, it suffices to choose $\sigma(\ns) = \Omega(\alpha \ns^{1.5}\sqrt{T_1})$, since this gives
    \[ \frac{\| \mu\|^2}{\alpha^2} \cdot  \alpha  \ns^{1.5} \sqrt{T_1} \ge \Omega\left(\sqrt{\Varma[]{Z(\ns)}}\right).\]The sampling breakpoints $\ns_t$ must satisfy $f(\ns_t) \le t/2$. Expanding this condition, and recalling that  $\Delta(\ns) = \Theta(\ns^2 \alpha^2)$, it suffices to pick $t$ such that 
    \[ t \ge \Omega \left( \sqrt{\frac{T_1}{\alpha^2 \ns_t}} \right), \]
    or in other words, recalling our setting of $T_1$,
    \[ \ns_t \ge \tilde{\Omega}\left( \frac{1}{\alpha^2 t^2} + \frac{\sqrt{d}}{\rho \alpha t} \right). \]

  \paragraph{Case 2:}$\Varma[]{Z(\ns)} \le O(\ns^3 \cdot T_1 \cdot \alpha^2)$. It suffices to pick $\sigma(\ns) = \Omega(\ns^{1.5} \alpha \sqrt{T_1})$ and following a similar reasoning as above, we arrive at the same lower bound of $\ns_t$ as Case 1.
 \paragraph{Case 3:} $\Varma[]{Z(\ns)} \le O(\ns^2 \cdot S^2)$. It suffices to pick $\sigma(\ns) = \Omega(\ns S)$ and following the same reasoning as in Case 1, it suffices to pick $t$ such that 
 \[ \ns_t \ge \tilde{\Omega}\left( \frac{\sqrt{d}}{t \alpha^2} \right). \]
   \paragraph{Case 4:}  $\Varma[]{Z(\ns)} \le O(L^2 \cdot \ns d \cdot T_1 T_2)$. It suffices to pick $\sigma(\ns) = \Omega(L \sqrt{\ns d T_1 T_2})$ and again the same reasoning implies that it suffices to pick $t$ such that 
   \[ \ns_t^3 \ge  \tilde{\Omega}\left( \frac{dT_1T_2}{t^2 \alpha^4} \right). \]
   We now simplify the above expression. Recalling our values of $T_1$ and $T_2$, we have $dT_1T_2 \ge \tilde{\Omega}\left( \frac{d}{\rho^2} + \frac{d^2}{\ns_t \rho^4}\right)$, and so it suffices to pick $\ns_t$ and $t$ such that 
   \[ \ns_t \ge \tilde{\Omega}\left(  \frac{d^{1/3}}{\rho^{2/3}t^{2/3} \alpha^{4/3}} + \frac{\sqrt{d}}{\rho \alpha \sqrt{t}} \right). \]

   Completing the case analysis, across all parameter settings, the breakpoint $\ns_t$ will be bounded by 
   \[ \ns_t =  \tilde{O}\left( \frac{1}{\alpha^2 t^2} + \frac{\sqrt{d}}{\rho \alpha  t}  +  \frac{\sqrt{d}}{t \alpha^2} +   \frac{d^{1/3}}{\rho^{2/3}t^{2/3} \alpha^{4/3}} + \frac{\sqrt{d}}{\rho \alpha \sqrt{t}} \right).\]

   Applying \cref{thm:general-estimator} with $t = \rho$ and using the fact that 
   \[ \frac{1}{\rho \alpha^2} + \frac{1}{\rho^2 \alpha} \ge \Omega\left( \frac{1}{\rho^{4/3}\alpha^{4/3}} \right), \]
   gives us a sample complexity of 
   \[  \tilde{O}\left( \frac{1}{\alpha^2 \rho^2} + \frac{\sqrt{d}}{\alpha \rho^2}  +  \frac{\sqrt{d}}{ \rho \alpha^2} \right),\]
   meaning that as long as $\ns \ge \tilde{O}\left(\frac{\sqrt{d}}{\alpha^2 \rho} + \frac{\sqrt{d}}{\alpha \rho^2} + \frac{1}{\alpha^2 \rho^2}\right)$, we have an $O(\rho)$-replicable algorithm (at least, a replicable algorithm on distributions $\cD$ satisfying the assumption), that accepts w.h.p. on any good distribution with mean $0$ and rejects w.h.p. on any good distribution with mean at least $\alpha/2$ in absolute value.

    By \Cref{lem:projected-null-is-good} and \Cref{prop:projection-preserves-norms}, $\cN(0, I)_{\proj}$ is good and has mean $0$, which means that the algorithm accepts w.h.p. Also, by \Cref{prop:projection-preserves-norms}, if $\|\mu\| \ge \alpha$, then $\cN(0, I)_{\proj}$ has mean at least $\alpha/2$. So, either $\cN(0, I)_{\proj}$ is not good or the algorithm rejects w.h.p. This completes the lemma.
\end{proof}

\paragraph{Putting things together.} To summarize, our overall algorithm is to set the parameters $L, S, T_1, T_2$ as in the beginning of \cref{sec:gaussian_C}, set $\ns = \tilde{O}\left(\frac{\sqrt{d}}{\alpha^2 \rho} + \frac{\sqrt{d}}{\alpha \rho^2} + \frac{1}{\alpha^2 \rho^2}\right)$, and then run $\A_1$ from \Cref{lem:matrix-chernoff-application-2} with $\ns$ samples, $\A_2$ from \Cref{cor:step-2} with $\ns$ fresh samples, and $\A_3$ from \Cref{lem:step-3} with $2\ns$ fresh samples. If the distribution $\cD$ is bad, then either $\A_1$ or $\A_2$ will reject with $1-\rho$ probability, which means we also have $O(\rho)$-replicability. If $\cD$ is good, all of $\A_1, \A_2, \A_3$ are $O(\rho)$-replicable, so if we run them with fresh samples and fresh randomness, we still have $O(\rho)$-replicability. Finally, if given samples from $\cN(0, I)_{\proj}$, we pass all three steps with high probability, and if given samples from $\cN(\mu, I)_{\proj}$, we fail at least one of the three steps with high probability. Hence, this completes the proof of \Cref{thm:gaussian-testing-main}.

\subsection{Lower Bound}

We recall our lower bound on replicable Gaussian mean testing.

\repgaussianlb*

We start by proving a generalization of \Cref{lem:label_invariant} that allows us to convert replicable algorithms into (canonical) replicable algorithms that only depend on the \emph{sufficient statistic} of the data, assuming the samples are drawn from a parameterized distribution $\cD_\theta: \theta \in \Theta$. However, this will come at the cost of the new algorithm being only \emph{weakly} replicable (recall the definition of weakly replicable in \Cref{def:weak_replicability}).
In our application, the parameterized distributions are $\cN(\mu, I)$ for any $\mu \in \R^d$ (i.e., $\theta = \mu$ and $\Theta = \R^d$). 


\begin{lemma}[Sufficient Statistic Invariant Algorithm]\label{lem:suff_stat_invariant}
    Given a parameterized distribution $\cD_\theta$ and regions $\Theta_{\accept}, \Theta_{\reject}$, suppose $\A_0(X; r)$ is a $\rho$-replicable algorithm that distinguishes between distributions $\cD_\theta: \theta \in \Theta_{\accept}$ and $\cD_\theta: \theta \in \Theta_{\reject}$. Then, there exists a weakly $\rho$-replicable algorithm $\A_2(S(X); r)$ that solves the same problem on $\ns$ samples with the same accuracy and only depends on the sufficient statistic $S(X)$. Moreover, $\A_2$ has the canonical property, meaning $r \sim \Unif[0, 1]$ and there exists a deterministic function $q(S(X)) \in [0, 1]$ where $\A_2(S(X); r) = 1$ if $r \le q(S(X))$ and $0$ otherwise.
\end{lemma}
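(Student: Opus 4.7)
The plan is to mirror the proof of Lemma~\ref{lem:label_invariant}, replacing averaging over label permutations with averaging over the fiber of the sufficient statistic $S$. First, I would apply Lemma~\ref{lem:canonical} to $\A_0$ to obtain a canonical $\rho$-replicable algorithm $\A_1(X;r)$ with deterministic function $f: \XX^\ns \to [0,1]$ such that $\A_1$ accepts iff $r \le f(X)$. By Definition~\ref{def:sufficient_statistic}, the conditional distribution of $X$ given $S(X) = s$ under $\cD_\theta^{\otimes \ns}$ does not depend on $\theta \in \Theta$, so the function
$$q(s) := \Ema[X \mid S(X) = s]{f(X)}$$
is well-defined (independently of $\theta$). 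I would then define $\A_2(S(X); r)$ to draw $r \sim \Unif[0,1]$ and accept iff $r \le q(S(X))$. By construction, $\A_2$ depends on $X$ only through $S(X)$ and is in the canonical threshold form.

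For accuracy, the tower property gives, for any $\theta \in \Theta$,
$$\Prma[X \sim \cD_\theta^{\otimes \ns},\, r]{\A_2(S(X); r) = \accept} = \Ema[X]{q(S(X))} = \Ema[X]{\Ema[X' \mid S(X') = S(X)]{f(X')}} = \Ema[X]{f(X)},$$
which equals the acceptance probability of $\A_1$, and therefore of $\A_0$, on $\cD_\theta$. Hence accuracy on $\Theta_{\accept}$ and $\Theta_{\reject}$ is preserved.

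For weak replicability, fix $\theta \in \Theta$ and let $X, X' \sim \cD_\theta^{\otimes \ns}$ be independent. The canonical structure of $\A_2$ gives $\Prma[r]{\A_2(S(X);r) \ne \A_2(S(X');r)} = |q(S(X)) - q(S(X'))|$. Applying Jensen's inequality inside each conditional expectation defining $q$ and then taking expectation over $X, X'$, I would establish
$$\Ema[X, X']{|q(S(X)) - q(S(X'))|} \le \Ema[Y, Y' \sim \cD_\theta^{\otimes \ns}]{|f(Y) - f(Y')|} \le \rho.$$
The first inequality uses that, by sufficiency, drawing $Y$ from the conditional $\cD_\theta^{\otimes \ns}(\cdot \mid S = S(X))$ while $X \sim \cD_\theta^{\otimes \ns}$ produces a sample whose marginal law is again $\cD_\theta^{\otimes \ns}$ (and similarly for $Y'$), so the fiber-averages inside $q$ can be unfolded into genuine samples; the second inequality is the $\rho$-replicability of $\A_1$ on $\cD_\theta^{\otimes \ns}$, combined with the identity $\Prma[r]{\A_1(Y;r) \ne \A_1(Y';r)} = |f(Y) - f(Y')|$ from the canonical form.

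The main obstacle, and the reason the conclusion is only \emph{weak} replicability, is that this entire reduction hinges on $X$ being drawn from $\cD_\theta$ for some $\theta \in \Theta$. Sufficiency of $S$ is asserted only with respect to the family $\{\cD_\theta\}_{\theta \in \Theta}$, so for a distribution on $\XX^\ns$ outside this family (or not of product form), the conditional law of $X$ given $S(X)$ is uncontrolled, and neither the Jensen step nor the replacement of fiber-averages by true samples is valid. This is precisely the gap with Lemma~\ref{lem:label_invariant}, where label symmetry is a universal property of the domain so the averaging trick works under every underlying distribution and yields full (rather than merely weak) replicability.
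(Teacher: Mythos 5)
Your proposal matches the paper's proof essentially step for step: canonicalize via Lemma~\ref{lem:canonical} to get $\A_1$ and $f$, define $q(S(X))$ as the conditional expectation of $f$ over the fiber of $S$ (well-defined by sufficiency), verify accuracy by the tower property, and verify weak replicability by pulling the absolute value inside the conditional expectations (your Jensen step is the paper's triangle-inequality step) and then marginalizing so that the fiber-averaged samples have law $\cD_\theta^{\otimes \ns}$. The closing remark on why the argument only yields weak replicability is a correct and clarifying gloss on the same construction.
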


\begin{proof}
    Let $\A_1(X;r)$ be the algorithm defined in Lemma~\ref{lem:canonical} with the deterministic function $f:\XX^{\ns} \rightarrow [0,1]$.
    Choose an arbitrary $\theta$, and consider the following deterministic function of the sample set $X$, $q:S(\XX^{\ns}) \rightarrow[0,1]\,$:
\begin{equation} \label{eq:q_S}
    q(S(X)) \coloneqq \Ema[Y \sim \cD_{\theta}^{\otimes \ns}|S(Y) = S(X)]{f(Y)}\,.
\end{equation}
    (Recall that $Y \sim \cD_{\theta}^{\otimes \ns}|S(Y) = S(X)$ means we generate $Y = (Y_1, \dots, Y_{\ns}) \overset{i.i.d.}{\sim} \cD_{\theta}$ conditional on $S(Y) = S(X)$.)
    Note that by definition of sufficient statistic, the choice of ${\theta}$ does not affect the conditional distribution $Y \sim \cD_\theta^{\otimes \ns}|S(Y) = S(X)$.
    Moreover, the right-hand side of \eqref{eq:q_S} only depends on $X$ through $S(X)$, so $q$ can be defined.
    The algorithm $\A_2(X; r)$ operates similarly to $\A_1(X; r)$, except that it uses $q$ instead of $f$. For a random seed $r \sim \Unif([0, 1])$, $\A_2(X; r)$ outputs \accept if $r \le q(S(X))$, and \reject otherwise.

    To check the accuracy of $\A_2$, first consider any $\cD_\theta$, where $\theta \in \Theta_{\accept}$. Then, 
\begin{align*}
    \Prma[r,X\sim \cD_{\theta}^{\otimes \ns}] {\A_2(X;r) = \text{\accept}}
    & = \Ema[X\sim \cD_{\theta}^{\otimes \ns}] {q(S(X))} \\
    & = \Ema[X\sim \cD_{\theta}^{\otimes \ns}] {\Ema[Y\sim \cD_{\theta}^{\otimes \ns}|S(Y) = S(X)] {f(Y)}} \\
    &= \Ema[Y\sim \cD_{\theta}^{\otimes \ns}]{f(Y)} \\
    &= \Prma[r,Y\sim \cD_{\theta}^{\otimes \ns}] {\A_1(Y;r) = \text{\accept}} \\
    & = \Prma[r,Y\sim \cD_{\theta}^{\otimes \ns}]{\A_0(Y;r) = \text{\accept}}
    \,. 
        \tag{Using Lemma~\ref{lem:canonical}, Eq.~\eqref{eq:A_0_accept}}
\end{align*}
    The third line holds since if we sample $X \sim \cD_{\theta}^{\otimes \ns}$ and $Y \sim \cD_{\theta}^{\otimes \ns} | S(Y) = S(X),$ it is equivalent to sample $S(Y_1, \dots, Y_{\ns})$ where $Y_1, \dots, Y_{\ns} \sim \cD_\theta,$ and then $Y_1, \dots, Y_{\ns}$ have the right conditional distibution given $S(Y_1, \dots, Y_{\ns})$. So, $Y_1, \dots, Y_{\ns}$ in fact have the same marginal distribution as $\cD_\theta^{\otimes \ns}$. The same argument holds for $\theta \in \Theta_{\text{reject}}$, so $\A_2$ has the same probabilities of outputting \accept and \reject as $\A_0$, and thus inherits the accuracy guarantees of $\A_0$.

Next, we show weak replicability of $\A_2$. For any distribution $\cD_{\theta}$, similar to Lemma~\ref{lem:canonical}, we have: 
\begin{equation} \label{eq:pr_rep_A_2_part1}
    \begin{split}
\Prma[r,X,X'\sim \cD_{\theta}^{\otimes \ns}]
    {\A_2(X;r) \not = \A_2(X';r)} & = \Ema[X,X']
    {\Prma[r]{\A_2(X;r) \not = \A_2(X';r)}} 
    \\& = 
    \Ema[X,X']
    {\abs{q(S(X)) ~-~q(S(X'))}}\,,
    \end{split}
\end{equation}
where in the last line, we use the structure of $\A_2$. Using the definition of $q$, we have:
\begin{align*} 
    \Prma[r,X,X'\sim \cD_{\theta}^{\otimes \ns}] {\A_2(X;r) \not = \A_2(X';r)}
    & = \Ema[X,X'] {\abs{q(S(X)) ~-~q(S(X'))}} \\
    &= \Ema[X, X']{\left|\Ema[Y \sim \cD_{\theta}^{\otimes \ns}|S(Y)=S(X)]{f(Y)}-\Ema[Y' \sim \cD_{\theta}^{\otimes \ns}|S(Y')=S(X')]{f(Y')}\right|} \\
    &\leq \Ema[X, X']{\Ema[Y \sim \cD_{\theta}^{\otimes \ns}|S(Y)=S(X), Y' \sim \cD_{\theta}^{\otimes \ns}|S(Y')=S(X')]{|f(Y)-f(Y')}|} \tag{Via triangle inequality} \\
    &= \Ema[Y,Y' \sim \cD_\theta^{\otimes \ns}]    {|f(X)-f(Y)|} \\    
    & = \Prma[Y,Y' \sim \cD_\theta^{\otimes \ns}, r \sim \Unif{[0, 1]}] {\A_1(Y; r) \neq \A_1(Y'; r)} \\
    & \leq \rho. \tag{Since $\A_1$ is $\rho$-replicable}
\end{align*}
Hence, the proof is complete. 
\end{proof}

Next, we show that in the Gaussian mean testing setting, the algorithm can be assumed to only depend on the Euclidean norm of the empirical mean of the samples.

\begin{lemma} \label{lem:gaussian-symmetry-conversion}
    Let $\A_0(X; r)$ be a $\rho$-replicable algorithm that distinguishes between $\cN(\mu, I): \mu = 0$ and $\cN(\mu, I): \|\mu\| \ge \alpha$. Then, there exists another weakly $\rho$-replicable algorithm $\A_3(\|\bar{X}\|, r)$, ony depending on $X$ through the norm of its empirical mean $\|\bar{X}\| = \left\| \frac{X_1 + \cdots + X_{\ns}}{\ns} \right\|$, that also distinguishes between $\cN(\mu, I): \mu = 0$ and $\cN(\mu, I): \|\mu\| \ge \alpha$. Moreover, $\A_3$ has the canonical form, meaning $r \sim \Unif[0, 1]$ and there exists a deterministic function $q_3(\|\bar{X}\|) \in [0, 1]$ where $\A_3(\|\bar{X}\|; r) = 1$ if $r \le q_3(\|\bar{X}\|)$ and $0$ otherwise.
\end{lemma}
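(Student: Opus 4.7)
\bigskip

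The plan is to first reduce to a canonical sufficient-statistic tester via \Cref{lem:suff_stat_invariant}, and then average the resulting deterministic function over the orthogonal group to exploit the rotational symmetry of identity-covariance Gaussian mean testing.

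First, I would apply \Cref{lem:suff_stat_invariant} to the parameterized family $\cD_\mu = \cN(\mu, I)$ with $\Theta_{\accept} = \{0\}$ and $\Theta_{\reject} = \{\mu \in \R^d : \|\mu\| \ge \alpha\}$. By \Cref{prop:gaussian_sufficient_statistic}, the empirical mean $\bar{X}$ is a sufficient statistic, so this yields a weakly $\rho$-replicable canonical algorithm $\A_2(\bar{X}; r)$ defined by some deterministic $q_2 \colon \R^d \to [0,1]$ with $\A_2(\bar{X};r) = \accept$ iff $r \le q_2(\bar{X})$.

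Next, I would use that both $\Theta_{\accept}$ and $\Theta_{\reject}$ are invariant under the orthogonal group $O(d)$, and that for any orthogonal $M$, if $X \sim \cN(\mu, I)$ then $MX \sim \cN(M\mu, I)$ (since $M M^\top = I$). Letting $\nu$ denote the Haar probability measure on $O(d)$, define
\[
    q_3(\bar{X}) \;\coloneqq\; \Ema[M \sim \nu]{q_2(M \bar{X})}.
\]
By left-invariance of $\nu$, $q_3(N\bar{X}) = q_3(\bar{X})$ for every orthogonal $N$, so $q_3$ depends on $\bar{X}$ only through $\|\bar{X}\|$; equivalently, there is $\tilde q_3\colon [0,\infty) \to [0,1]$ with $q_3(\bar{X}) = \tilde q_3(\|\bar{X}\|)$. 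Define $\A_3(\|\bar{X}\|; r)$ to output \accept iff $r \le \tilde q_3(\|\bar{X}\|)$, which is canonical by construction.

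For accuracy, note that for $\mu \in \Theta_{\accept} \cup \Theta_{\reject}$,
\[
    \Prma[r, X \sim \cN(\mu,I)^{\otimes \ns}]{\A_3 = \accept}
    = \Ema[M \sim \nu]{\Ema[\bar{X} \sim \cN(\mu, I/\ns)]{q_2(M \bar{X})}}
    = \Ema[M \sim \nu]{\Ema[\bar{Y} \sim \cN(M\mu, I/\ns)]{q_2(\bar{Y})}},
\]
and each inner expectation inherits the accuracy of $\A_2$ at parameter $M\mu$, since $\|M\mu\| = \|\mu\|$ and hence $M\mu$ lies in the same half of the decision. For weak replicability, by the triangle inequality and the argument used in \Cref{lem:label_invariant,lem:suff_stat_invariant},
\[
    \Ema[\bar{X}, \bar{X}']{\bigl|q_3(\bar{X}) - q_3(\bar{X}')\bigr|}
    \;\le\; \Ema[M \sim \nu]{\Ema[\bar{X}, \bar{X}']{\bigl|q_2(M\bar{X}) - q_2(M\bar{X}')\bigr|}},
\]
and for each fixed $M$ the variables $M\bar{X}, M\bar{X}'$ are distributed as two independent empirical means of $\ns$ i.i.d.\ samples from $\cN(M\mu, I)$, so the inner expectation is at most $\rho$ by weak replicability of $\A_2$ under the Gaussian family.

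The main obstacle is conceptual rather than computational: the rotational averaging only preserves replicability because we invoke it on the family $\{\cN(M\mu, I)\}_{M \in O(d)}$, which is exactly why the conclusion is phrased in the \emph{weak} replicability sense. Under a general distribution $\cD$, we would have no handle on the distribution of $M\bar{X}$, so this reduction fundamentally requires working inside a parameterized family whose sufficient statistic is available (the Gaussian empirical mean). The only other technical point is verifying the interchange of expectations and the existence of the Haar measure on $O(d)$, both of which are standard.
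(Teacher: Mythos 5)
Your proof is correct and follows essentially the same two-step route as the paper: first invoke \Cref{lem:suff_stat_invariant} with the empirical-mean sufficient statistic to obtain a canonical tester $\A_2(\bar{X};r)$, then average $q_2$ over Haar measure on $O(d)$ and use rotational symmetry of $\cN(\mu,I)$ to verify that the averaged $q_3$ depends only on $\|\bar{X}\|$ while preserving accuracy and weak replicability. The only cosmetic point is that the invariance $q_3(N\bar{X})=q_3(\bar{X})$ uses invariance of Haar measure under right-multiplication (substituting $M'=MN$), not left-invariance as you wrote, but since $O(d)$ is compact the Haar measure is bi-invariant and nothing breaks.
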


\begin{proof}
    By \Cref{prop:gaussian_sufficient_statistic}, $\bar{X} = \frac{X_1 + \cdots + X_{\ns}}{\ns}$ is a sufficient statistic for the parameterized distribution $\cN(\mu, I)$. By \Cref{lem:suff_stat_invariant}, we can start with the weakly $\rho$-replicable algorithm $\A_2(\bar{X}; r)$, which distinguishes between $\mu = 0$ and $\|\mu\| \ge \alpha$ by sampling $r \sim \Unif[0, 1]$ and outputting $1$ if $r \le q(\bar{X})$, for some deterministic function $q$.

    Let $O_d$ represent the uniform (Haar) measure over $d \times d$ orthogonal matrices. Next, let $X_1, \dots, X_{\ns} \sim \cN(\mu, I)$, and let $q_3 = \Ema[H \sim O_d]{q(H(\bar{X}))}$. Then, $H(\bar{X})$ is a random vector on the sphere of radius $\|\bar{X}\|$, so $q_3$ only depends on $\|\bar{X}\|$. So, for any $X_1, \dots, X_{\ns},$ we can define $q_3(\|\bar{X}\|) = \Ema[H \sim O_d]{q(H(\bar{X}))}$.

    First, we check that the algorithm $\A_3(\|\bar{X}\|; r)$, which outputs $1$ if $r \le q_3(\|\bar{X}\|)$ and $0$ otherwise, is accurate. If $X_1, \dots, X_{\ns} \sim \cN(\mu, I),$ then $\bar{X} \sim \cN(\mu, \frac{I}{\ns})$. So, $H(\bar{X})$ has the distribution of $\cN(\mu, \frac{I}{\ns})$ followed by a random rotation. This is the same as first randomly rotating $\mu$ to get some $\mu'$ with $\|\mu'\| = \|\mu\|$, and then sampling from $\cN(\mu', \frac{I}{\ns})$. So, if $\beta = \|\mu\|,$ then
\begin{align*}
    \Prma[r, X \sim \cN(\mu, I)^{\otimes \ns}] {\A_3(X; r) = \accept}
    &= \Ema[X \sim \cN(\mu, I)^{\otimes \ns}, H \sim O_d] {q(H(\bar{X}))} \\
    &= \Ema[\mu': \|\mu'\| = \beta]{\Ema[X \sim \cN(\mu', I)^{\otimes \ns}]{q(\bar{X})}}.
\end{align*}
    If $\mu = 0$, then $\mu' = 0$ with probability $1$, and $\Ema[X \sim \cN(\mu', I)^{\otimes \ns}]{q(\bar{X})} \ge 1-\delta$. Thus, $\Prma[r, X \sim \cN(\mu, I)^{\otimes \ns}] {\A_3(X; r) = \accept} \ge 1-\delta$ as well. Alternatively, if $\beta = \|\mu\| \ge \alpha$, then $\|\mu'\| \ge \alpha$ with probability $1$, and $\Ema[X \sim \cN(\mu', I)^{\otimes \ns}]{q(\bar{X})} \le \delta$. Thus, $\Prma[r, X \sim \cN(\mu, I)^{\otimes \ns}] {\A_3(X; r) = \accept} \le \delta$ as well. Hence, the same accuracy bounds hold.
    
    To prove weak replicability, suppose that $X, X' \sim \cN(\mu, I)^{\otimes \ns}$. Then, 
\begin{align*} 
    \Prma[r,X,X'\sim \cN(\mu, I)^{\otimes \ns}] {\A_3(X;r) \not = \A_3(X';r)}
    & = \Ema[X,X'\sim \cN(\mu, I)^{\otimes \ns}] {\abs{q_3(\|X\|) ~-~q_3(\|X'\|)}} \\
    &= \Ema[X, X']{\left|\Ema[H \sim O_d]{q(H(\bar{X}))}-\Ema[H \sim O_d]{q(H(\bar{X'}))}\right|} \\
    &\leq \Ema[X, X', H \sim O_d]{\left|q(H(\bar{X}))-q(H(\bar{X'}))\right|} \tag{Via triangle inequality} \\
    &= \Ema[H \sim O_d]{\Ema[X, X']{\left|q(H(\bar{X}))-q(H(\bar{X'}))\right|}}.
\end{align*}
    If you fix $H$, then if $X \sim \cN(\mu, I)^{\otimes \ns}$, then $H(\bar{X})$ has the same distribution as the empirical mean of $\ns$ samples drawn from $\cN(H(\mu), I)$, by rotational symmetry of the Gaussian. Hence, we have
\begin{align} 
    \Prma[r,X,X'\sim \cN(\mu, I)^{\otimes \ns}] {\A_3(X;r) \not = \A_3(X';r)}
    &= \Ema[H \sim O_d]{\Ema[X, X']{\left|q(H(\bar{X}))-q(H(\bar{X'}))\right|}} \nonumber \\
    &= \Ema[H \sim O_d]{\Ema[X, X' \sim \cN(H(\mu), I)^{\otimes \ns}]{\left|q(\bar{X})-q(\bar{X'})\right|}} \nonumber \\
    &= \Ema[H \sim O_d]{\Prma[X, X' \sim \cN(H(\mu), I)^{\otimes \ns}, r \sim \Unif{[0, 1]}]{\A_2(X; r) \neq \A_2(X'; r)}}. \tag{Definition of $\A_2$ and $q$} \\
    &\ge \Ema[H \sim O_d]{1-\rho} = 1-\rho \tag{Weak replicability of $\A_2$}
\end{align}
    Therefore, the overall probability $\Prma[r,X,X'\sim \cN(\mu, I)^{\otimes \ns}] {\A_3(X;r) \not = \A_3(X';r)}$ is at least $1-\rho$. Hence, $\A_3$ is also $\rho$-weakly replicable.
\end{proof}

We need the following auxiliary lemma about total variation distance between norms of Gaussians.

\begin{lemma} \label{lem:norm-gaussian-tv}
    Let $\beta_1, \beta_2 \ge 0$, and $d, \ns \in \N$ be positive integers. Let $Z_1$ be the distribution in which we pick an arbitrary $\mu_1 \in \R^d$ of norm $\beta_1$, sample $z_1 \sim \cN(\mu_1, \frac{I}{s})$, and $Z_1 = \|z_1\|$. (By rotational symmetry of Gaussians, note that the choice of $\mu_1$ doesn't affect the distribution of $Z_1$.) Likewise, let $Z_2$ be the distribution in which we pick an arbitrary $\mu_2 \in \R^d$ of norm $\beta_2$, sample $z_2 \sim \cN(\mu_2, \frac{I}{\ns})$, and $Z_2 = \|z_2\|$.

    Then, if $\ns \le \max\left(\frac{c}{(\beta_1-\beta_2)^2}, \frac{c \sqrt{d}}{|\beta_1^2-\beta_2^2|}\right)$, then $\dtv(Z_1, Z_2) \le 0.5$.
\end{lemma}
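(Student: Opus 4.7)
The plan is to reduce the problem to comparing two non-central chi-squares of the same dimension. By rotational symmetry of the isotropic Gaussian I would take $\mu_i = \beta_i e_1$ without loss of generality, so $\ns Z_i^2$ is a non-central chi-square $W_i \sim \chi_d^2(\ns\beta_i^2)$; since $r \mapsto r^2$ and scaling by $\ns$ are bijective on $[0,\infty)$, $\dtv(Z_1, Z_2) = \dtv(W_1, W_2)$. Setting $a_i = \sqrt{\ns}\beta_i$ and splitting off the first coordinate yields the representation $W_i \stackrel{d}{=} (a_i + Y)^2 + S$ with $Y \sim \cN(0,1)$ and $S \sim \chi_{d-1}^2$ independent. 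I would then split into two cases matching the two terms in the maximum of the hypothesis.

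\textbf{Case 1: $\sqrt{\ns}|\beta_1 - \beta_2| \le 1$.} Here the data processing inequality (using that $W_i$ is a deterministic function of $(a_i + Y, S)$ and that the $S$-marginal is common) gives
\[\dtv(W_1, W_2) \le \dtv(\cN(a_1, 1), \cN(a_2, 1)) \le \frac{|a_1 - a_2|}{\sqrt{2\pi}} = \frac{\sqrt{\ns}|\beta_1 - \beta_2|}{\sqrt{2\pi}} < 0.5.\]

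\textbf{Case 2: $\sqrt{\ns}|\beta_1 - \beta_2| > 1$.} Then $\ns > 1/(\beta_1 - \beta_2)^2$, so the hypothesis forces $\ns|\beta_1^2 - \beta_2^2| \le c\sqrt{d}$. Integrating out $Y$ against its density $\phi$ and pulling absolute values inside the integral gives
\[\dtv(W_1, W_2) \le \int \phi(y)\, \dtv\p{\chi_{d-1}^2,\; \chi_{d-1}^2 + \Delta(y)}\, dy,\]
where $\Delta(y) = (a_2 + y)^2 - (a_1 + y)^2 = (a_2 - a_1)(a_1 + a_2 + 2y)$. On $\{|y| \le 1\}$ I bound $|\Delta(y)| \le |a_1^2 - a_2^2| + 2|a_1 - a_2|$. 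The first summand equals $\ns|\beta_1^2 - \beta_2^2| \le c\sqrt{d}$; for the second, $\sqrt{\ns}(\beta_1 + \beta_2) \ge \sqrt{\ns}|\beta_1 - \beta_2| > 1$ yields $|a_1 - a_2| = \ns|\beta_1^2 - \beta_2^2|/(\sqrt{\ns}(\beta_1 + \beta_2)) < c\sqrt{d}$. For a small enough universal $c$ and $d \ge 2$ this gives $|\Delta(y)| \le 3c\sqrt{d} \le 0.001\sqrt{d-1}$, so Proposition~\ref{prop:tv-chi-square-shifted} bounds the integrand by $0.1$ on $\{|y| \le 1\}$ and trivially by $1$ elsewhere, yielding $\dtv(W_1, W_2) \le 0.1 + \Pr[|Y| > 1] < 0.5$. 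The edge case $d = 1$ is vacuous, since combining $\sqrt{\ns}|\beta_1 - \beta_2| > 1$ with $\ns|\beta_1^2 - \beta_2^2| \le c$ forces $\sqrt{\ns}(\beta_1 + \beta_2) < c < 1$, a contradiction.

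The hard part will be Case 2: data processing alone is too lossy once $\sqrt{\ns}|\beta_1 - \beta_2| \gg 1$, and the key step is to condition on the single Gaussian coordinate $Y$ so that the comparison becomes between two translates of the same $\chi_{d-1}^2$ law. Proposition~\ref{prop:tv-chi-square-shifted} then exactly converts the $\sqrt{d}$-scale mean shift allowed by the hypothesis into a constant total variation bound. The main bookkeeping is choosing one universal constant $c$ that makes both cases deliver $\dtv(Z_1, Z_2) \le 0.5$ simultaneously.
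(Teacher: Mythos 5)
Your proof is correct and follows essentially the same route as the paper: both isolate the first Gaussian coordinate, couple/condition on it, and reduce the comparison to two translates of $\chi_{d-1}^2$ handled by Proposition~\ref{prop:tv-chi-square-shifted}, with the small-separation regime handled by the TV between two unit-variance Gaussians. Your explicit two-case split on $\sqrt{\ns}\,|\beta_1-\beta_2|$ versus $1$, the data-processing phrasing in Case~1, and the slightly different bound $|\Delta(y)| \le |a_1^2-a_2^2| + 2|a_1-a_2|$ on $\{|y|\le 1\}$ are just a cleaner bookkeeping of the same argument, including the same disposal of the $d=1$ edge case.
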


\begin{proof}
    It is equivalent to look at $\dtv(Z_1^2, Z_2^2)$ since $Z_1, Z_2 \ge 0$ always. Moreover, we may assume WLOG that $\mu_1 = (\beta_1, 0, \dots, 0) \in \R^d$ and $\mu_2 = (\beta_2, 0, \dots, 0) \in \R^d$.

    Then, $Z_1^2 = (\beta_1 + \frac{z_1}{\sqrt{s}})^2 + (\frac{z_2}{\sqrt{s}})^2 + \cdots + (\frac{z_d}{\sqrt{s}})^2$, and $Z_2^2 = (\beta_2 + \frac{z_1}{\sqrt{s}})^2 + (\frac{z_2}{\sqrt{s}})^2 + \cdots + (\frac{z_d}{\sqrt{s}})^2$, where $z_1, \dots, z_d \sim \cN(0, 1)$.

    First, if $\beta_1-\beta_2 \le \frac{0.1}{\sqrt{s}}$, then the total variation distance between $\beta_1 + \frac{z}{\sqrt{s}}$ and $\beta_2 + \frac{z}{\sqrt{s}}$ is at most $0.5$ if $z \sim \cN(0, 1)$. Hence, the total variation distance between $(\beta_1 + \frac{z_1}{\sqrt{s}})^2$ and $(\beta_2 + \frac{z_1}{\sqrt{s}})^2$ is also at most $0.5$. Thus, we can couple the values $z_2, \dots, z_d$, to obtain that $\dtv(Z_1^2, Z_2^2) \le 0.5$, as long as $\beta_1-\beta_2 \le \frac{0.1}{\sqrt{s}}$, or equivalently, if $s \le \frac{0.01}{(\beta_1-\beta_2)^2}$.

    For any $d \ge 2$, note that $(\frac{z_2}{\sqrt{s}})^2 + \cdots + (\frac{z_d}{\sqrt{s}})^2 = \frac{1}{s} \cdot \chi_{d-1}^2$. By coupling $z_1$, we have that 
\begin{align*}
    \dtv(Z_1^2, Z_2^2) 
    &\le \Ema[z_1 \sim \cN(0, 1)]{\dtv\left((\beta_1+\frac{z_1}{\sqrt{s}})^2 + \frac{1}{s} \cdot \chi_{d-1}^2, (\beta_2+\frac{z_1}{\sqrt{s}})^2 + \frac{1}{s} \cdot \chi_{d-1}^2\right)} \\
    &= \Ema[z_1 \sim \cN(0, 1)]{\dtv\left(\frac{1}{s} \cdot \chi_{d-1}^2, (\beta_2^2-\beta_1^2)+(\beta_2-\beta_1) \cdot \frac{2z_1}{\sqrt{s}}+ \frac{1}{s} \cdot \chi_{d-1}^2\right)} \\
    &= \Ema[z_1 \sim \cN(0, 1)]{\underbrace{\dtv\left(\chi_{d-1}^2, (\beta_2^2-\beta_1^2) \cdot s + 2(\beta_2-\beta_1) \sqrt{s} \cdot z_1 + \chi_{d-1}^2\right)}_{T}}.
\end{align*}
    By \Cref{prop:tv-chi-square-shifted}, as long as $|\beta_2^2-\beta_1^2| \cdot s + 2 |\beta_2-\beta_1| \cdot \sqrt{s} \cdot |z_1| \le 0.001 \sqrt{d-1}$, the expression $T$ is at most $0.1$. For any positive $\beta_1, \beta_2$, $|\beta_2-\beta_1| \le \sqrt{|\beta_1^2-\beta_2^2|}$ Hence, as long as $|z_1| \le 2$ and $s \le \frac{c \cdot \sqrt{d}}{|\beta_1^2-\beta_2^2|}$ for a sufficiently small constant $c$, we have that $|\beta_1^2-\beta_2^2| \cdot s \le c \cdot \sqrt{d}$ and $2|\beta_2-\beta_1| \cdot \sqrt{s} \cdot |z_1| \le 4 \cdot \sqrt{s \cdot |\beta_2^2-\beta_1^2|} \le 4 \sqrt{c \sqrt{d}}.$ So, if $|z_1| \le 2$ and $c$ is sufficiently small, $|\beta_2^2-\beta_1^2| \cdot s + 2 |\beta_2-\beta_1| \cdot \sqrt{s} \cdot |z_1| \le 0.001 \sqrt{d-1} \le 0.001 \sqrt{d-1},$ and $T \le 0.1$. Since $T \le 1$ with probability $1$, and $|z_1| \le 2$ with at least $0.9$ probability, we have $\Ema[z_1 \sim \cN(0, 1)]{T} \le 0.9 \cdot 0.1 + 0.1 \cdot 1 \le 0.5$. Overall, this means as long as $d \ge 2$ and $\ns \le \frac{c \sqrt{d}}{|\beta_1^2-\beta_2^2|}$ for sufficiently small $c$, $\dtv(Z_1^2, Z_2^2) \le 0.5$.

    In summary, if $\ns \le \frac{0.01}{(\beta_1-\beta_2)^2},$ or $d \ge 2$ and $\ns \le \frac{c \sqrt{d}}{|\beta_1^2-\beta_2^2|}$ for sufficiently small $c$, $\dtv(Z_1^2, Z_2^2) \le 0.5$. Note that if $d = 1$, and $c \le 0.01$, then $s \le \frac{c \sqrt{d}}{|\beta_1^2-\beta_2^2|} \le \frac{0.01}{|\beta_1^2-\beta_2^2|} \le \frac{0.01}{(\beta_1-\beta_2)^2}$. Hence, it suffices for $\ns \le \max\left(\frac{c}{(\beta_1-\beta_2)^2}, \frac{c \sqrt{d}}{|\beta_1^2-\beta_2^2|}\right)$.
\end{proof}

We are now ready to prove the main lower bound.

\begin{proof}[Proof of \Cref{thm:gaussian-testing-lb}]
    First, we may assume that $\rho \le 0.001$, as otherwise the lower bound equals $\Omega\left(\frac{\sqrt{d}}{\alpha^2}\right)$, which is required even for non-replicable testers.
    
    Let $\A_3(\|\bar{X}\|; r)$ be the weakly $\rho$-replicable algorithm on $X_1, \dots, X_{\ns} \sim \cN(\mu, I)$, following \Cref{lem:gaussian-symmetry-conversion}, and let $q_3: \R_{\ge 0} \to [0, 1]$ represent the function where $\A_3\left(X; r\right) = 1$ if $r \le q_3\left(\left\|\frac{X_1 + \cdots + X_{\ns}}{\ns}\right\|\right)$ and $0$ otherwise.

    First, suppose that $\ns \le \frac{c}{\alpha^2 \rho^2}$, where we recall that $c$ is a sufficiently small constant. Let $t = \lfloor \frac{1}{300 \rho} \rfloor$, and define $\beta_i = \alpha \cdot \frac{i}{t}$ for each $i = 0, 1, \dots, t.$ By \Cref{lem:norm-gaussian-tv}, if $Z_i = \|\cN(\mu_i, \frac{I}{s})\|$ where $\|\mu_i\| = \beta_i$, then $\dtv(Z_i, Z_{i+1}) \le 0.5$ for all $0 \le i \le t-1$, since $s \le \frac{c}{(\beta_i-\beta_{i+1})^2}$. Hence, by \Cref{lem:chaining_lemma}, as $\A_3$ is a weakly $\rho$-replicable algorithm, it cannot distinguish between $Z_0$ and $Z_t$, and therefore cannot distinguish between $\ns$ samples from $\cN(0, I)$ and $\ns$ samples from $\cN(\mu, I)$ with $\|\mu\| = \|\mu_t\| = \alpha$.

    Alternatively, suppose that $\ns \le \frac{c \sqrt{d}}{\alpha^2 \rho}$. Again, let $t = \lfloor \frac{1}{300 \rho} \rfloor$, and this time define $\beta_i = \alpha \cdot \sqrt{\frac{i}{t}}$ for each $i = 0, 1, \dots, t.$ By \Cref{lem:norm-gaussian-tv}, if $Z_i = \|\cN(\mu_i, \frac{I}{s})\|$ where $\|\mu_i\| = \beta_i$, then $\dtv(Z_i, Z_{i+1}) \le 0.5$ for all $0 \le i \le t-1$, since $s \le \frac{c \sqrt{d}}{|\beta_i^2-\beta_{i+1}^2|}$. Hence, by \Cref{lem:chaining_lemma}, a weakly $\rho$-replicable algorithm cannot distinguish between $Z_0$ and $Z_t$ with $\ns$ samples, and therefore cannot distinguish between $\ns$ samples from $\cN(0, I)$ and $\ns$ samples from $\cN(\mu, I)$ with $\|\mu\| = \|\mu_t\| = \alpha$.
\end{proof}

\section{Replicable Hypothesis Selection via Testing}
\label{sec:selection}

 In hypothesis selection there is a known collection of distributions $\mathcal{H}=\{H_1,\dots, H_n\}$ all over the same domain. Given samples from an unknown distribution $P$, our goal is to output an $i\in [n]$ such that 
 \begin{align}\label{eq:HS-definition}
 d_{TV}(H_i,P)\leq C\cdot \min_{H\in\mathcal{H}}d_{TV}(H,P)+\eps.
\end{align}
 for some constant $C$ and desired small $\eps$. Without replicability, the sample complexity of this problem is well-studied. Using $O(\frac{\log n}{\eps^2})$ samples, we can achieve the above guarantee with $C=3$ \cite{devroye2001combinatorial}. Moreover, it is known that obtaining the guarantee with any constant $C<3$ requires polynomially many samples in the domain size \cite{bousquet2019optimal}. In particular if the domain is infinite, we cannot get any finite sample complexity. Our main result in this section is a replicable algorithm for hypothesis selection.

Using our improved replicable coin testing algorithm from \cref{sec:coin-testing-upper-bound} as a key subroutine, we obtain the following sample complexity bound on replicable hypothesis selection. The challenge in hypothesis-selection is that we have a huge space of potential outputs (all $n$ hypothesis), but we want to be stable in our outputs (for replicability). Our idea to get around this issue is to view hypothesis selection as an (adaptive) sequence of coin testing problems, by partitioning the hypothesis in a binary-tree fashion. At each node in the tree, we have to pick if we want to descend down to the left or right branch which corresponds to one instance of the coin testing problem. The initial node contains all the $n$ hypothesis and the final leaf nodes only contain one hypothesis.

 \begin{theorem}\label{thm:hypo_selection_main}
Let $0\leq \eps,\rho\leq 1$. There exists a $\rho$-replicable algorithm for hypothesis selection with optimal multiplicative approximation $C=3$ which succeeds with high probability in $n$ and takes samples
\begin{equation*}
    O\p{\frac{\log^5 n}{\eps^2 \rho^2}}
\end{equation*}
in the worst-case, and
\begin{equation*}
    O\p{\frac{\log^5 n}{\eps^2 \rho}}
\end{equation*}
in expectation.
\end{theorem}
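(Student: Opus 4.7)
The plan is to reduce replicable hypothesis selection to a sequence of $O(\log n)$ adaptively-generated replicable coin tests, leveraging the classical non-replicable min-distance estimator of~\cite{devroye2001combinatorial} as a black box. That estimator, given $O(\log n / \eps^2)$ samples from $P$, outputs an index $i$ such that $d_{TV}(H_i, P) \le 3 \min_{j} d_{TV}(H_j, P) + \eps$ with high probability, i.e., a ``$(3,\eps)$-approximate nearest neighbor.''

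First, construct a balanced binary tree $\mathcal{T}$ over $\mathcal{H}$ of depth $K = \lceil \log_2 n \rceil$ whose internal nodes label subsets $S \subseteq \mathcal{H}$ partitioned into halves $(S_L, S_R)$. At each such node, define a Bernoulli random variable $B_S$ that equals $1$ exactly when a fresh invocation of the min-distance estimator (run on the full hypothesis class $\mathcal{H}$ with tolerance $\eps_0 = \eps/(cK)$ and failure probability $\delta_0 = 1/\poly(n)$) outputs some element of $S_L$. The structural observation that drives the reduction is that, since every output of the estimator is w.h.p.\ a $(3,\eps_0)$-approximate nearest neighbor, $\Pr[B_S=1] \ge 3/4$ forces $S_L$ to contain such a hypothesis, and symmetrically for $S_R$ when $\Pr[B_S=1] \le 1/4$; in the ambiguous middle regime $\Pr[B_S=1] \in (1/4,3/4)$ \emph{both} halves contain an approximate nearest neighbor, so either descent is safe.

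Next, at each node apply the optimal replicable coin tester of \cref{thm:optimal-coin-testing} with $p_0 = 1/4$, $q_0 = 3/4$ (so the gap parameter is $1/2$), replicability $\rho_0 = \rho/(cK)$, and failure probability $\delta_0 = 1/\poly(n)$, to choose which branch to descend into; recurse until a leaf (a single hypothesis) is reached, and output that hypothesis. Each Bernoulli sample from $B_S$ costs $O(\log n / \eps_0^2) = \tilde{O}(\log^3 n / \eps^2)$ samples from $P$, and the coin tester takes $\tilde{O}(1/\rho_0^2) = \tilde{O}(\log^2 n/\rho^2)$ coin samples in the worst case and $\tilde{O}(\log n/\rho)$ in expectation. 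Multiplying these and summing along the single root-to-leaf descent path (each level uses fresh samples) yields, after the calculation already indicated in \cref{sec:tech_chaining}, the stated $O(\log^5 n/(\eps^2\rho^2))$ worst-case and $O(\log^5 n/(\eps^2 \rho))$ expected sample complexity. Correctness follows from a union bound over the $\poly(n)$-many invocations of the min-distance estimator and the $K$ coin tests, each failing with probability $1/\poly(n)$; critically the $\eps_0$ errors do not compound across levels because every Devroye-Lugosi invocation is performed against the entire class $\mathcal{H}$, so a single $(3,\eps_0)$-approximate nearest neighbor survives at the leaf with $K \eps_0 \le \eps$. Replicability follows by union-bounding the $K$ independent failure modes of the $K$ coin tests, each of which is $\rho_0 = \rho/(cK)$-replicable.

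The main obstacle is handling the ambiguous regime $\Pr[B_S=1]\in(1/4,3/4)$: the coin tester offers no accuracy guarantee there, so naively two independent executions of the overall algorithm could descend into different halves at such a node. The key point that saves us is that replicability of \cref{thm:optimal-coin-testing} holds \emph{universally} over all Bernoulli distributions, not merely over the null/alternative regions -- so the two runs still agree on which branch to enter with probability $1-\rho_0$, and by the structural observation above, whichever branch they agree on contains an approximate nearest neighbor. This clean decoupling of replicability from accuracy in the middle regime is exactly what makes the recursion well-defined; the remaining work is the routine bookkeeping of error and sample-complexity parameters along the $K$ levels of the tree.
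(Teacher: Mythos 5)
Your binary-tree decomposition, replicability analysis, and sample-complexity arithmetic all match the paper's proof. The gap is in the correctness invariant. You run the Devroye--Lugosi min-distance estimator with its $\arg\min$ taken over all of $\mathcal{H}$ at every node, and define $B_S = 1$ when the output lands in the left child $S_L$. Two of your three structural claims fail once $S \subsetneq \mathcal{H}$: the estimator's output need not lie in $S$ at all, so $\Pr[B_S = 1] + \Pr[\text{output} \in S_R]$ can be far less than $1$. In particular, $\Pr[B_S=1] \le 1/4$ does \emph{not} imply that $S_R$ contains an approximate nearest neighbor, and the middle regime $\Pr[B_S=1] \in (1/4,3/4)$ does \emph{not} imply that both halves do. Concretely, suppose $H_1$ and $H_3$ are the only approximate nearest neighbors and the estimator outputs each with probability $1/2$. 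At the root the coin has $p=1/2$, so the tester (which has no accuracy guarantee in the middle regime) may descend into $\{H_3, H_4\}$. At the next level $\Pr[\text{output} \in \{H_3\}] = 1/2$ again, and the tester may descend into $\{H_4\}$, which is bad.

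The paper's fix is to restrict the $\arg\min$ of $\hat{W}_i$ to the current subtree $\mathcal{H}_{i_j,j}$ at each level (while still computing the semi-distance estimates $\hat{w}_j(\cdot)$ against all $n$ hypotheses). This guarantees the estimator's output always lies in the subtree, so the coin is a genuine partition between the two children and your claims (b) and (c) become correct. The cost is that the returned hypothesis is now only $\eps_0$-close to the best hypothesis \emph{within} the subtree, so the quality \emph{does} degrade by $\eps_0$ per level --- exactly the compounding your remark ``the $\eps_0$ errors do not compound across levels because every Devroye--Lugosi invocation is performed against the entire class $\mathcal{H}$'' tries to wave away. The paper carries the inductive invariant that $\mathcal{H}_{i_j,j}$ contains some $H_i$ with $W_i \le W + j\eps_0$ and absorbs the compounding via $\eps_0 = \eps/\lg n$, which is what produces the $\log^5 n$ factor. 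Your choice $\eps_0 = \eps/(cK)$ happens to be the right scaling, but the justification you offer is inconsistent with it.
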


\begin{remark}\label{remark:gaussian_mean_estimation}
An interesting question is if the sample complexity of Theorem \ref{thm:hypo_selection_main} can be improved. We briefly note that this sample complexity cannot be improved by more than a $\log(n)^3 \log(1/\eps)$ factor. This follows from Corollary 1.6 in \cite{hopkins2024replicability} where it is shown that replicable mean estimation for a unit covariance Gaussian in $d$ dimensions (up to additive error $\eps$) requires $\Theta(\frac{d^2}{\eps^2\rho^2})$ samples. One way to solve this mean estimation problem is to discretize the unit sphere and solve hypothesis selection (any $C = O(1)$ factor suffices in \eqref{eq:HS-definition}) among $n = (1/\eps)^{O(d)}$ different possible hypothesis. Since this is a special case of the general hypothesis selection problem, $\Omega(\frac{\log^2(n)}{\eps^2 \rho^2 \log(1/\eps)})$ samples must be necessary.
\end{remark}

\begin{proof}
To prove the theorem, we first recall how the algorithm from~\cite{devroye2001combinatorial} works. For distinct $i,j$, define $S_{ij}=\{x\in [d]\mid H_i(x)\leq H_j(x)\}$ and the \emph{semi-distance} $w_j(H_i)=|H_i(S_{ij})-P(S_{ij})|$. Also define $W_i=\max_{j\neq i} w_j(H_i)$ and $W=\min_{i}W_i$. One can show that for any $i$ such that $W_i=W$, the hypothesis $H_i$ satisfies~\cref{eq:HS-definition} with $C=3$ and $\eps=0$. Moreover, if $W_i\leq W+\eps$, then $H_i$ satisfies~\cref{eq:HS-definition} with $C=3$ and additive error $\eps$. 
Now  using $m=O(\frac{\log n}{\eps^2})$ samples, the algorithm from~\cite{devroye2001combinatorial} provides an estimate $\hat w_j(i)$ of $ w_j(i)$ satisfying that $ |\hat w_j(i)-w_j(i)|\leq \eps$ for all pairs of distinct $i,j$ with high probability in $n$. For each hypothesis $H_i$ they then compute the quantity $\hat W_i=\max_{j\neq i}\hat w_j(H_i)$ and their algorithm returns an index $i$ such that $\hat W_i$ is minimal. Since all estimated semi-distances are within $\eps$ of the true semi-distance, it follows that $W_i\leq W+\eps$, and they thus obtain the desired approximation guarantee.

For our replicable algorithm, we assume for simplicity that $n$ is a power of two. Let us define $\eps_0=\eps/\lg n$ and $\rho_0=\rho/\lg n$ and further for integers $0\leq j\leq \lg n$ and $0\leq i< 2^j$, we define 
\[
A_{i,j}=\{i 2^{n-j}+k: 1\leq k\leq 2^{n-j}\},
\]
so that $|A_{i,j}|=n/2^{j}$ and $A_{i,j}=A_{2i,j+1}\cup A_{2i+1,j+1}$ for $0\leq j<\lg n$. Further define $\mathcal{H}_{i,j}:=\{H_k: k\in A_{i,j}\}$.
Finally, set $i_0=0$.

For $1\leq j\leq \lg n$, our algorithm iteratively computes an index $i_j$ such that $0\leq i_j< 2^j$ and with high probability in $n$, the set $\mathcal{H}_{i_j,j}$ contains a hypothesis $H_i$ with $W_i\leq W+j\eps_0$. Since $|\mathcal{H}_{i_{\lg n},\lg n}|=1$, with high probability, the single hypothesis $H_i$ in $\mathcal{H}_{i_{\lg n},\lg n}$ has $W_i\leq W+\eps$ and returning this hypothesis, the desired approximation guarantee follows. 

To construct $i_j$ from $i_{j-1}$, we will reduce the subproblem to $\rho_0$-replicable coin testing.
We will consider the outcome of running the algorithm from ~\cite{devroye2001combinatorial} as a single sample from the coin.
In a single of these runs, when computing the estimate $\hat W_i$ for a given hypothesis $H_i$, we do so with respect to the full set of hypotheses $\mathcal{H}$ by taking $\hat W_i=\max_{j\in [n]}\hat w_j(H_i)$. Denote by $p$ the probability that in a single run, we return a hypothesis in $\mathcal{H}_{2i_j,j+1}$. The probability of returning a hypothesis in $\mathcal{H}_{2i_j+1,j+1}$ is thus $1-p$.
We will run our algorithm from \cref{thm:optimal-coin-testing} on the resulting coin testing problem with $p_0=1/2$, $q_0=3/4$, $\rho_0$-replicability, and failure probability $\delta= \poly(1/n)$.
As replicable coin testing is technically defined for testing $p = p_0$ or $p \geq q_0$, we will duplicate this process twice flipping the semantic meaning of heads so that the algorithm is correct with high probability when $p \leq 1/4$ or $p \geq 3/4$.
To analyze our final algorithm, we need to argue about sample complexity, approximation guarantee, and replicability.

\paragraph{Sample Complexity:} 
Let $s_j$ be the sample complexity of the coin testing algorithm at level $j$.
By \cref{thm:optimal-coin-testing}, $s_j$ is upper bounded in expectation by $\Ema[]{s_j} = O\p{\log n + \frac{1}{\rho_0}} = O\p{\frac{\log n}{\rho}}$.
For each $j=1,\dots, \lg n$, the algorithm computes $s_j$ maximum semi-distance estimators each based on $O\p{\frac{\log n}{\eps_0^2}}$ samples. Thus the total number of samples is $O\p{\frac{\log^5 n}{\eps^2 \rho}}$ in expectation.

Via \cref{prop:markov-translate} by stopping early and running the $O\p{\frac{\log n}{\eps^2}}$ sample size non-replicable algorithm to ensure correctness, the sample complexity becomes $O\p{\frac{\log^5 n}{\eps^2 \rho^2}}$ in the worst-case.

\paragraph{Approximation:}
We prove inductively that $\mathcal{H}_{i_j,j}$ contains a hypothesis $H_i$ with $W_i\leq W+j\eps_0$. This is trivially true for $j=0$, so suppose inductively that it holds for some $j$, and let us show that with high probability it also holds for  $j+1$.
Suppose first that $1/4\leq p\leq 3/4$. With high probability, the algorithm from~\cite{devroye2001combinatorial} returns a hypothesis $H_i$ with 
\[
W_i\leq \min_{j\in A_{i_j,j}} W_j+\eps_0\leq W+\eps_0(j+1),
\]
where the last step uses the inductive hypothesis. In particular, in the case $1/4\leq p\leq 3/4$, then both of $\mathcal{H}_{2i_j,j+1}$ and $\mathcal{H}_{2i_j+1,j+1}$ must contain a hypothesis $H_i$ with $W_i\leq W+(j+1)\eps_0$ and we are thus happy regardless of whether $i_{j+1}=2i_j$ or $i_{j+1}=2i_j+1$. Now assume that $p<1/4$ (the case $p>3/4$ is similar). By the guarantee of replicable coin testing (\cref{thm:optimal-coin-testing}), with high probability in $n$, $i_{j+1}=2i_j+1$ , and since $p<1/4$, $\mathcal{H}_{2i_j+1,j+1}$ does indeed contain a hypothesis $H_i$ with $W_i\leq \min_{j\in A_{i_j,j}} W(H_j)$. The claim then again follows from the inductive hypothesis.

Union bounding over $j=0,\dots, \lg n-1$, we obtain that with high probability in $n$, the single hypothesis $H_i$ in $\mathcal{H}_{i_{\lg n},\lg n}$ has $W_i\leq W+\eps$ and the desired result follows.

\paragraph{Replicability:}
 Let $X_1$ and $X_2$ be independent sets of samples, and let $\mathcal{A}$ be our algorithm.  Let $i_0^{(1)},\dots, i_{\lg n}^{(1)}$ and $i_0^{(2)},\dots, i_{\lg n}^{(2)}$ be the indices computed by $\A$ when run on samples $X_1$ and $X_2$ respectively. To bound $\Prma[X_1,X_2]{\A(X_1 , r) \neq  \A(X_2, r)}$, we bound the probability that there exists a $j$ such that $i_j^{(1)}\neq i_j^{(2)}$. If no such $j$ exists, then $\A(X_1 , r) =  \A(X_2, r)$. Denote by $E_j$ the event that $i_j^{(1)}\neq  i_j^{(2)}$. By the independence of the samples, and the $\rho_0$-replicability guaranteed by \cref{thm:optimal-coin-testing}, it follows that 
\[
\Prma{E_j\mid\bigcup_{k<j}  E_k^c}\leq \rho_0.
\]
Thus, 
\[
\Prma[X_1,X_2]{\A(X_1 , r) \neq  \A(X_2, r)} \leq \Prma{ \bigcup_{1\leq j\leq \lg n}E_j}\leq 1-(1-\rho_0)^{\lg n}\leq \rho,
\]
as desired.
\end{proof}

\section*{Acknowledgements}
The authors thank Cl\'ement Canonne for his valuable ideas and discussion.

Anders Aamand was supported by the VILLUM Foundation grant 54451. Justin Y. Chen was supported by an NSF Graduate Research Fellowship under Grant No. 17453. 

\bibliographystyle{alpha}
\bibliography{bib}

\appendix
\section{Proof of \cref{prop:basic-thresholding}}\label{sec:appendix_gaussian}
\begin{proof}
    First, we check the accuracy guarantees. If $q(\cD) \ge 5T$, then with probability at least $1-\delta$, $t := h(X) \ge 5T \cdot (1-\delta) \ge 3T,$ so the algorithm rejects since $\gamma = \frac{3T-h(X)}{T} \le 0$. Otherwise, if $q(\cD) \le T$, then with probability at least $1-\delta$, $t := h(X) \le q + \rho \cdot \max(|q|, T) \le 2T$, so the algorithm accepts since $\gamma = \frac{3T-h(X)}{T} \ge 1$.
    
    Next, we check that $\A_{h, T}$ is replicable. Assume $\rho \le 1/10$, as otherwise the claim is trivial. Suppose that we sample $X_1, \dots, X_n$ and $X_1', \dots, X_n'$ i.i.d. from $\cD$. With probability at least $1-2\delta$, both $h(X)$ and $h(X')$ are within $\rho \cdot \max(q, T)$ from $q = q(\cD)$. If $q \ge 5T$, then with probability at least $1-2\delta$, both $t := h(X)$ and $t' := h(X')$ are at least $3T$, in which case the algorithm always rejects. If $q \le 0$, then probability at least $1-2 \delta$, both $t := h(X)$ and $t' := h(X')$ are at most $\rho \cdot T \le T$, so the algorithm always accepts. Alternatively, with probability at least $1-2\delta$, both $t := h(X)$ and $t' := h(X')$ are within $5 \rho \cdot T$ of $q$, so are within $10 \rho \cdot T$ of each other. This means that $\gamma = \frac{3T - h(X)}{T}$ and $\gamma' = \frac{3T - h(X')}{T}$ are within $10 \rho$ of each other. In this case, the probability of selecting $r \sim \Unif([0, 1])$ that lies between $\gamma$ and $\gamma'$ is at most $10 \rho$. Overall, there is at most a $10 \rho + 2 \delta \le 12 \rho$ failure probability that, over the random seed $r$ and samples $X_1, \dots, X_n$, $X_1', \dots, X_n'$ from $\cD$, $\A_{h, T}$ outputs a different result on $X$ and $X'$.
\end{proof}

\end{document}